\DeclareMathAlphabet{\mathcal}{OMS}{cmsy}{m}{n}
\author{
\begin{tabular}[t]{c}
Antoine Amarilli \\
{\normalfont T\'el\'ecom ParisTech; Institut Mines--T\'el\'ecom; CNRS LTCI} \\
{\normalfont antoine.amarilli@telecom-paristech.fr} \\[0.5em]
Michael Benedikt \\
{\normalfont University of Oxford} \\
{\normalfont michael.benedikt@cs.ox.ac.uk}
\end{tabular}
}
\newcommand{\kw}[1]{{\mathsf{#1}}\xspace}
\newcommand{\danger}{\kw{Dng}}
\newcommand{\nondanger}{\kw{NDng}}
\newcommand{\positions}{\kw{Pos}}
\newcommand{\pos}{\positions}
\newcommand{\appelem}[2]{\kw{Wants}(#1, #2)}
\newcommand{\chase}[2]{\kw{Chase}(#1, #2)}
\newcommand{\cov} {\kw{sim}}
\newcommand{\bsim}{\leq}
\newcommand{\bbsim}{\simeq}
\renewcommand{\a}{\mathrm{a}}
\newcommand{\f}{\mathrm{f}}
\renewcommand{\l}{\mathrm{l}}
\newcommand{\m}{\mathrm{m}}
\newcommand{\n}{\mathrm{n}}
\newcommand{\p}{\mathrm{p}}
\renewcommand{\r}{\mathrm{r}}
\renewcommand{\t}{\mathrm{t}}
\newcommand{\w}{\mathrm{w}}
\newcommand{\eqfun}{\leftrightarrow_{\mathrm{FUN}}}
\newcommand{\eqids}{\sim_{\mathrm{\incd}}}
\newcommand{\eqidsclass}[1]{[#1]_{\mathrm{\incd}}}
\newcommand{\eqrev}{\eqids} 
\newcommand{\ui}[2]{#1 \subseteq #2}
\newcommand{\ovl}{\kw{OVL}}
\newcommand{\overlap}{\ovl}
\newcommand{\myparagraph}[1]{\paragraph{#1.}}
\newcommand{\myeat}[1]{}
\newcommand{\calA}{\mathcal{A}}
\newcommand{\calE}{\mathcal{E}}
\newcommand{\calF}{\mathcal{F}}
\newcommand{\calH}{\mathcal{H}}
\newcommand{\calN}{\mathcal{N}}
\newcommand{\idgraph}{\Gamma}
\newcommand{\NN}{\mathbb{N}}
\newcommand{\fd}{\kw{FD}}
\newcommand{\uid}{\kw{UID}}
\newcommand{\ufd}{\kw{UFD}}
\newcommand{\incd}{\kw{ID}}
\newcommand{\cq}{\kw{CQ}}
\newcommand{\acq}{\kw{ACQ}}
\DeclareMathOperator{\id}{id}
\DeclareMathOperator{\pr}{\mathfrak{pr}}
\DeclareMathOperator{\parts}{\mathfrak{P}}
\DeclareMathOperator{\dom}{dom}
\newcommand{\entfin}{\models_\mathrm{fin}}
\newcommand{\entunr}{\models_\mathrm{unr}}
\newcommand{\complexity}[1]{\textsf{\mdseries\upshape #1}}
\newcommand\np{\textup{NP}\xspace}
\newcommand{\fds}{\Sigma_{\mathrm{FD}}}
\newcommand{\ufds}{\Sigma_{\mathrm{UFD}}}
\newcommand{\ids}{\Sigma_{\mathrm{UID}}}
\newcommand{\con}{\Sigma}
\newcommand{\ucon}{\Sigma_{\mathrm{U}}}
\newcommand{\idprec}{\rightarrowtail}
\newcommand\restr[2]{{
  \kern-\nulldelimiterspace 
  #1 
  _{|#2} 
  }}
\newcommand{\fdrestr}[2]{#1^{#2}}
\newcommand{\relrestr}[2]{\restr{#1}{#2}}
\newcommand{\quot}[2]{#1/{#2}}
\newcommand{\card}[1]{\left|#1\right|}
\newcommand{\mapb}{\lambda}
\newcommand{\deft}[1]{\textbf{#1}}
\newcommand{\defo}[1]{\emph{#1}}
\newcommand{\defp}[1]{\emph{#1}}
\newcommand{\arity}[1]{\card{#1}}
\newcommand{\assm}[1]{{\selectfont\textsf{#1}}}
\newcommand{\invar}[1]{{\selectfont\textsf{#1}}}
\newcommand{\neqfunc}{n}
\newcommand{\neqidsc}{n}
\newcommand{\rfcl}{\rdfcl}
\newcommand{\rdfcl}{\kw{AFactCl}}
\newcommand{\lab}[1]{\Lambda(#1)}
\newcommand{\sprod}{\otimes} 
\newcommand{\mprod}{\otimes} 
\newcommand{\pire}{PI} 
\newcommand{\mysubsection}[1]{\smallskip\subsection{\textbf{#1}}\smallskip}
\newcommand{\sccgraph}{G}
\newcommand{\mybf}[1]{\text{\textit{\textbf{#1}}}}
\mathchardef\breakingcomma\mathcode`\,
\declaretheorem[name=Lemma,numberwithin=section]{lemma}
\declaretheorem[name=Theorem,sibling=lemma]{theorem}
\declaretheorem[name=Example,sibling=lemma]{example}
\declaretheorem[name=Corollary,sibling=lemma]{corollary}
\declaretheorem[name=Definition,sibling=lemma]{definition}
\newcommand*{\defeq}{\mathrel{\rlap{\raisebox{0.3ex}{$\m@th\cdot$}}\raisebox{-0.3ex}{$\m@th\cdot$}}=}
\begin{document}

\title{Finite Open-World Query Answering\\with Number Restrictions\\(Extended Version)}

\maketitle

\begin{abstract}
Open-world query answering  is the problem of deciding, given a set of facts,
conjunction of constraints, and query, whether the 
facts and constraints imply the  query.
This amounts to 
reasoning over all instances that include the facts and satisfy the constraints.
We study \emph{finite open-world query answering} (FQA), which assumes that the underlying world
is finite and thus only considers the \emph{finite}
completions of the instance.
The major known decidable cases of FQA
  derive from the following: the guarded fragment of first-order logic, which can express referential constraints (data in one
place points to data in another) but cannot express number restrictions such
as functional dependencies; and the guarded fragment with number restrictions but
on a signature of arity only two.
  In this paper, we give the first decidability results for FQA that
 combine both referential constraints and number
restrictions
 for arbitrary
signatures: we show that, for unary inclusion dependencies and functional
dependencies, the finiteness assumption of FQA can be lifted up to taking the finite implication closure of
the dependencies~\cite{cosm}.
Our result relies on new techniques to construct finite universal models of such
constraints, for any bound on the maximal query size.
%
%
\end{abstract}

\section{Introduction}
\label{sec:intro}
A longstanding goal in computational logic is to design logical languages that are both decidable and expressive.
One approach is to distinguish integrity constraints and queries, and have
separate languages for them.
We would then seek  
decidability of the \emph{query answering with constraints} problem:
given a query $q$,  a conjunction of constraints $\con$, and a finite instance~$I$,
determine which answers
to~$q$ are certain to hold over any instance~$I'$ that extends $I$ and
satisfies $\con$.
This problem is often called \emph{open-world query answering}. It is 
fundamental for deciding
query
containment under constraints, querying in the presence of ontologies, or
reformulating queries with
constraints. Thus
 it has been the subject
of intense study within several communities for decades (e.g. \cite{johnsonklug, cali2003decidability, bgo,
pratt2009data, ibanezgarcia2014finite}).

In many cases (e.g., in databases)
 the instances $I'$ of interest  are the finite ones, and hence we can define
 \emph{finite open-world query answering} (denoted here as FQA),
which restricts the quantification
to \emph{finite} extensions  $I'$ of~$I$. In contrast, by \emph{unrestricted open-world query answering} (UQA) we refer to
the problem where $I'$ can be either finite or infinite.
Generally the class of queries is taken to be the conjunctive queries ($\cq$s) --- queries built up
from relational atoms via existential quantification and conjunction. We will restrict to $\cq$s here, and
thus omit explicit mention of the query language, focusing on the constraint language.

A first constraint class known to have tractable open-world query answering problems
are \emph{inclusion dependencies} ($\incd$s) --- constraints of the form, e.g.,
$\forall x y z ~ R(x, y, z) \rightarrow \exists v w ~ S(z, v, w, y)$.
The fundamental results of Johnson and Klug~\cite{johnsonklug} and  Rosati~\cite{rosatifinitecontrol}
show that both FQA and UQA are decidable for~$\incd$ and that, in fact, they coincide. When this happens, the constraints
are said to be \emph{finitely controllable}.
These results have been generalized by B{\'a}r{\'a}ny et al. \cite{bgo}
to a much richer class of constraints,  the guarded fragment of first-order logic.

However, those results do not cover a second 
important kind of constraints, namely \emph{number restrictions}, which express, e.g.,
uniqueness. We represent them by the class of
\emph{functional dependencies} ($\fd$s) --- 
of the form
$\forall \mybf{x} \mybf{y} ~ (R(x_1, \ldots, x_n) \wedge R(y_1, \ldots, y_n) \wedge
\bigwedge_{i\in L} x_i = y_i) \rightarrow x_r = y_r$.
The implication problem (does one $\fd$ follow from a set of others) is
decidable for $\fd$s, and coincides with implication
restricted to finite instances~\cite{AHV}. Trivially, the FQA and UQA problems
are also decidable for $\fd$s alone, and coincide.

Trying to combine $\incd$s and $\fd$s makes both UQA and FQA undecidable in
general~\cite{cali2003decidability}. However, UQA 
is known to be decidable when the $\fd$s and the $\incd$s are
\emph{non-conflicting}
\cite{johnsonklug,cali2003decidability}. 
Intuitively, this condition guarantees that the $\fd$s can be ignored, as long as they
hold on the initial instance~$I$, and one can then solve the query answering
problem by considering the $\incd$s alone.
But the non-conflicting condition 
only 
applies to UQA and not to FQA.
In fact it is known that even for very simple classes of~$\incd$s and $\fd$s,  including
non-conflicting classes, 
FQA and UQA do not coincide.
Rosati~\cite{rosatifinitecontrol} showed that FQA is  undecidable 
 for non-conflicting $\incd$s and $\fd$s (indeed, for $\incd$s and keys, which are less rich
than $\fd$s).

Thus a  general question is to what extent these classes,
$\fd$s and $\incd$s,  can be combined while retaining decidable FQA.
The only  decidable cases impose very
severe requirements. For example,
the constraint class of ``single KDs and FKs'' introduced
in~\cite{rosatifinitecontrol} has decidable FQA, but
such constraints cannot model, e.g., $\fd$s which are not keys.
Further, in contrast with the general case of $\fd$s and $\incd$s,
single KDs and FKs are always finitely controllable, which limits their
expressiveness. 
Indeed, we know of no tools to deal with FQA for non-finitely-controllable constraints  on relations of arbitrary arity.

A second decidable case is  where all relation symbols
and all subformulas of the constraints have arity at most two. In this context,
results of Pratt-Hartmann \cite{pratt2009data} imply the
decidability of both FQA and UQA
for a very rich non-finitely-controllable sublogic of first-order logic.
For some fragments of this arity-two logic, the complexity of FQA has recently
been isolated by Ib{\'a\~n}ez-Garc{\'i}a et al.
 \cite{ibanezgarcia2014finite}. Yet these results do not apply to arbitrary
 arity signatures.

 \emph{The contribution of this paper is to provide the first result
about
finite query
answering for non-finitely-controllable $\incd$s and $\fd$s over relations of
arbitrary arity.}
As the problem is undecidable in general, we must naturally make some restriction. Our choice is to limit to 
\emph{Unary $\incd$s} ($\uid$s), which export
only one variable: for instance, $\forall x y z ~ R(x, y, z) \rightarrow \exists w ~ S(w, x)$.
$\uid$s and $\fd$s are an interesting class to study because they are not
finitely controllable, and allow the
modeling, e.g., of single-attribute foreign keys, a common use case in database
systems.
The decidability of UQA for $\uid$s and $\fd$s is known because they
are always non-conflicting. In this paper, we show that
finite query answering is decidable for $\uid$s and $\fd$s, and obtain tight bounds on
its complexity.

The idea is to \emph{reduce the finite
case to the unrestricted case}, but in a more complex way than by finite
controllability.
We make use of a technique
originating in Cosmadakis et al. \cite{cosm} to study finite implication on $\uid$s and $\fd$s:
the 
\emph{finite closure} operation
which takes
 a conjunction of~$\uid$s and $\fd$s and determines exactly which additional
 $\uid$s and $\fd$s are
implied
over finite instances. 
Rosati~\cite{rosatieswc} and Ib\'a\~nez-Garc\'ia~\cite{ibanezgarcia2014finite}  make use of the closure
operation in their study of constraint classes over schemas of arity two. They
show that
finite query answering for a query $q$, instance~$I$, and constraints $\con$ reduces to 
unrestricted query answering for~$I$, $q$, and the finite closure~$\con'$ of
$\con$. In other words, the closure construction which is sound for implication is also sound for query answering.

We show that the same general approach applies to arbitrary arity signatures,
with
constraints being $\uid$s and $\fd$s. Our main result thus reduces finite query answering to unrestricted query answering,
for $\uid$s and $\fd$s in arbitrary arity:

\begin{theorem}
  \label{thm:main2}
  For any finite instance~$I$,
  conjunctive query~$q$,  
  and constraints $\con$ consisting of~$\uid$s and $\fd$s,
 the finite open-world query answering problem for $I, q$ under $\con$ has the same answer
as the unrestricted open-world query answering problem for $I, q$ under the
finite closure of~$\con$.
\end{theorem}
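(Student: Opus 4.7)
The easy direction is to show that if $I, \con^* \entunr q$, where $\con^*$ denotes the finite closure of $\con$, then $I, \con \entfin q$. Let $J$ be any finite model of $\con$ extending $I$. By soundness of the finite closure operation of Cosmadakis et al.~\cite{cosm} for finite implication of $\uid$s and $\fd$s, every finite model of $\con$ also satisfies $\con^*$. Hence $J \models \con^*$, and since $I, \con^* \entunr q$, we get $J \models q$. Thus $I, \con \entfin q$.

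The harder direction is the contrapositive of the other implication: assuming $I, \con^* \nentunr q$, I need to produce a finite model of the original $\con$ that extends $I$ and refutes $q$. My plan is to construct a finite universal model for $\con$, tailored to queries of size at most $k \defeq |q|$. As a starting point, consider the chase $\chase{I}{\con^*}$ of $I$ by the~$\uid$s of $\con^*$, which by standard arguments is a universal model for the unrestricted-$\con^*$ extensions of~$I$; in particular, since $I, \con^* \nentunr q$, there is no homomorphism $q \to \chase{I}{\con^*}$. The chase is generally infinite, so the remaining task is to fold it into a finite instance $J$ that (a) extends $I$, (b) satisfies every constraint in $\con$, and (c) admits no homomorphism from any $\cq$ of size at most $k$ that does not already have one into the chase.

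The construction proceeds by viewing $\chase{I}{\con^*}$ as a pseudo\-tree rooted at $I$, whose branches are created by successive $\uid$-chase steps producing fresh nulls. I would truncate this pseudotree at a depth polynomial in $k$ and $|\con|$ --- deep enough to preserve all potential $\cq$ matches of size $\leq k$ --- and then identify the nulls on the frontier back into elements of the already-built structure. The extra~$\uid$s produced by the finite closure are precisely what allow such an identification to satisfy every still-pending $\uid$: the finite-closure $\uid$s certify that each frontier null has a legal loop-back target inside the retained part of the chase. The retained part is itself made $\uid$-saturated in this way, yielding a finite instance $J$ extending $I$ and satisfying all~$\uid$s of~$\con$.

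The main obstacle is the simultaneous control of~$\fd$s and $\cq$ homomorphisms during the fold. The $\uid$-chase preserves $\fd$s automatically because each chase step introduces fresh values, but collapsing the frontier into existing elements can equate positions that an $\fd$ of $\con$ forces to carry distinct images elsewhere, and such merges can also create a spurious homomorphism of $q$. The crux is to choose the identification targets using the combined $\fd$- and $\uid$-structure of $\con^*$: the finite-closure $\uid$s and the $\fd$s jointly determine orbits in which loop-back identifications can be made without $\fd$-violations, and the truncation depth must be large enough that $k$-sized fragments of the folded structure always lift back to the chase. Once these two properties are established, $J$ is finite, a model of $\con$, extends $I$, and refutes $q$, which gives $I, \con \nentfin q$ and completes the proof.
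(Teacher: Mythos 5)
Your easy direction is correct and matches the paper: soundness of the Cosmadakis et al.\ closure for finite implication gives $(I,\con^*)\entunr q \Rightarrow (I,\con)\entfin q$ immediately. Your framing of the hard direction is also the right one --- the paper likewise reduces everything to building, for $k=\card{q}$, a finite superinstance of~$I$ that satisfies the (closed) constraints and is $k$-sound, i.e.\ a finite $k$-universal model. But the construction you sketch has two concrete gaps that would make it fail as stated.

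First, cyclic queries. You claim that choosing the truncation depth large enough ensures that ``$k$-sized fragments of the folded structure always lift back to the chase.'' No depth achieves this: to terminate, every loop-back must reuse elements that already occur at the target positions, and any such reuse makes two facts share an element, creating short Berge cycles that are independent of how deep you truncated. A cyclic $\cq$ of size $\le k$ (e.g.\ $\exists xy\, R(x,y)\wedge S(y,x)$ after folding $S(b,a)$ onto $R(a,b)$) then matches in~$J$ without being entailed. What folding actually preserves is a $k$-bounded simulation into the chase, which only protects \emph{acyclic} $\cq$s. The paper needs an entire additional stage --- a product with a group of girth $\ge 2k+1$, refined to a ``mixed product'' so as not to break the $\fd$s --- to destroy these spurious cycles. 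Second, higher-arity $\fd$s. The danger is not only that merges equate positions an $\fd$ separates; it is that reusing the \emph{same} tuple of old elements to complete many new facts makes those new facts agree on the left-hand side of an $\fd$ while disagreeing on the right. Avoiding this requires a large supply of pairwise-compatible reusable tuples over a small domain, and proving such ``envelopes'' exist is a genuinely nontrivial combinatorial theorem (an $R$-instance with $N$ elements and $KN$ facts satisfying the $\fd$s exists iff $R$ has no unary key); your appeal to ``orbits determined by the $\fd$s and closure $\uid$s'' does not supply it. A third, smaller omission: the loop-back targets certified by the closure only exist when the $\uid$s are ``reversible,'' so the $\uid$s must first be partitioned into reversible/trivial classes (via the SCCs of a functional-path relation) and satisfied in topological order. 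As a plan your proposal points in the right direction, but each of these missing pieces is where the actual mathematical content of the theorem lives.
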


Using the known results about the complexity of UQA for $\uid$s,
we isolate the precise complexity
of finite query answering  with respect to $\uid$s and $\fd$s, showing that it matches
that of UQA:

\begin{corollary} \label{cor:comp} The combined complexity of the finite open-world query answering problem for
  $\uid$s and $\fd$s is $\np$-complete, and it is PTIME in data complexity (that
  is, when the constraints and query are fixed).
\end{corollary}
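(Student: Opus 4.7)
The plan is to combine Theorem~\ref{thm:main2} with already-established complexity bounds on unrestricted open-world query answering (UQA). Applying the theorem reduces any FQA instance $(I,q,\Sigma)$, where $\Sigma$ consists of $\uid$s and $\fd$s, to the UQA instance $(I,q,\Sigma^*)$, where $\Sigma^*$ is the finite closure of $\Sigma$. The first thing I would verify, using the results of Cosmadakis et al.~\cite{cosm}, is that $\Sigma^*$ is again a conjunction of $\uid$s and $\fd$s, and that it is computable in polynomial time in $|\Sigma|$; in particular, $|\Sigma^*|$ is polynomial in $|\Sigma|$, which takes care of any complexity blow-up in the reduction.

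Next, since $\Sigma^*$ contains only $\uid$s and $\fd$s, it is automatically non-conflicting: each $\uid$ exports a single variable, and the non-conflicting condition of Johnson--Klug and Cal\`\i{} et al.\ is then trivially met. I would use this to further reduce UQA for $(I,q,\Sigma^*)$ to UQA for $(I,q,\Sigma^*_{\uid})$, where $\Sigma^*_{\uid}$ is the $\uid$-part of $\Sigma^*$, after performing a single PTIME check that $I$ satisfies the $\fd$-part of $\Sigma^*$ (otherwise every query is trivially certain).

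Finally, I would invoke the known complexity of UQA for $\uid$s: combined-complexity $\np$-completeness and PTIME data complexity. Combined-complexity $\np$-hardness is inherited for free from $\cq$ evaluation on a single instance without constraints. Combined-complexity $\np$-membership follows from the small-witness property of the $\uid$ chase: one can nondeterministically guess a homomorphism of $q$ into a polynomially bounded portion of the universal model and verify it in PTIME. For data complexity, $\Sigma$ and $q$ are fixed, so $\Sigma^*$ and $\Sigma^*_{\uid}$ are also fixed, and UQA for a fixed $\uid$ theory and fixed $\cq$ is known to be solvable in PTIME in~$|I|$ via a bounded chase.

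The only delicate point is to make sure each of these steps is truly polynomial: that computing $\Sigma^*$ does not blow up, that the non-conflicting reduction preserves certain answers, and that the data-complexity bound for $\uid$s survives the passage through closure. All three are already established in the literature cited in the introduction, so once Theorem~\ref{thm:main2} is in hand, no new technical work is needed and the corollary follows immediately.
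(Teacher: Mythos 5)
Your overall route is the same as the paper's: reduce FQA to UQA for the finite closure via Theorem~\ref{thm:main2}, use separability of $\uid$s and $\fd$s to check the $\fd$s on $I$ and then run UQA for the $\uid$s alone, and inherit the known bounds for $\uid$s. The hardness direction and the data-complexity claim are fine (for fixed $\con$ and $q$ the closure is a constant). But there is a genuine gap in your $\np$ upper bound for combined complexity.

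The gap is your claim that the finite closure $\Sigma^*$ is computable in PTIME and that $|\Sigma^*|$ is polynomial in $|\Sigma|$. What Cosmadakis et al.\ give is a PTIME \emph{decision procedure} for finite implication of a single dependency, not a polynomial-size closure. The set of all finitely implied $\fd$s is in general exponentially large, since $\fd$s have arbitrary subsets of positions as left-hand sides; the paper explicitly flags this (``materializing $\con'$ may take exponential time''). Enumerate-and-test works for the $\uid$ part (only polynomially many candidate $\uid$s), but not for the $\fd$ part. The missing idea is the paper's workaround: one never materializes $\fds'$, but instead shows, using the structure of the Cosmadakis et al.\ axiomatization (the cycle rule only ever derives \emph{unary} $\fd$s and $\uid$s), that $\fds'$ is implied by $\fds \cup \ufds'$, where $\ufds'$ is the set of implied unary $\fd$s — and $\ufds'$ \emph{is} computable in PTIME by enumerate-and-test. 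Hence checking $I \models \fds'$ reduces to the PTIME check $I \models \fds \cup \ufds'$, and the $\np$ algorithm then runs UQA against the polynomially computable $\ids'$. Without this observation (or some substitute), your step ``check that $I$ satisfies the $\fd$-part of $\Sigma^*$'' is not known to be polynomial, and the $\np$ membership argument does not go through.
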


Our proof of Theorem~\ref{thm:main2} is quite involved, since dealing with
arbitrary arity models
introduces many new difficulties
that do not arise in
the arity-two case or in the case of $\incd$s in isolation.
We borrow and adapt a variety of techniques from prior work: using $k$-bounded
simulations to preserve small acyclic $\cq$s~\cite{ibanezgarcia2014finite}, dealing
with $\uid$s following a topological sort~\cite{cosm,ibanezgarcia2014finite},
performing a chase that reuses sufficiently similar
elements~\cite{rosatifinitecontrol}, and taking the product with groups of large
girth to blow up cycles~\cite{otto2002modal}.
However, we must also develop some  new infrastructure to deal with number restrictions
in an arbitrary arity setting: distinguishing between so-called \defo{dangerous}
and \defo{non-dangerous} positions when chasing, constructing realizations for
relations in a \emph{piecewise} manner following the $\fd$s, reusing elements
in a \emph{combinatorial} way that shuffles them to avoid violating the
higher-arity $\fd$s,
and a new notion of \defo{mixed product} to blow cycles
up while preserving fact overlaps to avoid violating the higher-arity $\fd$s.

\myparagraph{Paper structure}
The general scheme, presented in Section~\ref{sec:overall}, is to construct models of~$\uid$s and $\fd$s that are
universal up to a certain query size~$k$, which we call \emph{$k$-universal models}.
We start with only \emph{unary} $\fd$s
($\ufd$s) and \emph{acyclic} $\cq$s ($\acq$s), and by assuming that the $\uid$s
and $\ufd$s are \emph{reversible}, a condition inspired by the 
finite closure construction.

As a warm-up, Section~\ref{sec:wsound}
proves the weakened result for a much weaker notion than $k$-universality,
starting with binary signatures and generalizing to arbitrary arity.
We extend the result to $k$-universality in Section~\ref{sec:ksound}, 
maintaining a $k$-bounded simulation to the chase, and performing \emph{thrifty} chase steps that reuse
sufficiently similar elements without violating $\ufd$s. We also
rely on a structural observation about the chase under $\uid$s
(Theorem~\ref{thm:locality}).
Section~\ref{sec:manyscc} eliminates the assumption that dependencies are
reversible, by
partitioning the $\uid$s into classes that are either reversible or trivial,
and satisfying successively each class following a certain ordering.

We then generalize our result to higher-arity (non-unary) $\fd$s in
Section~\ref{sec:hfds}.
This requires us to define a new notion of thrifty chase steps
that apply to instances with many ways to reuse elements; the
existence of these instances relies on a combinatorial construction of models
of $\fd$s with a high number of facts but a small domain (Theorem~\ref{thm:combinatorial}).
Last, in Section~\ref{sec:cycles}, we apply a cycle blowup process to the result
of the previous constructions, to go from acyclic to arbitrary $\cq$s through a
product with acyclic groups.
The technique is inspired by Otto~\cite{otto2002modal} but
must be adapted to respect $\fd$s. 

Complete proofs of our results are provided in the appendix.

\section{Background}
\label{sec:prelim}
\myparagraph{Instances}
We assume an infinite countable set of \defp{elements} (or \defp{values}) $a,
b, c, \ldots$ and \defp{variable names} $x, y, z, \ldots$. 
A 
\defp{schema}~$\sigma$ consists of \defp{relation names} (e.g., $R$) with an 
\defp{arity} (e.g., $\arity{R}$)
which we assume is $\geq 1$. Following the unnamed perspective, the set of
\defp{positions} of~$R$ is
$\positions(R) \defeq \{R^i \mid 1 \leq i \leq \arity{R}\}$, and we define
$\positions(\sigma) \defeq \bigsqcup_{R \in \sigma} \pos(R)$.
We identify $R^i$ and $i$ when no confusion can result.

A relational \defp{instance} (or \defp{model}) $I$ of~$\sigma$ 
is a set of \defp{ground facts} of the form $R(\mybf{a})$ where $R$ is a
relation name and $\mybf{a}$ an $\arity{R}$-tuple of values.
The \defp{size} $\card{I}$ of an instance $I$ is its number of facts.
The
\defp{active domain} $\dom(I)$ of~$I$ is
the set of the elements which appear in $I$. For any position $R^i \in
\positions(\sigma)$, we define the \defp{projection} $\pi_{R^i}(I)$ of~$I$ to
$R^i$ as the
set of the elements of~$\dom(I)$ that occur at position $R^i$ in $I$.
For $L \subseteq \positions(R)$,
the projection~$\pi_L(I)$ is a set of~$\card{L}$-tuples defined analogously;
for convenience, departing from the unnamed perspective,
we index those tuples by the positions of~$L$.
A \defp{superinstance} 
 of~$I$ is a (not necessarily finite) instance $I'$ such
that $I \subseteq I'$. 

A \defp{homomorphism} from an instance $I$ to an instance $I'$ is a mapping
$h : \dom(I) \rightarrow \dom(I')$ such that, for every fact $F = R(\mybf{a})$
of $I$, the fact $h(F) \defeq R(h(a_1), \ldots, h(a_{\card{R}}))$ is in~$I'$.

\myparagraph{Constraints}
We consider integrity constraints (or \defp{dependencies}) which are special sentences of first-order
logic.
As usual in the relational setting, we do not allow function symbols.
The definition of an instance $I$ satisfying a constraint~$\con$, written $I \models \con$, is standard.

An \defp{inclusion dependency} $\incd$ is a sentence of the form
$\tau: \forall \mybf{x} \, R(x_1, \ldots, x_n) \rightarrow \exists
\mybf{y} \, S(z_1, \ldots, z_m)$,
where $\mybf{z} \subseteq \mybf{x} \cup \mybf{y}$ and no variable occurs
twice in $\mybf{z}$.
The \defp{exported variables} are the variables of $\mybf{x}$ that occur in
$\mybf{z}$,
and
the \defp{arity} of the dependency is the number of such variables.
This work only studies 
\defp{unary inclusion dependencies} ($\uid$s) which are the $\incd$s
with arity~$1$.
If $\tau$ is a $\uid$, we write $\tau$ as $\ui{R^p}{S^q}$, where $R^p$ and $S^q$ are the
positions of~$R(\mybf{x})$ and $S(\mybf{z})$ where the exported variable occurs.
For instance, the $\uid$ $\forall x y \, R(x, y) \rightarrow \exists z \, S(y, z)$ is
written $\ui{R^2}{S^1}$.
We assume without loss of generality that there are no \defp{trivial} $\uid$s of the form $\ui{R^p}{R^p}$.

We say that a conjunction $\ids$ of~$\uid$s is \defp{transitively closed} if it is
closed under implication by the \defp{transitivity rule}: if
$\ui{R^p}{S^q}$ and $\ui{S^q}{T^r}$ are in $\ids$, then so is $\ui{R^p}{T^r}$
unless it is trivial. The
transitive closure of~$\ids$ can clearly be computed in PTIME in $\ids$, and it
contains all non-trivial $\uid$s implied by $\ids$ over finite or
unrestricted instances~\cite{casanova}.
We say a $\uid$ $\tau : \ui{R^p}{S^q}$ is \defp{reversible} relative to $\ids$ if both 
$\tau$ and its \defp{reverse} $\tau^{-1} \defeq \ui{S^q}{R^p}$ are in~$\ids$.

 A \defp{functional dependency} $\fd$
is a  sentence of the form 
$\phi: \forall \mybf{x} \mybf{y} \, (R(x_1, \ldots, x_n) \wedge
R(y_1, \ldots, y_n) \wedge \bigwedge_{R^l \in L} x_l = y_l) \rightarrow x_r = y_r$,
where $L \subseteq \positions(R)$ and $R^r \in \positions(R)$.
For brevity, we write~$\phi$ as $R^L \rightarrow R^r$.
We call $\phi$ a \defp{unary functional dependency} $\ufd$ if
$\card{L} = 1$; otherwise it is \defp{higher-arity}.
For instance, $\forall x x' y y' \, R(x, x') \wedge R(y, y') \wedge
x' = y' \rightarrow x = y$ is a $\ufd$, and we write it $R^2 \rightarrow R^1$.
We assume that
$\card{L} > 0$, i.e., we do not allow nonstandard or degenerate
$\fd$s.
We
call $\phi$ \defp{trivial} if $R^r \in R^L$, in which case $\phi$ always
holds.
Two facts $R(\mybf{a})$ and $R(\mybf{b})$ \defp{violate} a non-trivial $\fd$ $\phi$ if
$\pi_{L}(\mybf{a}) = \pi_{L}(\mybf{b})$
but $a_r \neq b_r$.

The \defp{key dependency} $\kappa: R^L \rightarrow R$, for $L \subseteq \pos(R)$, is
the conjunction of $\fd$s $R^L \rightarrow R^r$
for all $R^r \in \pos(R)$; it is \defp{unary} if $\card{L} = 1$. If $\kappa$
holds, we call $L$ a \defp{key} (or \defp{unary key}) of~$R$.

\myparagraph{Queries}
An \defp{atom} $A = R(\mybf{t})$ consists of a relation
name $R$ and a $\arity{R}$-tuple $\mybf{t}$ of variables or constants.
A \defp{conjunctive query} $\cq$ is 
an existentially quantified conjunction of atoms.
In this paper we focus for simplicity on Boolean queries (queries without
free variables),
but all our results hold for non-Boolean queries as
well, by the standard method of enumerating the assignments.
The \defp{size} $\card{q}$ of a $\cq$ $q$ is its number of atoms.

A \defp{Berge cycle} in a Boolean $\cq$~$q$ is a sequence 
$A_1, \allowbreak x_1, \allowbreak A_2, \allowbreak x_2, \allowbreak \ldots,
\allowbreak A_n, \allowbreak x_n$ with $n \geq 2$,
where the $A_i$ are pairwise distinct atoms of~$q$, the
$x_i$ are pairwise distinct variables of $q$,
and $x_i$ occurs in $A_i$ and $A_{i+1}$
for $1 \leq i \leq n$ (with addition modulo~$n$, so
$x_n$ occurs in~$A_1$).
We call~$q$ \defp{acyclic} if $q$ has no Berge cycle and if no variable
of~$q$ occurs more than once in the same atom. We write $\acq$ for the
class of acyclic $\cq$s.

A Boolean $\cq$ $q$ \defp{holds}
in an instance $I$ exactly when there is a homomorphism~$h$ from 
the atoms of~$q$ to
$I$ such that $h$ is the identity on the constants of~$q$
(we call this a \defp{homomorphism from~$q$ to~$I$}).
The image of~$h$ is called a \defp{match} of~$q$ in $I$.

\myparagraph{QA problems}
We define the \defp{unrestricted open-world query answering} problem (UQA) as
follows: given a finite instance~$I$, a conjunction of constraints~$\con$, and
a Boolean $\cq$~$q$,
decide whether there is a superinstance of~$I$ that satisfies $\con$ and
violates~$q$. If there is none, we say that $I$ and $\con$ \defp{entail} $q$
and write $(I, \con) \entunr q$.

This work focuses on the
\defp{finite query answering problem} (FQA), which is the variant of open-world query answering where we require the
counterexample superinstance to be finite; if none exists, we write $(I,
\con) \entfin q$. Of course $(I, \con) \entunr q$ implies $(I, \con)
\entfin q$.
We say a conjunction of constraints $\con$ is \defp{finitely controllable} if FQA and UQA   coincide:
for every finite instance~$I$ and every Boolean
$\cq$ $q$,
 $(I, \con) \entunr q$ iff $(I, \con)
\entfin q$.

The \defp{combined complexity} of the UQA and FQA problems, for a fixed class of
constraints, is the complexity of
deciding it when all of~$I$, $\con$ (in the constraint class) and $q$ are given
as input. The \defp{data complexity} is defined by assuming 
that $\con$ and~$q$ are fixed, and only $I$ is given as input.

\myparagraph{Chase}
We say that a superinstance $I'$ of an instance $I$ is \defp{universal} for
constraints $\con$ if $I' \models \con$ and if for any $\cq$ $q$, $I'
\models q$ iff $(I, \con) \entunr q$. We now recall the definition of the
\defp{chase}~\cite{AHV,onet2013chase}, a standard
construction of (generally infinite) universal superinstances. We assume that we
have fixed an infinite set~$\calN$ of \defp{nulls} which 
is disjoint from $\dom(I)$.
We only define the chase for transitively closed $\uid$s, which
we call the \defp{$\uid$ chase}.

We say that a fact $F_\a = R(\mybf{a})$ of an instance $I$ is an \defp{active fact}
for a $\uid$ $\tau: \ui{R^p}{S^q}$ if, writing $\tau: \forall\mybf{x} \,
R(\mybf{x}) \rightarrow \exists\mybf{y} S(\mybf{z})$,
there is a homomorphism from $R(\mybf{x})$ to $F_\a$ but no such
homomorphism can be extended to a homomorphism from
$\{R(\mybf{x}), S(\mybf{z})\}$ to~$I$.
In this case we say that $a_p$ \defp{wants} to occur at position $S^q$ in~$I$, 
written $a_p \in \appelem{I}{S^q}$,
and that we \defp{want} to apply the $\uid$~$\tau$ to $a_p$,
written $a_p \in \appelem{I}{\tau}$. Note that
$\appelem{I}{\tau} = \pi_{R^p}(I)
\backslash \pi_{S^q}(I)$.

The result of a \defp{chase step} on the active fact $F_\a$ for~$\tau$
in~$I$ (we call this \defp{applying} $\tau$ to $F_\a$) is the superinstance~$I'$
of~$I$ obtained by adding a new fact $F_\n = S(\mybf{b})$ defined as follows:
we set $b_q \defeq a_p$, which we call the \defp{exported element} (and $S^q$
the \defp{exported position} of~$F_\n$), and use fresh nulls from $\calN$ to instantiate the
existentially quantified variables of~$\tau$ and complete $F_\n$; we say
the corresponding elements are \defp{introduced} at $F_\n$.
This ensures that $F_\a$ is no longer an active fact in~$I'$ for~$\tau$.

A \defp{chase round} of a conjunction $\ids$ of~$\uid$s on $I$
is the result of applying simultaneous chase steps on
all active facts for all $\uid$s of~$\ids$, using distinct fresh
elements.
The \defp{$\uid$ chase} $\chase{I}{\ids}$ of~$I$ by $\ids$ is the (generally infinite)
fixpoint of applying chase rounds. It is a universal superinstance
for~$\ids$~\cite{fagin2003data}.

As we are chasing by transitively closed $\uid$s,  
if we perform the \defp{core chase}~\cite{onet2013chase} rather than the
$\uid$ chase defined above,
we can ensure the following
\defp{Unique Witness Property}: for any element $a \in \dom(\chase{I}{\ids})$ and position $R^p$ of~$\sigma$,
if two different facts of~$\chase{I}{\ids}$ contain~$a$ at position~$R^p$,
then they are both facts of~$I$.
In our context, however, the core chase matches the $\uid$ chase defined above, except at the
first round. Thus, modulo the first round, by $\chase{I}{\ids}$ we refer
to the $\uid$ chase, which has the Unique Witness Property.
See Appendix~\ref{apx:chase} for details.

\myparagraph{Finite closure}
Rosati~\cite{rosatifinitecontrolpods,rosatifinitecontrol} showed that, while conjunctions of~$\incd$s are
finitely controllable, even conjunctions of~$\uid$s and $\fd$s may not be.
However, Cosmadakis et al.~\cite{cosm}  showed how to decide in PTIME
the \defp{finite implication} problem for $\uid$s and $\fd$s: given a
conjunction $\con$ of such dependencies, decide whether a $\uid$ or $\fd$ is
implied by $\con$ over finite instances.
The \defp{finite closure} of~$\con$ is the set of the $\uid$s and
$\fd$s thus implied by $\con$ in the finite.

Rosati~\cite{rosatieswc} later showed that
the finite closure could be used to reduce UQA to FQA for some constraints
on relations of arity at most two. Following the same idea,
we say that a conjunction of constraints $\con$ 
is \defp{finitely controllable up to finite closure} if for every finite
instance $I$, and Boolean $\cq$~$q$,
 $(I, \con) \entfin q$
 iff $(I, \con') \entunr q$, where $\con'$ is the finite closure of~$\con$.
This implies that we can reduce FQA to UQA, even if finite controllability does not hold.

\section{Main Result and Overall Approach}
\label{sec:overall}
We study open-world query answering for $\fd$s and $\uid$s. For unrestricted
query answering (UQA), the following is already known, from bounds on UQA for
$\uid$s:

\begin{restatable}{proposition}{complexity}
  \label{prp:complexity}
  UQA for $\fd$s and $\uid$s has PTIME data complexity and NP-complete 
  combined complexity.
\end{restatable}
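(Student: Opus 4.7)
The plan is to reduce UQA with $\uid$s and $\fd$s to UQA with $\uid$s alone, for which the claimed bounds (PTIME data and NP-complete combined) are classical. First I would dispose of the case $I \not\models \fds$: since $\fd$s are preserved under subinstances, no superinstance of $I$ satisfies $\fds$ either, so $(I,\con) \entunr q$ holds vacuously for every $q$, and testing $I \models \fds$ is polynomial. In the remaining case, writing $\ids$ for the $\uid$s of $\con$, I would argue that the (core) $\uid$ chase $\chase{I}{\ids}$ is itself a universal superinstance for the whole of $\con$, reducing UQA for $\con$ to UQA for $\ids$.

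The key claim is that $\chase{I}{\ids}$ satisfies every $\fd$ of $\con$ whenever $I$ does. Suppose, for contradiction, that two facts $F_1, F_2$ of the chase violate some $\fd$ $R^L \rightarrow R^r$. If both lie in $I$, this contradicts $I \models \fds$; otherwise, at least one, say $F_2$, is chase-created and therefore holds its exported element $a$ at some position $R^q$ and \emph{fresh nulls} elsewhere, each occurring in no other fact. For $\pi_L(F_1) = \pi_L(F_2)$ to hold, every value at positions of $L$ in $F_2$ must also appear in $F_1$, which rules out the fresh-null positions and forces $L = \{R^q\}$. But then $a$ appears at position $R^q$ in both $F_1$ and $F_2$, and the Unique Witness Property entails $F_1 = F_2$ or that both facts lie in $I$, contradicting our setup.

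Given the reduction, the complexity bounds transfer directly: the combined complexity is in NP (guess a bounded homomorphism of $q$ into the $\uid$ chase, as in UQA with $\uid$s alone) and NP-hard already from plain $\cq$ evaluation (take $\con = \emptyset$), while the data complexity is PTIME by the corresponding bound for UQA with $\uid$s. The main subtle point is that the chase-preservation argument must cover $\fd$s of arbitrary arity, not only unary ones; but the Unique Witness Property together with the uniqueness of fresh nulls in chase-created facts handles every $|L|$ uniformly, so no extra machinery is needed.
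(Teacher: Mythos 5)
Your proposal is correct and takes essentially the same route as the paper: both reduce to UQA for $\uid$s alone (separability) by showing that, when $I \models \fds$, the $\uid$ chase already satisfies $\fds$ via the Unique Witness Property, and then invoke the classical NP/PTIME bounds for $\uid$s in isolation. One small quibble: your claim that the fresh nulls of a chase-created fact ``occur in no other fact'' is too strong (they reappear later as exported elements of subsequent chase steps); what you actually need, and what the Unique Witness Property gives in one step, is that no element occurs \emph{at the same position} in two distinct facts outside $I$ --- which immediately forces any two FD-violating facts into $I$ without analyzing which positions of $F_2$ carry fresh nulls.
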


However, for the \emph{finite case}, even the decidability of FQA for $\fd$s and
$\uid$s is not known. Here is our main result, which is proved in the rest of
this paper:

\begin{theorem}[Main theorem]
  \label{thm:main}
Conjunctions of~$\fd$s and $\uid$s are finitely controllable up to finite closure.
\end{theorem}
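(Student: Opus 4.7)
The plan is to establish the non-trivial direction of the equivalence: given a finite instance $I$, a conjunction $\con$ of $\uid$s and $\fd$s, and a Boolean $\cq$~$q$ with $(I, \con') \nentunr q$ where $\con'$ denotes the finite closure, construct a \emph{finite} superinstance $I^*$ of~$I$ with $I^* \models \con$ and $I^* \not\models q$. The reverse implication is immediate: any finite model of $\con$ also satisfies $\con'$ by soundness of finite implication, so $(I, \con') \entunr q$ forces every finite completion of $I$ satisfying $\con$ to satisfy~$q$.

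Set $k \defeq \card{q}$. I would build $I^*$ as a finite superinstance of $I$ satisfying $\con$ such that $\chase{I}{\con'}$ (which is universal for $\con'$ and, by assumption, does not satisfy $q$) \emph{$k$-bounded-simulates into $I^*$}. This property transfers to $I^*$ the absence of matches for any $\cq$ of size at most $k$, in particular for $q$. Starting from $I$ together with a partial map into the chase, I would perform \emph{thrifty chase steps}: whenever a $\uid$ demands a new witness, try first to \emph{reuse} an element of the current instance whose local structure already matches, up to the $k$-bounded simulation, the witness that the chase would introduce. This closes cycles in the chase and keeps the construction finite. Reversibility of $\uid$s, which finite closure enforces on the strongly connected components of the $\uid$ graph, is precisely what guarantees that enough reuse candidates are available at each step.

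I would attack the specialized setting of Sections~\ref{sec:wsound}--\ref{sec:ksound} first: reversible $\uid$s and $\ufd$s with $q$ acyclic, where thrifty steps reuse elements while respecting $\ufd$s and the $k$-bounded simulation ensures faithfulness to the chase. Sections~\ref{sec:manyscc}--\ref{sec:hfds} then remove the assumptions by (i) processing the SCCs of the $\uid$ graph in topological order, saturating each SCC as in the reversible case before moving on, and (ii) upgrading to arbitrary-arity $\fd$s through a combinatorial construction of dense auxiliary models --- many facts on few elements --- that furnishes enough ``shuffled'' reuse candidates to avoid violating higher-arity $\fd$s, composed with the instance via a new \emph{mixed product} that preserves fact overlaps. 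Finally, Section~\ref{sec:cycles} lifts the restriction to $\acq$s by multiplying $I^*$ with the Cayley graph of a finite group of large girth, in the spirit of Otto~\cite{otto2002modal} but adapted to respect $\fd$s, so that any short Berge cycle in $q$ fails to match the product while both the $k$-bounded simulation and the $\fd$s are preserved.

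The principal obstacle is reconciling element reuse with higher-arity $\fd$s: a thrifty step that reuses one element to satisfy a single $\uid$ may introduce $\fd$ clashes against facts produced by earlier thrifty steps at other positions, and these conflicts interact across positions and relations. The combinatorial construction is what unlocks the argument, supplying a rich enough family of reuse candidates that can be assigned to positions simultaneously and consistently, and together with the mixed-product cycle blowup yields the desired finite $I^* \models \con$ with $I^* \not\models q$, completing the proof.
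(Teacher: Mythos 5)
Your overall strategy is the paper's: reduce finite controllability up to finite closure to the existence, for each $k$, of a finite superinstance of~$I$ that satisfies the finitely closed constraints and is $k$-sound, and build it by thrifty chase steps that reuse elements, exploiting the reversibility that the cycle rule of the finite closure forces on the strongly connected components of the $\uid$ graph, then handling higher-arity $\fd$s via dense combinatorial models (envelopes) and cyclic queries via a large-girth group product adapted to preserve $\fd$s. The decomposition --- reversible/unary/acyclic case first, SCC ordering, envelopes, mixed product --- matches Sections~\ref{sec:wsound}--\ref{sec:cycles}.

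One point needs correcting, because it is the load-bearing invariant. You require that $\chase{I}{\con'}$ ``$k$-bounded-simulates into $I^*$'' and claim that this ``transfers to $I^*$ the absence of matches.'' A $k$-bounded simulation from $A$ to $B$ transfers the \emph{presence} of matches of acyclic queries of size $\leq k$ from $A$ to $B$ (Lemma~\ref{lem:ksimacq}); a simulation from the chase into $I^*$ therefore gives $k$-\emph{universality} (every entailed query holds in $I^*$) and does nothing to prevent $I^*$ from satisfying~$q$. What the construction must maintain is a $k$-bounded simulation from $I^*$ \emph{to} the chase --- the paper's aligned superinstances --- so that $I^* \models q$ implies $\chase{I}{\ids} \models q$ and hence $(I, \con') \entunr q$, contradicting the hypothesis. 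Your later phrase ``a partial map into the chase'' suggests you intend this direction, but the invariant as stated is the converse, and the claimed transfer of non-satisfaction does not follow from it. (The reverse simulation does in fact also hold for the instance the paper builds, and is needed in the Cautiousness Lemma, but it is not what yields soundness.) A secondary slip: the mixed product is not how the dense auxiliary models are attached to the instance --- envelopes are incorporated by identifying elements across disjoint copies of a truncated chase --- it is the device for the final cycle blowup; your closing paragraph has this right.
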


From these two results, and an efficient computation of the closure, we
deduce that the complexity of FQA matches that of UQA (see
Appendix~\ref{apx:prf_complexityfqa}):

\begin{restatable}{corollary}{complexityfqa}
  \label{cor:complexityfqa}
 FQA for $\fd$s and $\uid$s has PTIME data complexity and NP-complete
  combined complexity.
\end{restatable}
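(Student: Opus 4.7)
The plan is to combine three ingredients: Theorem~\ref{thm:main} (finite controllability up to finite closure for $\uid$s and $\fd$s), Proposition~\ref{prp:complexity} (PTIME data complexity and NP-complete combined complexity for UQA on $\uid$s and $\fd$s), and the PTIME computability of the finite closure of a conjunction of $\uid$s and $\fd$s established in Cosmadakis et al.~\cite{cosm}, which additionally produces an output of polynomial size.

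For data complexity, I would argue as follows. Since $\con$ is fixed, so is its finite closure $\con'$, which is a constant-size conjunction of $\uid$s and $\fd$s. Theorem~\ref{thm:main} reduces the question $(I, \con) \entfin q$ to the equivalent question $(I, \con') \entunr q$, and the PTIME data complexity of UQA from Proposition~\ref{prp:complexity} applies directly to this rewritten instance. The closure computation itself is a fixed-cost preprocessing step independent of $I$ and can be ignored in the data complexity setting.

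For combined complexity, the reduction becomes algorithmic: given $(I, q, \con)$ as input, first compute the finite closure $\con'$ in PTIME and of polynomial size following~\cite{cosm}, then invoke the NP algorithm for UQA from Proposition~\ref{prp:complexity} on the equivalent instance $(I, q, \con')$. Correctness of this two-stage procedure is exactly what Theorem~\ref{thm:main} provides. Since composing a PTIME reduction with an NP algorithm yields an NP algorithm, this gives the NP upper bound. For the matching NP-hardness lower bound, I would instantiate $\con = \emptyset$ and reuse the standard NP-hardness of Boolean $\cq$ evaluation over a given finite instance, which is already a special case of FQA.

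The only nontrivial point to verify is the complexity and output size of the finite closure step: if $\con'$ were superpolynomial, feeding it to the NP algorithm for UQA would not obviously keep the overall procedure in NP. This is precisely where~\cite{cosm} does the work, bounding both the number of implied $\uid$s (quadratic in the schema) and the number of implied $\fd$s, and giving a PTIME procedure to enumerate them. Beyond invoking that result, the corollary is a routine composition of Theorem~\ref{thm:main} with Proposition~\ref{prp:complexity}.
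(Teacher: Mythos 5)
Your reduction via Theorem~\ref{thm:main}, the data-complexity argument, and the NP-hardness lower bound all match the paper. The gap is in the combined-complexity upper bound, and it sits exactly at the point you flagged as ``the only nontrivial point to verify'': you assert that \cite{cosm} bounds the number of implied $\fd$s polynomially and lets you materialize the finite closure $\con'$ in PTIME. That is not what \cite{cosm} provides --- it gives a PTIME decision procedure for whether a \emph{single given} $\uid$ or $\fd$ is finitely implied. The closure itself can be exponentially large: for a relation of arity $n$ there are on the order of $2^n \cdot n$ candidate $\fd$s $R^L \rightarrow R^r$, and by augmentation the closed set of implied $\fd$s genuinely contains exponentially many of them. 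The paper explicitly calls out that ``materializing $\con'$ may take exponential time,'' so feeding $\con'$ wholesale to the NP algorithm for UQA does not keep the procedure in NP.

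The paper's actual route avoids materializing $\fds'$. It uses separability (established in the proof of Proposition~\ref{prp:complexity}): UQA for $\uid$s and $\fd$s amounts to checking the $\fd$s on the input instance $I$ and then running UQA for the $\uid$s alone. So one only needs (a) the set $\ids'$ of implied $\uid$s, which is polynomial (quadratically many candidates, each decidable in PTIME), and (b) a PTIME test of whether $I \models \fds'$. For (b) the paper computes only the implied \emph{unary} $\fd$s $\ufds'$ (again polynomially many candidates) and proves, by inspecting the axiomatization of \cite{cosm} --- the cycle rule can only ever produce $\ufd$s --- that $\fds'$ is derivable from $\fds \cup \ufds'$ by the ordinary Armstrong rules, so $I \models \fds'$ iff $I \models \fds$ and $I \models \ufds'$. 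To repair your proof you would need to replace the ``compute $\con'$ explicitly'' step with an argument of this kind.
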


\subsection{Rephrasing with universal models}

We prove the main theorem via the notions of \defo{$k$-sound} and \defo{$k$-universal instances}.

\begin{definition}
  For $k \in \mathbb{N}$, we say that a superinstance $I$ of an instance~$I_0$ is
  \deft{$k$-sound} for constraints $\con$ (and for~$I_0$) if 
  for every constant-free $\cq$~$q$ of size $\leq k$ such that $I \models q$,
  we have $(I_0,
  \con) \entunr q$. We say it is \deft{$k$-universal} if the converse also
  holds: $I \models
  q$ whenever $(I_0, \con) \entunr q$.
\end{definition}

The assumption that $q$ is constant-free is without loss of
generality: we can always assume that, for each constant $c \in \dom(I_0)$, a
fact $P_c(c)$ has been added to $I_0$ for a fresh unary relation $P_c$, and
$c$ was replaced in $q$ by a existentially quantified variable~$x_c$ with the atom $P_c(x_c)$ added to~$q$. So for simplicity \emph{we
assume from now on that queries are constant-free}.

Theorem~\ref{thm:main} is implied by the following (see Appendix~\ref{apx:prf_main}):

\begin{restatable}[Universal models]{theorem}{univexists}
\label{thm:univexists}
For every conjunction $\con$ of~$\fd$s $\fds$ and $\uid$s $\ids$ closed under
finite implication,
for every finite instance $I_0$ that satisfies $\fds$,
for any $k \in \mathbb{N}$,
there exists a finite superinstance $I$ of~$I_0$
that is $k$-sound for $\con$ and satisfies~$\con$ (and hence is $k$-universal).
\end{restatable}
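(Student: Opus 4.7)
The plan is to build the finite $k$-sound superinstance $I$ by starting from the UID chase $\chase{I_0}{\ids}$ --- which is infinite and universal for $\ids$ but generally violates $\fds$ --- and refining it into a finite structure that still answers every constant-free $\cq$ of size at most $k$ the same way while additionally satisfying $\fds$. I would first reduce to the case of acyclic $\cq$s (the cyclic case is recovered at the end via a product construction), and initially work under two simplifying hypotheses: all $\fd$s of $\fds$ are unary ($\ufd$s), and every $\uid$ of $\ids$ is reversible in $\ids$. Finite closure of $\con$ makes these hypotheses a reasonable starting point.

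Under these hypotheses, the core construction is a \emph{thrifty} variant of the chase: when a $\uid$ must fire, rather than always introducing fresh nulls, reuse an element of the current instance whenever such reuse does not violate any $\ufd$. Reversibility provides enough symmetry to close off the construction with finitely many chase steps. To certify $k$-soundness, I would maintain along the construction a $k$-bounded simulation from the partial instance to $\chase{I_0}{\ids}$, so that any $\acq$ of size at most $k$ matching the finite instance also matches the chase and is therefore entailed. The reversibility hypothesis would then be lifted by processing the $\uid$s following a topological sort of the SCCs of the induced position graph: within each SCC the reversible case applies, and between SCCs the ordering prevents earlier saturations from being disturbed by later ones. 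To lift the unary-$\fd$ hypothesis, I would invoke Theorem~\ref{thm:combinatorial} to produce, for each relation, large pools of facts over a small domain that are guaranteed to satisfy the higher-arity $\fd$s; this supplies enough combinatorial diversity for a piecewise refinement of the thrifty chase to always locate a reuse that simultaneously avoids every higher-arity $\fd$ violation.

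To go from $\acq$s to arbitrary $\cq$s I would apply an Otto-style cycle-blowup: take a mixed product of the instance produced above with a Cayley graph of an acyclic group of girth exceeding $k$. Short Berge cycles in the product unfold, so any $\cq$ match of size at most $k$ in the product projects down to a match of an acyclic $\cq$ in the base, which is already $k$-sound by the previous stages; the ``mixed'' flavour of the product is crafted to preserve fact overlaps at the $\fd$-relevant positions so that no new $\fd$ violations appear. The main obstacle will be the persistent tension between finiteness, which forces element reuse, and $\fd$ satisfaction, which forbids reuses that create conflicting facts. For $\ufd$s this tension can be resolved fairly directly; for higher-arity $\fd$s the combinatorial argument guaranteeing that a safe reuse always exists is the technical heart of the proof, and ensuring that both the reuse discipline and the cycle-blowup step uniformly respect the full set of $\fd$s is where the bulk of the work will go.
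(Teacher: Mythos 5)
Your proposal follows essentially the same route as the paper: reduce to acyclic queries and reversible unary constraints, maintain a $k$-bounded simulation to $\chase{I_0}{\ids}$ while performing thrifty chase steps that reuse elements, lift reversibility via an ordered partition of the $\uid$s into SCC-like classes, handle higher-arity $\fd$s via the dense-interpretation construction (envelopes), and finish with a mixed product against a high-girth group. The outline matches the paper's decomposition and key lemmas, though of course the substantial work lies in the details you defer (saturation by fact classes, chase locality, and cautiousness for the mixed product).
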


The fact that such an $I$ is $k$-universal is because 
any superinstance of~$I_0$ that
satisfies $\con$ must satisfy all $\cq$s~$q$ such that $(I_0, \con) \entunr q$,
by definition of~$\entunr$.

We now fix the conjunction $\con$ of $\fd$s~$\fds$ and $\uid$s~$\ids$.
We assume that $\con$ is closed under finite implication;
in particular,
$\fds$ and $\ids$ in isolation are closed under implication, which implies that
$\ids$ is transitively closed.
We also fix the instance~$I_0$ such that $I_0 \models \fds$, and
the maximal query size $k \in \NN$.

Our goal in the rest of this paper is to
construct the finite $k$-sound superinstance of~$I_0$ that satisfies $\con$, thus
proving the Universal Models Theorem and hence the Main Theorem.

\subsection{Restricting to $\acq$s, $\ufd$s, and reversible constraints}

We first prove the Universal Models Theorem for a restricted class of
queries and dependencies, which we now define. We will lift these restrictions later.

First, we define $\ufds$ to be the \emph{unary} $\fd$s of~$\fds$, and write $\ucon \defeq \ufds
\wedge \ids$. Note that, as we assumed that $\con$ is closed under finite implication for
$\ufd$s and $\uid$s, the characterization of~\cite{cosm} implies that~$\ucon$
also is.
We will first construct a $k$-sound superinstance that only satisfies~$\ucon$; in
Section~\ref{sec:hfds} we will show how to adapt the process to also satisfy
$\con$.

Second, we will first construct a superinstance that is \mbox{$k$-sound} only for
acyclic Boolean queries; in Section~\ref{sec:cycles} we will show how to make the resulting
superinstance sufficiently acyclic to be sound for cyclic queries as well.

Hence, in Sections~\ref{sec:wsound},~\ref{sec:ksound} and~\ref{sec:manyscc}, we prove the following weakening
of the Universal Models Theorem. The restrictions will be lifted in Sections~\ref{sec:hfds} and~\ref{sec:cycles}.

\begin{restatable}[Acyclic unary universal models]{theorem}{myuniv}
\label{thm:myuniv}
There exists a finite superinstance of~$I_0$ that satisfies~$\ucon$ and is $k$-sound
for~$\ucon$ and~$\acq$ (and hence $k$-universal for~$\ucon$ and~$\acq$).
\end{restatable}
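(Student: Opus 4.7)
The plan is to first restrict to the setting where every $\uid$ in $\ids$ is reversible relative to $\ids$, and then lift this restriction by processing the strongly connected components of the $\uid$ implication graph of $\ids$ in reverse topological order (as announced for Section~\ref{sec:manyscc}). In the reversible setting, the target $I$ is built as a \emph{thrifty completion} of $I_0$ under the $\uid$ chase by $\ids$: whenever an active fact requires a $\uid$ witness, one reuses an already-existing element from $\dom(I)$ instead of introducing fresh nulls, so that the construction stays finite. The crucial invariant to maintain in parallel is a $k$-bounded simulation from $I$ back into the (generally infinite) chase $\chase{I_0}{\ids}$, which, because acyclic conjunctive queries of size at most $k$ are preserved by $k$-bounded simulations, is exactly what is needed to guarantee $k$-soundness for $\acq$. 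Satisfaction of $\ucon$ itself then gives $k$-universality automatically: any superinstance of $I_0$ satisfying $\ucon$ must entail every $\cq$ that $I_0$ and $\ucon$ unrestrictedly entail.

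For the reversible case, starting from $I_0$ (which already satisfies $\ufds$ by hypothesis), I would iterate chase steps that eliminate active facts for $\ids$ in rounds. For each active fact $R(\mathbf{a})$ and each $\tau : \ui{R^p}{S^q}$ with $a_p \in \appelem{I}{\tau}$, rather than introducing a new fact with fresh nulls, one searches for an element $b \in \pi_{S^q}(I)$ that is sufficiently similar to $a_p$ (in terms of its local fact-type up to depth $k$ in $\chase{I_0}{\ids}$), and reuses $b$'s existing fact, or extends $I$ by a new fact whose non-exported slots are filled from an appropriate stock of already-present elements. Reversibility of every $\tau \in \ids$ is what makes this workable: for each $\tau^{-1} = \ui{S^q}{R^p} \in \ids$, every element occurring at $S^q$ in the chase is itself required to occur at $R^p$, so the set of candidate elements at each position is structurally symmetric and rich enough to support reuse. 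The Unique Witness Property of $\chase{I_0}{\ids}$ is then used to bound the number of distinct fact-types that can occur, enabling a pigeonhole-style argument showing that finitely many representatives per type suffice.

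To eliminate the reversibility assumption, I would partition $\ids$ into the strongly connected components of its position-implication graph, obtaining a DAG of SCCs in which every $\uid$ within a single SCC is reversible (this uses the transitive closure of $\ids$). Processing the SCCs in reverse topological order, at each step I apply the reversible-case construction to satisfy the $\uid$s of the current SCC on top of the current superinstance, while $\uid$s pointing to already-processed SCCs can be satisfied by ordinary chase steps because they do not feed back. The topological ordering guarantees that later steps never re-activate facts for $\uid$s already dealt with. The \textbf{main obstacle} is reconciling the three competing demands of the thrifty step: termination (which forces aggressive reuse of elements), preservation of the $k$-bounded simulation into the chase (which restricts reuse to elements whose local types match), and preservation of $\ufds$ (which forbids placing two functionally-conflicting elements at the same $\ufd$-key position). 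Showing that there is always a choice of reuse target meeting all three constraints simultaneously, and that the overall process halts with a finite $I$ satisfying $\ucon$, is the technical heart of the argument, and is where the locality/Unique-Witness structure of the chase under transitively closed $\uid$s is used decisively.
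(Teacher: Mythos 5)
Your overall architecture matches the paper's: thrifty reuse of existing elements during the chase, a $k$-bounded simulation back into $\chase{I_0}{\ids}$ to certify $k$-soundness for $\acq$, and a decomposition of $\ids$ into SCCs of the implication graph processed in an order that prevents re-activation. However, the proposal leaves the actual technical content unresolved at exactly the point you flag as the ``main obstacle,'' and this is a genuine gap rather than a detail. The paper's resolution of the three competing demands is a specific machine that your sketch does not reconstruct: (i) positions of the new fact are split into \emph{dangerous} and \emph{non-dangerous} ones relative to the exported position, and non-dangerous positions are filled by copying a single existing fact that achieves the same \emph{fact class} (a tuple of $\bbsim_k$-classes in the chase) as the chase witness --- this requires a prior \emph{fact-saturation} phase guaranteeing such achiever facts exist, and is what makes $\ufds$-preservation automatic; (ii) termination and $\uid$-satisfaction at dangerous positions are obtained not by a pigeonhole on fact types but by pre-computing a finite \emph{balanced} pssinstance (adding helper elements so that $\card{\appelem{}{R^p}} = \card{\appelem{}{R^q}}$ across $\eqfun$-classes) and a $\ufds$- and $\ids$-compliant \emph{piecewise realization} that the chase then follows, bounding the final instance by the size of the realization; (iii) compatibility of this pre-planned pairing with the simulation is not automatic --- it is secured by first running $k+1$ fresh chase rounds and invoking the Chase Locality Theorem, which says that under reversibility the $\bbsim_k$-class of a chase element depends only on the $\eqids$-class of the position where it was introduced. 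Without (iii), ``reuse an element whose local type matches'' has no guarantee that a matching element exists among the ones you are forced to pair up for balancing reasons.

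Two further points need repair in the de-reversibilization step. First, the claim that every $\uid$ inside an SCC of the $\idprec$ graph is reversible does not follow from transitive closure of $\ids$; it follows from $\ids$ being closed under the \emph{cycle rule} of the finite closure (non-trivial SCCs correspond to invertible cycles, whose reverses the finite closure adds), and the same rule is needed to get the second half of assumption \assm{reversible} (reversibility of the relevant $\ufd$s). Second, singleton SCCs without self-loops (``trivial'' classes) need not be reversible at all and are handled by a separate one-round argument, and an SCC $P$ and its inverse $P^{-1}$ may be distinct and must be merged and enumerated consecutively for the resulting classes to be transitively closed. Also, the paper processes classes in \emph{topological} order of $\idprec$ (sources first), since a chase step for $\tau$ can only newly violate $\tau'$ with $\tau \idprec \tau'$; make sure your ``reverse topological order'' is the same order under a different edge convention, or the re-activation argument inverts.
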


To prove the Acyclic Unary Universal Models Theorem, in Sections~\ref{sec:wsound} and~\ref{sec:ksound}, we
will assume the following condition on the structure of the dependencies:

\medskip
\begin{samepage}
\begin{compactdesc}
\item[\assm{reversible}:] The following holds about $\ucon$:
  \begin{compactitem}
  \item all $\uid$s in $\ids$ are \deft{reversible} (remember this means that the reverse $\tau^{-1}$ of any
    $\tau \in \ids$ is also in $\ids$);
  \item for any positions $R^p$ and $R^q$ occurring in $\uid$s of $\ids$,
  if $R^p \rightarrow R^q$ is in~$\ufds$ 
  then so is $R^q \rightarrow R^p$.\end{compactitem}
\end{compactdesc}
\end{samepage}
\medskip

Intuitively, assumption \assm{reversible} is
connected to the finite closure characterization of~\cite{cosm}, which adds to
$\ucon$ the reverses of any $\uid$s and $\ufd$s that form a certain cyclic
pattern.

Working under assumption \assm{reversible},
Section~\ref{sec:wsound} proves an even weaker version of
the Acyclic Unary Universal Models Theorem, which replaces $k$-soundness by
weak-soundness; Section~\ref{sec:ksound} proves the actual theorem. Assumption
\assm{reversible} is lifted in Section~\ref{sec:manyscc} to conclude the proof.

\section{Weak-Soundness and Reversible $\uid$s}
\label{sec:wsound}
The goal of this section is to prove the Acyclic Unary Universal Models Theorem
(Theorem~\ref{thm:myuniv}) under assumption \assm{reversible}, replacing
$k$-soundness by \defp{weak-soundness}.

\begin{restatable}{definition}{wssinstance}
  \label{def:wssinstance}
  A superinstance $I'$ of an instance $I$ is
  \deft{weakly-sound} if the following holds:
  \begin{compactitem}
  \item for any $a \in \dom(I)$ and $R^p \in \pos(\sigma)$, if $a \in
    \pi_{R^p}(I')$, then either $a \in \pi_{R^p}(I)$ or $a \in
    \appelem{I}{R^p}$;
  \item for any $a \in \dom(I') \backslash \dom(I)$ and $R^p, S^q \in \pos(\sigma)$,
    if $\a \in \pi_{R^p}(I')$ and $a \in \pi_{S^q}(I')$ then $R^p = S^q$ or $\ui{R^p}{S^q}$
    is in~$\ids$.
  \end{compactitem}
\end{restatable}

Intuitively, a superinstance is weakly-sound if existing elements were only
added to positions where they wanted to appear, and new elements only occur at
positions which are connected in~$\ids$.
This section shows the following:

\begin{restatable}[Acyclic unary weakly-sound models]{proposition}{mywsound}
  \label{prp:auwsm}
  Under assumption \assm{reversible},
  there exists a finite superinstance of~$I_0$ that satisfies $\ucon$ and is
  weakly-sound.
\end{restatable}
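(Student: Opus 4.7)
The plan is to construct $I$ by a finite variant of the UID chase of $I_0$ that draws its fresh values from a carefully structured bounded pool, so that the process terminates while achieving weak-soundness and $\ucon$.

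First, I would analyze the structure of positions. Under \assm{reversible}, $\ids$ is symmetric and transitively closed, so it induces an equivalence relation on $\pos(\sigma)$ whose classes I call \emph{UID-classes}; weak-soundness~(ii) demands that each fresh null only appear at positions lying in one such class. Within each relation $R$ and restricted to positions occurring in UIDs, the reversible UFDs of $\ufds$ likewise induce an equivalence relation whose classes I call \emph{UFD-classes of $R$}. Both partitions are finite and will control how many distinct fresh nulls we ever need.

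Next I would build $I$ incrementally. Start with $I := I_0$. While some $a \in \appelem{I}{\tau}$ for some $\tau: \ui{X^s}{S^q} \in \ids$ has no witness fact in the current $I$, add a new fact $S(\bar b)$ with $b_q := a$ and, at every other position $S^j$: a dedicated fresh null $\mu_{a, S^q, j}$ if $S^j$ is UFD-equivalent to $S^q$ in $S$, and a reusable pool null $n_{S, C_j, E_j}$ otherwise, indexed by the UID-class $C_j$ and UFD-class $E_j$ of $S^j$. Repeat until $\ids$ is satisfied. The value set stays bounded by $\dom(I_0)$ together with the finitely many dedicated nulls (at most $\lvert\dom(I_0)\rvert$ times the number of positions) and the finitely many pool nulls (one per triple relation/UID-class/UFD-class), so only finitely many distinct facts of each relation can ever be built and the process terminates.

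I would then verify the three required properties. $\ids$ is satisfied at termination by the stopping criterion. Weak-soundness~(ii) holds by design: each pool null $n_{R,C,E}$ is only ever placed at positions in $C$, and each dedicated null $\mu_{a, S^q, j}$ is only placed at positions UID-reachable from $S^j$, all sharing the UID-class of $S^j$. Weak-soundness~(i) follows by induction on chase steps: if an original $a \in \dom(I_0)$ gets placed at $R^p$ through a chain of chase steps tracing back to some $I_0$-position $Z^u$ of $a$, then by the transitive closure of $\ids$ the direct UID $\ui{Z^u}{R^p}$ lies in $\ids$, and hence $a \in \appelem{I_0}{R^p}$ (unless $a$ was already in $\pi_{R^p}(I_0)$).

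The main obstacle, and the step requiring the most care, is ensuring that every UFD in $\ufds$ survives. Two witness facts of $R$ with distinct exported positions can agree at shared pool-null slots, activating a UFD $R^p \to R^r$; the dangerous case is when $R^r$ is the exported position of one of the facts, and this is exactly what the dedicated-null mechanism is designed to neutralize, since the UFD-equivalent partner of an exported position is always occupied by a fresh null unique to that exported element, blocking any spurious agreement with a second witness fact. Agreement on a non-exported, non-UFD-equivariant position propagates to its UFD-equivalent companion by sharing the same pool null, so those UFDs hold automatically; and UFDs whose determining side is an exported position never trigger a genuine conflict because the construction adds at most one witness fact per (relation, exported element) pair.
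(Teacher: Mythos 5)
There is a genuine gap, in fact two. First, \textbf{termination fails}. Your dedicated nulls $\mu_{a,S^q,j}$ are indexed by the exported element $a$, which ranges over the \emph{current} domain, not over $\dom(I_0)$; so your bound of $\card{\dom(I_0)}$ times the number of positions is wrong. Since dedicated nulls sit at positions UFD-equivalent to the exported position $S^q$, and under \assm{reversible} those are exactly the positions that typically do occur in $\ids$ (that is the whole hard case), each dedicated null itself wants to appear somewhere, spawning a new witness fact with a new dedicated null, ad infinitum. Concretely, on Example~\ref{exa:helper} ($I_0=\{R(a,b)\}$, three binary relations with the reversible $\uid$ cycle and all $\ufd$s), your process builds $T(\mu_1,a)$, $S(\mu_2,\mu_1)$, $R(\mu_3,\mu_2)$, $T(\mu_4,\mu_3),\ldots$ forever. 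This is why the paper does not use fresh values at these positions at all: it counts, for each $\eqfun$-class, how many elements \emph{want} to appear at each position, pads with finitely many helper elements so the counts agree (the Balancing Lemma), and then pairs the wanting elements up \emph{bijectively} (the Realizations Lemma). The bijection is what simultaneously guarantees the $\ufd$s within the class and avoids ever introducing an element that still has unsatisfied $\uid$s.

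Second, even ignoring termination, your $\ufd$ argument has a hole for one-directional $\ufd$s into the exported position. Assumption \assm{reversible} only forces $R^q\rightarrow R^p$ when \emph{both} positions occur in $\uid$s; so $\ufds$ may contain $S^j\rightarrow S^q$ with $S^j$ occurring in no $\uid$ and $S^q\rightarrow S^j$ absent. Then $S^j$ is not UFD-equivalent to the exported position $S^q$, your construction puts the shared pool null $n_{S,C_j,E_j}$ there in every witness fact of $S$ whose exported position is not UFD-equivalent to $S^j$, and two such facts with distinct exported elements at $S^q$ violate $S^j\rightarrow S^q$. These are exactly the paper's \emph{dangerous} positions (Definition~\ref{def:dangerdef}); they must receive per-fact fresh values, which is harmless precisely because, by the reversibility argument in the proof of Lemma~\ref{lem:sreal}, a dangerous position outside the exported $\eqfun$-class cannot occur in $\ids$. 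Your "reuse a shared pool value at every non-equivalent position" policy conflates the paper's non-dangerous positions (where reuse of a single existing fact's values is safe) with the dangerous ones (where it is not).
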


The proposition itself will not be reused in the sequel, but the proof
introduces some useful concepts 
to prove the actual Acyclic Unary Universal Models Theorem in
Section~\ref{sec:ksound}.

\subsection{Binary signatures and balanced instances}

For simplicity, we first focus on a simplified case with a binary signature,
making the following assumption that will be lifted later in this section:

\smallskip
\begin{compactdesc}
\item[\assm{binary}:] all relations have arity~$2$ and $\ufds$ contains the $\ufd$s $R^1
    \rightarrow R^2$ and $R^2 \rightarrow R^1$ for any
    relation~$R$.
\end{compactdesc}
\smallskip

Our approach to construct a weakly-sound superinstance $I'$ of~$I_0$ that satisfies
$\ucon$ is then to perform a \emph{completion process}
that adds new (binary) facts  to connect together elements. As all
possible $\ufd$s hold, $I'$ can only
contain a new fact $R(a_1, b_2)$ if, for $i \in \{1, 2\}$,
$a_i \notin \pi_{R^i}(I_0)$, so that if $a_i \in \dom(I_0)$ then $a_i \in
\appelem{I_0}{R^i}$ by weak soundness.

One easy situation is when $I_0$ is \defo{balanced}: for
every relation~$R$, we can construct a bijection between the elements that want
to be in~$R^1$ and those that want to be in~$R^2$:

\begin{definition}
  \label{def:balanced}
  An instance $I$ is \deft{balanced} if, for every two
  positions $R^p$ and $R^q$ such that $R^p \rightarrow R^q$ and $R^q \rightarrow
  R^p$ are in~$\ufds$, we have $\card{\appelem{I}{R^p}} = \card{\appelem{I}{R^q}}$.
\end{definition}

If $I_0$ is balanced, we can show the Acyclic Unary Weakly-Sound
Models Proposition under assumption \assm{binary}, simply by pairing together
elements, without adding any new ones:

\begin{restatable}{proposition}{balwsnd}
  \label{prp:balwsnd}
  Assuming \assm{binary} and \assm{reversible},
  any balanced finite instance $I$ satisfying $\ufds$ has a finite weakly-sound
  superinstance $I'$ that satisfies $\ucon$, with $\dom(I') = \dom(I)$.
\end{restatable}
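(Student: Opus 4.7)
The plan is to construct $I'$ by pairing up, for each relation $R$, the elements that want to appear at $R^1$ with those that want to appear at $R^2$ via an arbitrary bijection, and adding the corresponding $R$-facts. Balance guarantees that such bijections exist, and assumption \assm{binary} ensures that this pairing approach suffices to satisfy $\ufds$ in the resulting $I'$.

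First, I fix for each relation $R$ an arbitrary bijection $f_R : \appelem{I}{R^1} \to \appelem{I}{R^2}$, which exists by the balance hypothesis applied to the pair of positions $R^1, R^2$ (both $R^1 \to R^2$ and $R^2 \to R^1$ lie in~$\ufds$ by \assm{binary}). I then set
\[ I' \defeq I \cup \bigcup_R \{R(a, f_R(a)) : a \in \appelem{I}{R^1}\}. \]
Clearly $I'$ is finite and $\dom(I') = \dom(I)$.

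Next I verify the three required properties. For~$\ufds$: under \assm{binary} the only non-trivial $\ufd$s are $R^1 \to R^2$ and $R^2 \to R^1$, and they hold in~$I'$ because $R$ was already a partial bijection in~$I$, the new pairs use first and second coordinates drawn from $\appelem{I}{R^1}$ and $\appelem{I}{R^2}$ respectively (which are disjoint from $\pi_{R^1}(I)$ and $\pi_{R^2}(I)$ by the definition of Wants), and $f_R$ itself is a bijection. For~$\ids$: the key observation is that $\pi_{R^p}(I') = \pi_{R^p}(I) \cup \appelem{I}{R^p}$ for every position $R^p$, and I then check that for any $\uid$ $\ui{R^p}{S^q}$ in $\ids$ this set is contained in $\pi_{S^q}(I) \cup \appelem{I}{S^q} = \pi_{S^q}(I')$. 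For weak-soundness: the first clause follows directly from the equality on $\pi_{R^p}(I')$, and the second is vacuous since no new elements are introduced.

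The point requiring the most care is the $\ids$ verification. The subtlety is that when a fact $R(a, f_R(a))$ places $a$ at the new position $R^p$, further $\uid$s $\ui{R^p}{S^q}$ could in principle create new wants not already captured by $\appelem{I}{S^q}$. That this does not happen relies on transitive closure of $\ids$: the reason $a$ was in $\appelem{I}{R^p}$ in the first place is that $a$ occurs in~$I$ at some position $T^r$ with $\ui{T^r}{R^p} \in \ids$, so by transitivity $\ui{T^r}{S^q} \in \ids$ as well, and hence $a$ already lies in $\pi_{S^q}(I) \cup \appelem{I}{S^q}$. This use of transitive closure is the single place where the standing assumption on $\ids$ is essential.
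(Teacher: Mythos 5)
Your proposal is correct and follows essentially the same route as the paper: the same bijections $f_R$ between $\appelem{I}{R^1}$ and $\appelem{I}{R^2}$, the same new facts $R(a, f_R(a))$, and the same three verifications (with the $\ids$ check resting, as in the paper, on transitive closure of $\ids$ and the disjointness of $\appelem{I}{R^p}$ from $\pi_{R^p}(I)$). The only cosmetic difference is that you package the argument via the identity $\pi_{R^p}(I') = \pi_{R^p}(I) \cup \appelem{I}{R^p}$, whereas the paper argues fact by fact; just note that in the transitivity step one must also allow the degenerate case $T^r = S^q$, which your stated conclusion already covers.
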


However, our instance $I_0$ may not be balanced. The idea is then to balance it by
adding ``helper'' elements and assigning them to positions, as the following
example shows:

\begin{example}
  \label{exa:helper}
  Consider three binary relations $R$, $S$, $T$, with the $\uid$s
  $\ui{R^2}{S^1}$, $\ui{S^2}{T^1}$, $\ui{T^2}{R^1}$
  and their reverses, and the
  $\fd$s prescribed by assumption \assm{binary}.
  Consider $I_0 \defeq \{R(a, b)\}$.
  We have $a \in \appelem{I_0}{T^2}$ and $b \in \appelem{I_0}{S^1}$;
  however $\appelem{I_0}{S^2} = \appelem{I_0}{T^1} = \emptyset$, so $I_0$ is not balanced.

  Still, we can construct the weakly-sound superinstance $I \defeq \{R(a, b), S(b, c),
  T(c, a)\}$ that satisfies the constraints. Intuitively, we have added a
  ``helper'' element $c$ and ``assigned'' it to the positions $S^1$
  and $T^2$, which are connected by the $\uid$s.
\end{example}

We now formalize this idea of constructing weakly-sound superinstances where the
domain is augmented with \defo{helper elements}. We first need to
understand at which positions the helpers can appear to avoid violating
weak-soundness:

\begin{definition}
  \label{def:eqids}
  For any two positions $R^p$ and $S^q$, we write $R^p \eqids S^q$ when
  $R^p = S^q$ or when
  $\ui{R^p}{S^q}$, and hence $\ui{S^q}{R^p}$ by assumption \assm{reversible},
  are in~$\ids$.
\end{definition}

As $\ids$ is transitively closed, $\eqids$ is an equivalence relation.
Our idea to construct weakly-sound superinstances is thus to first decide on the
helpers that we want to add, and the $\eqids$-class to which we want to assign
them, following the definition of weak-soundness.
We represent this choice as a \defo{partially-specified superinstance}, or
\defo{pssinstance}:

\begin{definition}
  \label{def:completion}
  A \deft{pssinstance}
  of an instance $I$ is a triple $P = (I, \calH,
  \mapb)$ where $\calH$ is a finite set of \deft{helpers} and $\mapb$ maps
  each $h \in \calH$ to an
  $\eqids$-class $\mapb(h)$.

  We define $\appelem{P}{R^p} \defeq \appelem{I}{R^p}
  \sqcup \{h \in \calH \mid R^p \in \mapb(h)\}$. This allows us
  to talk of~$P$ being \deft{balanced}
  following Definition~\ref{def:balanced}.

  A superinstance $I'$ of~$I$ is a \deft{realization} of~$P$ if $\dom(I') =
  \dom(I) \sqcup \calH$, and, for any fact $R(\mybf{a})$ of~$I' \backslash I$ and
  $R^p \in \pos(R)$, we have $a_p \in \appelem{P}{R^p}$.
\end{definition}

\begin{example}
  In Example~\ref{exa:helper}, a pssinstance of $I_0$ is $P \defeq (I_0, \{c\}, \mapb)$
  where $\mapb(c) \defeq \{S^1, T^2\}$, and $I$ is a realization of~$P$.
\end{example}

It is always possible to balance an instance by adding helpers:

\begin{restatable}[Balancing]{lemma}{hascompletion}
  \label{lem:hascompletion}
  For any finite instance~$I$, if $I$ satisfies~$\ufds$ then it has a
  balanced pssinstance.
\end{restatable}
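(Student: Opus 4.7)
The plan is to reduce the construction of a balanced pssinstance to a separable system of linear equations indexed by $\eqids$-classes. First, since $I \models \ufds$ and \assm{binary} forces both $R^1 \to R^2$ and $R^2 \to R^1$ for every binary $R$, the facts of $R$ induce a bijection between $\pi_{R^1}(I)$ and $\pi_{R^2}(I)$; in particular $\card{\pi_{R^1}(I)} = \card{\pi_{R^2}(I)}$.

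Second, I would rewrite $\card{\appelem{I}{R^p}}$ purely in terms of the $\eqids$-class $\eqidsclass{R^p}$. Writing $\Pi_E := \bigcup_{S^q \in E} \pi_{S^q}(I)$ for each $\eqids$-class $E$, the fact that $\ids$ is transitively closed, is closed under reverses (by \assm{reversible}), and contains no trivial $\uid$s gives $\appelem{I}{R^p} = \Pi_{\eqidsclass{R^p}} \setminus \pi_{R^p}(I)$, so that $\card{\appelem{I}{R^p}} = \card{\Pi_{\eqidsclass{R^p}}} - \card{\pi_{R^p}(I)}$.

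Third, given a choice of non-negative integers $(n_E)_E$ indexed by $\eqids$-classes, I would take $\calH := \bigsqcup_E \calH_E$ with $\card{\calH_E} = n_E$ and $\mapb(h) := E$ for $h \in \calH_E$. Writing $P := (I, \calH, \mapb)$, one then has $\card{\appelem{P}{R^p}} = \card{\Pi_{\eqidsclass{R^p}}} + n_{\eqidsclass{R^p}} - \card{\pi_{R^p}(I)}$, and combining with $\card{\pi_{R^1}(I)} = \card{\pi_{R^2}(I)}$ the balance condition on $R$ becomes
\[
n_{\eqidsclass{R^1}} + \card{\Pi_{\eqidsclass{R^1}}} \;=\; n_{\eqidsclass{R^2}} + \card{\Pi_{\eqidsclass{R^2}}}.
\]

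Finally, I would define a graph $\calG$ on $\eqids$-classes with an edge $\{\eqidsclass{R^1}, \eqidsclass{R^2}\}$ for each binary $R$ (edges that are loops are vacuous). The equations above say that $f(E) := n_E + \card{\Pi_E}$ must be constant on every connected component of $\calG$. For each component $\calC$, I would set $c_\calC := \max_{E \in \calC} \card{\Pi_E}$ and $n_E := c_\calC - \card{\Pi_E}$; these are non-negative integers satisfying all balance equations, so $P$ is balanced. The only mild subtlety is the rewriting of $\card{\appelem{I}{R^p}}$ in step two; once that identity is in place, the system decouples over the connected components of $\calG$ and is trivially solved.
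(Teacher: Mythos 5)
Your proof is correct and follows essentially the same route as the paper's: your identity $\appelem{I}{R^p} = \bigl(\bigcup_{S^q \eqids R^p} \pi_{S^q}(I)\bigr) \setminus \pi_{R^p}(I)$ is exactly the paper's observation that $o(R^p) \defeq \appelem{I}{R^p} \sqcup \pi_{R^p}(I)$ depends only on the $\eqids$-class, and both arguments then pad each class with enough fresh helpers to make $n_E + \card{o(R^p)}$ constant, the paper using one global maximum $N$ over all positions where you use per-component maxima (an inessential refinement). The one caveat is that you phrase the argument under \assm{binary}, whereas the paper deliberately proves this lemma without that assumption so it can be reused at arbitrary arity; your argument generalizes verbatim by replacing ``each binary $R$'' with ``each pair of positions $R^p \eqfun R^q$'', since $I \models \ufds$ still yields $\card{\pi_{R^p}(I)} = \card{\pi_{R^q}(I)}$ for such pairs.
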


From there, we can construct realizations like we constructed superinstances in Lemma~\ref{prp:balwsnd}.

\begin{restatable}[Binary realizations]{lemma}{balwsndcb}
  \label{lem:balwsndcb}
  For any balanced pssinstance $P$ of an instance $I$ that satisfies
  $\ufds$, we can construct a realization of
  $P$ that satisfies $\ucon$.
\end{restatable}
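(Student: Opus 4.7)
The plan is to construct the realization $I'$ by pairing up, for each binary relation $R$, the elements of $\appelem{P}{R^1}$ with those of $\appelem{P}{R^2}$ via some bijection $\phi_R$. Such bijections exist because $P$ is balanced, i.e., $\card{\appelem{P}{R^1}} = \card{\appelem{P}{R^2}}$ for every binary $R$ (under \assm{binary}, both $R^1 \rightarrow R^2$ and $R^2 \rightarrow R^1$ lie in $\ufds$, so the balance condition applies to every pair of positions within each relation). I then set $I' \defeq I \cup \bigcup_R \{R(a, \phi_R(a)) : a \in \appelem{P}{R^1}\}$.

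First I would verify that $I'$ is a realization of $P$. Each helper $h \in \calH$ has a nonempty $\eqids$-class $\mapb(h)$, so $h \in \appelem{P}{R^p}$ for some binary $R$ and $p \in \{1,2\}$, hence $h$ occurs in at least one new fact; consequently $\dom(I') = \dom(I) \sqcup \calH$. The new facts, by construction, fill each position $R^p$ from $\appelem{P}{R^p}$. To check $\ufds$-satisfaction: $I$ itself satisfies $\ufds$ by hypothesis; each bijection $\phi_R$ prevents two new facts from clashing on a position; and no new fact clashes with an existing fact of $I$, because elements of $\appelem{P}{R^p}$ are either helpers (fresh, absent from $\dom(I)$) or elements of $\appelem{I}{R^p}$, which by definition of $\appelem{I}{\cdot}$ do not occur at $R^p$ in $I$.

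The main obstacle is verifying $I' \models \ids$. Fix a $\uid$ $\tau : \ui{R^p}{S^q}$ in $\ids$ and an element $a$ at $R^p$ in $I'$; I need to show that $a$ occurs at $S^q$ in $I'$. I would split into three cases. If $a \in \pi_{R^p}(I)$, then either $a \in \pi_{S^q}(I) \subseteq \pi_{S^q}(I')$ already, or $a \in \pi_{R^p}(I) \backslash \pi_{S^q}(I) \subseteq \appelem{I}{S^q} \subseteq \appelem{P}{S^q}$, so $a$ gets placed at $S^q$ by the construction. If $a$ is an existing element of $\appelem{I}{R^p}$ newly placed at $R^p$, there is $T^r$ with $a \in \pi_{T^r}(I)$ and $\ui{T^r}{R^p} \in \ids$; by transitivity of $\ids$, either $T^r = S^q$ and we are done, or $\ui{T^r}{S^q} \in \ids$, in which case the same dichotomy as the first case gives $a \in \pi_{S^q}(I)$ or $a \in \appelem{P}{S^q}$. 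Finally, if $a$ is a helper newly placed at $R^p$, then $R^p \in \mapb(a)$; by assumption \assm{reversible}, $\tau \in \ids$ entails $R^p \eqids S^q$, so $S^q \in \mapb(a)$, hence $a \in \appelem{P}{S^q}$ and is placed there. The crucial role of \assm{reversible} is in the third case (and, via transitivity, implicitly in the second): it ensures that the positions ``reachable'' from each helper's assigned $\eqids$-class, and from each wanting element's existing positions, are closed under the applicable $\uid$s, so that every element entitled to sit at $R^p$ is also entitled to sit at $S^q$.
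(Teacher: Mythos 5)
Your proposal is correct and follows essentially the same route as the paper: construct a bijection between $\appelem{P}{R^1}$ and $\appelem{P}{R^2}$ for each relation (possible since $P$ is balanced), add the paired facts, and verify $\ufds$ via injectivity of the bijection plus the fact that elements of $\appelem{P}{R^p}$ do not already occur at $R^p$, and verify $\ids$ by the same case split on whether the violating element is in $\dom(I)$ (handled as in Proposition~\ref{prp:balwsnd}) or is a helper (handled via \assm{reversible} and $\mapb$). Your extra observation that every helper appears in some new fact, so that $\dom(I') = \dom(I) \sqcup \calH$ as the definition of realization requires, is a detail the paper leaves implicit but is correctly justified.
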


We then observe that realizations are weakly-sound superinstances of $I_0$.

\begin{restatable}[Binary realizations are completions]{lemma}{wsrwss}
  \label{lem:wsrwss}
  If $I'$ is a realization of a pssinstance of~$I$ then it is a
  weakly-sound superinstance of~$I$.
\end{restatable}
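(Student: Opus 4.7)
The plan is to unwind the definitions directly; the lemma is essentially a bookkeeping check, so no combinatorial insight is needed. I would verify the two clauses of weak-soundness separately, each time starting from an arbitrary occurrence of an element in $I'$ and tracing where it must have come from.

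For the first clause, I would fix $a \in \dom(I)$ and $R^p \in \pos(\sigma)$ with $a \in \pi_{R^p}(I')$. Pick a fact $F \in I'$ witnessing this. If $F \in I$ then $a \in \pi_{R^p}(I)$, which is the first alternative. Otherwise $F \in I' \setminus I$, so by the definition of a realization, $a \in \appelem{P}{R^p} = \appelem{I}{R^p} \sqcup \{h \in \calH \mid R^p \in \mapb(h)\}$. Since $\dom(I)$ and $\calH$ are disjoint in a realization, $a$ is not a helper, so $a \in \appelem{I}{R^p}$, which is the second alternative.

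For the second clause, I would fix $a \in \dom(I') \setminus \dom(I) = \calH$ (using that $\dom(I') = \dom(I) \sqcup \calH$) and $R^p, S^q$ with $a \in \pi_{R^p}(I') \cap \pi_{S^q}(I')$. Since $a$ is a helper, it cannot occur in any fact of $I$ (because $\dom(I) \cap \calH = \emptyset$), so the two witnessing facts both lie in $I' \setminus I$. The realization condition then gives $a \in \appelem{P}{R^p}$ and $a \in \appelem{P}{S^q}$, and since $a \in \calH$, this forces $R^p \in \mapb(a)$ and $S^q \in \mapb(a)$. But $\mapb(a)$ is a single $\eqids$-class, so $R^p \eqids S^q$, which by Definition~\ref{def:eqids} means either $R^p = S^q$ or $\ui{R^p}{S^q} \in \ids$, as required.

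There is no real obstacle here: the lemma is a direct consequence of packaging the definitions of \textit{pssinstance}, $\appelem{P}{\cdot}$, realization, and $\eqids$ together. The only subtlety worth flagging explicitly in the write-up is the disjointness of $\dom(I)$ and $\calH$ inside a realization, which is what lets us cleanly split the two clauses according to whether the element under consideration is original or a helper.
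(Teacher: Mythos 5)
Your proof is correct and follows essentially the same route as the paper's: unwind the definition of realization to get $a \in \pi_{R^p}(I)$ or $a \in \appelem{P}{R^p}$, then split on whether $a \in \dom(I)$ or $a \in \calH$ using the disjointness $\dom(I') = \dom(I) \sqcup \calH$, landing in $\appelem{I}{R^p}$ in the first case and in $\mapb(a)$ (a single $\eqids$-class) in the second. Nothing is missing.
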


We have thus proved the Acyclic Unary Weakly-Sound
Models Proposition under assumptions \assm{binary} and \assm{reversible}, using
the completion process formed by combining the three above lemmas.

\subsection{Arbitrary arity and piecewise realizations}

We now lift assumption \assm{binary} (but retain assumption \assm{reversible}).
We show how to generalize the previous constructions to the arbitrary arity
case. Contrary to the \assm{binary} situation, we will see later that the
resulting completion process needs to assume that a certain \emph{saturation}
process has been applied to~$I_0$ beforehand.

The definition of balanced instances (Definition~\ref{def:balanced}) generalizes
to arbitrary arity, and we can show that the Balancing Lemma
(Lemma~\ref{lem:hascompletion}) still holds.
We keep the definition of pssinstance (Definition~\ref{def:completion})
but need to change the notion of realization. We replace it
by \defo{piecewise realizations}, which are defined on subsets of positions
that are connected in~$\ufds$.

\begin{definition}
  \label{def:eqfun}
  For any two positions $R^p$ and $R^q$, we write $R^p \eqfun R^q$ whenever $R^p
  \rightarrow R^q$ and $R^q \rightarrow R^p$ are in~$\ufds$.
\end{definition}

By transitivity of~$\ufds$, $\eqfun$ is clearly an equivalence relation.
We number the $\eqfun$-classes of
$\pos(\sigma)$ as $\Pi_1, \ldots, \Pi_{\neqfunc}$
and define \defp{piecewise instances} by their projections to the~$\Pi_i$:

\begin{definition}
  \label{def:piecewise}
  A \deft{piecewise instance}
  is an $\neqfunc$-tuple $\pire = (K_1, \ldots, K_{\neqfunc})$,
  where each $K_i$ is a set of $\card{\Pi_i}$-tuples, indexed by~$\Pi_i$ for
  convenience.
  The \deft{domain} of~$\pire$ is $\dom(\pire) \defeq \bigcup_i \dom(K_i)$.
  For $1 \leq i \leq \neqfunc$ and $R^p \in \Pi_i$, we write
  $\pi_{R^p}(\pire) \defeq \pi_{R^p}(K_i)$.
\end{definition}

We use this to define \defo{piecewise realizations} of pssinstances:

\begin{definition}
  \label{def:superbvpw}
  A piecewise instance $\pire = (K_1, \ldots, K_{\neqfunc})$ is a
  \deft{piecewise realization} of the pssinstance $P = (I, \calH, \mapb)$ if:
  \begin{compactitem}
    \item $\pi_{\Pi_i}(I) \subseteq K_i$ for all $1 \leq i \leq \neqfunc$,
    \item $\dom(\pire) = \dom(I) \sqcup \calH$, 
    \item for all $1 \leq i \leq \neqfunc$, for all $R^p
  \in \Pi_i$, for every tuple $\mybf{a} \in K_i \backslash \pi_{\Pi_i}(I)$, 
  we have $a_p \in \appelem{P}{R^p}$.
\end{compactitem}
\end{definition}

In order to generalize the Binary Realizations Lemma
(Lemma~\ref{lem:balwsndcb}),
we need to talk of a piecewise
instance $\pire$ ``satisfying'' $\ucon$. For $\ufds$, we require that $\pire$ respects the
$\ufd$s within each $\eqfun$-class. For
$\ids$, we define it directly from the projections of~$\pire$.

\begin{definition}
  A piecewise instance $\pire$ is \deft{$\ufds$-compliant} if, for all $1 \leq i \leq n$, there
  are no two tuples $\mybf{a} \neq \mybf{b}$ in~$K_i$
  such that $a_p = b_p$ for some
  $R^p \in \Pi_i$.

  $\pire$ is \deft{$\ids$-compliant} if $\appelem{\pire}{\tau} \defeq \pi_{R^p}(\pire) \backslash
  \pi_{S^q}(\pire)$ is empty for all $\tau \in \ids$.

  $\pire$ is \deft{$\ucon$-compliant} if it is $\ufds$- and $\ids$-compliant.
\end{definition}

We can then generalize the Binary Realizations Lemma:

\begin{restatable}[Realizations]{lemma}{balwsndc}
  \label{lem:balwsndc}
  For any balanced pssinstance~$P$ of an instance $I$ that satisfies
  $\ufds$, we can construct a $\ucon$-compliant piecewise
  realization of $P$.
\end{restatable}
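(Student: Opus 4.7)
The plan is to generalize the Binary Realizations Lemma by exploiting a consequence of \assm{reversible}: the $\ids$-compliance requirement $\pi_{R^p}(\pire) \setminus \pi_{S^q}(\pire) = \emptyset$ becomes an equality $\pi_{R^p}(\pire) = \pi_{S^q}(\pire)$ whenever $R^p \eqids S^q$. Thus, in any $\ids$-compliant piecewise realization, each $\eqids$-class $E$ has a single well-defined set of elements that populates every position of $E$.

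Concretely, writing $E(R^p)$ for the $\eqids$-class of $R^p$, I define
\[
D_E \defeq \Bigl(\bigcup_{R^p \in E} \pi_{R^p}(I)\Bigr) \cup \{h \in \calH \mid \mapb(h) = E\}
\]
for each $\eqids$-class $E$. Using \assm{reversible} to unfold $\appelem{I}{R^p}$, a short check gives $\appelem{P}{R^p} = D_{E(R^p)} \setminus \pi_{R^p}(I)$ for every position $R^p$. By the balanced hypothesis, for each $\eqfun$-class $\Pi_i$ the cardinality $n_i \defeq \card{\appelem{P}{R^p}}$ is independent of $R^p \in \Pi_i$. I fix an enumeration $\appelem{P}{R^p} = \{a^{R^p}_1, \ldots, a^{R^p}_{n_i}\}$ for each $R^p \in \Pi_i$, and build $K_i$ as $\pi_{\Pi_i}(I)$ together with $n_i$ new tuples $\mybf{t}_1, \ldots, \mybf{t}_{n_i}$, where the entry of $\mybf{t}_j$ at position $R^p$ is $a^{R^p}_j$. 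The three realization conditions are then immediate: $\pi_{\Pi_i}(I) \subseteq K_i$ and $a^{R^p}_j \in \appelem{P}{R^p}$ by construction, and $\dom(\pire) = \dom(I) \sqcup \calH$ since each element of $\dom(I)$ already appears in some fact of $I$ and each helper $h$ is placed in some $K_i$ by the bijection on any position of its class.

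The verification of $\ucon$-compliance splits into a local $\ufds$-check within each $K_i$ and a global $\ids$-check. Inside $K_i$: two old tuples agreeing at one position must be equal, because every $\ufd$ $R^p \rightarrow R^q$ with $R^p, R^q \in \Pi_i$ holds in $I$; an old tuple and a new one cannot collide at $R^p$ because their entries lie in the disjoint sets $\pi_{R^p}(I)$ and $D_{E(R^p)} \setminus \pi_{R^p}(I)$; and two distinct new tuples differ at every position by the column-wise bijections. For $\ids$-compliance, a direct computation gives $\pi_{R^p}(\pire) = \pi_{R^p}(I) \cup \appelem{P}{R^p} = D_{E(R^p)}$ for every $R^p$, so positions in the same $\eqids$-class have identical projections, and in particular $\pi_{R^p}(\pire) \setminus \pi_{S^q}(\pire) = \emptyset$ for every $\tau: \ui{R^p}{S^q} \in \ids$.

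The main obstacle is the interaction between $\eqfun$ and $\eqids$ in the non-binary setting: distinct columns of the same piece $K_i$ may lie in distinct $\eqids$-classes and therefore be fed from different global pools $D_E$. The approach above handles this by completely decoupling the two layers: the pools $D_E$ are defined uniformly from the $\eqids$-structure (taking care of $\ids$-compliance globally), while each $\Pi_i$ is filled independently by column-wise bijections inside those pools (taking care of $\ufds$-compliance locally). The balanced hypothesis is exactly what guarantees that these local bijections can be chosen so as to cover all wants at each column simultaneously.
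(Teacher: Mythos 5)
Your proof is correct and follows essentially the same route as the paper: within each $\eqfun$-class you build the new tuples of $K_i$ by column-wise bijections from a common index set onto the sets $\appelem{P}{R^p}$ (exactly the paper's $\phi^i_j$), and the compliance checks match. Your only departure is cosmetic: packaging $\ids$-compliance via the pools $D_E$ and the identity $\pi_{R^p}(\pire) = D_{E(R^p)}$ is a cleaner, more global restatement of the paper's case analysis (element of $\dom(I)$ versus helper), and it correctly isolates where assumption \assm{reversible} is used.
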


\begin{example}
  Consider a $4$-ary relation $R$ and the $\uid$s $\tau: \ui{R^1}{R^2}$,
  $\tau': \ui{R^3}{R^4}$ and their reverses, and the $\ufd$s $\phi: R^1 \rightarrow
  R^2$, $\phi' : R^3 \rightarrow R^4$ and their reverses. We have
  $\Pi_1 = \{R^1, R^2\}$ and $\Pi_2 = \{R^3, R^4\}$.
  Consider $I_0 \defeq \{R(a,
  b, c, d)\}$, which is balanced, and the balanced pssinstance
  $P \defeq (I_0, \emptyset, \mapb)$,
  where $\mapb$ is the empty function. A $\ucon$-compliant piecewise realization of~$P$
  is $\pire \defeq (\{(a, b), (b, a)\}, \{(c, d), (d, c)\})$.
\end{example}

We now transform the $\ucon$-compliant piecewise
realization~$\pire$ into a weakly-sound superinstance, generalizing the ``Binary
Realizations Are Completions'' Lemma (Lemma~\ref{lem:wsrwss}),
and completing the description of our completion process.
The idea is to expand each tuple $\mybf{t}$ of each $K_i$ to an entire fact
$F_{\mybf{t}}$ of the
corresponding relation.

However, to fill the other positions of $F_{\mybf{t}}$, we will need to
reuse existing elements of~$I_0$.
For this, we want $I_0$ to contain some $R$-fact for
every relation $R$ that occurs in $\chase{I_0}{\ids}$.

\begin{restatable}{definition}{wsaturated}
  \label{def:wsaturated}
  A relation $R$ is \deft{achieved} (by $I$ and $\ids$) if there is some $R$-fact in
  $\chase{I}{\ids}$.
 
  A superinstance $I'$ of an instance $I$ is \deft{relation-saturated} (for~$\ids$)
  if every achieved relation (by $I$ and $\ids$) occurs in~$I'$.
\end{restatable}

\begin{example}
  Consider two binary relations $R$ and $T$ and a unary relation $S$, the
  $\uid$s $\tau: \ui{S^1}{R^1}$, $\tau': \ui{R^2}{T^1}$ and their reverses, no $\ufd$s, and the
  non-relation-saturated instance $I_0 \defeq \{S(a)\}$ which is trivially balanced.

  $P \defeq (I_0, \emptyset, \mapb)$, with $\mapb$ the empty function, is a
  pssinstance of~$I$, and
  $\pire \defeq (\{(a)\}, \emptyset, \{(a)\}, \emptyset, \emptyset)$,
  where $\Pi_1$ and $\Pi_3$ are
  the $\eqfun$-classes of $R^1$ and $S^1$, is a $\ucon$-compliant piecewise realization of~$P$.
  However, we cannot easily complete $\pire$ to a superinstance of $I_0$
  satisfying~$\tau$ and~$\tau'$, because,
  to create the fact $R(a, \bullet)$, we need to create an element to fill
  position $R^2$, and this would introduce a violation of~$\tau'$. Intuitively,
  this is because $I_0$ is not relation-saturated.

  Consider instead the instance $I_1 \defeq I_0 \sqcup \{S(c), R(c, d), T(d)\}$.
  We can complete~$I_1$ to satisfy $\tau$ and $\tau'$ by adding the
  fact $R(a, d)$, reusing the element $d$ to fill position $R^2$.
\end{example}

Clearly, initial chasing on~$I_0$ ensures relation-saturation:

\begin{restatable}[Relation-saturated solutions]{lemma}{wtsol}
  \label{lem:wtsol}
  The result of performing sufficiently many chase rounds on any instance~$I$ is relation-saturated.
\end{restatable}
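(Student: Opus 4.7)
The plan is to exploit two facts: the signature $\sigma$ is finite, and the chase $\chase{I}{\ids}$ is defined as the fixpoint of iterated chase rounds, so every fact in the fixpoint must be introduced at some finite round.

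First, I would fix the instance $I$ and let $\calR$ denote the set of relations $R \in \sigma$ that are achieved by $I$ and $\ids$. Since $\sigma$ is finite, so is $\calR$. For each $R \in \calR$, by definition of ``achieved'' there exists some $R$-fact $F_R \in \chase{I}{\ids}$. Because the chase is obtained as the union over all finite chase rounds starting from $I$, there is a smallest $n_R \in \NN$ such that $F_R$ appears in the instance produced after $n_R$ rounds of chasing on $I$. Set $N \defeq \max_{R \in \calR} n_R$, which is finite because $\calR$ is.

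Next, I would observe that chase rounds are monotone: each round only adds facts, never removes them. Therefore the instance $I_N$ produced after $N$ rounds contains $F_R$ for every $R \in \calR$, so every achieved relation occurs in $I_N$. Hence $I_N$ is a relation-saturated superinstance of $I$, proving the lemma with the ``sufficiently many'' rounds being $N$.

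There is essentially no obstacle here: the statement is a finiteness observation relying only on the finiteness of $\sigma$ and monotonicity of the chase rounds. The only point to double-check is that the definition of $\chase{I}{\ids}$ as the fixpoint of rounds really guarantees that every fact appears at some finite stage, which is immediate from the construction (each round adds witnesses for all currently active facts using fresh nulls, and the fixpoint is $\bigcup_{n \geq 0} I_n$).
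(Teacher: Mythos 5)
Your proof is correct and follows essentially the same argument as the paper: for each achieved relation pick a finite round at which a witnessing fact appears, take the maximum over the finitely many relations of $\sigma$, and use monotonicity of chase rounds. No gaps.
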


Relation-saturation ensures that we can reuse existing elements when
completing $\pire$. This allows us to perform the last step of the completion
process:

\begin{restatable}[Using realizations to get completions]{lemma}{sreal}
  \label{lem:sreal}
  For any finite relation-saturated instance $I$ that satisfies $\ufds$,
  from a $\ucon$-compliant piecewise realization $\pire$ of a pssinstance of
  $I$, we can construct a finite weakly-sound
  superinstance of~$I$ that satisfies $\ucon$.
\end{restatable}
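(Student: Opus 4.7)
My approach is to expand each tuple of the piecewise realization $\pire$ into a full fact, using relation-saturation to supply values at the positions outside the current $\eqfun$-class. Concretely, for each $\eqfun$-class $\Pi_i$---which, since $\ufd$s are within a single relation, lies entirely in one relation $R_i$---and each new tuple $\mathbf{t} \in K_i \setminus \pi_{\Pi_i}(I)$, I create one new $R_i$-fact whose positions in $\Pi_i$ carry the components of $\mathbf{t}$ and whose positions outside $\Pi_i$ are filled by copying from a template fact $F_{R_i} \in I$, which is guaranteed to exist by relation-saturation. I let $I'$ be $I$ together with all such new facts; this is finite since $\pire$ and $I$ are.

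\textbf{Easy conditions.} Weak-soundness follows by inspection: every existing element that ends up at a new position either originates from $\pire$, in which case the pssinstance structure places it in $\appelem{I}{R^p}$, or is copied from the template, in which case it already occurred at that position in $I$. New elements from $\calH$ appear only at positions in their assigned $\eqids$-class by construction of $\pire$. For the satisfaction of $\ids$: the construction preserves projections, so $\pi_{R^p}(I') = \pi_{R^p}(\pire)$ for every position $R^p$, and $\ids$-compliance of $\pire$ then gives the required equalities across each $\uid$.

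\textbf{Main obstacle.} The delicate point is $\ufds$-satisfaction. Within a single $\eqfun$-class, $\ufds$-compliance of $\pire$ suffices directly. For a cross-class $\ufd$ $R^p \rightarrow R^q$ with $R^p, R^q$ in different $\eqfun$-classes, assumption \assm{reversible} is essential. If both $R^p$ and $R^q$ occur in $\uid$s of $\ids$, then \assm{reversible} gives $R^q \rightarrow R^p$ as well, placing $R^p$ and $R^q$ in the same $\eqfun$-class: contradiction. If $R^q$ occurs in no $\uid$, then $\appelem{P}{R^q} = \emptyset$, which forces the class $\Pi_j \ni R^q$ to satisfy $K_j = \pi_{\Pi_j}(I)$, so no new tuple is expanded on that class. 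The residual subcase---$R^p$ outside every $\uid$ while $R^q$ lies on a $\uid$---is where the argument is tightest: then the template choice may yield new facts agreeing on $R^p$ but disagreeing on $R^q$. I would handle it by filling non-$\uid$ positions of new facts with fresh elements rather than template values, so that distinct new facts never agree on such a position and the cross-class $\ufd$ becomes vacuous on them, while weak-soundness is preserved since a fresh element introduced at a single non-$\uid$ position trivially satisfies the second clause of Definition~\ref{def:wssinstance}.
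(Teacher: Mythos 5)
Your overall plan (expand each new tuple of $\pire$ into a full fact, using relation-saturation to supply a template for the remaining positions) is the paper's, and you correctly sense that pure template-filling breaks some cross-class $\ufd$s. But your fix---fresh elements at the positions not occurring in any $\uid$, template values at the $\uid$ positions outside $\Pi_i$---uses the wrong dichotomy and introduces violations of $\ufd$s that go \emph{from} a $\uid$ position \emph{to} a non-$\uid$ position. Concretely, take a ternary $R$ with $\ids = \{\ui{R^1}{R^2}, \ui{R^2}{R^1}\}$ and $\ufds = \{R^2 \rightarrow R^3\}$ (this is finitely closed and satisfies \assm{reversible}, since $R^3$ occurs in no $\uid$), and $I = \{R(a,b,c)\}$. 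The $\eqfun$-classes are singletons, and any $\ucon$-compliant realization forces the new tuples $(b)$ at $\{R^1\}$ and $(a)$ at $\{R^2\}$. Your construction produces $R(b,b,f_1)$ (template $b$ at the $\uid$ position $R^2$, fresh $f_1$ at the non-$\uid$ position $R^3$), and now $R(a,b,c)$ and $R(b,b,f_1)$ agree on $R^2$ but not on $R^3$, violating $R^2 \rightarrow R^3$. Your case analysis misses this because, when $R^q$ is a non-$\uid$ position, the danger is not whether new tuples are expanded \emph{on $R^q$'s class} (and in any case $\appelem{P}{R^q}$ need not be empty---the Balancing Lemma may assign helpers to $R^q$'s singleton $\eqids$-class); the danger is what sits \emph{at position $R^q$} of facts created for other classes. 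Relatedly, your claim that $\pi_{R^p}(I') = \pi_{R^p}(\pire)$ for every position is false at the freshly-filled positions, though that particular slip is harmless since no $\uid$ applies there.

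The correct criterion is \emph{dangerousness} (Definition~\ref{def:dangerdef}): a position $S^r \notin \Pi_i$ gets a fresh element iff $S^r \rightarrow S^q$ is in $\ufds$ for $S^q \in \Pi_i$, and the template value otherwise. Two facts are then needed: Lemma~\ref{lem:nondgr} shows $\nondanger(\Pi_i)$ is closed under following $\ufd$s, so any alleged violation of $R^p \rightarrow R^q$ whose agreeing position $R^p$ is non-dangerous has $R^q$ non-dangerous too, and the violation transfers entirely onto the template fact of~$I$, contradicting $I \models \ufds$; and \assm{reversible} shows that, for an inner class $\Pi_i$, every dangerous position is in fact a non-$\uid$ position, so the fresh elements placed there never trigger $\uid$s (this is where your intended use of \assm{reversible} actually belongs). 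In your counterexample above, $R^3$ is non-dangerous for both $\{R^1\}$ and $\{R^2\}$, so it correctly receives the template value $c$ and no violation arises. You would also need the paper's remaining bookkeeping: skip classes of non-achieved relations (where no template exists) and outer classes, and then argue in the $\ids$ step that the facts needed as witnesses are nonetheless created.
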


We can now prove the Acyclic Unary Weakly-Sound Models Proposition.
Consider our initial finite instance~$I_0$, that satisfies~$\ufds$, and chase it
to a finite relation-saturated superinstance $I_0'$ using
the Relation-Saturated Solutions Lemma. By the Unique Witness Property, $I_0'$
still satisfies $\ufds$, and it is clearly a weakly-sound superinstance of
$I_0$.

Now, perform the completion process: construct a balanced pssinstance~$P$
of~$I_0'$ using the Balancing Lemma (Lemma~\ref{lem:hascompletion}),
and a finite $\ucon$-compliant piecewise realization $\pire$ of~$P$
by the Realizations Lemma (Lemma~\ref{lem:balwsndc}).
Then, use the realization~$\pire$ 
with Lemma~\ref{lem:sreal}
to construct the finite weakly-sound superinstance~$I$ of~$I_0'$ that satisfies
$\ucon$. $I$ is clearly also a weakly-sound superinstance
of~$I_0$, so the result is proven.

\section{$k$-Soundness and Reversible $\uid$s}
\label{sec:ksound}
We now move from weak-soundness to~$k$-soundness, to prove the Acyclic Unary
Universal Models Theorem (Theorem~\ref{thm:myuniv}), still making assumption \assm{reversible}.

We first introduce the notion of \emph{aligned superinstances} that we use
to maintain $k$-soundness, and give the saturation process that
generalizes relation-saturation.
We then define a notion of \emph{thrifty chase steps},
and a completion process that
uses these chase steps to repair $\uid$ violations in the instance.

\subsection{Aligned superinstances and fact-saturation}

We ensure $k$-soundness by maintaining a
\emph{$k$-bounded simulation} from our superinstance of $I_0$ to the chase $\chase{I_0}{\ids}$.
Indeed, $\chase{I_0}{\ids}$ is a universal model for $\ids$, and it satisfies
$\fds$ (by the Unique Witness Property, and because $I_0$ does). Hence, it is in
particular $k$-sound for~$\con$. Now, as acyclic queries of size $\leq k$ are
preserved through $k$-bounded simulations, superinstances of $I_0$ with a
$k$-bounded simulation to $\chase{I_0}{\ids}$ are indeed $k$-sound for $\acq$.

\begin{definition}
  \label{def:bbsim}
  For $I$, $I'$ two instances, $a \in \dom(I)$, $b \in \dom(I')$, and $n \in
  \NN$,
  we write $(I, a) \bsim_n (I', b)$ if,
  for any fact $R(\mybf{a})$ of~$I$ with $a_p = a$ for some~$R^p \in \pos(R)$,
  there exists a fact $R(\mybf{b})$ of~$I'$ such that $b_p = b$,
  and $(I, a_q) \bsim_{n-1} (I', b_q)$ for all~$R^q \in \pos(R)$.
  The base case $(I, a) \bsim_0 (I', b)$ always holds.

  An \deft{$n$-bounded simulation} from $I$ to $I'$ is a mapping $\cov$ 
  such that for all $a \in \dom(I)$, $(I, a) \bsim_n (I', \cov(a))$.

  We write $a \bbsim_n b$ for $a, b \in \dom(I)$ if both $(I, a) \bsim_n (I, b)$ and
  $(I, b) \bsim_n (I, a)$; this is an equivalence relation on $\dom(I)$.

\end{definition}

\begin{restatable}{lemma}{ksimacq}
  \label{lem:ksimacq}
For any instance $I$ and $\acq$ $q$ of size $\leq n$ such that $I \models
q$, if there is an $n$-bounded simulation from $I$ to $I'$, then $I' \models q$.
\end{restatable}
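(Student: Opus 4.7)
The plan is to exploit the fact that a constant-free acyclic $\cq$~$q$ has a forest-shaped bipartite incidence graph, with atoms and variables as nodes and edges recording containment. Indeed, any cycle in this bipartite graph would alternate atoms and variables, giving a Berge cycle in~$q$, while the condition that no variable appears twice in the same atom rules out length-two cycles. In particular, two distinct atoms of~$q$ share at most one variable. By treating each connected component of~$q$ separately, I may assume the incidence graph is a single tree, rooted at an arbitrary variable $v_0$.

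I would then construct a homomorphism $h' : q \to I'$ by descending this tree, maintaining the invariant that each variable $v$ reached at \emph{atom-depth} $d$ (the number of atoms on the path from $v_0$ to~$v$) satisfies $(I, h(v)) \bsim_{n - d} (I', h'(v))$. The base case $h'(v_0) \defeq \cov(h(v_0))$ is given directly by the $n$-bounded simulation. At each inductive step, for an atom $A = R(\mybf{x})$ appearing as a child of an already-assigned variable $v$ at atom-depth~$d$, the simulation invariant at~$v$ applied to the fact~$A$ of~$I$ (via Definition~\ref{def:bbsim}) produces a fact $R(\mybf{b})$ of~$I'$ with $b_p = h'(v)$ at the position~$p$ where~$v$ occurs in~$A$, and with $(I, h(x_q)) \bsim_{n - d - 1} (I', b_q)$ at every other position~$q$. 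I would then set $h'(x_q) \defeq b_q$ for every variable $x_q$ of~$A$ not yet defined, restoring the invariant for these new variables at atom-depth $d + 1$.

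The simulation levels stay valid throughout: since the tree contains at most $\card{q} \leq n$ atoms, the longest path from~$v_0$ traverses at most~$n$ atoms, so the deepest atom is processed using a simulation of level at least~$1$, just enough to produce the required fact. The main obstacle is ensuring that $h'$ is a well-defined homomorphism, and this is precisely where acyclicity is used: in a tree each non-root variable has a unique parent atom, so $h'$ is assigned exactly once on each variable and never overwritten, and every atom of~$q$ lands in~$I'$ by construction. Without Berge-acyclicity, two atoms could share several variables, and the facts produced by the simulation for each atom independently would in general disagree on the shared variables, which is why the argument is specific to $\acq$.
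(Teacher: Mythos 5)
Your proposal is correct and follows essentially the same route as the paper: the paper's proof is an induction on the simulation depth that decomposes the acyclic query at a chosen variable into the atoms containing it and the subqueries hanging off their other variables, which is exactly the recursive unrolling of your top-down traversal of the incidence tree with the depth-indexed invariant $(I,h(v)) \bsim_{n-d} (I',h'(v))$. Your budget check ($d+1 \leq \card{q} \leq n$, so the deepest atom still has simulation level $\geq 1$) and your well-definedness argument (each non-root variable has a unique parent atom) match the roles played in the paper by the bound $\card{q_j} \leq n-1$ and the disjointness of the subqueries' variable sets.
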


We accordingly give a name to superinstances of $I_0$ that have a $k$-bounded
simulation to the chase. For convenience, we also require them to be finite and satisfy $\ufds$.
For technical reasons we require that the simulation is the identity on $I_0$,
that it does not map other elements to $I_0$, and that elements occur in the
superinstance
at least at the position where their $\cov$-image was introduced in the chase:

\begin{definition}
  An \deft{aligned superinstance} $J = (I, \cov)$ of~$I_0$ is a finite superinstance
  $I$ of~$I_0$
  that satisfies $\ufds$,
  and a $k$-bounded simulation
  $\cov$ from $I$ to $\chase{I_0}{\ids}$ such that $\restr{\cov}{I_0}$ is the
  identity and $\restr{\cov}{(I\backslash I_0)}$ maps to $\chase{I_0}{\ids}
  \backslash I_0$.
  
  Further, for any $a \in \dom(I) \backslash \dom(I_0)$,
  letting $R^p$ be the position where $\cov(a)$ was introduced in
  $\chase{I_0}{\ids}$, we require that $a \in \pi_{R^p}(I)$.
\end{definition}

Before we perform the \emph{completion process} that allows us to satisfy
$\ids$, we need to perform a \emph{saturation process}, like relation-saturation
in the previous section. Instead of achieving all relations, we want the
aligned superinstance to achieve all \defo{fact classes}:

\begin{definition}
  \label{def:factclass}
  A \deft{fact class} is a pair $(R^p, \mybf{C})$ of a position $R^p
  \in \pos(\sigma)$
  and a $\arity{R}$-tuple of 
  $\bbsim_k$-classes of elements of
  $\chase{I_0}{\ids}$. The dependency on~$k$ is omitted for brevity.

  The \deft{fact class} of a fact $F = R(\mybf{a})$ of~$\chase{I_0}{\ids}
  \backslash I_0$ is $(R^p, \mybf{C})$, where $a_p$ is the exported element
  of~$F$ and $C_i$ is the $\bbsim_k$-class of~$a_i$ in $\chase{I_0}{\ids}$ for
  all $R^i \in \pos(R)$.
  
  A fact class $(R^p, \mybf{C})$ is
  \deft{achieved} if it is the fact class of some fact of
  $\chase{I_0}{\ids} \backslash I_0$.
  We write $\rfcl$ for the set of all achieved 
  fact classes
  (for brevity, the dependence on~$I_0$, $\ids$, and $k$ is omitted from notation).

  An aligned superinstance $J = (I, \cov)$ is \deft{fact-saturated} if, for any achieved fact
  class $D = (R^p, \mybf{C})$ in $\rfcl$, there is a fact $F_D =
  R(\mybf{a})$ of~$I \backslash I_0$
  such that $\cov(a_i) \in C_i$ for all $R^i \in \pos(R)$.
  We say that $F_D$ \deft{achieves}~$D$ in~$J$.
\end{definition}

\begin{restatable}{lemma}{clasfino}
  \label{lem:clasfino}
  For any initial instance $I_0$, set $\ids$ of~$\uid$s, and $k \in \mathbb{N}$, $\rfcl$ is finite.
\end{restatable}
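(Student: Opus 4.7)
The plan is to bound $\card{\rfcl}$ by first bounding the number of $\bbsim_k$-equivalence classes on $\chase{I_0}{\ids}$. Since the signature $\sigma$ is finite, $\pos(\sigma)$ is finite and the maximum arity $a_{\max}$ is finite; if $N$ denotes the number of $\bbsim_k$-classes, then a fact class is a position together with an $a_{\max}$-tuple of such classes, giving $\card{\rfcl} \leq \card{\pos(\sigma)} \cdot N^{a_{\max}}$. So it suffices to show that $N$ is finite.

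I would show by induction on $n$ that $\bbsim_n$ has finitely many classes on $\chase{I_0}{\ids}$, for all $n \leq k$. The base case $n=0$ is immediate since $\bbsim_0$ relates all elements. For the inductive step, the key point is that each element has a uniformly bounded \emph{local degree} in $\chase{I_0}{\ids}$. Indeed, by the Unique Witness Property, every $a \in \dom(\chase{I_0}{\ids}) \setminus \dom(I_0)$ occurs at most once at each position of $\sigma$, and hence appears in at most $\card{\pos(\sigma)}$ facts; similarly, any $a \in \dom(I_0)$ appears in the finitely many facts of $I_0$, plus in chase-extension facts only as an exported element, which again by the Unique Witness Property contributes at most $\card{\pos(\sigma)}$ facts. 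Given this bounded degree, the $\bbsim_n$-type of an element $a$ is determined by the finite multiset of ``local fact labels'' around $a$, where each label records the relation, the position of $a$ in the fact, and the $\bbsim_{n-1}$-classes of the other arguments. By the inductive hypothesis there are finitely many possible labels, hence finitely many multisets of bounded size, hence finitely many $\bbsim_n$-classes.

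Equivalently, one may encode the $\bbsim_k$-type of an element as its ``neighborhood tree'' of depth $k$, with nodes labeled by relations and positions and with uniformly bounded branching; there are finitely many such labeled trees up to isomorphism. I do not foresee any real obstacle: the only subtlety is to handle elements of $I_0$, but since $\dom(I_0)$ is finite these contribute only finitely many additional $\bbsim_n$-classes, and all other elements share the uniform local-degree bound needed for the induction to close.
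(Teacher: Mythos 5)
Your proof is correct and follows essentially the same route as the paper's: both use the Unique Witness Property to bound the number of facts containing any given element (treating $\dom(I_0)$ separately), deduce that there are only finitely many $\bbsim_k$-classes from the resulting bounded-neighborhood structure, and then bound $\card{\rfcl}$ by $\card{\pos(\sigma)}$ times a power of the number of classes. Your explicit induction on $n$ merely spells out the step the paper dispatches with "this clearly implies"; there is no gap.
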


We now define our saturation process: chase $I_0$ until all fact classes are
achieved, which is possible in finitely many rounds thanks to the above lemma. The
result is easily seen to be a fact-saturated aligned superinstance:

\begin{restatable}[Fact-saturated solutions]{lemma}{nondetexhaust}
  \label{lem:nondetexhaust}
  The result $I$ of performing sufficiently many chase
  rounds on $I_0$ is such that $J_0 = (I, \id)$ is a fact-saturated aligned
  superinstance of~$I_0$.
\end{restatable}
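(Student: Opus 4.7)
The plan is to take $I$ to be the result of performing $r$ chase rounds on $I_0$ for a sufficiently large but finite $r$, and verify in turn each condition of being a fact-saturated aligned superinstance. All required conditions are finitary, so it suffices to show that each one is achievable after finitely many rounds, and then to take the maximum.

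First I would handle the conditions for $(I, \id)$ to be an aligned superinstance. Finiteness and $I_0 \subseteq I \subseteq \chase{I_0}{\ids}$ are immediate from the chase construction. For $I \models \ufds$, the key observation is that $\chase{I_0}{\ids} \models \ufds$: any putative $\ufds$-violation would require the same element to appear at the same position in two different facts, which by the Unique Witness Property forces both facts to lie in $I_0$, contradicting $I_0 \models \ufds$. Since $\ufds$ is universal, its satisfaction passes to the subinstance $I$. The identity being a $k$-bounded simulation $I \to \chase{I_0}{\ids}$ is trivial by induction on $n \leq k$, using that every fact of $I$ is already a fact of the chase. The conditions on $\restr{\id}{I_0}$ and $\restr{\id}{(I \setminus I_0)}$ are immediate because newly introduced chase elements are fresh nulls from $\calN$, disjoint from $\dom(I_0)$.

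Next I would address the ``introduction position'' clause: for any $a \in \dom(I) \setminus \dom(I_0)$, if $\id(a) = a$ was introduced at position $R^p$ in the chase, then the fact where $a$ was introduced contains $a$ at $R^p$ and was created at some finite round, so including enough rounds places this fact in $I$.

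Finally, for fact-saturation, I would invoke the Clasfino lemma to conclude that $\rfcl$ is finite. Each achieved fact class $D$ has, by definition, at least one witness fact $F$ in $\chase{I_0}{\ids} \setminus I_0$, and $F$ is produced at some finite chase round $r_D$. Since $\rfcl$ is finite, $r \defeq \max_{D \in \rfcl} r_D$ is finite, and after $r$ rounds every achieved fact class has a witness in $I \setminus I_0$, providing the required $F_D$ for each $D$. Taking the maximum of this $r$ with the bounds needed for the other conditions yields the desired $I$. The main point where care is needed is simply checking that $\chase{I_0}{\ids} \models \ufds$ via the Unique Witness Property; the rest is a bookkeeping argument combining the finiteness of $\rfcl$ with the round-by-round construction of the chase.
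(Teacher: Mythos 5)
Your proposal is correct and follows essentially the same route as the paper: the core step is identical (use the finiteness of $\rfcl$ from Lemma~\ref{lem:clasfino} to take the maximum over the finitely many rounds at which each achieved fact class first gets an achiever in $\chase{I_0}{\ids}\setminus I_0$). The additional verifications you spell out — in particular that $\chase{I_0}{\ids}\models\ufds$ via the Unique Witness Property, hence $I\models\ufds$ — are exactly the checks the paper treats as ``easily seen'' (and proves elsewhere, in the separability argument for Proposition~\ref{prp:complexity}).
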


We thus obtain a fact-saturated aligned superinstance $J_0$ of~$I_0$,
which we now want to complete to one that satisfies $\ids$.

\subsection{Fact-thrifty completion}

Our general method to repair $\uid$ violations in $J_0$ is to apply a form of
chase step on aligned superinstances, which may reuse elements:
\defo{thrifty chase steps}. To define them, we first distinguish \defo{dangerous}
and \defo{non-dangerous} positions, which determine how we may reuse elements
when chasing.

\begin{restatable}{definition}{dangerdef}
  \label{def:dangerdef}
  We say a position $S^r \in \pos(\sigma)$ is \deft{dangerous} for a position
  $S^q \neq S^r$ if $S^r \rightarrow S^q$ is in~$\ufds$, and write $S^r \in \danger(S^q)$.
  Otherwise, $S^r$ is
  \deft{non-dangerous}, written $S^r \in \nondanger(S^q)$. Note that $\{S^q\}
  \sqcup \danger(S^q) \sqcup \nondanger(S^q) = \pos(S)$.
\end{restatable}

\begin{definition}[Thrifty chase steps]
  \label{def:thrifty}
  Let $J = (I, \cov)$ be an aligned superinstance of~$I_0$, let $\tau : \ui{R^p}{S^q}$
  be a $\uid$ of~$\ids$, and let $F_\a = R(\mybf{a})$ be an active fact for
  $\tau$ in $I$.
  
  Because $\cov$ is a $1$-bounded simulation, $\cov(a_p) \in
  \pi_{R^p}(\chase{I_0}{\ids})$, so, because the chase satisfies~$\tau$, there
  is a fact $F_\w = S(\mybf{b}')$ in $\chase{I_0}{\ids}$
  with $b'_q = \cov(a_p)$; we call $F_\w$ the \deft{chase witness}.

  Applying a \deft{thrifty chase step} on $F_\a$ for~$\tau$ yields an aligned
  superinstance $J' = (I', \cov')$. We define $I'$ as $I$ plus a new fact $F_\n =
  S(\mybf{b})$, where $b_q = a_p$ and the $b_r$ for $S^r \neq S^q$ may be elements of $\dom(J)$ or fresh elements.
  We require that:
  \begin{compactitem}
  \item for $S^r \in \nondanger(S^q)$, $b_r \in \pi_{S^r}(J)$ (so they are not
    fresh)
  \item for $S^r \in \danger(S^q)$, $b_r \notin \pi_{S^r}(J)$ (so they may be
    fresh)
  \item for $S^r \neq S^q$, if $b_r$ is not fresh then $\cov(b_r) \bbsim_k
    b'_r$.
  \end{compactitem}

  We define $\cov'$ by extending $\cov$ to $\dom(J')$: we set $\cov'(b_r)
  \defeq b_r'$ whenever $b_r$ is fresh.

  A \deft{fact-thrifty chase step} is a thrifty chase step where we choose one fact
  $F_\r = S(\mybf{c})$ of $J \backslash I_0$ that achieves the fact class of
  $F_\w$ (that is, $\cov(c_i) \bbsim_k b'_i$ for all
  $i$), and use $F_\r$ to define $b_r \defeq c_r$ for all $S^r \in \nondanger(S^q)$.

  The chase step is \deft{fresh} if $b_r$ is fresh for all $S^r \in \danger(S^q)$.
\end{definition}

Thrifty chase steps may in general violate $\ufds$, but fact-thrifty chase steps
never do. For this reason, we will only use fact-thrifty chase steps in this
section. The point of working with fact-saturated aligned superinstances is that we can ensure
that a suitable $F_\r$ always exists. We thus claim:

\begin{restatable}[Fact-thrifty chase steps]{lemma}{ftok}
  \label{lem:ftok}
  For any fact-saturated aligned superinstance $J$, the result $J'$ 
  of a fact-thrifty chase step on $J$ is indeed a well-defined aligned
  superinstance where the former active fact $F_\a$ is no longer active.
\end{restatable}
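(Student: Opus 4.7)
The plan is to verify that $J' = (I', \cov')$ satisfies every clause of the aligned-superinstance definition, then to check that the step is well-defined and that $F_\a$ is no longer active for~$\tau$.

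I would first argue that $F_\w$ lies in $\chase{I_0}{\ids}\setminus I_0$, so that its fact class belongs to $\rfcl$ and fact-saturation of $J$ supplies the required $F_\r$. Since $F_\a$ is active for $\tau$, the element $a_p$ does not occur at $S^q$ in $I$, hence not in $I_0$. If $a_p \in \dom(I_0)$ then $\cov(a_p) = a_p$ and $F_\w \in I_0$ would contradict this; if $a_p \in \dom(I)\setminus\dom(I_0)$ then $\cov(a_p)$ is a null (by the image constraint on $\cov$), which cannot appear in any $I_0$-fact. Hence $F_\w$ has a fact class in $\rfcl$, and fact-saturation produces some $F_\r = S(\mybf{c}) \in I\setminus I_0$ with $\cov(c_i) \bbsim_k b'_i$ for all $i$; we then set $b_r \defeq c_r$ for each $S^r \in \nondanger(S^q)$.

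The central verification is that $I'$ still satisfies $\ufds$. For a $\ufd$ $S^\ell \rightarrow S^r$ in $\ufds$, suppose $F'' = S(\mybf{b}'') \in I$ agrees with $F_\n$ at $S^\ell$. If $S^\ell = S^q$, then $b_\ell = a_p \notin \pi_{S^q}(I)$ by activity of $F_\a$, ruling out $F''$; if $S^\ell \in \danger(S^q)$, the thrifty condition forces $b_\ell \notin \pi_{S^\ell}(J)$, again ruling out $F''$. Otherwise $S^\ell \in \nondanger(S^q)$ and $b_\ell = c_\ell$, so $F''$ and $F_\r$ agree at $S^\ell$, forcing $b''_r = c_r$ by the $\ufd$ holding inside $I$. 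The critical observation is that $S^r$ must itself be non-dangerous and distinct from $S^q$: otherwise $S^\ell \rightarrow S^r$ together with $S^r \rightarrow S^q$ (or directly if $S^r = S^q$) would yield $S^\ell \rightarrow S^q$ by transitivity of $\ufds$, contradicting $S^\ell \in \nondanger(S^q)$. Hence $b_r = c_r = b''_r$, and no violation arises.

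For the $k$-bounded simulation, old facts inherit their matches via the unchanged $\cov$, and the new fact $F_\n$ is matched through $F_\w$: at each position $S^r$, $\cov'(b_r)$ either equals $b'_r$ (when $b_r$ is fresh, and likewise for $b_q = a_p$ with $\cov(a_p) = b'_q$ by construction of $F_\w$) or is $\bbsim_k$-equivalent to $b'_r$ (when $b_r = c_r$ is reused via $F_\r$). The $\bsim_{k-1}$ obligations on the remaining positions of $F_\n$ then follow by chaining the pre-existing $\bsim_k$ of $\cov$ with the $\bbsim_k$ equivalences supplied by $F_\r$, via induction on~$k$. The remaining alignment conditions are routine: $\cov'$ stays identity on $I_0$ since only fresh elements receive new images; each $\cov'(b_r) = b'_r$ lies in $\chase{I_0}{\ids}\setminus I_0$ since $F_\w$ does; and the introduction-position requirement holds because $b'_r$ is introduced at position $S^r$ of $F_\w$ while $F_\n$ places $b_r$ at the same position. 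Finally $F_\a$ is no longer active for $\tau$ since $a_p$ now occurs at $S^q$ via $F_\n$. The main obstacle throughout is the $\ufd$-preservation argument, which crucially combines the transitive closedness of $\ufds$ with the very definition of $\danger(S^q)$ to guarantee that the coordinates we reuse from $F_\r$ cannot collide with any $S$-fact already in~$I$.
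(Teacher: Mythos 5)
Your proof is correct and follows essentially the same route as the paper: existence of $F_\r$ from fact-saturation, the $\ufd$-preservation argument via transitivity of $\ufds$ and the definition of $\danger(S^q)$ (the paper isolates this as a separate lemma), and the $k$-bounded simulation extended through the chase witness by induction. The only step you assert without justification is that $b'_r$ is \emph{introduced} at position $S^r$ of~$F_\w$ for $S^r \neq S^q$ --- equivalently, that $b'_q$ is the exported element of~$F_\w$; the paper proves this separately (if $b'_q$ were introduced in $F_\w$, the introduction-position requirement on aligned superinstances would force $a_p \in \pi_{S^q}(I)$, contradicting activity of~$F_\a$), and this fact is needed both for your introduction-position check and to place the fresh elements' $\cov'$-images outside~$I_0$.
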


We now claim that we can expand fact-saturated superinstances to satisfy $\ids$,
using fact-thrifty chase steps:

\begin{restatable}[Fact-thrifty completion]{proposition}{ftcomp}
  \label{prp:ftcomp}
  Under assumption \assm{reversible},
  for any fact-saturated aligned
  superinstance $J$ of~$I_0$,  we can expand $J$ by fact-thrifty chase steps to
  a fact-saturated aligned superinstance $J'$ of~$I_0$ that satisfies $\ids$.
\end{restatable}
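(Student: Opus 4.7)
The plan is to mirror, for aligned superinstances and fact-thrifty chase steps, the three-step completion pipeline of Section~\ref{sec:wsound}---Balancing (Lemma~\ref{lem:hascompletion}), Realizations (Lemma~\ref{lem:balwsndc}), and Using Realizations to Get Completions (Lemma~\ref{lem:sreal}). The key twist is that the role played by relation-saturation in the earlier proof is now played by fact-saturation of~$J$: whenever we need to insert a new fact whose fact class is~$D$, the donor fact~$F_\r$ required by a fact-thrifty chase step is guaranteed to already occur in~$I\setminus I_0$.

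Starting from $J=(I,\cov)$, I would apply the Balancing Lemma to~$I$ to obtain a balanced pssinstance $P=(I,\calH,\mapb)$, and the Realizations Lemma to~$P$ to obtain a $\ucon$-compliant piecewise realization $\pire=(K_1,\ldots,K_{\neqfunc})$. Each helper $h\in\calH$ should be decorated with a $\bbsim_k$-class $\cov^\ast(h)$ of an element of $\chase{I_0}{\ids}$ lying at a position of $\mapb(h)$: such a class exists because every position in $\mapb(h)$ appears in some want of~$I$ and hence is achieved in the chase, and under \assm{reversible} this single class can be chosen coherently for the whole $\eqids$-class~$\mapb(h)$. I would then extend~$\cov$ coordinatewise to $\cov^\ast$, so that every entry of every tuple in every $K_i$ carries a $\bbsim_k$-tag.

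Next, I would iterate over the fresh tuples $\mybf{a}\in K_i\setminus\pi_{\Pi_i}(I)$, performing one fact-thrifty chase step per tuple. Since $\ufds$ is by definition intra-relational, $\Pi_i\subseteq\pos(S)$ for a single~$S$, and for $S^r\in\Pi_i\setminus\{S^q\}$ we have $S^r\to S^q\in\ufds$, so $\Pi_i\subseteq\{S^q\}\sqcup\danger(S^q)$; thus~$\pire$ prescribes only the exported and dangerous positions of the would-be fact, leaving the non-dangerous positions free. For each tuple, I would pick a coordinate~$a_q$ active for some~$\uid$ $\tau:\ui{R^p}{S^q}\in\ids$ and let $F_\w$ be the chase witness for~$\tau$; using the $\bbsim_k$-tag of~$\mybf{a}$, I would show that the fact class of~$F_\w$ is achieved in~$J$, pick a donor~$F_\r\in I\setminus I_0$ in that fact class by fact-saturation, and perform a fact-thrifty chase step placing $\mybf{a}$ on~$\Pi_i$ and the entries of~$F_\r$ on~$\nondanger(S^q)$. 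The Fact-Thrifty Chase Steps Lemma (Lemma~\ref{lem:ftok}) then yields a well-defined aligned superinstance in which the former active fact has been repaired.

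The main obstacle is the mutual consistency between the piecewise realization and the fact-thrifty chase steps: $\pire$ fixes entire $\Pi_i$-projections, while a fact-thrifty chase step additionally fixes the non-dangerous positions from its donor. The containment $\Pi_i\subseteq\{S^q\}\sqcup\danger(S^q)$ makes the two prescriptions disjoint, so they combine without conflict. The $\ufds$-compliance of~$\pire$ on~$\Pi_i$ together with the $\ufds$-safety built into fact-thrifty chase steps at non-dangerous positions rules out any $\ufd$ violation; the $\ids$-compliance of~$\pire$, combined with the Balancing Lemma guaranteeing that~$\pire$ exhausts the wants of~$P$ (hence of~$J$), ensures that no~$\uid$ remains unsatisfied in the result~$J'$. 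Termination is immediate because~$\pire$ is finite, and fact-saturation is preserved since $J\subseteq J'$.
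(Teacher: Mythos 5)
Your overall architecture (Balancing $\to$ Realizations $\to$ drive fact-thrifty chase steps by the piecewise realization) matches the second half of the paper's proof, but you are missing the first half, and it is not a cosmetic omission. A fact-thrifty chase step is only legal if every \emph{non-fresh} element $b_r$ placed at a position $S^r\neq S^q$ of the new fact satisfies $\cov(b_r)\bbsim_k b'_r$, where $b'_r$ is the corresponding element of the chase witness $F_\w$. For the non-dangerous positions this is guaranteed by the donor $F_\r$, but for the dangerous positions in $\Pi_{i'}$ the realization forces you to reuse elements of $\dom(I)$ that already ``want'' to be at those positions, and their $\cov$-images are already fixed. Two such elements wanting the same position $S^r$ may have $\cov$-images with completely different histories in $\chase{I_0}{\ids}$ (e.g.\ one introduced one round away from $I_0$, another much deeper), hence non-$\bbsim_k$-equivalent images; pairing them into one new fact then breaks the $k$-bounded simulation. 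Your device of tagging \emph{helpers} with a coherently chosen $\bbsim_k$-class does nothing for these pre-existing elements, and the claim that the tag ``can be chosen coherently for the whole $\eqids$-class'' is precisely the hard point, not a consequence of \assm{reversible} alone.

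The paper closes this gap with a preliminary phase you do not have: it first runs $k+1$ \emph{fresh} fact-thrifty chase rounds (Proposition~\ref{prp:achkrev}), after which every element still violating a $\uid$ is new and its $\cov$-image is a $k$-reversible element of the chase introduced at a position of the right $\eqids$-class. Only then does the Chase Locality Theorem (Theorem~\ref{thm:locality}) yield that any two candidates for the same dangerous position have $\bbsim_k$-equivalent $\cov$-images, which is what licenses the reuse your realization prescribes. Without this $k$-reversibility invariant your verification of the third condition of thrifty chase steps fails. A secondary, smaller omission: positions in $\danger(S^q)\setminus\Pi_{i'}$ are prescribed neither by $\pire$ nor by the donor; they must be filled with fresh elements, and one must argue (via \assm{reversible}) that their $\eqfun$-classes are outer so these fresh elements cannot trigger further $\uid$ violations or upset termination.
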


This proposition allows us to prove
the Acyclic Unary Universal Models Theorem (Theorem~\ref{thm:myuniv}) under assumption
\assm{reversible}. Indeed, consider the
fact-saturated aligned superinstance~$J_0$ produced by the
Fact-Saturated Solutions Lemma (Lemma~\ref{lem:nondetexhaust}). 
Applying the Fact-Thrifty Completion Proposition to~$J_0$
yields a fact-saturated aligned superinstance~$J'$, which is a finite $k$-sound
superinstance of~$I_0$ that satisfies $\ufds$ and satisfies $\ids$.

The rest of this
section sketches the proof of the Proposition (see Appendix~\ref{apx:prf_ftcomp}
for the full proof).
The idea is to construct, as in Section~\ref{sec:wsound}, a balanced pssinstance $P$ of
the input aligned superinstance~$J$, and a $\ucon$-compliant piecewise realization $\pire$ of~$P$.
Now, instead of completing the facts of $\pire$ to add them directly to~$J$, we
add them one by one, using fact-thrifty chase steps, to ensure that alignedness
is preserved.

The only problematic point is that $\pire$ could connect together elements that
have dissimilar $\cov$-images, violating alignedness.
However, we show that, up to chasing for $k+1$ rounds on the initial $J$ with
fresh fact-thrifty chase steps before constructing~$P$,
we can ensure what we call \emph{$k$-reversibility}: all
elements that want to be at some position $R^p$ in~$J$ have a $\cov$-image whose
$\bbsim_k$-class only depends on~$R^p$. Once we have ensured this, we can
essentially stop worrying about $\cov$-images, because respecting
weak-soundness, as $\pire$ does, is sufficient.

The reason why $k+1$ chasing rounds suffice to ensure this
is by a general structural
observation on the $\uid$ chase:
when the last $k$ $\uid$s applied to an element $a$ of $\chase{I_0}{\ids}$ are
reversible (as is the case here, by assumption \assm{reversible}),
the $\bbsim_k$-class of $a$ only depends on the $\eqids$-class of the
position where it was introduced, and not on its exact history. Formally:

\begin{restatable}[Chase locality theorem]{theorem}{locality}
  \label{thm:locality}
  For any instance~$I_0$, transitively closed set of $\uid$s $\ids$,
  and $n \in \mathbb{N}$,
  for any two
  elements $a$ and $b$
  respectively introduced at positions~$R^p$ and~$S^q$ in $\chase{I_0}{\ids}$
  such that $R^p \eqids S^q$,
  if the last $n$ $\uid$s applied to create $a$ and~$b$ are reversible,
  then $a \bbsim_n b$.
\end{restatable}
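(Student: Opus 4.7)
The plan is an induction on $n$. The base case $n=0$ holds trivially since $\bsim_0$ is always true. For the inductive step, suppose the claim at all levels below $n$, and consider $a, b$ introduced at $R^p, S^q$ in $\chase{I_0}{\ids}$ with $R^p \eqids S^q$ and the last $n$ applied $\uid$s reversible for each. By the symmetry of $\bbsim_n$ and of the hypothesis, it suffices to establish $(\chase{I_0}{\ids}, a) \bsim_n (\chase{I_0}{\ids}, b)$.

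The heart of the argument rests on the Unique Witness Property: every fact of $\chase{I_0}{\ids}$ containing $a$ is either (i) the unique introducing fact of $a$, with $a$ at $R^p$ and a single exported element $y$ at some position $R^{p_0} \neq R^p$, plus fresh nulls at the other positions, or (ii) for some $\tau = \ui{R^p}{T^r} \in \ids$, the unique fact where $a$ is the exported element at $T^r$, with fresh nulls at all other positions. For each such $F$, a matching fact $F'$ containing $b$ at the same position is constructed by invoking transitive closedness of $\ids$ and $R^p \eqids S^q$ to obtain $\ui{S^q}{T^r} \in \ids$ (or $\ui{S^q}{R^p} \in \ids$), and taking $F'$ to be either the corresponding chase-step fact for $b$ or (in case $R^p = S^q$ or $T^r = S^q$) the introducing fact of $b$.

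The position-by-position matching then splits into three comparisons. First, the principal pair $a \leftrightarrow b$ at the exported position, which is $\bsim_{n-1}$ directly from the outer inductive hypothesis. Second, in case (i), the exported element $y$ of $a$'s introducing fact compared to the corresponding element of $F'$ (either the parent $y'$ of $b$ or a fresh null): here the reversibility of $\tau_1^a$, the first $\uid$ in $a$'s chain (available by hypothesis), implies that the introducing position of $y$ lies in the same $\eqids$-class as $R^{p_0}$, and similarly for $y'$, so the inductive hypothesis applies. Third, pairs of fresh nulls introduced at the same position $T^s$ in $F$ and $F'$ must be shown $\bsim_{n-1}$-equivalent.

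The main obstacle, and by far the most delicate point, is this third comparison. Each such fresh null has a chain whose top entry is the $\uid$ of the chase step that produced it (for instance $\ui{R^p}{T^r}$ for the fresh nulls in the case-(ii) fact $F$), which does \emph{not} appear in $a$'s own chain and whose reversibility is not among the hypotheses. I expect to finesse this by a structural observation rather than a naive application of the inductive hypothesis: two fresh nulls introduced at the same position are completely symmetric up to the identity of their parents, so their depth-$(n-1)$ neighborhoods reduce, via the fact they share and the chase-step facts emanating from $T^s$, to the depth-$(n-2)$ neighborhoods of their parents $a, b$, which are controlled by the outer IH at a lower level. Formalizing this cleanly likely requires strengthening the inductive invariant so that it simultaneously relates introduced elements at $\eqids$-classes and their sibling fresh nulls, or equivalently exhibiting an explicit bisimulation between the subtrees rooted at $a$ and $b$ in the chase tree rather than arguing level by level.
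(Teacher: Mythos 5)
Your proposal takes essentially the same route as the paper's proof: induction on $n$, the Unique Witness Property to enumerate the facts containing $a$ (its introducing fact plus one fact per $\uid$ $\ui{R^p}{T^r}$ of $\ids$), and transitivity plus $R^p \eqids S^q$ to build the fact-level matching onto $b$'s neighbourhood. The third comparison you single out as the main obstacle is indeed where the paper's work goes, and your proposed "structural observation" is exactly the paper's Lemma~\ref{lem:samepos}: if two elements are introduced at the \emph{same} position and their exported (parent) elements are $\bbsim_{n-1}$, then they are $\bbsim_n$. Contrary to your worry, no strengthening of the main inductive invariant (and no explicit subtree bisimulation) is needed: that lemma is a short standalone induction, independent of any reversibility hypothesis, and it applies here because the outer IH already gives $a \bbsim_{n-1} b$ for the parents. (Your depth accounting is also off by one in the safe direction: the reduction yields $\bbsim_n$ for the fresh nulls from $\bbsim_{n-1}$ for the parents, more than the $\bsim_{n-1}$ you need.)

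One repair your taxonomy would need: the matching is not always "fresh null against fresh null." When $\tau:\ui{R^p}{V^g}$ is reversible and $V^g=S^q$ (and symmetrically when $F_a$ itself is matched to the fact $F_b^{\ui{S^q}{R^p}}$), the matched fact on $b$'s side is $b$'s \emph{introducing} fact, so a fresh null of $F_a^\tau$ gets paired with $b$'s parent rather than with a sibling fresh null; the symmetry-of-siblings argument does not apply there. The paper handles this by a case split on whether $\tau$ is reversible: in the reversible case it shows both paired elements are $(n-1)$-reversible and introduced at $\eqids$-equivalent positions (the fresh null because its producing $\uid$ is reversible and its exported element is $n$-reversible; the parent because, for $n\ge 2$, the second-to-last $\uid$ of $a$ is reversible) and invokes the outer IH directly, reserving Lemma~\ref{lem:samepos} for the non-reversible $\tau$, where $T^r\neq S^q$ is forced and both matched facts do consist entirely of fresh nulls plus the exported $a$, resp.\ $b$. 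With that case split added, your plan is the paper's proof.
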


\section{Arbitrary $\uid$s: Lifting Assumption \assm{reversible}}
\label{sec:manyscc}
This section concludes
the proof of the Acyclic Unary Universal Models Theorem (Theorem~\ref{thm:myuniv})
by removing
assumption \assm{reversible}. We do so by splitting $\ids$ in subsets that can
be satisfied sequentially:

\begin{definition}
  \label{def:opartition}
  For any $\tau, \tau' \in \ids$,
  we write $\tau \idprec \tau'$ when we can write $\tau = \ui{R^p}{S^q}$ and $\tau' =
  \ui{S^r}{T^u}$ with $S^q \neq S^r$, and the $\ufd$ $S^r \rightarrow S^q$ is in
  $\ufds$.
  An \deft{ordered partition} $(P_1, \ldots, P_{\neqidsc})$ of~$\ids$ is a
  partition of $\ids$ (i.e., $\ids = \bigsqcup_i P_i$)
  such that for any $\tau \in P_i$, $\tau' \in P_j$, if
  $\tau \idprec \tau'$ then $i \leq j$.
\end{definition}

The notion of ordered partition is useful because thrifty chase steps can only
cause new $\uid$ violations at the dangerous positions of the new fact.
This implies the following:

\begin{restatable}{lemma}{sccbyscc}
  \label{lem:sccbyscc}
  Let $J$ be an aligned superinstance of~$I_0$ and $J'$ be the result of
  applying a thrifty chase step on $J$ for a $\uid$ $\tau$ of~$\ids$.
  Assume that a $\uid$ $\tau'$ of~$\ids$ was satisfied by $J$
  but is not satisfied by $J'$. Then $\tau \idprec \tau'$.
\end{restatable}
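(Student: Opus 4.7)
My plan is to trace the effect of the single new fact $F_\n = S(\mathbf{b})$ added to $J$ by the thrifty chase step, and identify exactly which positions of $F_\n$ can supply a fresh element that witnesses the failure of a previously satisfied $\uid$ of $\ids$.

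First I would fix a $\uid$ $\tau' : \ui{U^v}{V^w}$ that holds in $J$ but fails in $J'$, and pick a witness $c \in \pi_{U^v}(J') \setminus \pi_{V^w}(J')$ of the failure. Since $J' \supseteq J$, we have $\pi_{V^w}(J) \subseteq \pi_{V^w}(J')$; and since $\tau'$ holds in $J$, every element of $\pi_{U^v}(J)$ lies in $\pi_{V^w}(J)$, hence in $\pi_{V^w}(J')$. Therefore $c$ must be a genuinely new occurrence at $U^v$, that is, $c \in \pi_{U^v}(J') \setminus \pi_{U^v}(J)$. Since the only new fact is $F_\n = S(\mathbf{b})$, this forces $U = S$ and $c = b_v$ for the position $S^v = U^v$, with $b_v \notin \pi_{S^v}(J)$.

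The next step is to use the thrifty chase step conditions to constrain $S^v$. If $S^v \in \nondanger(S^q)$ then thriftiness requires $b_v \in \pi_{S^v}(J)$, contradicting $b_v \notin \pi_{S^v}(J)$. Hence $S^v \in \{S^q\} \sqcup \danger(S^q)$. In the principal subcase $S^v \in \danger(S^q)$, the very definition of $\danger$ yields $S^v \neq S^q$ together with $S^v \to S^q \in \ufds$; writing $\tau = \ui{R^p}{S^q}$ and $\tau' = \ui{S^v}{V^w}$ then gives $\tau \idprec \tau'$ exactly as required.

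The residual subcase $S^v = S^q$, where $c = b_q = a_p$ and $\tau' = \ui{S^q}{V^w}$, is the delicate one and I expect to be the main obstacle. I would attack it via transitivity of $\ids$: composing $\tau$ and $\tau'$ produces $\ui{R^p}{V^w}$ which is either trivial (in which case $R^p = V^w$, so $a_p \in \pi_{R^p}(J) = \pi_{V^w}(J) \subseteq \pi_{V^w}(J')$, contradicting the violation) or itself lies in $\ids$. In the latter situation one has to argue, using the invariants maintained by an aligned superinstance and the way thrifty chase steps are deployed in context, that $a_p$ is already in $\pi_{V^w}(J)$, so that the supposed transition of $\tau'$ from satisfied to violated through its body position $S^q$ cannot in fact occur. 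Everything outside this residual subcase is a straightforward case analysis on the positions of $F_\n$.
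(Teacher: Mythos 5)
Your decomposition is the paper's: the new violation must be witnessed by an element of $F_\n$ occurring at a position where it did not occur in $J$; positions of $\nondanger(S^q)$ are excluded because thriftiness forces $b_r \in \pi_{S^r}(J)$ there; and a position of $\danger(S^q)$ yields the non-trivial $\ufd$ $S^v \rightarrow S^q$ and hence $\tau \idprec \tau'$. All of that matches the paper's argument step for step.

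The residual subcase $S^v = S^q$ is indeed where the substance is, and your plan for it does not go through as written. The paper disposes of it in one sentence (``as $\ids$ is transitively closed, $F_\n$ can introduce no new violation on the exported element $b_q$''), and the content of that sentence is the transitivity argument you sketch, but driven to a different conclusion than the one you aim for. From $\tau : \ui{R^p}{S^q}$ and $\tau' : \ui{S^q}{V^w}$, transitive closure gives either $R^p = V^w$ (your easy case) or $\tau'' : \ui{R^p}{V^w}$ in $\ids$. In the latter case, $a_p \in \pi_{R^p}(J)$ because $F_\a \in J$, while $a_p \notin \pi_{V^w}(J)$ because $\pi_{V^w}(J) \subseteq \pi_{V^w}(J')$ and $a_p$ witnesses the violation in $J'$. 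So what transitivity yields is that $a_p$ was \emph{already} in $\appelem{J}{V^w}$ --- the $\uid$ $\tau''$ was already violated at $a_p$ in $J$ --- and the fact you propose to establish, $a_p \in \pi_{V^w}(J)$, is exactly the negation of what any non-vacuous instance of this subcase forces. Nothing in the lemma's hypotheses says $J$ satisfies $\tau''$; only $\tau'$ is assumed satisfied. What the exported position genuinely cannot do is make $a_p$ \emph{newly} want to be at some position, i.e., add $a_p$ to some $\appelem{\cdot}{V^w}$ where it was not already; that is the reading under which the paper's sentence is an honest consequence of transitive closure, and it is all that is available from the stated hypotheses. (In the trivial-class corollary this subcase is vacuous, since $\tau' = \tau$ would force $R^p = S^q$, contradicting non-triviality of $\tau$.) So either recast the conclusion of the residual case in terms of new memberships in $\appelem{\cdot}{\cdot}$, or add the hypothesis that $J$ also satisfies the transitive consequences $\ui{R^p}{V^w}$ of $\tau$ with the $\uid$s sourced at $S^q$ --- under that extra hypothesis your plan ($a_p \in \pi_{V^w}(J)$, hence no violation) does work.
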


Hence, given an ordered partition of~$\ids$, once we have satisfied the $\uid$s
of the first $i$ classes $P_1, \ldots, P_i$, then this property is preserved
while we do thrifty chasing with $P_j$, $j > i$. So if we can satisfy each $P_i$
individually with thrifty chase steps, then we can satisfy $\ids$ by satisfying
$P_1, \ldots, P_{\neqidsc}$.

Of course, the point of partitioning $\ids$ is to be able to control the
structure of the $\uid$s in each class:

\begin{definition}
  \label{def:manageable}
  We call $P \subseteq \ids$ \deft{reversible} if it is transitively
  closed
  (as $\ids$ is) and satisfies assumption \assm{reversible}.
  
  We say $P
  \subseteq \ids$ is \deft{trivial} if we have $P = \{\tau\}$ for some $\tau
  \in \ids$ such that $\tau \not\idprec \tau$. An ordered partition is
  \deft{manageable} if all of its classes are either reversible or trivial.
\end{definition}

If $P \subseteq \ids$ is reversible,
then the previous section describes
how to complete with thrifty chase steps
any fact-saturated aligned superinstance
of $I_0$ to one that
satisfies~$P$. If $P$ is trivial, it follows directly from
Lemma~\ref{lem:sccbyscc} that we can satisfy it:

\begin{restatable}{corollary}{trivscc}
  \label{cor:trivscc}
  For any trivial class $\{\tau\}$, 
  performing one chase round on an aligned fact-saturated superinstance $J$ of~$I_0$ by
  fresh fact-thrifty chase steps
  for $\tau$ yields an aligned superinstance $J'$ of~$I_0$ that satisfies~$\tau$.
\end{restatable}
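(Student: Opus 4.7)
The plan is to unpack the single chase round as the simultaneous application of fresh fact-thrifty chase steps on every active fact for $\tau = \ui{R^p}{S^q}$ in $J$. Each such step is a thrifty chase step applied to an aligned fact-saturated superinstance, so by the Fact-Thrifty Chase Steps Lemma (Lemma~\ref{lem:ftok}), it preserves alignedness and leaves the previously active fact no longer active. Since we only add facts, fact-saturation is preserved as well. Thus the result $J'$ is an aligned fact-saturated superinstance of $I_0$; the remaining work is to verify that $J'$ actually satisfies $\tau$.

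Next, I would argue by cases on how an element $e \in \dom(J')$ can appear at position $R^p$. If $e \in \pi_{R^p}(J)$, then either $e \in \pi_{S^q}(J)$ (so $e \in \pi_{S^q}(J')$ trivially), or $e$ occurs in an active $R$-fact for $\tau$ in $J$, which is processed in the round, producing a new $S$-fact in which $e$ sits at $S^q$. Otherwise, $e$ appears at $R^p$ only in some newly added fact $F_\n = S(\mybf{b})$. Because new facts are $S$-facts, this forces $S = R$, so $\tau = \ui{R^p}{R^q}$ with $p \neq q$, and $e$ is the component of $F_\n$ at position $R^p$.

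The heart of the argument is then the sub-case analysis on whether this $e$ is fresh or reused, using the definition of fresh fact-thrifty chase steps. If $e$ is fresh, then $R^p$ must be dangerous for $R^q$, i.e.\ $R^p \to R^q \in \ufds$; but together with $p \neq q$ and $R = S$ this is exactly the witness that $\tau \idprec \tau$, contradicting the triviality of the class $\{\tau\}$. If instead $e$ is reused at the non-dangerous position $R^p$, then by definition of thrifty chase steps $e \in \pi_{R^p}(J)$, contradicting the assumption that $e$ appears at $R^p$ only in the new fact. Hence this outer case does not occur, and every element at $R^p$ in $J'$ lies in $\pi_{S^q}(J')$, i.e.\ $J' \models \tau$.

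The main obstacle is purely notational: making the contradiction with triviality precise requires identifying the $\uid$ $\tau = \ui{R^p}{S^q}$ with the second occurrence $\ui{S^{r'}}{T^u}$ in the definition of $\idprec$ (so $R = S = T$, $r' = p$, $u = q$), and checking that the dangerousness condition on the fresh position gives exactly the $\ufd$ $S^{r'} \to S^q$ demanded by $\idprec$. Everything else is bookkeeping, and no simultaneity issue arises because distinct fresh elements are used across parallel chase steps, so different applications of $\tau$ in the round do not interfere.
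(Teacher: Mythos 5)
Your proof is correct and follows essentially the same route as the paper: old violations of $\tau$ are repaired by the round, and no new ones can arise because a fresh element in a new fact sits at a position dangerous for the exported position, which would witness $\tau \idprec \tau$ and contradict triviality. The only difference is that the paper obtains the second half by directly invoking Lemma~\ref{lem:sccbyscc}, whereas you re-derive that special case ($\tau' = \tau$) inline.
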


We now claim that we can construct a manageable partition of~$\ids$. We build it
as a topological sort of the strongly connected components
(SCCs) of the directed graph on $\ids$ defined by~$\idprec$, with the technical
complication that SCCs must be closed under $\uid$ reversal.
The construction relies on the fact
that $\ids$ is closed under finite implication, as characterized by Cosmadakis
et al.~\cite{cosm}.

\begin{restatable}{lemma}{nontrivconv}
  \label{lem:nontrivconv}
  Any conjunction $\ids$ of~$\uid$s closed under finite implication has a
  manageable partition.
\end{restatable}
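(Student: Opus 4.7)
The plan is to define the blocks of the partition as the strongly connected components (SCCs) of an enriched version of the $\idprec$-graph on $\ids$, and then topologically sort them. Concretely, form the directed graph $G$ with vertex set $\ids$ and an edge $\tau \to \tau'$ whenever $\tau \idprec \tau'$, compute its SCCs, and merge any two SCCs that contain a $\uid$ and its reverse. Call the resulting blocks $P_1, \ldots, P_{\neqidsc}$. By construction, each block is closed under $\uid$ reversal; the trivial classes are exactly the singleton SCCs $\{\tau\}$ with $\tau \not\idprec \tau$ and $\tau^{-1} \not\in \ids$.

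The core step is to show that every non-trivial block $P$ is reversible. Transitive closure is inherited from that of $\ids$, and closure under reversal is enforced by the merging step. For the remaining clause of \assm{reversible} --- symmetry of $\ufds$ between positions occurring in $\uid$s of $P$ --- we invoke the Cosmadakis et al.~characterization of finite implication: a non-trivial block always contains a $\idprec$-cycle, and such a cycle matches the syntactic shape of a finite-closure rule that produces both the reverses of the cycle's $\uid$s and the symmetric $\ufd$s on the positions it traverses. Because $\ids$ is assumed closed under finite implication, these conclusions already hold in $\ucon$, giving the required symmetry. A byproduct of this derivation is that the reversed $\uid$s themselves form a $\idprec$-cycle, so the SCC containing them is legitimately identified with $P$ by the merging step.

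It remains to order the blocks. The DAG on blocks induced by $\idprec$ is acyclic: a putative block-level cycle, threaded through the internal $\idprec$-paths of the relevant SCCs (and, when crossing between an SCC and its reverse inside a merged block, along the reversed $\idprec$-edges made available by the $\ufd$-symmetries just derived), would yield an $\idprec$-cycle in $G$ crossing distinct SCCs, contradicting SCC maximality. Any topological sort of this DAG then gives the desired ordering $(P_1, \ldots, P_{\neqidsc})$.

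The main obstacle will be the Cosmadakis-based verification of reversibility. One must identify, for each combinatorial shape of $\idprec$-cycle $\tau_0 \idprec \tau_1 \idprec \cdots \idprec \tau_{m-1} \idprec \tau_0$, the concrete finite-implication rule from \cite{cosm} that forces all the reverse $\uid$s $\tau_i^{-1}$ and the symmetric $\ufd$s on traversed positions into the closure, and then confirm that, with those $\ufd$s in hand, $\tau_{m-1}^{-1} \idprec \tau_{m-2}^{-1} \idprec \cdots \idprec \tau_0^{-1} \idprec \tau_{m-1}^{-1}$ is again a genuine $\idprec$-cycle, so the SCC-plus-reversal construction captures a self-contained reversible class. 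Once this bookkeeping is in place, classifying each block as reversible or trivial and producing the topological sort is routine.
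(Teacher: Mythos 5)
Your overall architecture (SCCs of the $\idprec$-graph, closure under reversal, topological sort) matches the paper's, and your use of the cycle rule of Cosmadakis et al.\ to derive the reverse $\uid$s and the symmetric $\ufd$s inside a non-trivial SCC is sound --- it corresponds to the paper's SCC Structure Lemma and Lemma~\ref{lem:fdrev}. But there are two genuine gaps. The first and most serious is your acyclicity claim for the merged block graph. When $P$ and $P^{-1}$ are \emph{distinct} SCCs that you merge into one block, there is in general no $\idprec$-path between them inside the block --- they are different SCCs precisely because $\idprec^*$ does not connect them in both directions, and the ``reversed $\idprec$-edges made available by the $\ufd$-symmetries'' do not exist: reversing a $\ufd$ does not create an $\idprec$-edge from a $\uid$ of $P$ to a $\uid$ of $P^{-1}$. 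So a putative block-level cycle such as $P \idprec^* Q \idprec^* P^{-1}$ for a third SCC $Q$ does \emph{not} lift to an $\idprec$-cycle in the original graph and does not contradict SCC maximality. Ruling out exactly this situation is the heart of the lemma; the paper does it via Lemma~\ref{lem:goodorder} (any $\tau \notin P \cup P^{-1}$ with $\tau \idprec P$ either also reaches $P^{-1}$ directly, or is trivial with all its predecessors reaching $P^{-1}$) and a greedy enumeration that always places $P$ and $P^{-1}$ consecutively. Without some such structural argument, ``any topological sort of this DAG'' is unjustified because the object need not be a DAG.

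The second gap is the sentence ``transitive closure is inherited from that of $\ids$.'' A subset of a transitively closed set is not transitively closed: you must show that when two $\uid$s of the block compose, the resulting $\uid$ (which is in $\ids$) lies again \emph{in the block}. For a single SCC this needs an argument (the paper's Lemma~\ref{lem:transok}), and for a merged block $P \cup P^{-1}$ it needs more: the paper's Lemma~\ref{lem:notrans} shows that if a $\uid$ of $P$ composes with a $\uid$ of $P^{-1}$ then in fact $P = P^{-1}$, which is what saves closure of the union. Finally, a smaller issue: your merging rule (``merge any two SCCs that contain a $\uid$ and its reverse'') is too eager --- it would also fuse a \emph{trivial} SCC $\{\tau\}$ with the SCC of $\tau^{-1}$, producing a block that is neither trivial nor guaranteed reversible (the $\ufd$-symmetry argument only applies to non-trivial SCCs, since it relies on the existence of an invertible cycle). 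The paper merges only non-trivial, non-self-inverse SCCs with their inverses and keeps trivial SCCs as trivial classes.
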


\begin{example}
  Consider the $\uid$s $\tau_R: \ui{R^1}{R^2}$,
  $\tau_S: \ui{S^1}{S^2}$, $\tau: \ui{R^3}{S^3}$, and the $\ufd$s
  $\phi_R: R^1 \rightarrow R^2$,
  $\phi_S: S^1 \rightarrow S^2$,
  $\phi_R': R^3 \rightarrow R^1$, and
  $\phi_S' : S^1 \rightarrow S^3$.
  The $\uid$s $\tau_R^{-1}$ and $\tau_S^{-1}$, and $\ufd$s
  $\phi_R^{-1}$, $\phi_S^{-1}$, and $R^3 \rightarrow R^2$, $S^2 \rightarrow
  S^3$, are finitely implied. A
  manageable partition is $(\{\tau_R, \tau_R^{-1}\}, \{\tau\}, \{\tau_S,
  \tau_S^{-1}\})$, where the first and third classes are reversible and the
  second is trivial.
\end{example}

We can now conclude the proof of the Acyclic Unary Universal Models Theorem
(Theorem~\ref{thm:myuniv}).
We first note that the Fact-Saturated Solutions Lemma
(Lemma~\ref{lem:nondetexhaust})
does not use assumption
\assm{reversible}, so we apply it (with $\ids$) to obtain from $I_0$ an
aligned fact-saturated superinstance $J_1$ of~$I_0$.
This is the \deft{saturation process}.

We now satisfy $\ids$ by a \deft{completion process}.
Build a manageable partition $(P_1, \ldots, P_{\neqidsc})$ of~$\ids$,
by Lemma~\ref{lem:nontrivconv}.
Now, for $1 \leq i \leq \neqidsc$, use fact-thrifty chase steps by $\uid$s of
$P_i$ to
extend the
fact-saturated aligned superinstance $J_i$ to a larger one~$J_{i+1}$
that satisfies~$P_i$.
If $P_i$ is trivial, use Corollary~\ref{cor:trivscc}. If $P_i$ is
reversible, apply the Fact-Thrifty Completion Proposition
(Proposition~\ref{prp:ftcomp}), taking $\ids$ to be $P_i$.
By Lemma~\ref{lem:sccbyscc}, the result $J_{i+1}$ satisfies $\bigcup_{j
\leq i} P_j$.

Hence the result $J_{n+1}$ of the completion process is an aligned superinstance of
$I_0$ that satisfies $\ids$; as an aligned superinstance, it
is also finite, satisfies $\ufds$, and is $k$-sound for $\acq$; so it is
$k$-universal for $\ucon$ and $\acq$. This concludes the proof of the Acyclic
Unary Universal Models Theorem.

\section{Higher-Arity $\fd$s}
\label{sec:hfds}
We now bootstrap the Acyclic Unary Universal Models Theorem
(Theorem~\ref{thm:myuniv}) to the Universal Models Theorem
(Theorem~\ref{thm:univexists}).
The first step is to change our construction to avoid violating higher-arity
$\fd$s, namely, show the following, which applies to
$\con = \ids \wedge \fds$ rather than $\ucon = \ids \wedge \ufds$:

\begin{restatable}[Acyclic universal models]{theorem}{myuniv2}
\label{thm:myuniv2}
There is a finite superinstance of~$I_0$ that is $k$-universal
for~$\con$
and $\acq$ queries.
\end{restatable}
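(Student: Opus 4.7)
The plan is to reuse the entire architecture developed for the Acyclic Unary Universal Models Theorem, but strengthen the two ingredients where higher-arity $\fd$s can be violated: the fact-saturation of the initial aligned superinstance, and the thrifty chase steps used in the completion process. Unary $\fd$s were easy to respect because each one governed a single position and we controlled it via the dangerous/non-dangerous split; but a higher-arity $\fd$ $R^L\to R^r$ is violated precisely when two $R$-facts agree on the whole of $L$ and differ on~$R^r$, and a fact-thrifty step can cause exactly this, since it copies all non-dangerous coordinates verbatim from a single reuse fact~$F_\r$ of the right fact class. So the issue to overcome is purely the combinatorics of how we reuse.

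The first step is to replace fact-saturation by what I will call \emph{envelope-saturation}: for every achieved fact class $D=(R^p,\mathbf{C})\in\rdfcl$, the aligned superinstance should contain not one but a large structured collection $\mathcal{E}_D$ of $R$-facts, each of fact class $D$, such that $\mathcal{E}_D$ itself satisfies $\fds$ and is combinatorially rich: for any choice of a subset $L\subseteq\pos(R)$ that is a premise of some $\fd$ in $\fds$ and any tuple of ``forbidden'' values coming from previously created facts, some fact in $\mathcal{E}_D$ avoids the collision on $L$. The existence of such envelopes, with bounded active domain but many facts and respecting higher-arity $\fd$s, is exactly what the combinatorial model construction of Theorem~\ref{thm:combinatorial} (referenced in Section~\ref{sec:intro} as the model-with-many-facts-but-small-domain result) is meant to provide; chasing for sufficiently many rounds and invoking this construction on each achieved fact class yields an envelope-saturated aligned superinstance $J_0$ of $I_0$ which, crucially, still satisfies $\fds$ (not just $\ufds$).

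The second step is to refine fact-thrifty chase steps into \emph{envelope-thrifty} chase steps. The definition is the same as in Definition~\ref{def:thrifty}, except that instead of picking a single reuse fact $F_\r$ of the fact class of the chase witness $F_\w$, we pick $F_\r$ from the envelope $\mathcal{E}_{D(F_\w)}$, choosing it so that for every higher-arity $\fd$ $\phi:R^L\to R^r$ in $\fds$ the new fact $F_\n$ we are about to create does not collide on $L$ with any $R$-fact previously present in $I$ other than the facts already forced by $\ufds$. Since at every moment $I$ is finite and each envelope is large enough relative to the premises in $\fds$, the pigeonhole principle guarantees that such an $F_\r$ exists. Envelope-thrifty steps preserve $\fds$ by construction, still preserve $\ufds$ (as fact-thrifty steps do), and are a special case of thrifty chase steps, so all the alignedness, $k$-soundness and locality arguments of Sections~\ref{sec:ksound} and~\ref{sec:manyscc} carry over unchanged. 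In particular, Lemma~\ref{lem:sccbyscc} and the ordered-partition machinery apply verbatim, since they concern only the $\uid$-aspect of chase steps.

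The final step is to re-run the completion process of Section~\ref{sec:manyscc} with envelope-thrifty chase steps in place of fact-thrifty ones: build a manageable partition $(P_1,\dots,P_{\neqidsc})$ of $\ids$, and, class by class, extend $J_i$ to $J_{i+1}$ by envelope-thrifty chasing on $P_i$ (reversible or trivial). The resulting $J_{\neqidsc+1}$ is a finite superinstance of $I_0$ that satisfies $\ids$, satisfies $\fds$ (by envelope-thriftiness), and is $k$-sound for $\acq$ (via the $k$-bounded simulation, which is untouched by the refinement), hence $k$-universal for $\con$ and $\acq$. I expect the main obstacle to be not the bookkeeping but the combinatorial construction underpinning envelopes: one must design $\mathcal{E}_D$ so that all projections required by $\ids$, all unary $\fd$-equalities required by $\ufds$, and all higher-arity $\fd$s of $\fds$ are simultaneously respected, while still leaving enough ``free columns'' at each non-dangerous position to defeat any future collision pattern; this is exactly the content reserved for Theorem~\ref{thm:combinatorial}, and once it is in hand the rest of the argument is a careful but direct adaptation.
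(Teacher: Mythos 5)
Your proposal follows essentially the same route as the paper: replace fact-saturation by envelope-saturation (many combinatorially rich reusable tuples per achieved fact class, built via the Dense Interpretations Theorem), refine fact-thrifty into envelope-thrifty chase steps that draw from these envelopes, and re-run the manageable-partition completion process unchanged on the $\uid$ side. The definitions you sketch (envelopes living on the non-dangerous positions, satisfying the projected $\fd$s, feeding a modified reuse step) match the paper's Definitions~\ref{def:envelope}--6.5 and Proposition~\ref{prp:preproc}.

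There is one concrete gap. You justify the existence of a suitable reuse fact at each step by pigeonhole, "since at every moment $I$ is finite and each envelope is large enough relative to the premises in $\fds$." But envelopes are \emph{consumed}: each envelope-thrifty step uses up one remaining tuple of $\calE(D)$ (reusing a tuple twice would recreate exactly the higher-arity violation you are trying to avoid), and the number of steps the completion process performs depends on the size of the instance, which grows as the process runs. So "large enough" cannot be certified pointwise; you need an a priori bound on the total number of chase steps of the whole completion process as a function of the starting size. This is the content of the paper's Envelope Blowup Lemma (Lemma~\ref{lem:eblowup}): the completion process creates at most $B\card{I}$ new facts for a constant $B$ depending only on $k$ and $\ucon$, which is what licenses starting from a $(B\card{I})$-envelope-saturated instance and concluding that exhaustion never occurs. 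Without this bound (or the paper's fallback of letting the process abort on exhaustion and then arguing exhaustion cannot happen), your inductive invariant "an unused tuple always exists" is unsupported. A second, minor omission: the Dense Interpretations Theorem requires the relation to have no unary key on the non-dangerous positions, which forces the paper's safe/unsafe case split on fact classes; your construction needs the same split, since for safe classes the combinatorial blowup is impossible but also unnecessary.
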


The problem to address is that our completion process to satisfy $\ids$ was defined with fact-thrifty
chase steps, which reuse elements from the same facts at the same positions
multiple times. This may violate $\fds$, and we can show that is the only point
where we do so in the construction.

The goal of this section is to define a new version of thrifty chase steps that
preserves $\fds$ rather than just $\ufds$; we call them \defo{envelope-thrifty
chase steps}. We first describe the new saturation process designed for
them. Second, we define how they work, redefine the completion
process of the previous section to use them, and use this new completion process
to prove the Acyclic Universal Models Theorem above.

\subsection{Envelopes and saturation}

We start by defining a new notion of saturated instances.
Recall the notions of fact classes (Definition~\ref{def:factclass}) and thrifty 
chase steps (Definition~\ref{def:thrifty}).
When a thrifty chase step wants to create a fact $F_\n$ whose chase witness
$F_\w$ has
fact class $(R^p, \mybf{C})$, it needs elements to reuse in $F_\n$ at positions of
$\nondanger(R^p)$. They must have the right $\cov$-image and
must already occur at the positions where they are reused.

Fact-thrifty chase steps reuse a tuple of elements from one fact $F_\r$,
and thus apply to \emph{fact-saturated instances} with one fact for each class. Our new notion
of envelope-thrifty chase steps will need saturated instances that have
\emph{multiple}
reusable tuples. A set of such tuples is called an \defo{envelope} for $(R^p,
\mybf{C})$:

\begin{restatable}{definition}{envelope}
  \label{def:envelope}
  Consider $D = (R^p, \mybf{C})$ in $\rdfcl$, and write $O \defeq \nondanger(R^p)$.
  An \deft{envelope}~$E$ for~$D$ and for an aligned superinstance $J = (I, \cov)$ of~$I_0$
  is a non-empty set of $\card{O}$-tuples indexed by $O$, with domain
  $\dom(I)$, such that:
  \begin{compactitem}
  \item for every $\fd$ $\phi : R^L \rightarrow R^r$ of $\fds$ with $R^L \subseteq O$
    and $R^r \in O$, $E$ satisfies $\phi$ (seeing its tuples as facts on~$O$);
  \item for every $\fd$ $\phi : R^L \rightarrow R^r$ of $\fds$ with $R^L
    \subseteq O$ and $R^r \notin O$, for all $\mybf{t}, \mybf{t}' \in E$,
    $\pi_{R^L}(\mybf{t}) = \pi_{R^L}(\mybf{t}')$ implies $\mybf{t} = \mybf{t}'$;
    \item for every $a \in \dom(E)$, there is exactly one position $R^q \in
      O$ such
      that $a \in \pi_{R^q}(E)$; and then we also have $a \in \pi_{R^q}(J)$;
    \item for any fact $F = R(\mybf{a})$ of~$J$ and $R^q \in
      O$, if
      $a_q \in \pi_{R^q}(E)$, then $F$ achieves $D$ in $J$
      and $\pi_{O}(\mybf{a}) \in E$.
  \end{compactitem}
\end{restatable}

Intuitively, the tuples in the envelope~$E$ satisfy
the $\ufd$s of~$\ufds$ within~$\nondanger(R^p)$, and never overlap on positions
that determine a position out of $\nondanger(R^p)$. Further, their elements
already occur at the positions where they will be reused, and have the
right $\cov$-image for the fact class~$D$. To simplify the reasoning, we
also impose that each
element of~$E$ is used at only one position, and occurs at that position only in facts
which achieve~$D$ and whose projection to $\nondanger(R^p)$ is in~$E$.

Depending on~$O$, it may be possible to use a singleton tuple as the envelope,
like fact-thrifty chase steps, and not violate $\fds$. The class is then \defo{safe}.
Otherwise, we focus on the envelope tuples which do not appear in the
instance yet.

\begin{definition}  
  We call $(R^p, \mybf{C})$ in $\rdfcl$ \deft{safe} 
  if there is no $\fd$ $R^L
  \rightarrow R^r$ in~$\fds$ with $R^L \subseteq \nondanger(R^p)$ and $R^r
  \notin \nondanger(R^p)$.

  Letting $E$ be an envelope for~$(R^p, \mybf{C})$ and $J$ be an aligned
  superinstance, the \deft{remaining tuples} of~$E$ are $E \backslash \pi_{\nondanger(R^p)}(J)$ if
  $(R^p, \mybf{C})$ is unsafe, and $E$ if it is safe.
\end{definition}

We now introduce the notion of \defo{global envelopes}, that give us one
envelope per class of~$\rdfcl$. This leads to our new notion of saturation: a
saturated instance has a global envelope with many remaining tuples in the
unsafe classes. Note that this implies fact-saturation.

\begin{definition}
  A \deft{global envelope} $\calE$ for an aligned superinstance $J = (I, \cov)$
  of~$I_0$ is a mapping from each $D \in \rdfcl$
  to an envelope $\calE(D)$ for~$D$ and~$J$, such that the envelopes have
  pairwise disjoint domains.

  We call $J$ \deft{$n$-envelope-saturated} if it has a global envelope~$\calE$
  such that $\calE(D)$ has $\geq n$ remaining tuples for all unsafe $D \in
  \rdfcl$. $J$ is
  \deft{envelope-saturated} if it is $n$-envelope-saturated for $n > 0$, and
  \deft{envelope-exhausted} otherwise.
\end{definition}

We now justify that we can make arbitrarily saturated superinstances of~$I_0$
(the switch to~$I_0'$ is a technicality):

\begin{restatable}[Sufficiently envelope-saturated solutions]{proposition}{preproc}
  \label{prp:preproc}
  For any $K \in \mathbb{N}$ and instance~$I_0$,
  we can build a superinstance $I_0'$ of~$I_0$ that is $k$-sound for $\cq$,
  and an aligned superinstance $J$ of
  $I_0'$ that satisfies~$\fds$ and is $(K\card{J})$-envelope-saturated.
\end{restatable}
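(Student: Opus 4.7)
The plan is to build $J$ in two stages. First, I would apply the fact-saturation procedure of Lemma~\ref{lem:nondetexhaust} to $I_0$, obtaining an aligned fact-saturated superinstance $J_0 = (I_0^{(m)}, \id)$ after $m$ chase rounds. A direct analysis of the $\uid$ chase shows that it preserves not only $\ufds$ (via the Unique Witness Property) but all of $\fds$: non-exported positions in newly created facts receive fresh pairwise-distinct nulls, so a new fact can only share values with other facts at its exported position, ruling out any $\fd$ violation unless the $\fd$'s LHS is unary at that position (and thus covered by $\ufds$). Hence $J_0 \models \fds$, and I would set $I_0' \defeq I_0^{(m)}$: this is $k$-sound for $\cq$ because $I_0' \subseteq \chase{I_0}{\ids}$ is a sub-instance of a universal model for $\ids$, so any $\cq$ holding in $I_0'$ is entailed by $(I_0, \ids)$, hence by $(I_0, \con)$.

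Second, for each unsafe fact class $D = (R^p, \mybf{C}) \in \rdfcl$, I would build a large envelope $\calE(D)$ by invoking Theorem~\ref{thm:combinatorial} to produce many tuples on $O_D \defeq \nondanger(R^p)$ respecting the $\fds$ whose LHS lies in $O_D$ (those with RHS in $O_D$ as literal $\fds$, those with RHS outside $O_D$ enforced as the stronger uniqueness of Definition~\ref{def:envelope}), while using only sublinearly many fresh elements per position. To realize this in $J$, I would add the envelope's fresh elements at the positions of $O_D$ and instantiate just enough of its tuples as actual facts of $J$ so that every envelope element appears at its designated position; each such fact uses the canonical exported element at $R^p$, the envelope elements at positions of $O_D$, and fresh pairwise-distinct nulls at the dangerous positions. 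The envelope tuples not so realized --- combinations of envelope elements not yet appearing together in any fact --- are the remaining tuples. Alignedness is preserved by extending $\cov$ so that each fresh element maps to the corresponding position-element of the chase witness $F_\w$ for $D$, which lies in the appropriate $\bbsim_k$-class $C_r$. The expanded instance satisfies $\fds$ globally because $\fds$ with LHS touching a dangerous position are separated by fresh nulls, $\fds$ with LHS entirely in $O_D$ are guaranteed by the envelope, and $\fds$ with LHS containing $R^p$ reduce to these cases since all new facts achieving $D$ share the same exported element at $R^p$.

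The main obstacle is the self-referential nature of $(K\card{J})$-saturation: I need at least $K\card{J}$ remaining tuples per unsafe class, while $\card{J}$ itself grows with the number of facts added. Theorem~\ref{thm:combinatorial} is precisely what resolves this: producing $N$ tuples on $O_D$ requires only $o(N)$ new elements and hence $o(N)$ instantiated facts, so $\card{J} = \card{J_0} + \sum_{D} o(N_D)$. Choosing each $N_D$ large enough in terms of $K$, $\card{\rdfcl}$, and $\card{J_0}$ then makes the system $N_D \geq K\card{J}$ simultaneously solvable across all unsafe $D$. The genuine combinatorial challenge --- constructing many $\fds$-compliant tuples over a small domain --- is precisely the content of Theorem~\ref{thm:combinatorial}, which I would invoke as a black box.
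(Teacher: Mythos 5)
There is a genuine gap in the realization step. You instantiate the envelope tuples as facts that all carry ``the canonical exported element'' at position $R^p$, and you argue that $\fd$s whose left-hand side contains $R^p$ are harmless because the new facts agree there. But agreeing at $R^p$ is exactly what triggers such $\fd$s: if $R^p \rightarrow R^r$ is in $\ufds$ with $R^r \in \nondanger(R^p)$ (which is perfectly possible --- non-dangerousness of $R^r$ only forbids $R^r \rightarrow R^p$, not the converse), then two instantiated facts $R(a, \mybf{t})$ and $R(a, \mybf{t}')$ with $\mybf{t} \neq \mybf{t}'$ on $O_D$ violate it outright. Concretely, take a ternary $R$ with $\fds$ the closure of $\{R^1 \rightarrow R^2,\ R^2R^3 \rightarrow R^1\}$: then $\nondanger(R^1) = \{R^2,R^3\}$, the class at $R^1$ is unsafe, and your construction creates facts sharing $R^1$ but differing at $R^2$. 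Nor can you repair this by giving each realized fact its own fresh element at $R^p$: the last condition of aligned superinstances forces an element whose $\cov$-image is the exported element of the chase witness to occur at the position where that image was \emph{introduced} (in the parent fact), so each realized achiever needs its own full ancestry, and a single truncation of $\chase{I_0}{\ids}$ may achieve the class $D$ only once (e.g.\ $I_0 = \{S(a)\}$ with the single $\uid$ $\ui{S^1}{R^1}$).

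This is precisely why the paper's proof takes $N$ \emph{disjoint copies} of the truncated chase (whence $I_0'$ is $N$ disjoint copies of $I_0$, not a chase truncation as in your Stage~1), so that there are $N$ independent achiever facts $F_1,\dots,F_N$ with pairwise distinct exported elements and ancestries; it then \emph{identifies} their elements at the positions of $O_D$ along $N$ domain-covering tuples of the dense instance, and declares the whole dense instance to be the envelope. Two glued achievers then overlap only within $O_D$, so Lemma~\ref{lem:liftovl} applies and no $\fd$ with $R^p$ on its left-hand side is ever triggered. The rest of your plan --- invoking the Dense Interpretations Theorem as a black box, the observation that the $\uid$ chase preserves all of $\fds$, and the arithmetic resolving the self-referential bound $N_D \geq K\card{J}$ via the sublinear domain --- matches the paper and is sound; the missing idea is the replication-and-gluing that supplies one independent exported element and ancestry per realized envelope tuple.
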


\begin{example}
  For simplicity, we work with instances rather than aligned superinstances.
  Consider $I_0 \defeq \{S(a), T(z)\}$, the $\uid$s $\tau: \ui{S^1}{R^1}$ and $\tau':
  \ui{T^1}{R^1}$ for a $3$-ary relation $R$,
  and the $\fd$ $\phi: R^2 R^3 \rightarrow R^1$.
  Consider $I \defeq I_0 \sqcup \{R(a, b, c)\}$ obtained by one chase step of~$\tau$
  on $S(a)$. 
  It would violate $\phi$ to perform a fact-thrifty chase step
  of $\tau'$ on $z$ to create $R(z, b, c)$, reusing $(b, c)$ at $\nondanger(R^1)
  = \{R^2, R^3\}$.

  Now, consider the $k$-sound $I_0' \defeq \{S(a), T(z), S(a'), S(z')\}$,
  and $I' \defeq I_0' \sqcup
  \{R(a, b, c), \allowbreak R(a', b', c')\}$ obtained by two chase steps.
  The two facts $R(a, b, c)$ and $R(a', b', c')$ would be mapped 
  to the same fact class~$D$, so we can define
  $E(D) \defeq \{(b, c), (b', c'), (b', c), (b, c')\}$. We can now satisfy $\ids$
  on~$I'$ without violating~$\phi$,
  with two envelope-thrifty chase steps that reuse the remaining tuples $(b', c)$ and
  $(b, c')$ of~$E(D)$.
\end{example}

The crucial result needed for the Sufficiently Envelope-Saturated Proposition is
the following, which may be of independent interest, and is proved in
Appendix~\ref{apx:prf_combinatorial} using a combinatorial construction. The
fact that unary keys are problematic is the reason why we handle safe classes
differently.

\begin{restatable}[Dense interpretations]{theorem}{combinatorial}
  \label{thm:combinatorial}
  For any set $\fds$ of~$\fd$s over a relation $R$ with no unary key, and $K \in
  \mathbb{N}$, there exists a non-empty instance $I$ of~$R$ that satisfies
  $\fds$ and has at least $K \card{\dom(I)}$ facts.
\end{restatable}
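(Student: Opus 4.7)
The plan is to exploit the no-unary-key hypothesis to construct a dense instance parameterized by a minimal key. Since $\fds$ has no unary key over $R$, every key of $R$ under $\fds$ has size at least two, so a minimal key $K \subseteq \pos(R)$ has $|K| = r \geq 2$. Let $D$ be a finite set of size $m$, to be fixed large later. I would parameterize the facts of $I$ by tuples $\mathbf{g} \in D^K$: for each such $\mathbf{g}$, there is one fact $\mathbf{t}_\mathbf{g}$ of $I$ whose $K$-projection is $\mathbf{g}$ and whose value at each non-key position $j \notin K$ is $\phi_j(\mathbf{g})$ for some fixed function $\phi_j$ to be chosen. Since distinct $\mathbf{g}$ give distinct $K$-projections and hence distinct facts, $|I| = m^r$.

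For a large ratio, I need to bound $|\dom(I)|$ by choosing each $\phi_j$ with small image. The key structural observation is that for every $j \notin K$, the closure $\{j\}^+$ cannot contain all of $K$: otherwise $\{j\}^+ \supseteq K^+ = \pos(R)$, making $j$ a unary key. Hence $\{j\}^+ \cap K$ is a proper subset of $K$, of size at most $r-1$, and $\phi_j$ need only ``remember'' the corresponding $\leq r-1$ coordinates of $\mathbf{g}$. With a careful encoding---projections onto these coordinates when possible, and linear combinations over a finite abelian group structure on $D$ in cases where simple projections cannot compactly capture the FD interaction at $j$---I would bound each $|V_j| \leq m^{r-1}$, so that $|\dom(I)| \leq r m + (n-r)m^{r-1} \leq 2n m^{r-1}$ and the ratio is at least $m/(2n)$, exceeding $K$ once $m \geq 2Kn$.

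The main obstacle is showing that consistent $\phi_j$'s always exist: for each FD $L \to j$ in the closure of $\fds$, the function $\phi_j$ must factor through $(\phi_l)_{l \in L}$. The hard part will be verifying that this system of constraints is simultaneously satisfiable by functions of small image. I plan to argue via a fixed-point computation on ``information sets'' $D_i \subseteq K$, starting from $D_i := \{i\}$ for $i \in K$ and $D_i := \emptyset$ otherwise, and iteratively enlarging the $D_l$ for $l \in L$ whenever an FD $L \to j$ forces $D_j \not\subseteq \bigcup_{l \in L} D_l$, always extending at non-key positions first. The no-unary-key hypothesis and the minimality of $K$ combine to prevent the fixed point from pushing any $D_j$ (with $j \notin K$) all the way up to $K$ along projection-only choices; in the remaining tricky cases (most conspicuously, cyclic FD patterns such as $\{1,2\}\to 3,\{1,3\}\to 2,\{2,3\}\to 1$ among non-key positions), a linear combination over an abelian group structure on $D$ compactly encodes the joint information of several coordinates into a single element of $D$, satisfying the FDs while keeping $|V_j|$ bounded as required.
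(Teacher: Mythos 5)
Your overall strategy is sound and close in spirit to what is needed: parameterize $m^r$ facts by tuples over a minimal key $K$ of size $r\geq 2$, fill the remaining positions with functions of small image, and observe that the no-unary-key hypothesis bounds by $r-1$ the number of key coordinates that any single position can be forced to determine. The counting at the end is also fine. But the proof has a genuine gap exactly where you say ``the hard part will be verifying that this system of constraints is simultaneously satisfiable'': you never establish that the functions $\phi_j$ can be chosen consistently, with image size $m^{r-1}$, for an \emph{arbitrary} set of $\fd$s. The step ``$\{j\}^+\cap K$ is a proper subset of $K$, hence $\phi_j$ need only remember $\leq r-1$ coordinates'' only gives a \emph{lower} bound on what $\phi_j$ must determine; whether a choice meeting that lower bound can also satisfy every $\fd$ $L\rightarrow j$ in the closure---including those whose bodies mix key and non-key positions, and chains of $\fd$s among non-key positions---is precisely the content of the theorem. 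Your proposed fixed point on ``information sets'' $D_i\subseteq K$ is not an adequate invariant once the abelian-group trick enters: a value such as $g_1+g_2$ involves the coordinate set $\{1,2\}$ jointly while determining neither coordinate, so subsets of $K$ no longer describe what a position ``knows,'' and the interaction of such a position with an $\fd$ whose body contains it plus other positions is left entirely unanalyzed. As written, the proposal is a plan for the key lemma, not a proof of it.

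For contrast, the paper avoids this case analysis by two moves. First, it reduces to the \emph{unary} $\fd$s implied by $\fds$: it defines a ``safe overlaps'' condition on pairwise fact overlaps, stated purely in terms of $\ufds$, and shows that any instance with safe overlaps satisfies all of $\fds$ (if $R^L\rightarrow R^r$ were violated with $R^L$ inside the overlap $O$ and $R^r$ outside, safety yields a unary key $R^q$ of $O$ with $R^q\rightarrow R^r$ not in $\ufds$, contradicting transitivity). Second, it gives one uniform construction: taking $K$ minimal for $\ufds$, the element placed at position $R^p$ in the fact indexed by $f:K\rightarrow\{1,\dots,n\}$ is the restriction of $f$ to the \emph{complement} of $\{R^k\in K \mid R^k=R^p \text{ or } R^k\rightarrow R^p \in \ufds\}$---note that even the key positions carry complement-projections rather than the raw coordinates, unlike in your scheme. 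Then the overlap of two facts is exactly $\{R^p \mid X\subseteq\mu(R^p)\}$ for $X$ the set of coordinates where the two index functions differ, which is always safe. This single choice needs no group structure and no fixed-point argument. If you want to salvage your more direct approach, you would need to formulate and prove the consistency lemma precisely, with a bookkeeping device rich enough to handle both projections and group combinations; until then the argument is incomplete.
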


Hence, we have defined the new notion of $n$-envelope-saturation, and a
saturation process to achieve it: the Sufficiently Envelope-Saturated Solutions
Proposition. Unlike the Fact-Saturated Solutions Lemma, where one
fact of each class was enough, we have shown that envelope-saturated superinstances
may have an arbitrarily high saturation relative to the instance size.

\subsection{Envelope-thrifty chase steps}

We can now introduce \defo{envelope-thrifty chase steps}:

\begin{definition}
  \deft{Envelope-thrifty chase steps} are thrifty chase steps
  (Definition~\ref{def:thrifty}) applicable to
  envelope-saturated aligned superinstances.
  Let $S^q$ be the exported position of the new fact $F_\n$,
  let $F_\w = S(\mybf{b}')$ be the chase witness,
  and let $D = (S^q, \mybf{C}) \in \rdfcl$ be the fact class of~$F_\w$.
  We choose some remaining tuple $\mybf{t}$ of $\calE(D)$ and
  define $b_r \defeq t_r$ for all $S^r \in \nondanger(S^q)$.
\end{definition}

Recall from 
Lemma~\ref{lem:ftok}
that fact-thrifty chase steps apply to
fact-saturated aligned superinstances,
and never violate~$\ufds$. Similarly, envelope-thrifty chase
steps apply to envelope-saturated aligned superinstances,
and never violate $\fds$:

\begin{restatable}{lemma}{envfds}
  \label{lem:envfds}
  For $n > 0$,
  for any $n$-envelope-saturated aligned superinstance $J$ that satisfies
  $\fds$,
  the result $J'$ of an envelope-thrifty chase step on~$J$
  is an $(n-1)$-envelope-saturated superinstance that satisfies~$\fds$.
\end{restatable}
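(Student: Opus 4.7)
I would establish both claims simultaneously by reusing the same global envelope $\calE$ that witnesses $n$-envelope-saturation of $J$ and checking that it still works for $J'$. The only new fact in $J' \setminus J$ is $F_\n = S(\mybf{b})$, so the entire verification reduces to showing that this single addition creates no $\fd$-violation and preserves the four envelope conditions at every $D' \in \rdfcl$, losing at most one remaining tuple (from $\calE(D)$).

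\textbf{$\fds$-satisfaction.} Suppose for contradiction that some $\fd$ $\phi : S^L \to S^r$ of $\fds$ is violated in $J'$; since $J \models \fds$, the violation pairs $F_\n$ with some $F = S(\mybf{c}) \in J$ with $\pi_L(\mybf{c}) = \pi_L(\mybf{b})$ and $c_r \neq b_r$. I case-split on how $L$ sits relative to the exported position $S^q$. If $S^q \in L$, then $c_q = b_q = a_p$ puts $a_p \in \pi_{S^q}(J)$, contradicting activeness of $F_\a$ for $\tau$. If some $S^l \in L$ lies in $\danger(S^q)$, the thrifty-step clause $b_l \notin \pi_{S^l}(J)$ rules out any matching $F$. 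The substantive case is $L \subseteq O \defeq \nondanger(S^q)$: writing $\mybf{t} \defeq \pi_O(\mybf{b}) \in \calE(D)$, envelope condition~(4) applied at any $S^l \in L$ forces $\pi_O(\mybf{c}) \in \calE(D)$. If $S^r \in O$, envelope condition~(1) (internal $\fd$s) then gives $c_r = b_r$, a contradiction. Otherwise $S^r \notin O$, so $D$ is unsafe, and envelope condition~(2) equates the two $L$-agreeing envelope tuples $\pi_O(\mybf{c})$ and $\mybf{t}$, forcing $\mybf{t} \in \pi_O(J)$; but this contradicts $\mybf{t}$ being a remaining tuple of $\calE(D)$.

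\textbf{Envelope preservation and count.} Conditions~(1) and~(2) are intrinsic to each envelope and so are untouched. For condition~(3), every envelope element still occurs at its unique envelope position in $J' \supseteq J$. Condition~(4) needs checking only against the new fact $F_\n$: for each position $S^t$ I will show that $b_t \in \pi_{S^t}(\calE(D'))$ can only occur with $D' = D$ and then with the required achievement. For $t = q$: $a_p \notin \pi_{S^q}(J)$ by activeness, hence $a_p \notin \pi_{S^q}(\calE(D'))$ by condition~(3) applied to $\calE(D')$. For $S^t \in \danger(S^q)$: $b_t \notin \pi_{S^t}(J)$ similarly excludes it from every envelope. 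For $S^t \in O$: $b_t \in \dom(\calE(D))$, so pairwise disjointness of envelope domains excludes $D' \neq D$, while for $D' = D$ the chase step forces $F_\n$ to achieve $D$ with $\pi_O(F_\n) = \mybf{t} \in \calE(D)$ by construction. Thus $\calE$ is still a global envelope of $J'$, the only remaining tuple lost is $\mybf{t}$ from $\calE(D)$, every other unsafe class retains all its remaining tuples, and $J'$ is $(n-1)$-envelope-saturated.

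\textbf{Main obstacle.} The hard part will be case~(iii) of the $\fd$ analysis, where all four envelope conditions and the safe-versus-unsafe dichotomy for $D$ must be orchestrated to rule out the $S^r \notin O$ subcase; it is precisely here that the remaining-tuple assumption---used to preclude $\mybf{t} \in \pi_O(J)$---shows why envelope-saturation with many remaining tuples, rather than mere fact-saturation, is the right strengthening for higher-arity $\fd$s.
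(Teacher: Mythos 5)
Your proof is correct and follows essentially the same route as the paper's: rule out violations involving the exported or dangerous positions directly, then use the envelope conditions together with the safe/unsafe dichotomy for the $L\subseteq\nondanger(S^q)$ case (the paper packages this last step via overlaps and Lemma~\ref{lem:liftovl}, but the content is identical), and check the fourth envelope condition against $F_\n$ using disjointness of envelope domains. The only thing you leave implicit is that $J'$ remains an \emph{aligned} superinstance (the $k$-bounded-simulation part), which the paper discharges by reference to the proof of Lemma~\ref{lem:ftok}.
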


We now modify the Fact-Thrifty Completion Proposition
(Proposition~\ref{prp:ftcomp}),
generalized without assumption \assm{reversible} as in the previous section, 
to use envelope-thrifty
chase steps instead of fact-thrifty chase steps. This is possible because the
choice of reused elements at non-dangerous positions makes no difference in
terms of applicable $\uid$s, as they already occur at the position where they
are reused. Hence, we can
perform the exact same process as before (except the non-dangerous reuses), using Lemma~\ref{lem:envfds} to justify that
$\fds$ is preserved;
but we must abort if we reach an
envelope-exhausted instance:

\begin{restatable}[Envelope-thrifty completion]{proposition}{etcomp}
  \label{prp:etcomp}
  For any envelope-saturated aligned superinstance $J$ of~$I_0$ that satisfies~$\fds$, we
  can obtain by envelope-thrifty chase steps an aligned superinstance $J'$ of
  $I_0$, such that $J'$ is either envelope-exhausted or satisfies $\con$.
\end{restatable}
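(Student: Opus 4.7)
The plan is to adapt the proof of the Fact-Thrifty Completion Proposition (Proposition~\ref{prp:ftcomp}), generalized to remove assumption \assm{reversible} as in Section~\ref{sec:manyscc}, by systematically replacing every fact-thrifty chase step with an envelope-thrifty chase step. The key technical point is that envelope-thrifty chase steps remain thrifty chase steps: they differ from fact-thrifty ones only in \emph{which} tuple of reusable elements they pick at the non-dangerous positions of the new fact. All the structural properties we relied on---alignedness, preservation of the $k$-bounded simulation $\cov$, and the locality behaviour of Theorem~\ref{thm:locality}---depend only on the thrifty nature of the steps and on the reused elements having the correct $\bbsim_k$-image and already occurring at the positions where they are reused. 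Both of these are guaranteed by the definition of an envelope, while Lemma~\ref{lem:envfds} replaces the role of fact-saturation in the $\ufds$-only proof: each envelope-thrifty step consumes one remaining envelope tuple but preserves all of $\fds$.

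Concretely, I would first apply Lemma~\ref{lem:nontrivconv} to obtain a manageable partition $(P_1, \ldots, P_{\neqidsc})$ of $\ids$ and process the classes in order, mirroring the end of Section~\ref{sec:manyscc}. For a trivial class $\{\tau\}$ I would perform one round of fresh envelope-thrifty chase steps for $\tau$, as the envelope-thrifty analogue of Corollary~\ref{cor:trivscc}; note that Lemma~\ref{lem:sccbyscc} is insensitive to which flavour of thrifty step is used, so classes processed earlier remain satisfied. For a reversible class, I would recycle the piecewise-realization strategy from Sections~\ref{sec:wsound}--\ref{sec:ksound}: first run $k+1$ rounds of fresh envelope-thrifty chase steps to enforce $k$-reversibility via Theorem~\ref{thm:locality}, then build a balanced pssinstance $P$ and a $\ucon$-compliant piecewise realization $\pire$, and finally translate each tuple of $\pire$ into a new fact by an envelope-thrifty chase step, selecting at each step some remaining tuple of the appropriate $\calE(D)$ to fill the non-dangerous positions.

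The main obstacle, and the reason the statement must allow the alternative of exhausting the envelope, is accounting: the completion process may demand more envelope-thrifty steps for some unsafe fact class $D$ than $\calE(D)$ has remaining tuples. Since the number of steps required by $\pire$ is determined only after $\pire$ is built, I simply abort and return the current aligned superinstance the moment envelope-exhaustion occurs; Proposition~\ref{prp:preproc} is what will ultimately make this alternative vacuous, once we start with a sufficiently saturated $J$. A secondary check is that envelope reuses neither conflict with each other nor with the choices made by $\pire$ at dangerous positions: this follows because the global envelope has pairwise disjoint domains across classes, and within a class each envelope element appears at exactly one non-dangerous position and only in facts achieving $D$, so the dangerous positions (either fresh or dictated by $\pire$) are untouched.

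Putting this together, the process either terminates successfully, yielding an aligned superinstance $J'$ that satisfies $\ids \wedge \fds = \con$, or it aborts after producing an envelope-exhausted aligned superinstance of $I_0$, matching the disjunction in the statement. The full proof is then a step-by-step replay of the proofs of Proposition~\ref{prp:ftcomp} and its extension in Section~\ref{sec:manyscc}, invoking Lemma~\ref{lem:envfds} at each chase step to carry the invariant $J \models \fds$ through the construction.
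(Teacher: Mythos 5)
Your proposal is correct and follows essentially the same route as the paper: the paper's own (very terse) proof likewise re-runs the fact-thrifty completion process verbatim with envelope-thrifty steps substituted, observing that the non-dangerous reuses cannot affect $\uid$ applicability since those elements already occur at the reused positions, invoking Lemma~\ref{lem:envfds} for preservation of $\fds$, and aborting upon envelope exhaustion. Your additional remarks on the manageable partition and on the disjointness of envelope domains are correct elaborations of details the paper leaves implicit.
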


The last problem to address is exhaustion.
Unlike fact-saturation, envelope-saturation ``runs out'';
whenever we use a remaining tuple $\mybf{t}$ in a chase step to create $F_\n$ and
obtain a new aligned superinstance~$J'$, then we cannot use $\mybf{t}$ again in~$J'$.
So we must start with a sufficiently envelope-saturated superinstance,
and we must
control how many chase steps are applied in the envelope-thrifty completion
process. From the details of our construction, we can show the following:

\begin{restatable}[Envelope blowup]{lemma}{eblowup}
  \label{lem:eblowup}
  There exists $B \in \mathbb{N}$ depending only on $k$ and $\ucon$ such that,
  for any aligned superinstance $J = (I, \cov)$ of~$I_0$, and global envelope $\calE$,
  letting $J' = (I', \cov')$ be the result of the
  envelope-thrifty completion process, we have $\card{I'} < B \card{I}$.
\end{restatable}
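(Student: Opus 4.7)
The plan is to follow the envelope-thrifty completion process phase by phase along the manageable partition $(P_1,\ldots,P_{\neqidsc})$ of $\ids$ produced by Lemma~\ref{lem:nontrivconv}, and show that each phase multiplies the current instance size by a constant depending only on $k$ and $\ucon$. Since $\neqidsc$ depends only on $\ucon$, the overall bound $B$ is then the product of finitely many such per-phase factors.

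For a trivial phase $P_i=\{\tau\}$, Corollary~\ref{cor:trivscc} prescribes a single round of fresh envelope-thrifty chase steps, each adding one new fact per active fact of the current instance, so the blowup factor is at most $2$. For a reversible phase I would trace through the envelope-thrifty adaptation of the Fact-Thrifty Completion Proposition (Proposition~\ref{prp:ftcomp}), decomposing it into four substeps: (a) the $k{+}1$ preparatory rounds of fresh envelope-thrifty chase steps used to secure $k$-reversibility each enlarge the instance by at most a factor $1+\card{P_i}$, contributing an overall factor of $(1+\card{P_i})^{k+1}$; (b) the Balancing Lemma (Lemma~\ref{lem:hascompletion}) adds at most $\card{\pos(\sigma)}\cdot\card{\dom(I_i)}$ helpers $\calH$; (c) the $\ucon$-compliant piecewise realization $\pire$ produced by Lemma~\ref{lem:balwsndc} contains at most $\card{\dom(I_i)}+\card{\calH}$ tuples per $\eqfun$-class (since under the $\ufd$s of $\ufds$ each element fills at most one slot per class), hence $O(\neqfunc \cdot \card{I_i})$ tuples in total; and (d) Lemma~\ref{lem:sreal} turns each tuple of $\pire$ into exactly one new fact, added by a single envelope-thrifty chase step. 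Multiplying these contributions yields a per-phase blowup factor depending only on $k$, $\card{\pos(\sigma)}$, $\neqfunc$ and $\card{\ids}$, i.e., only on $k$ and $\ucon$.

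The main obstacle is substep~(a): the $k{+}1$ preparatory rounds put $k$ in the exponent, so one must check that the cascade of newly-active facts across these rounds remains geometric with ratio bounded solely by $\card{P_i}$. This holds because during phase $i$ only $\uid$s of $P_i$ are fired, so each new fact can trigger at most $\card{P_i}$ fresh chase steps in the next round, independently of $\card{I_i}$. Finally, Lemma~\ref{lem:sccbyscc} ensures that phases do not reawaken $\uid$s of earlier phases, which lets one safely compose the $\neqidsc$ per-phase bounds into a single constant $B=B(k,\ucon)$ with $\card{I'}<B\,\card{I}$.
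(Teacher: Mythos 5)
Your proposal is correct and follows essentially the same route as the paper: decompose the completion process along the manageable partition (whose length depends only on $\ucon$), bound each trivial phase by one chase round, and bound each reversible phase by the $k{+}1$ preparatory rounds plus the balancing/realization/guided-chase pipeline, with the helper set and the realization both linear in the current instance size. The only cosmetic difference is that the paper attributes the final fact count to the Guided Chase Lemma's invariant (each chase step places an element at a position it did not yet occupy, giving the bound $\card{\dom(P)}\cdot\card{\positions(\sigma)}$) rather than to a per-tuple count over $\pire$, but these yield the same linear bound.
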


We can now conclude the proof of the 
Acyclic Universal Models Theorem (Theorem~\ref{thm:myuniv}) that we stated at the beginning of this section.
Start by applying the saturation process of the
Sufficiently Envelope-Saturated Solutions Proposition
to obtain an aligned superinstance $J = (I, \cov)$ of some $k$-sound $I_0'$,
such that $J$ satisfies $\fds$ and is $(B \card{I})$-envelope-saturated.
Now, apply the Envelope-Thrifty Completion Proposition to obtain an aligned
superinstance $J'$ of $I_0$.
By the Envelope Blowup Lemma, $J'$ contains $< B \card{I}$ new facts, so,
by Lemma~\ref{lem:envfds}, $J'$ must still be
$1$-envelope-saturated. Hence, $J'$ satisfies~$\con$. This concludes the proof, as
$J'$ is an aligned superinstance of~$I_0$.

\section{Cyclic Queries}
\label{sec:cycles}
We now finally complete our proof
of the Universal Models Theorem (Theorem~\ref{thm:univexists})
by moving from
acyclic Boolean $\cq$s to arbitrary Boolean $\cq$s.
We do so by a generic process which is essentially independent from our previous
construction.

Intuitively, the only cyclic $\cq$s that hold in $\chase{I_0}{\ids}$ either
have an acyclic self-homomorphic match (so they are implied by an acyclic $\cq$ that
also holds) or have all cycles matched to elements of~$I_0$.
Hence, in a $k$-sound instance for $\cq$, no other cyclic queries must be true.
We ensure this by a cycle blowup process that takes
the product of our~$I$ with a group of high girth,
following Otto~\cite{otto2002modal}. However, we 
need to adjust this construction to avoid creating $\fd$ violations.

We let $J_{\f} = (I_{\f}, \cov)$ be the aligned superinstance
obtained from
the Acyclic Universal Models Theorem (Theorem~\ref{thm:myuniv2}).
Its underlying instance $I_{\f}$ is a finite superinstance of
$I_0$ that satisfies $\con$, and the $k$-bounded simulation
$\cov$ guarantees that $I_{\f}$ is $k$-sound for $\acq$. Our goal in this section is to make
$I_{\f}$ $k$-sound for $\cq$ while still satisfying $\con$, so that it is
$k$-universal. This will conclude the proof of the Universal Models Theorem
(Theorem~\ref{thm:univexists}).

\subsection{Simple product}

Let us first introduce preliminary notions:

\begin{definition}
  A \deft{group} $G = (S, \cdot)$ over a finite set~$S$ consists of an
  associative \deft{product law} $\cdot : S^2 \rightarrow S$, a \deft{neutral
  element} $e \in S$, and an \deft{inverse law} $\cdot^{-1} : S \rightarrow S$
  such that $x \cdot x^{-1} = x^{-1} \cdot x = e$ for all $x \in S$. We say that $G$ is
  \deft{generated} by $X \subseteq S$ if all elements of~$S$ can be written as a
  product of elements of~$X$ and $X^{-1} \defeq \{x^{-1} \mid x \in X\}$.

  Given a group $G$ generated by $X$, the
  \deft{girth} of~$G$ under $X$ is the length of the shortest non-empty word
  $\mybf{w}$ of elements of~$X$ and $X^{-1}$
  such that $w_1 \cdots w_n = e$ and $w_i \neq w_{i+1}^{-1}$ for all $1 \leq i <
  n$. (If $X =
  \{g\}$ with $g = g^{-1}$, the girth is~$1$.)
\end{definition}

\begin{restatable}[\cite{margulis1982explicit}]{lemma}{groups}
  \label{lem:groups}
  For all $n \in \mathbb{N}$ and finite non-empty set~$X$, there is a finite group $G = (S,
  \cdot)$ generated by $X$
  with girth $\geq n$ under $X$.
  We call $G$ an \deft{$n$-acyclic group generated by $X$}.
\end{restatable}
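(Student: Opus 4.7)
The plan is to translate the existence statement into a question about finite quotients of the free group $F_X$ on the set $X$. A non-empty word $w = w_1 \cdots w_\ell$ in $X \cup X^{-1}$ satisfying $w_i \neq w_{i+1}^{-1}$ is exactly a non-identity reduced word in $F_X$, and the girth of a finite quotient $G = F_X/N$ under the image of~$X$ equals the length of the shortest such reduced word lying in~$N$. It therefore suffices, given $n$, to produce a finite-index normal subgroup $N \trianglelefteq F_X$ disjoint from the (finite) set of non-empty reduced words of length $< n$.

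For this I would invoke the residual finiteness of free groups: for every $w \in F_X \setminus \{e\}$ there is a finite-index normal subgroup $N_w \trianglelefteq F_X$ with $w \notin N_w$. This is classical; one standard proof realises $F_X$ as a subgroup of $\mathrm{SL}_2(\mathbb{Z})$ and takes congruence subgroups modulo a sufficiently large prime avoiding the matrix entries of~$w$. Taking the finite intersection $N \defeq \bigcap_w N_w$ over non-empty reduced words $w$ of length $< n$ yields a finite-index normal subgroup containing none of them, so $G \defeq F_X/N$ is a finite group generated by the image of~$X$ with girth $\geq n$ under that generating set.

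The degenerate case $|X| = 1$ with $X = \{g\}$ is handled separately, in line with the convention stated in the definition of girth: take the cyclic group $\mathbb{Z}/m\mathbb{Z}$ with $m \defeq \max(n,3)$, in which $g \neq g^{-1}$ and the shortest alternating relation has length exactly~$m$. The only subtlety in the main argument is bookkeeping around the girth definition, namely checking that the \emph{no adjacent inverse letters} condition on a cycle in the Cayley graph matches reducedness in~$F_X$; this is routine. The reason the paper cites Margulis rather than residual finiteness is quantitative: Margulis exhibits explicit arithmetic families such as Cayley graphs of $\mathrm{PSL}_2(\mathbb{F}_p)$ whose girth grows logarithmically in~$|G|$, which may matter for effective size bounds elsewhere. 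But for the purely existential statement required here, the residual-finiteness route above is enough, and the main obstacle is conceptual rather than technical.
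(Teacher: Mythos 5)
Your argument is correct, but it is not the route the paper takes: the paper does not prove this lemma at all, it imports it wholesale from Margulis's explicit construction of Cayley graphs of large girth (arithmetic quotients such as $\mathrm{SL}_2$ over finite fields, where the girth grows logarithmically in the group order). Your residual-finiteness argument is a clean, self-contained alternative: the translation of the girth condition into ``no non-empty reduced word of length $<n$ lies in $N$'' is exactly right, since a non-empty word over $X \cup X^{-1}$ with no adjacent inverse pair is precisely a non-trivial reduced element of the free group $F_X$, and intersecting the finitely many finite-index normal subgroups $N_w$ over such words $w$ of length $<n$ gives the required quotient; your separate treatment of $|X|=1$ via $\mathbb{Z}/m\mathbb{Z}$ with $m \geq \max(n,3)$ correctly sidesteps the paper's convention for self-inverse generators (and note that for $n \geq 2$ the general argument also guarantees distinct elements of $X$ have distinct images, so identifying $X$ with its image in the quotient is legitimate). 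The trade-off is the one you already identify: residual finiteness buys only existence, with no control on $|G|$, whereas Margulis gives explicit groups of size polynomial in the girth bound. For this paper that difference is immaterial --- the lemma is used only once, in the cycle-blowup step of Section VII, where the final instance size is not tracked quantitatively (the complexity bounds of Corollary 3.3 come from the reduction to UQA, not from the size of the universal model) --- so your softer argument would serve equally well here.
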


In other words, in an $n$-acyclic group generated by $X$, there is no short
product of elements of~$X$ and their inverses which evaluates to~$e$, except
those that include a factor $x x^{-1}$.

We now take the product of~$I_{\f}$ with such a finite group~$G$. This 
ensures that any cycles in the product instance are large, because they project to
cycles in~$G$. We use a specific generator:

\begin{definition}
  The \deft{fact labels} of a superinstance~$I$ of~$I_0$ are $\lab{I} \defeq
  \{\l^F_i \mid F \in I \backslash I_0, 1 \leq i \leq \card{F}\}$.
\end{definition}

Now, we define the product of a superinstance $I$ of $I_0$ with a group generated by
$\lab{I}$. We make sure not to blow up cycles in~$I_0$, so the result remains a
superinstance of~$I_0$:

\begin{definition}
  \label{def:prod}
  Let $I$ be a finite superinstance of~$I_0$ and 
  $G$ be a finite group generated by $\lab{I}$. The \deft{product of~$I$
  by~$G$ preserving $I_0$} 
  is the finite instance
  $(I, I_0) \sprod G$
  with domain $\dom(I) \times G$
  consisting of the following facts, for all $g \in G$:
  \begin{compactitem}
  \item For every fact $R(\mybf{a})$ of~$I_0$, the fact $R((a_1, g), \ldots,
    (a_{\card{R}}, g))$.
  \item For every fact $F = R(\mybf{a})$ of~$I \backslash I_0$, the
    following fact:\\
    $R((a_1, g \cdot \l^F_1), \ldots, (a_{\card{R}}, g \cdot \l^F_{\card{R}}))$.
  \end{compactitem}
  We identify $(a, e)$ to $a$ for $a \in \dom(I_0)$,
  so $(I, I_0) \sprod G$ is still a superinstance of~$I_0$.
\end{definition}

We say a superinstance $I$ of~$I_0$ is \deft{$k$-instance-sound} (for 
$\con$) if for any $\cq$ $q$ such that $\card{q} \leq k$, if $q$ has a match
in $I$ involving an element of~$I_0$, then $\chase{I_0}{\ids} \models q$. We can
ensure that $I_{\f}$ is $k$-instance-sound, up 
to having performed $k$ chase rounds on $I_0$ initially.
We can then state the following property:

\begin{restatable}[Simple product]{lemma}{prodppty}
  \label{lem:prodppty}
  Let $I$ be a finite superinstance of~$I_0$ and
  $G$ a finite $(2k+1)$-acyclic group generated by~$\lab{I}$.
  If $I$ is $k$-sound for $\acq$ and $k$-instance-sound, then $(I,
  I_0) \sprod G$ is $k$-sound for $\cq$.
\end{restatable}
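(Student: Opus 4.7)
The plan is to analyze any match of $q$ in the product by projecting to $I$ and invoking the girth bound on $G$ to control cycles. Suppose $q$ is a Boolean $\cq$ with $|q| \leq k$ and $(I, I_0) \sprod G \models q$ via a homomorphism $h \colon q \to (I, I_0) \sprod G$. Observe that the first-projection $\pi_1 \colon (a,g) \mapsto a$ is a homomorphism $(I, I_0) \sprod G \to I$, so $h_1 \defeq \pi_1 \circ h$ is a homomorphism $q \to I$. If $h_1$ sends some variable into $\dom(I_0)$, then $q$ has a match in $I$ involving an $I_0$-element, and by $k$-instance-soundness of $I$ we get $\chase{I_0}{\ids} \models q$; since the chase is universal for $\ids$ and satisfies $\fds$ (hence $\con$), this gives $(I_0, \con) \entunr q$, as desired.

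Otherwise $h_1$ maps $q$ entirely into $\dom(I) \setminus \dom(I_0)$, so each atom $A$ of $q$ is mapped by $h$ to a $G$-translate of a unique fact $F_A \in I \setminus I_0$: one may write $h(A) = R((a_1, g_A \cdot \l^{F_A}_1), \ldots, (a_n, g_A \cdot \l^{F_A}_n))$ for a uniquely determined $g_A \in G$. (If some variable occurred twice at different positions in an atom $A$, consistency of $h$ would force two distinct labels of the same fact to coincide, which is impossible; so this case does not arise.)

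Now consider any Berge cycle $A_1, x_1, \ldots, A_n, x_n$ of $q$, writing $p_i$ and $p'_i$ for the positions at which $x_{i-1}$ and $x_i$ respectively appear in $A_i$ (indices mod~$n$). Equating the group components of $h(x_i)$ derived from $A_i$ and from $A_{i+1}$ and going once around the cycle yields an identity
\[
  \prod_{i=1}^n \bigl(\l^{F_{A_i}}_{p_i}\bigr)^{-1} \cdot \l^{F_{A_i}}_{p'_i} = e \quad \text{in } G,
\]
a word of length $2n \leq 2k$ over $\lab{I} \cup \lab{I}^{-1}$. Since $G$ has girth $\geq 2k+1$ under $\lab{I}$, this word cannot be reduced: some adjacent pair $w_j w_{j+1}^{-1}$ evaluates to $e$. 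A cancellation within a single atom forces $p_i = p'_i$, so $h$ identifies $x_{i-1}$ and $x_i$; a cancellation across two consecutive atoms forces $F_{A_i} = F_{A_{i+1}}$ and $p'_i = p_{i+1}$, so $h$ identifies $A_i$ with $A_{i+1}$. I would then define $q^*$ as the quotient of $q$ under the equivalences on variables and atoms induced by ``$h$-image being equal''. Then $q^*$ has $\leq |q| \leq k$ atoms, $h_1$ descends to a homomorphism $q^* \to I$, and by the above analysis every Berge cycle of $q$ collapses in $q^*$, so $q^*$ is acyclic. Applying $k$-soundness of $I$ for $\acq$ to $q^*$ gives $(I_0, \con) \entunr q^*$, and since $q^* \models q$ (as $q^*$ is a homomorphic quotient of $q$), we conclude $(I_0, \con) \entunr q$.

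The main obstacle will be making the folding step watertight: one must verify that the identifications coming from different Berge cycles of $q$ are mutually consistent (which is automatic when defining the equivalence directly from $h$, as above) and, more delicately, that after quotienting by the ``$h$-image'' equivalence every Berge cycle of $q$ has been fully collapsed rather than merely shortened. This reduces to the combinatorial observation that in a hypergraph where every Berge cycle contains two consecutive equivalent edges, the quotient by that equivalence is acyclic.
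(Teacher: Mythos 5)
Your overall strategy is the same as the paper's: project the match to $I$, dispatch the case where it touches $\dom(I_0)\times G$ via $k$-instance-soundness, and use the girth of $G$ to force label cancellations along Berge cycles. The difference is the final folding step: the paper folds two atoms at a time (Lemma~\ref{lem:diffact}) inside a minimal-counterexample induction on $\card{q}$, whereas you fold everything at once into the quotient $q^*$ induced by the $h$-images. The one-shot quotient is a reasonable idea, but your justification that $q^*$ is acyclic has a genuine gap, and you have in fact reduced it to a statement that is false.

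From ``every Berge cycle of $q$ contains two consecutive atoms with the same $h$-image'' one cannot conclude that $q^*$ is acyclic: quotienting a Berge-acyclic query by an equivalence on its atoms can \emph{create} Berge cycles that are not images of Berge cycles of $q$. Concretely, take atoms $A,B,C,D$ with $x$ occurring in $A$ and $C$, $y$ in $C$ and $D$, $z$ in $D$ and $B$, and identify only $A\sim B$ (e.g.\ $A=R(x,u)$, $B=R(v,z)$ both mapped to the same fact): $q$ has no Berge cycle at all, so your hypothesis holds vacuously, yet the quotient has the Berge cycle $[A],[x],[C],[y],[D],[z]$. The point is that a Berge cycle of $q^*$ lifts only to a closed walk of $q$ interleaved with ``equivalence jumps'', not to a Berge cycle of $q$, so your analysis of cycles of $q$ says nothing about it. The repair is to run the girth argument on $q^*$ itself rather than on $q$: $q^*$ is the canonical query of the image subinstance $h(q)\subseteq (I,I_0)\sprod G$, which has at most $k$ facts, so any Berge cycle of it, say of length $n\leq k$, yields the same length-$2n$ word over $\lab{I}\cup\lab{I}^{-1}$ evaluating to $e$; the forced cancellation then directly contradicts the pairwise distinctness of the facts (cross-atom cancellation) or of the elements (within-atom cancellation) of that cycle. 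The same girth bound shows that no element occurs at two positions of a single fact of the image, which you also need for $q^*\in\acq$ and checked only for $q$. With that change the argument closes; the remaining steps (that $q^*$ entails $q$ via the quotient homomorphism, that $I\models q^*$ by projection, and the appeal to $k$-soundness for $\acq$) are correct.
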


\begin{example}
  Consider $F_0 \defeq R(a, b)$, $I_0 \defeq \{F_0\}$, and $\ids$ consisting of $\tau: \ui{R^2}{S^1}$, $\tau':
  \ui{S^2}{R^1}$, $\tau^{-1}$, and $(\tau')^{-1}$. Let $F \defeq S(b, a)$, and
  $I \defeq I_0 \sqcup \{F\}$. $I$ satisfies $\ids$ and is sound for $\acq$, but not for
  $\cq$: take for instance $q : \exists x y ~ R(x, y) \wedge S(y, x)$, which is
  cyclic and holds in $I$ while $(I_0, \ids) \not\entunr q$.

  We have $\lab{I} = \{\l^F_1, \l^F_2\}$. Identify $\l^F_1$ and $\l^F_2$
  to~$1$ and $2$ and
  consider the group $G \defeq (\{0, 1, 2\}, \cdot)$ where $\cdot$ is
  addition modulo~3. $G$ has girth 2 under $\lab{I}$.

  The product $I_\p \defeq (I, I_0) \sprod G$, writing pairs as subscripts for brevity, is
  $\{R(a_0, b_0), \allowbreak R(a_1, b_1),\allowbreak  R(a_2, b_2),\allowbreak
    S(b_1, a_2),\allowbreak  S(b_2, a_0),\allowbreak  S(b_0,
  a_1)\}$. In this case $I_\p$ happens to be $5$-sound for $\cq$.
\end{example}

We cannot conclude directly with the simple product, because
$I_\p \defeq (I_\f, I_0) \sprod G$ may violate $\ufds$ even though $I_\f \models
\fds$.
Indeed, there may be a relation~$R$,
a $\ufd$ $\phi: R^p \rightarrow R^q$ in~$\ufds$,
and two $R$-facts $F$ and $F'$ in~$I_\f \backslash I_0$
with $\pi_{R^p, R^q}(F) = \pi_{R^p, R^q}(F')$.
In $I_\p$ the images of~$F$ and $F'$ may overlap only on~$R^p$,
so they could violate~$\phi$.

\subsection{Mixed product}

What we need is a more refined notion of product, that does not attempt to
blow up cycles within fact overlaps.
To define it, we need to consider a \defo{quotient} of~$I_\f$:

\begin{definition}
  The \deft{quotient} $\quot{I}{\sim}$ of an instance $I$ by an equivalence
  relation $\sim$ on $\dom(I)$ is defined as follows:
  \begin{compactitem}
  \item $\dom(\quot{I}{\sim})$ is the equivalence classes of $\sim$ on $\dom(I)$,
  \item $\quot{I}{\sim}$ contains one fact $R(\mybf{A})$
    for every fact $R(\mybf{a})$ of~$I$,
    where $A_i$ is the $\sim$-class of~$a_i$ for all $R^i \in \pos(R)$.
  \end{compactitem}
  The \deft{quotient homomorphism} $\chi_{\sim}$ is the homomorphism from~$I$ to
  $\quot{I}{\sim}$ defined accordingly.
\end{definition}

We quotient $I_{\f}$ by the equivalence relation $\bbsim_k$ (recall
Definition~\ref{def:bbsim}), yielding $I_{\f}' \defeq
\quot{I_{\f}}{\bbsim_k}$. The resulting $I_\f'$ may no longer satisfy $\con$. However, it is still $k$-sound for $\acq$, for the following reason:

\begin{restatable}{lemma}{ksimhomom}
  \label{lem:ksimhomom}
  Any $k$-bounded simulation from an instance~$I$ to an instance $I'$ defines a
  $k$-bounded simulation from $\quot{I}{\bbsim_k}$ to~$I'$.
\end{restatable}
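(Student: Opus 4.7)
The plan is to define $\cov'$ on $\dom(\quot{I}{\bbsim_k})$ by picking, for each $\bbsim_k$-class $A$, some representative $a_A \in A$ and setting $\cov'(A) \defeq \cov(a_A)$, and then to verify that $\cov'$ is indeed a $k$-bounded simulation from $\quot{I}{\bbsim_k}$ to $I'$.

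First I would record the standard fact that $\bsim_n$ composes: if $(I_1, a_1) \bsim_n (I_2, a_2)$ and $(I_2, a_2) \bsim_n (I_3, a_3)$, then $(I_1, a_1) \bsim_n (I_3, a_3)$, which falls out of a direct induction on $n$ from the definition. Using composition together with $a \bbsim_k a_{[a]}$ and the fact that $\cov$ is a $k$-bounded simulation, we obtain $(I, a) \bsim_k (I', \cov'([a]))$ for every $a \in \dom(I)$.

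The core step is then to prove, by induction on $n \leq k$, the following strengthening: for every $a \in \dom(I)$ and $d \in \dom(I')$, if $(I, a) \bsim_n (I', d)$ then $(\quot{I}{\bbsim_k}, [a]) \bsim_n (I', d)$. Given a fact $R(\mybf{A})$ of the quotient with $A_p = [a]$, I would unfold it to a fact $R(\mybf{a}')$ of $I$ with $[a'_i] = A_i$ for each position, and in particular $a'_p \bbsim_k a$. Composing with the hypothesis yields $(I, a'_p) \bsim_n (I', d)$; applying the definition of $\bsim_n$ to the fact $R(\mybf{a}')$ produces a fact $R(\mybf{b})$ of $I'$ with $b_p = d$ and $(I, a'_q) \bsim_{n-1} (I', b_q)$ at every position $R^q$; and finally the inductive hypothesis at bound $n-1$ lifts each of these to $(\quot{I}{\bbsim_k}, A_q) \bsim_{n-1} (I', b_q)$, which is exactly what the definition requires. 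Specializing to $n = k$ and $d = \cov'([a])$ then concludes that $\cov'$ is a $k$-bounded simulation.

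The main subtlety — and the reason for inducting on the strengthened statement rather than directly on $(\quot{I}{\bbsim_k}, [a]) \bsim_n (I', \cov'([a]))$ — is that a fact of $\quot{I}{\bbsim_k}$ exhibiting $[a]$ at position $p$ need not arise from a fact of $I$ containing the chosen representative $a_{[a]}$ at position $p$: it only arises from some fact containing an $a'_p$ merely $\bbsim_k$-equivalent to $a$. Composition of bounded simulations is precisely what lets us move between such equivalent witnesses while preserving the target $d$ in $I'$, and threading an arbitrary $d$ through the induction avoids ever needing to re-choose representatives mid-proof.
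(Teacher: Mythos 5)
Your proof is correct and follows essentially the same route as the paper's: an induction on the simulation depth $n \leq k$, pulling a quotient fact back to a preimage fact of $I$, switching between $\bbsim_k$-equivalent witnesses via transitivity of bounded simulations, and closing with the inductive hypothesis at depth $n-1$. The only cosmetic difference is in the factoring: the paper first proves $(\quot{I}{\bbsim_k}, A) \bsim_k (I, a)$ for every $a \in A$ (a $k$-bounded simulation from the quotient back to $I$) and then composes with $\cov$, whereas you inline that composition by threading an arbitrary target $d \in \dom(I')$ through the induction.
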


We then consider the homomorphism $\chi_{\bbsim_k}$ from $I_\f$ to~$I'_\f$,
and blow up cycles in $I_\f$ by a \defo{mixed product} that only distinguishes facts with a
different image in $I'_\f$ by $\chi_{\bbsim_k}$.
The point is that, as we show from our construction, facts of $I_\f$ that have
the same elements at the same positions always have the same $\bbsim_k$-class.
Hence, they are mapped to the same fact by~$\chi_{\bbsim_k}$ and will not be
distinguished by the mixed product.
Let us formalize this:

\begin{definition}
  \label{def:cautious}
  Let $I$ be a superinstance of~$I_0$ and $h$ be a homomorphism from $I$ to some
  instance~$I'$. 
  We say $I$ is \deft{cautious} for $h$ (and~$I_0$) if for any relation~$R$, for
  any two $R$-facts $F$ and $F'$ such that $\pi_{R^p}(F) = \pi_{R^p}(F')$
  for some $R^p \in \pos(R)$, either $F, F' \in I_0$, or $h(F) = h(F')$.
\end{definition}

\begin{restatable}[Cautiousness]{lemma}{cautious}
  \label{lem:cautious}
  The superinstance $I_\f$ of~$I_0$ constructed by the 
  Acyclic Universal Models Theorem (Theorem~\ref{thm:myuniv2}) is cautious for $\chi_{\bbsim_k}$.
\end{restatable}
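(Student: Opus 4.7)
The plan is to trace through the construction of~$I_\f$ (saturation followed by envelope-thrifty completion) and argue that any two $R$-facts $F = R(\mybf{a})$ and $F' = R(\mybf{a}')$ of~$I_\f$ with $a_p = a'_p = c$ are either both in~$I_0$ or agree position-by-position modulo~$\bbsim_k$. The first step is to associate to each non-$I_0$ fact $F$ its fact class $D_F = (R^p, \mybf{C}) \in \rdfcl$: the fact class of the chase witness $F_\w$ used when $F$ was introduced by an envelope-thrifty chase step. By the definition of a thrifty chase step and the alignedness of~$\cov$, each element $a_i$ of~$F$ satisfies $\cov(a_i) \in C_i$.

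Assuming $F, F'$ are not both in~$I_0$, I claim they share a common fact class $D$ (or, when exactly one is in~$I_0$, that the $I_0$-fact serves as a suitable chase witness with that class). The proof is a case analysis on the status of~$c$ at position~$R^p$:
\begin{itemize}
\item if $c$ is a non-dangerous envelope reuse at~$R^p$, the fourth bullet of Definition~\ref{def:envelope} forces any fact containing $c$ at~$R^p$ to achieve the envelope's specific fact class;
\item if $c$ was introduced fresh at~$R^p$ (either as the exported element or a fresh dangerous position), the Unique Witness Property pins down a unique chase fact in $\chase{I_0}{\ids} \setminus I_0$ containing $\cov(c)$ at~$R^p$, forcing the chase witnesses of $F$ and $F'$ to coincide;
\item if $c \in \dom(I_0)$, then $\cov(c) = c$ and the same Unique-Witness argument applies, with the boundary case where the relevant chase fact lies in~$I_0$ itself being handled by using the $I_0$-fact directly as the chase witness.
\end{itemize}

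Once $F$ and $F'$ have been shown to achieve the same fact class $D = (R^p, \mybf{C})$, I conclude $a_i \bbsim_k a'_i$ in~$I_\f$ for every~$R^i$, which immediately yields $\chi_{\bbsim_k}(F) = \chi_{\bbsim_k}(F')$. This last step relies on an invariant maintained inductively throughout the envelope-thrifty completion: elements placed at the same position in facts of the same fact class share a $\bbsim_k$-class in~$I_\f$. The invariant uses the Chase Locality Theorem (Theorem~\ref{thm:locality}) to establish uniformity of $\bbsim_k$-classes in the chase whenever reversible $\uid$s are applied (within each class $P_i$ of the manageable partition of~$\ids$), and then transfers this uniformity to~$I_\f$ via the alignedness of~$\cov$ and the systematic per-class behavior of envelope-thrifty chase steps.

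The main obstacle is precisely this uniformity invariant: because $\cov$ is only a one-sided simulation into the chase, equivalences established in $\chase{I_0}{\ids}$ do not automatically lift to~$I_\f$. I expect the proof to require a careful induction on the completion steps that exploits how envelope-thrifty chase steps treat all facts in a given class uniformly (reusing tuples from the same envelope at the same positions), in order to exhibit matching back-and-forth $\bsim_k$ moves inside~$I_\f$ itself rather than merely inside the chase.
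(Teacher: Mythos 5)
Your first step---showing that two overlapping facts not both in $I_0$ must achieve the same fact class, hence have positionwise $\bbsim_k$-equivalent $\cov$-images in $\chase{I_0}{\ids}$---is essentially the paper's own first lemma, and your use of the fourth envelope condition for non-dangerous reuses is exactly right. (Your second and third bullets are largely vacuous, though: fresh elements and exported elements occur at their new position in only one fact of $I_\f$, so those cases produce no overlap between \emph{distinct} facts; the Unique Witness Property lives in the chase and is not needed here.)

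The genuine gap is the last step, which you yourself flag as ``the main obstacle'' and then leave as an unproven invariant. You need: if $\cov(a) \bbsim_k \cov(b)$ in $\chase{I_0}{\ids}$ then $a \bbsim_k b$ in $I_\f$. Your proposed route---an induction over the completion steps maintaining ``elements at the same position in facts of the same class are $\bbsim_k$-equivalent in $I_\f$''---does not obviously go through, because $\bsim_k$ in a partially built instance is not preserved when facts are later added asymmetrically (a single chase step adds a fact around $a$ but not around $b$, and at that intermediate stage the invariant is false; it only becomes true again once the completion finishes). The paper resolves this differently: it proves a \emph{reverse} $k$-bounded simulation $(\chase{I_0}{\ids}, \cov(a)) \bsim_k (I_\f, a)$ for every $a \in \dom(I_\f)$, by first showing (by induction over thrifty chase steps) that the ancestry of $\cov(a)$ in the chase forest admits a homomorphism into $I_\f$ sending $\cov(a)$ to $a$, and then extending this to a homomorphism from the entire chase into $I_\f$ using $I_\f \models \ids$. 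Combined with the forward simulation given by alignedness, transitivity of $\bsim_k$ through the chase yields $a \bbsim_k b$. Without this reverse-simulation mechanism (or a worked-out substitute), your argument stops short of the conclusion $\chi_{\bbsim_k}(F) = \chi_{\bbsim_k}(F')$.
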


The reason why $I_\f$ is cautious for $h \defeq \chi_{\bbsim_k}$ is that, except for facts of $I_0$,
overlaps between facts only occur when reusing envelope elements at
non-dangerous positions, in which case the $\cov$-images of both facts are
$\bbsim_k$-equivalent in $\chase{I_0}{\ids}$. We can then
show that, from our construction, such elements are actually $\bbsim_k$-equivalent in $I_\f$.

We now define the notion of mixed product, which uses the same fact label for
facts with the same image by~$h$:

\begin{definition}
  \label{def:prodm}
  Let $I$ be a finite superinstance of~$I_0$
  with a homomorphism $h$ to another finite superinstance $I'$ of~$I_0$
  such that $\restr{h}{I_0}$ is the identity and $\restr{h}{(I\backslash I_0)}$
  maps to~$I' \backslash I_0$.
  Let $G$ be a finite group generated by $\lab{I'}$.

  The \deft{mixed product} of~$I$ by $G$ via $h$ preserving $I_0$,
  written $(I, I_0) \mprod^h G$,
  is the finite superinstance of~$I_0$ with domain $\dom(I) \times G$
  consisting of the following facts, for every $g \in G$:
  \begin{compactitem}
  \item For every fact $R(\mybf{a})$ of~$I_0$, the fact $R((a_1, g), \ldots,
    (a_{\card{R}}, g))$.
  \item For every fact $R(\mybf{a})$ of~$I \backslash I_0$, the following
    fact:\\
    $R((a_1, g \cdot \l^{h(F)}_1), \ldots, (a_{\card{R}}, g \cdot \l^{h(F)}_{\card{R}}))$.
  \end{compactitem}
\end{definition}

We now show that the mixed product preserves $\uid$s and $\fd$s when
cautiousness is assumed.

\begin{restatable}[Mixed product preservation]{lemma}{mixedprod}
  \label{lem:mixedprod}
  For any $\uid$ or $\fd$ $\tau$, if $I \models \tau$
  and $I$ is cautious for~$h$, then $(I, I_0) \mprod^h G \models \tau$.
\end{restatable}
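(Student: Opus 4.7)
The plan is to verify $(I, I_0) \mprod^h G \models \tau$ separately for the UID and FD cases, with cautiousness used only for the FD case.

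For a UID $\tau : \ui{R^p}{S^q}$, the approach is a direct witness construction. Let $F^\mprod$ be an $R$-fact of the product, arising from a fact $F = R(\mybf{a})$ of $I$ together with a group element $g$. By the definition of the mixed product, the $p$-th element of $F^\mprod$ has the form $(a_p, g')$, where $g' = g$ if $F \in I_0$ and $g' = g \cdot \l^{h(F)}_p$ otherwise. Since $I \models \tau$, there is a fact $F_\w = S(\mybf{c}) \in I$ with $c_q = a_p$; I would pick the group element $g''$ so that the copy of $F_\w$ in the product built with $g''$ has $(c_q, g')$ at position $q$, namely $g'' \defeq g'$ when $F_\w \in I_0$ and $g'' \defeq g' \cdot (\l^{h(F_\w)}_q)^{-1}$ otherwise. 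In all four combinations of $F, F_\w$ being in $I_0$ or in $I \setminus I_0$ this yields a product fact that witnesses $\tau$ for $F^\mprod$.

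For an FD $\phi : R^L \rightarrow R^r$, suppose two $R$-facts $F_1^\mprod, F_2^\mprod$ of the product agree on $L$, and let them come from facts $F_1, F_2 \in I$ with group coordinates $g_1, g_2$. For each $R^p \in L$, equality of the $p$-th coordinates forces $a_p^{F_1} = a_p^{F_2}$ in $\dom(I)$, so $F_1$ and $F_2$ overlap at position $R^p$; by cautiousness, either both $F_1, F_2 \in I_0$, or $h(F_1) = h(F_2)$. Since $h$ is the identity on $I_0$ and sends $I \setminus I_0$ into $I' \setminus I_0$, the second case also forces $F_1, F_2 \in I \setminus I_0$. In both cases the group-coordinate part of the equality at any fixed $R^p \in L$ yields $g_1 = g_2$ (directly when $F_1, F_2 \in I_0$; and using that $\l^{h(F_1)}_p = \l^{h(F_2)}_p$ when $h(F_1) = h(F_2)$). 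Combining $\pi_L(F_1) = \pi_L(F_2)$ with $I \models \phi$ to obtain $a_r^{F_1} = a_r^{F_2}$, the $r$-th coordinates of $F_1^\mprod$ and $F_2^\mprod$ then coincide, so $\phi$ is preserved.

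The main obstacle I foresee is the bookkeeping around the identification $(a, e) = a$ for $a \in \dom(I_0)$: extracting equality of the two coordinates from equality of two pairs in the quotient domain requires a brief case split depending on whether the $I$-coordinate lies in $\dom(I_0)$ and whether the group coordinates are $e$. This is also precisely why cautiousness has to exclude the mixed situation $F_1 \in I_0$, $F_2 \in I \setminus I_0$ in the FD argument: otherwise two such facts could agree on $L$ and on $r$ in $I$, yet disagree at $r$ in the product because only $F_2^\mprod$ is twisted by the labels of $h(F_2)$. Once this accounting is set up, everything else follows from unfolding the definition of $\mprod^h$ and using the group axioms for~$G$.
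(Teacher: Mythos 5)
Your proposal is correct and follows essentially the same route as the paper: the FD case is handled identically (project to $I$, invoke cautiousness on the overlap at a position of $L$, derive equality of the group components in both the $I_0$ and the $h(F_1)=h(F_2)$ subcases, then conclude via $I\models\phi$). For the UID case you build the witness copy explicitly by choosing the right group element, whereas the paper phrases the same fact as an observation about projections (for every $g$, $(a,g)$ occurs in the product at exactly the positions where $a$ occurs in $I$); these are interchangeable and neither needs cautiousness.
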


Second, we show that $h: I \to I'$ lifts to a homomorphism from the mixed product to the
simple product.

\begin{restatable}[Mixed product homomorphism]{lemma}{mixedhom}
  \label{lem:mixedhom}
  There is a homomorphism from $(I, I_0) \mprod^h G$ to~$(I', I_0) \sprod G$
  which is the identity on~$I_0 \times G$.
\end{restatable}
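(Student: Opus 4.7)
The plan is to exhibit the natural lift of $h$ to the product domains and check each clause of the mixed and simple product definitions. Concretely, define $\tilde{h} : \dom(I) \times G \to \dom(I') \times G$ by $\tilde{h}(a,g) \defeq (h(a), g)$. Because $\restr{h}{I_0}$ is the identity, for any $a \in \dom(I_0)$ we have $\tilde{h}(a,e) = (h(a), e) = (a, e)$, which is consistent with the identification $(a, e) = a$ that both products perform on $I_0$. So $\tilde{h}$ is well-defined as a map between the products' domains, and it clearly restricts to the identity on $I_0 \times G$.

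Next I would verify that $\tilde{h}$ sends every fact of $(I, I_0) \mprod^h G$ to a fact of $(I', I_0) \sprod G$, splitting by the two cases of Definition~\ref{def:prodm}. For a fact $R(\mybf{a})$ of $I_0$, the mixed product contains $R((a_1, g), \ldots, (a_{\card{R}}, g))$ for each $g \in G$; applying $\tilde{h}$ and using that $h$ fixes $I_0$ pointwise yields the very same fact, which is in $(I', I_0) \sprod G$ by the first clause of the simple product construction.

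For a fact $F = R(\mybf{a})$ of $I \setminus I_0$, the mixed product contains, for each $g \in G$, the fact
\[
R\bigl((a_1,\, g \cdot \l^{h(F)}_1),\, \ldots,\, (a_{\card{R}},\, g \cdot \l^{h(F)}_{\card{R}})\bigr).
\]
Applying $\tilde{h}$ componentwise gives
\[
R\bigl((h(a_1),\, g \cdot \l^{h(F)}_1),\, \ldots,\, (h(a_{\card{R}}),\, g \cdot \l^{h(F)}_{\card{R}})\bigr).
\]
Since $h$ is a homomorphism, $h(F) = R(h(a_1), \ldots, h(a_{\card{R}}))$ is a fact of $I'$, and by the hypothesis that $\restr{h}{(I \setminus I_0)}$ maps to $I' \setminus I_0$, it lies in $I' \setminus I_0$. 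Thus, by the second clause of Definition~\ref{def:prod} applied with $F' \defeq h(F)$, the displayed fact is precisely the contribution of~$F'$ at the shift~$g$ in $(I', I_0) \sprod G$.

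There is essentially no obstacle: the construction was set up so that the mixed product reuses the label $\l^{h(F)}_i$ of the $h$-image of $F$, exactly what the simple product attaches to the fact $h(F) \in I' \setminus I_0$. The only care needed is the bookkeeping around the identifications on $I_0$, which is handled by the assumption that $h$ fixes $I_0$. This yields the required homomorphism and completes the proof.
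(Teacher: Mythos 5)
Your proof is correct and follows essentially the same route as the paper's: both define the lift $(a,g) \mapsto (h(a),g)$, observe that a mixed-product fact built from $F$ with labels $\l^{h(F)}_i$ maps exactly to the simple-product fact built from $h(F)$ for the same $g$, and use that $h$ fixes $I_0$ and sends $I \setminus I_0$ into $I' \setminus I_0$ to handle the two cases and the identity on $I_0 \times G$. Your write-up is merely more explicit about the case split; no substantive difference.
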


We can now conclude our proof of the Universal Models Theorem
(Theorem~\ref{thm:univexists}).
We construct $J_{\f} =
(I_{\f}, \cov)$ by the Acyclic Universal Models Theorem
(Theorem~\ref{thm:myuniv2}) and consider $I_\f$.
It is a finite superinstance of~$I_0$
which is $k$-universal for~$\con$ and~$\acq$.
Further,
up to having distinguished the elements of~$I_0$ with fresh predicates
and having performed initial chasing,
we can ensure that $I'_\f \defeq \quot{I_\f}{\bbsim_k}$
is $k$-instance-sound
and that the homomorphism $\chi_{\bbsim_k}: I_\f \to I_\f'$
satisfies the hypotheses of the mixed product.

Let $G$ be a $(2k+1)$-acyclic group generated by $\lab{I'_\f}$,
and consider $I_\p \defeq (I'_\f, I_0) \sprod G$.
As $I_\f$ was $k$-sound for $\acq$,
so is~$I'_\f$ by Lemma~\ref{lem:ksimhomom},
and as $I'_\f$ is also $k$-instance-sound,
$I_\p$ is $k$-sound for~$\cq$ by the Simple Product Lemma
(Lemma~\ref{lem:prodppty}).
However, as we explained, in general $I_\p \not\models \con$.
We thus construct $I_\m \defeq (I_\f, I_0) \mprod^h G$,
with $h \defeq \chi_{\bbsim_k}$.
By the Mixed Product Homomorphism Lemma,
$I_\m$ has a homomorphism to $I_\p$,
so it is also $k$-sound for $\cq$.
Further,
$I_\f$ is cautious for~$\chi_{\bbsim_k}$ by the Cautiousness Lemma,
so, by the Mixed Product Preservation Lemma,
we have $I_\m \models \con$ because $I_\f \models \con$. 

Hence, the mixed product $I_\m$ is a finite $k$-universal instance for $\con$ and $\cq$.
This concludes the proof of the Universal Models Theorem, and hence of our main
theorem (Theorem~\ref{thm:main}).

\section{Conclusion}
\label{sec:conclusion}
In this work we have developed the first techniques on arbitrary arity schemas
to build finite models that satisfy both referential constraints and number
restrictions, while controlling which $\cq$s are satisfied. We
have used this to prove that finite open-world query answering for $\cq$s,
$\uid$s and $\fd$s is finitely controllable up to finite closure of the
dependencies. Using this, we have isolated the complexity of FQA for $\uid$s and $\fd$s.

As presented the constructions are quite specific to dependencies, but
in future work we will look to extend them to constraint languages
containing disjunction, with the goal of generalizing to higher arity
the rich arity-2 constraint languages of, e.g.,
\cite{ibanezgarcia2014finite,pratt2009data}, while maintaining the decidability
of FQA.

\myparagraph{Acknowledgements}
This work was supported in part by the
Engineering and Physical Sciences Research Council, UK (EP/G004021/1) and the
French ANR NormAtis project. We are very grateful to Balder ten
Cate, Thomas Gogacz, Andreas Pieris, and Pierre Senellart for comments on earlier
drafts, and to the anonymous reviewers of LICS for their valuable feedback.

\bibliographystyle{abbrv}
\bibliography{lics-full}

\appendix
\section{Details about the $\uid$ chase and Unique Witness Property}
\label{apx:chase}

Recall the \emph{Unique Witness Property}:

\medskip 

For any element $a \in \dom(\chase{I}{\ids})$ and position $R^p$ of~$\sigma$,
if two facts of~$\chase{I}{\ids}$ contain~$a$ at position~$R^p$,
then they are both facts of~$I$.

\medskip

We first exemplify why this may not be guaranteed by the first round of the
$\uid$ chase. Consider the instance $I = \{R(a), S(a)\}$ and the $\uid$s
$\tau_1: \ui{R^1}{T^1}$ and $\tau_2: \ui{S^1}{T^1}$, where $T$ is binary.
Applying a round of the $\uid$ chase creates the instance $\{R(a), S(a), T(a, b_1), T(a,
b_2)\}$, with $T(a, b_1)$ being created by applying $\tau_1$ to the active fact
$R(a)$, and $T(a, b_2)$ being created by applying $\tau_2$ to the active fact
$S(a)$.

By contrast, the core chase would create only one of these two facts, because it
would consider that two new facts are \defp{equivalent}: they have the same
exported element occurring at the same position. In general, the core chase
keeps only one fact within each class of equivalent facts.

However, after one chase round by the core chase, there is no longer any
distinction between the $\uid$ chase and the core chase, because the following property
holds on the result $I'$ of a chase round (by the core chase or the UID chase)
on any instance $I''$: (*)~for any $\tau \in \ids$ and element $a \in \appelem{I'}{\tau}$, $a$ occurs in only one fact of~$I'$.
This is true because $\ids$ is transitively closed, so we know
that no $\uid$ of~$\ids$ is applicable to an element
of $\dom(I'')$ in $I'$; hence the only elements that witness violations occur
in the one fact where they were introduced in~$I'$.

We now claim that (*) implies the Unique Witness Property. Indeed, assume to
the contrary that $a \in \dom(\chase{I}{\ids})$ violates it.

If $a \in \dom(I)$, because $\ids$ is transitively closed, after the first chase
round on $I$, we no longer create any fact that involves~$a$. Hence, each one of $F_1$ and
$F_2$ is either a fact of~$I$ or a fact created in the first round of the chase
(which is a chase round by the core chase). However, if one of $F_1$ and $F_2$
is in $I$, then it witnesses that we could not have $a \in \appelem{I}{R^p}$, so it
is not possible that the other fact was created in the first chase round. It
cannot be the case either that $F_1$ and $F_2$ were both created in the first
chase round, by definition of the core chase. Hence, $F_1$ and $F_2$ are
necessarily both facts of~$I$.

If $a \in \dom(\chase{I}{\ids}) \backslash \dom(I)$, assume that $a$ occurs at position $R^p$ in two
facts $F_1$, $F_2$. As $a \notin \dom(I)$, none of them is a fact of~$I$. We
then show a contradiction. It is not possible that one of those facts was
created in a chase round before the other, as otherwise the second created fact
could not have been created because of the first created fact. Hence,
both facts must have been created in the same chase round.
So there was a chase round from $I''$ to $I'$ where we had $a \in
\appelem{I''}{R^p}$ and
both $F_1$ and $F_2$ were created respectively from active facts $F_1'$ and
$F_2'$ of $I''$ by $\uid$s
$\tau_1 : \ui{S^q}{R^p}$ and $\tau_2: \ui{T^r}{R^p}$.
But then, by property (*), $a$ occurs in only one fact, so as
it occurs in $F'_1$ and $F_2'$ we have $F_1' = F_2'$. Further, as $a \notin
\dom(I)$, $F'_1$ and $F'_2$ are not facts of~$I$ either, so by definition of the
UID chase and of the core chase, it is easy to see $a$ occurs at only one
position in $F'_1 = F'_2$.
This implies that $\tau_1 = \tau_2$. Hence, we must have $F_1 = F_2$.

\section{Proofs for Section~\ref{sec:overall}: Main Result and
Overall Approach}
\label{apx:overall}

\mysubsection{Proof of Proposition~\ref{prp:complexity} (Complexity of UQA for
$\fd$s and $\uid$s)}
\label{apx:prf_complexity}

\complexity*

We first show the results for $\uid$s in isolation.
UQA for $\uid$s is NP-complete in
combined complexity: the lower bound is immediate from query
evaluation~\cite{AHV},
the upper bound is by Johnson \& Klug~\cite{johnsonklug} and actually holds for
$\incd$s of arbitrary fixed arity (which they call ``width'').
For data complexity, Calì et al.~\cite{cali2003query} showed a PTIME (in fact,
AC$^0$) upper bound for arbitrary $\incd$s by observing that the certain answers
can be expressed by another first-order query.

We now show that the same upper bounds apply to UQA for $\uid$s and $\fd$s (the
lower bound clearly also applies). This result is implicit in prior
work of~\cite{cali2003decidability,gottlob2012towards}, but we prove
it here for completeness.
 We argue that  $\uid$s and $\fd$s are
 \emph{separable} 
This means that for any conjunction $\con$ of~$\fd$s $\fds$ and $\uid$s
$\ids$, for any instance~$I_0$ and $\cq$~$q$, if $I_0 \models \fds$
then we have $(I_0, \con) \entunr q \leftrightarrow (I_0, \ids) \entunr q$.
From this result, the upper bounds  follow from the
bounds for the $\uid$ case above, since checking whether $I_0 \models \fds$
can be done in PTIME.
Separability follows from the \emph{non-conflicting condition}
of~\cite{cali2003decidability,gottlob2012towards} but we give a
simpler argument.

Assume that $I_0$ satisfies  $\fds$.
Clearly if $(I_0, \ids) \entunr q$ then $(I_0, \con) \entunr q$.
We thus need to show that if $(I_0, \con) \entunr q$ then $(I_0, \ids) \entunr q$.
Consider $\chase{I_0}{\ids}$. If $\chase{I_0}{\ids} \models \fds$, then
$\chase{I_0}{\ids}$ is a superinstance of~$I_0$ that satisfies $\con$, so
because $(I_0, \con) \entunr q$ we must have $\chase{I_0}{\ids} \models q$. By
universality of the chase, this implies $(I_0, \ids) \entunr q$.

Hence, it suffices to show that $\chase{I_0}{\ids} \models \fds$.
Assume to the contrary the existence of~$F$ and $F'$ in
$\chase{I_0}{\ids}$ violating an $\fd$ of~$\fds$. There must exist a position
$R^p \in \pos(\sigma)$ such that $\pi_{R^p}(F) = \pi_{R^p}(F')$. By the 
Unique Witness Property, this implies that $F$ and $F'$ are facts of~$I_0$,
which is impossible by our
assumption that $I_0 \models \fds$.

\mysubsection{Proof of the Main Theorem (Theorem~\ref{thm:main}) from the
Universal Models Theorem (Theorem~\ref{thm:univexists})}
\label{apx:prf_main}

To show the Main Theorem from the Universal Models Theorem, let $\con$ be a
conjunction of $\fd$s and $\uid$s, $\con'$ its finite closure, and $I_0$ 
a finite instance. We want to show finite controllability up to finite closure,
namely, $(I_0, \con) \entfin q$ iff $(I_0, \con') \entunr q$.

We can assume without loss of generality that $I_0$ satisfies the $\fd$s of
$\con'$, as otherwise there is no superinstance of $I_0$ satisfying~$\con'$,
and both problems are always vacuously true.

It is clear that for any $\cq$ $q$, we have $(I_0, \con) \entfin q$ iff $(I_0,
\con') \entfin q$. Indeed, $\con'$ includes $\con$ and conversely any finite
superinstance of $I_0$ which satisfies $\con$ must satisfy $\con'$,
by definition of the finite closure. So in fact, to
prove finite controllability up to finite closure, it
suffices to show that $(I_0, \con') \entfin q$ iff $(I_0, \con') \entunr q$ for
any $\cq$ $q$. The
backward implication is immediate as all finite superinstances of~$I_0$ satisfying
$\con'$ are also unrestricted superinstances. We prove the contrapositive of the
forward implication.

Let $q$ be a $\cq$, let $k \defeq \card{q}$, and assume that $(I_0, \con')
\not\entunr q$. By the
Universal Models Theorem, let $I$ be a finite superinstance of~$I_0$ that is
$\card{q}$-sound and satisfies $\con'$. As $I$ is $\card{q}$-sound, we have $I
\not\models q$, so, as $I$ is a finite superinstance of $I_0$ that satisfies
$\con'$, it witnesses that $(I_0, \con') \not\entfin q$. This proves the desired
equivalence. Hence, we have established that $\con'$ is finitely controllable up to
finite closure, and have proved the Main Theorem.

\mysubsection{Proof of Corollary~\ref{cor:complexityfqa} (Complexity of FQA for
  $\fd$s and $\uid$s)}
\label{apx:prf_complexityfqa}
\complexityfqa*

By our Main Theorem (Theorem~\ref{thm:main}), any instance $(I, \con, q)$ to the
FQA problem, formed of an
instance $I$, a conjunction $\con$ of~$\incd$s $\ids$ and $\fd$s $\fds$, and a $\cq$ $q$,
reduces to the UQA instance $(I, \con', q)$, where $\con'$ is the finite
closure of~$\con$. Computing $\con'$ from $\con$ is data-independent, so the
PTIME data complexity result of Proposition~\ref{prp:complexity} clearly
still applies. It is also clear that the NP-hardness combined complexity bound of
Proposition~\ref{prp:complexity} can be re-proven for FQA, as it already held
even when $\con = \emptyset$. So we only need to show that the combined
complexity of FQA is in NP. A naive approach would be to compute explicitly
$\con'$ and solve the UQA instance $I$, $\con'$, $q$; but materializing $\con'$
may take exponential time.

Instead, remember that from our study of UQA complexity in the proof of
Proposition~\ref{prp:complexity}, UQA for $\uid$s and $\fd$s
can be performed by first checking the $\fd$s on the initial instance, and
then performing UQA for the $\uid$s in isolation. Hence, let $\ids'$ and $\fds'$
be the $\uid$s and $\fd$s of~$\con'$. Rather than materializing $\con'$, we will
show that we can decide whether $I \models \fds'$ in PTIME, and compute $\ids'$
in PTIME, which suffices
to prove the claim as the combined complexity of deciding whether $(I, \ids')
\entunr q$ is then in NP.

We first justify that we can indeed compute $\ids'$ in PTIME. We
consider every possible $\uid$ on positions occurring in $\con$ (there are
polynomially many), and for each of them, determine in PTIME from $\con$ whether
it is in $\con'$, using the implication procedure of Cosmadakis et al.~\cite{cosm}.
This allows us to compute $\ids'$ in PTIME.

We next justify that we can decide whether $I \models \fds'$ in PTIME.
For the same reason as for the $\uid$s, we can
compute in PTIME from $\con$ the set $\ufds'$ of the $\ufd$s which are in
$\con'$, by deciding implication for each possible $\ufd$. We now argue that to
test whether $I \models \fds'$, it suffices to test whether $I \models \fds$ and
whether $I \models \ufds'$. This follows if we can show that 
$\fds'$ is implied by $\ufds' \cup \fds$ by the usual axiomatization of
unrestricted and finite implication for $\fd$s alone, from
Armstrong~\cite{armstrong1974dependency}. Indeed, in this case, if $I \models
\fds'$ then $I \models \ufds' \cup \fds$ as it is a subset of~$\fds'$, and
conversely if $I \models \ufds' \cup \fds$ then $I$ satisfies $\fds'$ because
they are implied by $\ufds' \cup \fds$ so are also satisfied by any instance that
satisfies $\ufds' \cup \fds$.

To justify that $\fds'$ is implied by $\ufds' \cup \fds$, we use Theorem~4.1
of~\cite{cosm}, according to which a sound and complete axiomatization of the
finite closure of~$\fd$s and $\uid$s consists of the usual $\fd$ implication
rules, the standard $\uid$ axiomatization of Casanova et al.~\cite{casanova},
and the \emph{cycle rule}. 
So, consider any $\fd$ $\phi$ of~$\fds'$ and let us justify that it is implied
by $\ufds' \cup \fds$. If $\phi$ is a $\ufd$, then $\phi \in \fds'$.
Otherwise the last steps of a
derivation of~$\phi$ with the axiomatization of~\cite{cosm} must be rules from
the $\fd$ implication rules, as they are the only ones which can deduce
higher-arity $\fd$s. Let us group together the last $\fd$ implication rules that
were applied, and consider the set $S$ of the hypotheses to $\fd$ implication
rules that were not themselves produced by $\fd$ implication rules.
Each hypothesis from $S$ is either an $\fd$ of $\fds$ or was produced by the
cycle rule. Now, the cycle rule can only deduce $\ufd$s (and $\uid$s).
Hence, $S \subseteq \fds \cup \ufds'$, which implies that we can
construct a derivation of~$\phi$ from $\fds \cup \ufds'$ using the $\fd$
implication rules. Thus, we can indeed compute in PTIME $\ufds' \cup \fds$,
and check in PTIME whether $I \models \ufds' \cup \fds$, and we have shown that this is equivalent
to checking whether $I \models \fds'$. This concludes the proof.

\section{Proofs for Section~\ref{sec:wsound}: Weak-Soundness and
Reversible $\uid$s}
\label{apx:wsound}

This section proves the Acyclic Unary Weakly-Sound Models Proposition
(Proposition~\ref{prp:auwsm}), which weakens the Acyclic Unary Models Theorem
(Theorem~\ref{thm:myuniv}) by making assumption \assm{reversible} and replacing
$k$-soundness by weak-soundness (Definition~\ref{def:wssinstance}).

\mysubsection{Proof of Proposition~\ref{prp:balwsnd} (Satisfying $\uid$s in
balanced instances)}
\label{apx:prf_balwsnd}

\balwsnd*

For every relation $R$ of~$\sigma$, let $f_R$ be a bijection between
$\appelem{I}{R^1}$ and $\appelem{I}{R^2}$; this is possible, because $I$ is
balanced.

Consider the superinstance $I'$ of~$I$, with $\dom(I') = \dom(I)$, obtained by
adding, for every $R$ of~$\sigma$, the fact $R(a, f_R(a))$ for every $a \in
\appelem{I}{R^1}$. $I'$ is clearly a finite weakly-sound superinstance of~$I$,
because for every $a \in \dom(I')$, if $a$ occurs at some position $R^p$ in some
fact $F$ of~$I'$, then either $F$ is a fact of~$I$ and $a \in \pi_{R^p}(I)$, or
$F$ is a new fact and by definition $a \in \appelem{I}{R^p}$.

Let us show that $I' \models \ufds$. Assume to the contrary that there are two
facts $F$ and $F'$ in $I'$ that witness a violation of a $\ufd$ $\phi: R^p
\rightarrow R^q$ of~$\ufds$. As $I \models \ufds$, one of~$F$ and $F'$ is
necessarily a new fact; we assume without loss of generality that it is~$F$.
Consider $a \defeq \pi_{R^p}(F)$.
By definition of the new facts, we have $a \in \appelem{I}{R^p}$, so that $a
\notin \pi_{R^p}(I)$. Now, 
as $\{F, F'\}$ is a violation, we must have $\pi_{R^p}(F) = \pi_{R^p}(F')$, so
as $a \notin \pi_{R^p}(I)$, $F'$ must also be a new fact. Hence, by definition
of the new facts, letting $b \defeq \pi_{R^q}(F)$ and $b' \defeq
\pi_{R^q}(F')$, depending on whether $p = 1$ or $p = 2$ we have either $b = b' = f_R(a)$ or $b = b' = f_R^{-1}(a)$,
which is well-defined because $f_R$ is a bijection. This contradicts the fact
that $F$ and $F'$ violate $\phi$.

Let us now show that $I' \models \ids$. Assume to the contrary that there is
an active fact $F = R(a_1, a_2)$, for a $\uid$ $\tau : \ui{R^p}{S^q}$. If $F$ is a fact of
$I$, we had $a_p \in \appelem{I}{S^q}$, so $F$ cannot be an active fact in
$I'$ by construction of~$f_S$. So we must have $F \in I' \backslash I$. Hence,
by definition of the new facts,
we had $a_p \in \appelem{I}{R^p}$; so there must be $\tau' : \ui{T^r}{R^p}$ in $\ids$ such that
$a_p \in \pi_{T^r}(I)$. 
Hence, because $\ids$ is transitively closed, either $T^r = S^q$ or
the $\uid$
$\ui{T^r}{S^q}$ is in $\ids$. In the first case, as $a_p \in \pi_{T^r}(I)$, $F$
cannot be an active fact for~$\tau$, a contradiction.
In the second case, we had $a_p
\in \appelem{I}{S^q}$, which is a contradiction for the same reason as before.

Hence, $I'$ is a finite weakly-sound superinstance of~$I$ that satisfies $\ucon$
and with $\dom(I') = \dom(I)$, the desired claim.

\mysubsection{Proof of the Balancing Lemma
(Lemma~\ref{lem:hascompletion})}
\label{apx:prf_hascompletion}
\hascompletion*

We prove the lemma without assumption \assm{binary}, as we will use it without
this assumption later in Section~\ref{sec:wsound}.

For any position $R^p$ define $o(R^p) \defeq \appelem{I}{R^p} \sqcup
\pi_{R^p}(I)$. Intuitively, those are the elements that either appear at $R^p$
or want to appear there. We claim that $o(R^p) = o(S^q)$ whenever $R^p \eqids S^q$.
Indeed, we have $\pi_{R^p}(I) \subseteq o(S^q)$: elements in $\pi_{R^p}(I)$
want to appear at $S^q$ unless they already do, and in both cases they are in
$o(S^q)$. Likewise, elements of
$\appelem{I}{R^p}$ either occur at $S^q$, or at some other position $T^r$
such that $\ui{T^r}{R^p}$ is a $\uid$ of~$\ids$, so that by transitivity
$\ui{T^r}{S^q}$
also is, and so they want to be at $S^q$ unless they already are. Hence
$o(R^p) \subseteq o(S^q)$, and symmetrically $o(S^q) \subseteq o(R^p)$.

Let $N \defeq \max_{R^p \in \pos(\sigma)} \card{o(R^p)}$, which is finite.
We write $\eqidsclass{R^p}$ the $\eqids$-class of any position~$R^p$.
We define for each $\eqids$-class
$\eqidsclass{R^p}$ a set $p(\eqidsclass{R^p})$ of~$N - \card{o(R^p)}$ fresh
values. We let $\calH$ be the disjoint union of the $p(\eqidsclass{R^p})$ for all
classes $\eqidsclass{R^p}$, and set $\mapb$ to map the elements of
$p(\eqidsclass{R^p})$ to $\eqidsclass{R^p}$. We have thus defined
our pssinstance $P = (I, \calH, \mapb)$.

Let us now show that $P$ is balanced. Consider now two positions $R^p$ and
$R^q$ such that $\phi: R^p \rightarrow R^q$ and $\phi': R^q \rightarrow R^p$
are in
$\ufds$, and show that $\card{\appelem{P}{R^p}} = \card{\appelem{P}{R^q}}$. We
have $\card{\appelem{P}{R^p}} = \card{\appelem{I}{R^p}} +
\card{p(\eqidsclass{R^p})} = \card{o(R^p)} -
\card{\pi_{R^p}(I)} + N - \card{o(R^p)}$, which simplifies to
$N - \card{\pi_{R^p}(I)}$.
Similarly $\card{\appelem{P}{R^q}} = N - \card{\pi_{R^q}(I)}$.
Since  $I \models \ufds$ and $\phi$ and $\phi'$ are in $\ufds$
we know that
$\card{\pi_{R^p}(I)} = \card{\pi_{R^q}(I)}$.
From this the conclusion follows.

\mysubsection{Proof of the Binary Realizations Lemma
(Lemma~\ref{lem:balwsndcb})}
\label{apx:prf_balwsndcb}
\balwsndcb*

Let us construct a realization~$I'$ of~$P$. We construct bijections $f_R$ for
every relation $R$ between $\appelem{P}{R^1}$ and $\appelem{P}{R^2}$ as for
Proposition~\ref{prp:balwsnd}; this is possible, as $P$ is balanced.
We then construct $I'$ in the same way, by adding to $I$, for every $R$
of~$\sigma$, the fact $R(a, f_R(a))$ for every $a \in \appelem{P}{R^1}$.

We prove that $I'$ is a realization again by observing that whenever we create a
fact $R(a, f_R(a))$, then we have $a \in \appelem{P}{R^1}$ and $f_R(a) \in
\appelem{P}{R^2}$.

The fact that $I'$ satisfies $\ufds$ is for the same reason as for
Proposition~\ref{prp:balwsnd}.

We now show that $I'$ satisfies $\ids$. Assume to the contrary that there is an
active fact $F = R(a_1, a_2)$, for a $\uid$ $\tau: \ui{R^p}{S^q}$, so that $a_p
\in \appelem{I'}{R^p}$. If $a_p \in
\dom(I)$, then the proof is exactly as for Proposition~\ref{prp:balwsnd}. Otherwise, if $a_p \in \calH$,
clearly by construction of~$f_R$ and $I'$ we have $a_p \in \pi_{T^r}(I')$ iff
$T^r \in \mapb(a_p)$. Hence, as $a_p \in \pi_{R^p}(I')$ and as $\tau$ witnesses
by assumption \assm{reversible} that $R^p \eqids S^q$ , we
have $a_p \in \pi_{S^q}(I')$, contradicting the fact that $a_p \in
\appelem{I'}{S^q}$.

\mysubsection{Proof of Lemma ``Binary realizations are completions''
(Lemma~\ref{lem:wsrwss})}
\label{apx:prf_wsrwss}
\wsrwss*

Clearly $I'$ is a superinstance of~$I$. Let us show that it is weakly-sound.
Recall the definition of a weakly-sound superinstance:

\wssinstance*

Consider $a \in \dom(I')$ and $R^p \in \pos(\sigma)$ such that $a \in
\pi_{R^p}(I')$. As $I'$ is a realization, we know that either $a
\in \pi_{R^p}(I)$ or $a \in \appelem{P}{R^p}$. By definition of
$\appelem{P}{R^p}$, and because $\calH = \dom(I') \backslash \dom(I)$, this means
that either $a \in \dom(I)$ and $a \in \pi_{R^p}(I) \sqcup \appelem{I}{R^p}$, or
$a \in \dom(I') \backslash \dom(I)$ and $R^p \in \mapb(a)$.
Hence:
\begin{itemize}
  \item For any $a \in \dom(I)$ and $R^p \in \pos(\sigma)$, we have established
that $a \in \pi_{R^p}(I')$ implied that either $a \in \pi_{R^p}(I)$ or $a \in
\appelem{I}{R^p}$.
  \item For any $a \in \dom(I') \backslash \dom(I)$ and for any $R^p, S^q \in
\pos(\sigma)$, we know that $R^p, S^q \in \mapb(a)$, which implies that $R^p
\eqids S^q$, so $R^p = S^q$ or $\ui{R^p}{S^q}$ is in $\ids$.
\end{itemize}
So indeed the two conditions of weak-soundness hold.

\mysubsection{Proof of the Realizations Lemma
(Lemma~\ref{lem:balwsndc})}
\label{apx:prf_balwsndc}
\balwsndc*

Let $P = (I, \calH, \mapb)$ be the balanced pssinstance.
Recall
that the $\eqfun$-classes of~$\sigma$ are numbered $\Pi_1, \ldots,
\Pi_{\neqfunc}$.
By definition of being balanced (Definition~\ref{def:balanced}),
for any $\eqfun$-class $\Pi_i$,
for any two positions $R^p, R^q \in \Pi_i$,
we have $\card{\appelem{P}{R^p}} = \card{\appelem{P}{R^q}}$.
Hence, for all $1 \leq i \leq \neqfunc$,
let $s_i$ be the value of~$\card{\appelem{P}{R^p}}$
for any $R^p \in \Pi_i$.
For $1 \leq i \leq \neqfunc$, we let $m_i$ be the arity of~$\Pi_i$, and
number the positions of $\Pi_i$ as $R^{p^i_1}, \ldots, R^{p^i_{m_i}}$.
We define for each $1 \leq i \leq \neqfunc$ and $1 \leq j \leq m_i$
a bijection $\phi^i_j$
from $\{1, \ldots, s_i\}$ to $\appelem{P}{R^{p^i_j}}$.
We construct the piecewise realization $\pire = (K_1, \ldots, K_n)$
by setting each $K_i$ for $1 \leq i \leq \neqfunc$
to be $\pi_{\Pi_i}(I)$ plus the tuples
$(\phi^i_1(l), \ldots, \phi^i_{m_i}(l))$ for $1 \leq l \leq s_i$.

\medskip

It is clear that $\pire$
is indeed a piecewise realization, because whenever we create a tuple
$\mybf{a} \in \Pi_i$ for any $1 \leq i \leq \neqfunc$,
then, for any $R^p \in \Pi_i$, we have
$a_p \in \appelem{P}{R^p}$.

\medskip

Let us then show that $\pire$ is $\ufds$-compliant.
Assume by contradiction that
there is $1 \leq i \leq \neqfunc$
and $\mybf{a}, \mybf{b} \in K_i$ such
that $a_l = b_l$ but $a_r \neq b_r$ for some $R^l, R^r \in \Pi_i$.
As $I$ satisfies $\ufds$, we assume without loss of generality that
$\mybf{a} \in K_i \backslash \pi{\Pi_i}(I)$. Now either $\mybf{b} \in
\pi{\Pi_i}(I)$ or $\mybf{b} \in K_i \backslash \pi{\Pi_i}(I)$.

If $\mybf{b} \in \pi{\Pi_i}(I)$,
then we know that $b_l \in \pi_{R^l}(I)$, but we know by
construction that, as $\mybf{a} \in K_i \backslash \pi{\Pi_i}(I)$, we 
have $a_l \in \appelem{P}{R^l}$. Now, as $a_l = b_l$ and $b_l \in \dom(I)$, we
have $a_l \in \dom(I)$, so that by definition of $\appelem{P}{R^l}$ we have $a_l
\in \appelem{I}{R^l}$. Thus, as $a_l = b_l$, we have a contradiction.

Now, if $\mybf{b} \in K_i \backslash \pi{\Pi_i}(I)$, then,
writing $R^l = R^{p^i_j}$ and $R^r = R^{p^i_{j'}}$,
the fact that $a_l = b_l$ but $a_r \neq b_r$
contradicts the fact that $\phi^i_{j} \circ (\phi^i_{j'})^{-1}$ is injective.
Hence, $\pire$ is $\ufds$-compliant.

\medskip

Let us now show that $\pire$ is $\ids$-compliant.

We must show
that, for every $\uid$
$\tau: \ui{R^p}{S^q}$ of~$\ids$, we have $\appelem{\pire}{\tau} = \emptyset$, which means
that we have $\pi_{R^p}(\pire) \subseteq \pi_{S^q}(\pire)$. Let
$\Pi_i$ be the $\eqfun$-class of $R^p$, and assume to the contrary the existence of a
tuple $\mybf{a}$ of~$K_i$ such that $a_p \notin \pi_{S^q}(\pire)$.
Either we have $a_p \in \dom(I)$, or we have $a_p \in \calH$.

In the first case, as $a_p \notin \pi_{S^q}(\pire)$,
in particular $a_p \notin \pi_{S^q}(I)$,
and as $a_p \in \pi_{R^p}(I)$,
we have $a_p \in \appelem{I}{\tau}$,
so $a_p \in \appelem{I}{S^q}$.
By construction of~$\pire$, then,
letting $i'$ be the $\eqfun$-class of $S^q$ and letting $S^q = S^{p^{i'}_j}$,
as $\phi^{i'}_j$ is surjective,
we must have $a_p \in \pi_{S^q}(K_{i'})$,
that is, $a_p \in \pi_{S^q}(\pire)$, a contradiction.

In the second case,
clearly by construction we have $a_p \in \pi_{T^r}(\pire)$
iff $T^r \in \mapb(a_p)$,
so that, given that $\tau$ witnesses $R^p \eqids S^q$,
if $a_p \in \pi_{R^p}(\pire)$ then $a_p \in \pi_{S^q}(\pire)$,
a contradiction.

\medskip

We deduce that $\pire$ is indeed a $\ucon$-compliant piecewise realization
of~$P$, completing the proof.

\mysubsection{Proof of the Relation-Saturated Solutions Lemma
(Lemma~\ref{lem:wtsol})}
\label{apx:prf_wtsol}
\wtsol*

Recall the definition of an instance being \defo{relation-saturated}:
\wsaturated*

We now prove the lemma. For every relation $R$, either $R$ is not achieved by $I$
and $\ids$, or there is $n_R \in \mathbb{N}$ such that there is a $R$-fact of
$\chase{I}{\ids}$ generated at the $n_R$-th round of the chase. Let $n \defeq
\max_{R \in \sigma} n_R$. As the number of relations in $\sigma$ is finite,
$n$ is finite. Hence, letting $I'$ be the result of applying $n$ chase rounds
to $I$, it is clear that $I'$ is relation-saturated.

\mysubsection{Proof of Lemma ``Using realizations to get completions''
(Lemma~\ref{lem:sreal})}
\label{apx:prf_sreal}
\sreal*

Recall that we number $\Pi_1, \ldots, \Pi_n$ the $\eqfun$-classes of
$\pos(\sigma)$.
We first define the following notion:

\begin{definition}
  \label{def:inout}
  We say that $\Pi_j$ is an \deft{inner} $\eqfun$-class if it contains a
  position occurring in~$\ids$; otherwise, it is an \deft{outer} $\eqfun$-class.
\end{definition}

Intuitively, ``outer'' $\eqfun$-classes are those to which no $\uid$ of
$\ids$ can apply, so we can create fresh elements at the positions of these
classes without fear that $\uid$s will be applicable to the fresh elements.

We will use the notion of dangerous and non-dangerous positions from
Section~\ref{sec:ksound}:

\dangerdef*

Observe that, if $R^p \eqfun R^q$, then for $R^r \notin \{R^p, R^q\}$, we have 
$R^r \in \danger(R^p)$ iff $R^r \in \danger(R^q)$, and likewise for
$\nondanger(R^p)$ and $\nondanger(R^q)$. So it makes sense to define
$\danger(\Pi_i)$ or $\nondanger(\Pi_i)$, for $\Pi_i$ an $\eqfun$-class of
positions of some relation~$R$, to refer to the positions of $\pos(R) \backslash
\Pi_i$ that are dangerous or non-dangerous for some $R^p \in \Pi_i$ (and hence
for all of them).

We show a first lemma about the positions where $\fd$ violations may be
introduced:

\begin{lemma}
  \label{lem:nondgr}
  For any relation $R$ and $\fd$s $\fds$, for any $R^p \in \pos(R)$ and $\ufd$
  $R^q \rightarrow R^r$ of~$\fds$, if $R^q \in \nondanger(R^p)$ then $R^r \in
  \nondanger(R^p)$.
\end{lemma}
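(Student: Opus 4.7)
The plan is to prove the contrapositive: assume that $R^r \notin \nondanger(R^p)$ and derive that $R^q \notin \nondanger(R^p)$, contradicting the hypothesis. By Definition~\ref{def:dangerdef}, $\pos(R) = \{R^p\} \sqcup \danger(R^p) \sqcup \nondanger(R^p)$, so $R^r \notin \nondanger(R^p)$ splits into two cases: either $R^r = R^p$, or $R^r \in \danger(R^p)$.

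In the first case ($R^r = R^p$), the $\ufd$ $R^q \rightarrow R^r$ is already the $\ufd$ $R^q \rightarrow R^p$, so $R^q \rightarrow R^p$ is in $\ufds$. Since $R^q \in \nondanger(R^p)$ requires in particular that $R^q \rightarrow R^p$ is \emph{not} in $\ufds$, this is a contradiction.

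In the second case ($R^r \in \danger(R^p)$), by definition of $\danger$ the $\ufd$ $R^r \rightarrow R^p$ is in $\ufds$. Together with $R^q \rightarrow R^r$ this gives, by transitivity of functional dependencies (Armstrong's axioms), the $\ufd$ $R^q \rightarrow R^p$. Now I invoke closure of $\ucon$ under finite implication, which we established earlier in Section~\ref{sec:overall} via the characterization of Cosmadakis et al.~\cite{cosm}: $\ufds$ contains every $\ufd$ implied by $\ucon$ over finite instances, and in particular every $\ufd$ derivable by transitivity from $\ufds$. Hence $R^q \rightarrow R^p$ lies in $\ufds$, again contradicting $R^q \in \nondanger(R^p)$.

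The argument is essentially a one-step transitivity computation and the main (mild) subtlety is simply making sure we may apply transitivity inside $\ufds$, which follows because the overall $\con$, and hence $\ucon$, has been assumed closed under finite implication throughout Section~\ref{sec:overall}. No induction or further machinery is needed.
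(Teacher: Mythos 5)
Your proof is correct and follows essentially the same route as the paper's: a case split on $R^r = R^p$ versus $R^r \in \danger(R^p)$, with the second case closed by transitivity of the $\ufd$s in $\ufds$, which holds because $\ucon$ is closed under (finite) implication. The only difference is that you spell out the justification for transitivity more explicitly than the paper does.
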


\begin{proof}
  Assume by contradiction that $R^r \notin \nondanger(R^p)$. Then either $R^r =
  R^p$ or $R^r \in \danger(R^p)$. The first case is impossible because of the
  $\ufd$ $R^q \rightarrow R^r$. So we have $R^r \in \danger(R^p)$. Hence, the $\ufd$ $R^r
  \rightarrow R^p$ is in $\ufds$, so that by transitivity the $\ufd$ $R^q
  \rightarrow R^p$ is in $\ufds$, again contradicting the fact that $R^q \in
  \nondanger(R^p)$.
\end{proof}

Fix the finite relation-saturated instance $I$ that satisfies $\ufds$, the
pssinstance $P$ of~$I$, and the finite $\ucon$-compliant piecewise realization
$\pire = (K_1, \ldots, K_{\neqfunc})$ of~$P$. Our
approach is to construct the desired superinstance~$I'$
as $I \sqcup I_1 \sqcup \cdots \sqcup
I_{\neqfunc}$, where the facts of each $I_i$ are constructed from $K_i$, as we
now explain. We call $\calF$ the set of
the fresh elements (not in $\dom(\pire)$) that will be created in the
construction, so that we will have $\dom(I') \subseteq \dom(\pire) \sqcup \calF$.

We consider every $1 \leq i \leq \neqfunc$. 
Let $R$ be the relation to which the positions of $\Pi_i$ belong.
If the relation $R$ is not achieved by $I$ and $\ids$, or if $\Pi_i$ is outer,
then we do not create any fact for $R$, and set $I_i \defeq \emptyset$. 
Otherwise, 
as $I$ is relation-saturated,
we choose one fact $R(\mybf{c})$ in $I$.
For every $\mybf{a} \in K_i \backslash \pi_{\Pi_i}(I)$, 
we create a fact
$F^i_{\mybf{a}} \defeq R(\mybf{b})$
in~$I_i$, with $b_p$ defined as follows for every $R^p \in \pos(\sigma)$:
\begin{itemize}
  \item If $R^p \in \Pi_i$, take $b_p \defeq a_p$. In
    other words, the tuple $\mybf{a}$ is used to fill $\mybf{b}$ at the positions of~$\Pi_i$.
  \item If $R^p \in \danger(\Pi_i)$, use a fresh element in $\calF$ for $b_p$.
    In other words, dangerous positions have to be filled with fresh elements
    (but this is no problem because we will show later that their classes are
    outer).
  \item If $R^p \in \nondanger(\Pi_i)$ is non-dangerous, take $b_p \defeq c_p$.
    In other words, we reuse the fact $R(\mybf{c})$ guaranteed by $I$ being
    relation-saturated to complete the non-dangerous positions.
\end{itemize}

We have thus constructed $I'$, which is clearly a finite superinstance of~$I$.
We first show the following claim:

\begin{lemma}
  \label{lem:nodupes}
  For any $1 \leq i \leq \neqfunc$ and $\mybf{a} \in K_i$ for which we create
  a fact $F^i_{\mybf{a}}$, for any $R^p \in \Pi_i$, the fact
  $F^i_{\mybf{a}}$ is the only fact of~$I'$ where $a_p$ occurs at position $R^{p}$.
\end{lemma}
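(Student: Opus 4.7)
The plan is to fix $i$ and $\mybf{a} \in K_i \setminus \pi_{\Pi_i}(I)$, fix $R^p \in \Pi_i$, take an arbitrary fact $F \in I'$ containing $a_p$ at position $R^p$, and argue by case analysis on where $F$ lies in the decomposition $I' = I \sqcup I_1 \sqcup \cdots \sqcup I_{\neqfunc}$ that $F = F^i_{\mybf{a}}$.

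The crucial preliminary step will be to show that $a_p \notin \pi_{R^p}(I)$. Indeed, if $a_p \in \pi_{R^p}(I)$ then some fact of $I$ contributes via $\pi_{\Pi_i}(I) \subseteq K_i$ a tuple $\mybf{a}' \in K_i \cap \pi_{\Pi_i}(I)$ with $a'_p = a_p$. Because $\ufds$ forces every position of $\Pi_i$ to functionally determine every other position of $\Pi_i$, the $\ufds$-compliance of $\pire$ tells us that no two distinct tuples of $K_i$ can agree at $R^p$, so $\mybf{a}' = \mybf{a}$, contradicting $\mybf{a} \notin \pi_{\Pi_i}(I)$. This simultaneously rules out the case $F \in I$.

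Next I will handle the case $F \in I_j$. Since $F$ contains $R^p$ as a position, $F$ must be an $R$-fact, so $\Pi_j \subseteq \pos(R)$. If $j = i$, then $F = F^i_{\mybf{a}'}$ for some $\mybf{a}' \in K_i \setminus \pi_{\Pi_i}(I)$, and by construction $\pi_{R^p}(F) = a'_p$ since $R^p \in \Pi_i$; invoking $\ufds$-compliance of $\pire$ once more yields $\mybf{a}' = \mybf{a}$ and hence $F = F^i_{\mybf{a}}$, as desired. If $j \neq i$, then $R^p \notin \Pi_j$ since $\eqfun$-classes are disjoint, so $R^p$ lies in $\danger(\Pi_j) \sqcup \nondanger(\Pi_j)$. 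In the dangerous case, the construction filled $R^p$ in every fact of $I_j$ with an element of $\calF$, which is disjoint from $\dom(\pire) \ni a_p$, ruling this out. In the non-dangerous case, the construction reused $c_p$ from the chosen $R$-fact $R(\mybf{c}) \in I$, so $\pi_{R^p}(F) = c_p \in \pi_{R^p}(I)$, contradicting the preliminary step.

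I do not expect serious obstacles: the only subtle point is confirming that $\ufds$-compliance as stated (no two tuples of $K_i$ agree on any single position of $\Pi_i$) is the right notion to pin down $\mybf{a}$ uniquely from $a_p$, and this is exactly what the definition supplies thanks to the symmetric $\ufd$s within each $\eqfun$-class. All other steps are direct bookkeeping over the piecewise construction of the $I_j$.
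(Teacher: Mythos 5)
Your proof is correct and follows essentially the same route as the paper's: first establish $a_p \notin \pi_{R^p}(I)$ via $\pi_{\Pi_i}(I) \subseteq K_i$ and $\ufds$-compliance, then dispatch the remaining cases by whether the other fact lies in $I_i$ (again $\ufds$-compliance) or in some $I_j$ with $j \neq i$ (where $R^p$ is dangerous, hence filled with a fresh element of $\calF$, or non-dangerous, hence filled with an element of $\pi_{R^p}(I)$, both impossible). The only cosmetic difference is that $\ufds$-compliance already states directly that no two distinct tuples of $K_i$ agree at any position of $\Pi_i$, so you need not route through the $\ufd$s themselves.
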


This claim implies that the facts of $I$, and all the facts of the $I_i$ for $1
\leq i \leq \neqfunc$, are pairwise distinct. By this, we mean that we did not
try to recreate in~$I_i$ a fact that already existed in~$I$, and that we never
tried to create the same fact twice in the same~$I_i$ or in different~$I_i$.

\begin{proof}
  Fix $1 \leq i \leq \neqfunc$ and $\mybf{a} \in K_i$, and assume that we have
  created a fact $F^i_{\mybf{a}}$; fix $R^p \in \Pi_i$.

  We first show that we cannot have $a_p \in \pi_{R^p}(I)$. Assuming by
  contradiction that we do, let $F$ be a witnessing fact. By definition of a
  piecewise realization we have $\pi_{\Pi_i}(I) \subseteq K_i$, so
  $\pi_{\Pi_i}(F) \in K_i$. Hence, as $\pire$ is $\fds$-compliant, we have
  $\mybf{a} = \pi_{\Pi_i}(F)$; but we do not create facts for the tuple
  $\mybf{a} \in K_i$ if $\mybf{a} \in \pi_{\Pi_i}(I)$, which contradicts the
  fact that we created $F^i_{\mybf{a}}$.

  Second, we show that there cannot be another fact $F$ of $I' \backslash I$ such
  that $a_p = \pi_{R^p}(F)$.
  As $\pire$ is $\ufds$-compliant, there clearly cannot be such a fact $F^i_{\mybf{a}'}$
  for $\mybf{a}' \in K_i$, $\mybf{a} \neq \mybf{a}'$,
  with $a_p$ occurring at position $R^p$ of $F^i_{\mybf{a}'}$.
  Hence, $F$ is a fact $F^{i'}_{\mybf{a}'}$ for $i' \neq i$.
  Now, $\Pi_i$ and~$\Pi_{i'}$ are disjoint as $\eqfun$-classes,
  and thus we cannot have $R^p \in \Pi_{i'}$.
  So either $R^p \in \danger(\Pi_{i'})$ and $b_p \in \calF$,
  or $R^p \in \nondanger(\Pi_{i'})$ and $b_p \in \pi_{R^p}(I)$. The first case
  is impossible because elements of $\calF$ occur in only one fact, and we
  showed above that the second case was impossible. This concludes.
\end{proof}
  
We now show that $I'$ has the required properties.
Let us first show that $I'$ is weakly-sound. Recall the definition:

\wssinstance*

We begin by checking the first condition. Let $a \in \dom(I)$ and $R^p \in
\pos(\sigma)$ such that $a \in \pi_{R^p}(I')$, and
let $F$ be a fact of~$I'$ that witnesses it. 
If $F$ is a fact of $I$ then $a \in \pi_{R^p}(I)$ and $a$
does not witness a violation of weak-soundness.
So $F$ is a fact of $I' \backslash I$. Let
$i$ be the index of the $I_i$ that contains $F$, and $\mybf{a}$ be such that
$F = F^i_{\mybf{a}}$ (this is uniquely defined according to Lemma~\ref{lem:nodupes}).

We cannot have $R^p \in \danger(\Pi_i)$, because we would then have
$\Pi_{R^p}(F) \in \calF$, contradicting $a \in \dom(I)$. We cannot have $R^p \in
\nondanger(\Pi_i)$ either, because then $a = \pi_{R^p}(F)$ would imply that $a
\in \pi_{R^p}(I)$ which we already excluded.
Hence $R^p \in \Pi_i$.
Now, by definition of~$\pire$ being a piecewise realization, as $\mybf{a} \in
K_i$, we know that $a \in \pi_{R^p}(I)$ or $a \in
\appelem{P}{R^p}$. But we excluded $a \in \pi_{R^p}(I)$ above, and 
we assumed $a \in \dom(I)$, so $a \in \appelem{P}{R^p}$ translates to $a \in
\appelem{I}{R^p}$. Hence, $a$ does not witness a violation of weak-soundness.

We now check the second condition. Let $a \in \dom(I') \backslash \dom(I)$ and
$R^p, S^q \in \pos(\sigma)$ such that $a \in \pi_{R^p}(I') \cap \pi_{S^q}(I')$.
We must show that $R^p = S^q$ or $\ui{R^p}{S^q}$ is in $\ids$, that is,
$R^p \eqids S^q$.
Now either $a \in \calF$, or $a \in \calH$. If $a \in \calF$, 
observe that elements of $\calF$ occur at only one position in $I'$. Hence, 
necessarily $R^p = S^q$ which implies $R^p \eqids S^q$, and $a$ does not witness
a violation of weak-soundness. 
Thus, $a \in \calH$.

Let $F$ be a fact witnessing that $a \in \pi_{R^p}(I')$, and $F'$ a fact
witnessing that $a \in \pi_{S^q}(I')$. As $a \in \calH$, necessarily $F$ and
$F'$ are facts of $I' \backslash I$, so there are
$i$ and $i'$ such that $F$ and $F'$ are respectively facts of~$I_i$ and
$I_{i'}$. Clearly $a$ cannot occur in $F$ or $F'$ at a position of
$\danger(\Pi_i)$ or $\danger(\Pi_{i'})$ (they contain elements of $\calF$) or at
a position of $\nondanger(\Pi_i)$ or $\nondanger(\Pi_{i'})$ (they contain
elements of $\dom(I)$). Hence, $R^p \in \Pi_i$ and $S^q \in \Pi_{i'}$. Now, 
as $\pire$ is a piecewise realization, as $a \notin \dom(I)$, we conclude that $a \in
\appelem{P}{R^p}$ and $a \in \appelem{P}{S^q}$, and as $a \notin \dom(I)$ this
implies that $R^p \in \mapb(a)$ and $S^q \in \mapb(a)$, so that $R^p \eqids
S^q$, and $a$ does not witness a violation of weak-soundness.

Hence, $I'$ is weakly-sound.

\medskip

Let us now show that $I' \models \ufds$. Assume to the contrary the existence of
two facts $F$ and $F'$ that witness a violation of a $\ufd$ $\phi: R^p
\rightarrow R^q$ of~$\ufds$. As $I \models \ufds$, we assume without loss of
generality that $F$ is a fact of~$I' \backslash I$;
let $1 \leq i \leq \neqfunc$ and $\mybf{a}
\in K_i$ be such that $F = F^i_{\mybf{a}}$. We cannot have $R^p \in
\danger(\Pi_i)$, as then we would have $a_p \in \calF$, and elements of $\calF$
only occur in a single fact in $I'$. We cannot have $R^p \in \Pi_i$ either
because, by Lemma~\ref{lem:nodupes}, $F^i_{\mybf{a}}$ is the only fact of $I'$
where $a_p$ occurs at position $R^p$. So $R^p \in \nondanger(\Pi_i)$, and by
Lemma~\ref{lem:nondgr} we have $R^q \in \nondanger(\Pi_i)$ as well. 
Hence, letting $F'' = R(\mybf{c})$ be the fact of~$I$ used to fill the positions of
$\nondanger(\Pi_i)$ in~$F$, we
know that $a'_p = c_p$ and $a'_q = c_q$. Thus, as this makes it impossible that
$F' = F''$, we deduce that $F''$ and $F'$ also violate $\phi$.

Now, either $F'$ is also a fact of $I$ and we have a contradiction because $F''
\in I$ but $I \models \ufds$, or it is a
fact of $I' \backslash I$ and, by the same process that we applied to $F$, we
can replace it by a fact of $I$, reaching a
contradiction again. This proves that $I' \models \ufds$.

\medskip

Let us last show that $I' \models \ids$. Assume to the contrary the existence of
a $\uid$ $\tau: \ui{R^p}{S^q}$ of~$\ids$ and an element $a \in \dom(I')$ such
that $a \in \pi_{R^p}(I') \backslash \pi_{S^q}(I')$.
Let $F$ be a fact of~$I'$ witnessing that $a \in
\pi_{R^p}(I')$. Either $F$ is a fact of~$I$ or it is a fact of $I' \backslash I$. 

For the first case, if $F$ is a fact
of $I$, by definition of~$\pire$ being a realization, we have $a \in
\pi_{R^p}(\pire)$. As $\pire$ is $\ids$-compliant, we have $a \in
\pi_{S^q}(\pire)$, and letting $\mybf{a}$ be the witnessing tuple in $K_i$
where $\Pi_i$ is the $\eqfun$-class of $S^q$, we know that either $a \in
\pi_{S^q}(I)$ or $a \in \pi_{S^q}(F^i_{\mybf{a}})$. In the first sub-case there is
nothing to show. In the second sub-case it suffices to show that $F^i_{\mybf{a}}$
was indeed created, and this is the case because $\tau$ witnesses that~$\Pi_i$
is inner, and $F \in I$ witnesses that $R$ was achieved in $\chase{I}{\ids}$,
so $S$ must also be because of~$\tau$. This concludes the first case.

For the second case, if $F$ is a fact of $I' \backslash I$, write $F =
F^{i'}_{\mybf{a}}$. The existence of $F^{i'}_{\mybf{a}}$ implies that
$\Pi_{i'}$ is inner and $R$ is achieved in $\chase{I}{\ids}$; hence $S$ is,
because of $\tau$.
There are three possibilities: $R^p \in \nondanger(\Pi_{i'})$, $R^p \in
\Pi_{i'}$, or $R^p \in \danger(\Pi_{i'})$.
The first sub-case is $R^p \in \nondanger(\Pi_{i'})$; but then we could have
picked as witness for $a \in \pi_{R^p}(I')$ the fact $S(\mybf{c})$ of $I$ used
to define the non-dangerous positions, and we are back to the first case. The
second sub-case is $R^p \in \Pi_{i'}$; then we have $a \in \pi_{R^p}(\pire)$ by construction, so that as
$\pire$ is $\ids$-compliant we have $a \in \pi_{S^q}(\pire)$, and we conclude as
before. The only remaining sub-case
is the third sub-case, $R^p \in \danger(\Pi_{i'})$, so that $a_p \in \calF$. 
Now, as $R^p \in \danger(\Pi_{i'})$, we know that $R^p \rightarrow R^r$ is in
$\ufds$
for any
position $R^r$ of~$\Pi_{i'}$ that occurs in $\ids$ (such an $R^r$ exists because $\Pi_{i'}$ is
inner). Now, as $\tau$ witnesses that $R^p$ occurs in $\ids$, we know by
assumption \assm{reversible}
that $R^r \rightarrow R^p$ is in $\ufds$, so that $R^p \in \Pi_{i'}$. But we
assumed $R^p \in \danger(\Pi_{i'})$, a contradiction.

Hence we conclude that $I' \models \ids$.

\medskip

Hence, $I'$ is a finite superinstance of~$I$ which is weakly-sound and satisfies
$\ucon$. This concludes the proof.

\section{Proofs for Section~\ref{sec:ksound}: $k$-Soundness and
Reversible $\uid$s}
\label{apx:ksound}

This section completes the proof of the Acyclic Unary Models Theorem
(Theorem~\ref{thm:myuniv}) under assumption \assm{reversible}.

\mysubsection{Proof of Lemma~\ref{lem:ksimacq} ($\acq$s are preserved
  through $k$-bounded simulations)}
\label{sec:prf_ksimacq}

\ksimacq*

Fix the instance $I$. We will prove by induction on $n$ the following stronger
claim: for any $n \in \mathbb{N}$, for any $\acq$ $q$ of size $\leq n$ and any
variable $x$ of~$q$, if $q$ 
has a match in $I$ that maps $x$ to $a \in \dom(I)$, then for any
$b \in \dom(I')$ such that $(I, a) \leq_n (I', b)$, $q$ has a match in
$I'$ mapping $x$ to $b$.
The base case of~$n = 0$ corresponds to queries with no atoms, and it is trivial.

For the induction step, fix $n \in \mathbb{N}$, the query $q$, the variable
$x$ and the match $h$ from $q$ to $I$ that
maps $x$ to $a \in \dom(I)$. 
We define a reachability relation between variables of~$q$ as the
reflexive and transitive closure of the relation of co-occurring in some atom of~$q$.
If this relation consists of a single class, we say that $q$ is
\deft{connected}. As we can otherwise
rewrite $q$ as a conjunction of strictly smaller queries of~$\acq$ and process
all such queries separately using the induction hypothesis,
we assume without loss of generality that $q$ is connected.

Let $\calA = A_1, \ldots, A_m$ be the atoms of~$q$ where
$x$ occurs (this set of atoms is non-empty, by the connectedness assumption). Because
$q$ is an $\acq$,
each variable $y$ occurring in one of the $A_i$ occurs at most once: once per atom
(as the same variable cannot occur multiple times in an atom), and in only one
atom (as if $y$ occurs both in $A_{i_1}$ and $A_{i_2}$ then $A_{i_1}$,  $y$,
$A_{i_2}$, $x$ is a Berge cycle of~$q$). Let $Y$ be the set of the variables
occurring in the $A_i$ (not including $x$).

Because $q$ is acyclic and connected, the other variables of~$q$ can be partitioned
depending on the variable in $Y$ from which they are reachable without
using~$\calA$.
Hence, 
we can partition the remaining atoms of~$q$ into strictly smaller acyclic subqueries
$q_1(y_1, {\mybf{z}}_{\mybf{1}})$, \ldots, $q_l(y_l, {\mybf{z}}_{\mybf{l}})$ in $\acq$,
for $Y =
\{y_1, \ldots, y_l\}$, where the $\mybf{z}_{\mybf{j}}$ are pairwise disjoint sets of
variables.

Now, let $b \in \dom(I')$ be such that $(I, a) \leq_n (I', b)$.
For each atom $A_i = R({\mybf{x}})$ in $\calA$,
let $1 \leq p_i \leq \arity{R}$ be the one position such that $x_{p_i} = x$.
Consider the fact $F_i = R({\mybf{a}}_{\mybf{i}})$
that is the image of~$A_i$ in~$I$ by~$h$.
As $(I, a) \bsim_n (I', b)$, there exists a fact $F_i' = R({\mybf{b}}_{\mybf{i}})$ of~$I'$
with $b_{p_i} = b$ and with $(I, a_q) \leq_{n-1} (I', b_q)$
for all $1 \leq q \leq \arity{R}$.
Consider now each variable $y_j \in Y$ that occurs in $A_i$, letting $1 \leq q
\leq \arity{R}$ be the one position such that $x_q = y_j$,
and let $q_j(y_j, {\mybf{z}}_{\mybf{j}})$ be the subquery corresponding to~$y_j$.
We know that $(I, a_q) \bsim_{\card{q_j}} (I', b_q)$,
and that $q_j$ has a match in $I$ that maps $y_j$ to~$a_q$
(namely, the restriction~$h_j$ of the match $h$ to the subquery~$q_j$)
so that, by the induction hypothesis,
$q_j$ has a match $h_j'$ in $I'$ where $y_j$ is matched to~$b_q$.
Now, we can assemble the~$F_i'$ and all the matches $h_j'$ thus obtained,
because the $\mybf{z}_{\mybf{j}}$ are pairwise
disjoint, yielding a match~$h'$ of~$q$ in~$I'$ where $x$ is matched to~$b$.
This concludes the induction step.

Hence, the stronger claim is proven by induction. It remains to observe that it implies
the desired claim. Indeed, if $I \models q$ and there is a $n$-bounded
simulation $\cov$ from $I$ to
$I'$,
choose any variable $x$ in $q$ (if $q$ has no variables, the result is
vacuous), consider any match of~$q$ in~$I$ matching $x$ to~$a$, use $\cov$ to
define $b \defeq \cov(a)$, and 
deduce the existence of a match of~$q$ in $I'$ (matching~$x$ to~$b$)
using the claim that we have shown by induction.

\mysubsection{Proof of Lemma~\ref{lem:clasfino} ($\rfcl$ is finite)}
\label{apx:prf_clasfino}

\clasfino*

We first show that $\bbsim_k$ has only a finite number
of equivalence classes on $\chase{I_0}{\ids}$. Indeed, for any element $a \in
\dom(\chase{I_0}{\ids})$, by the Unique Witness Property,
the number of facts in which $a$ occurs is bounded by a constant depending only
on $I_0$ and $\ids$. Hence, there is a constant $M$ depending only on $I_0$, $\ids$, and
$k$, so that, for any element $d \in \dom(\chase{I_0}{\ids})$, the number of
elements of~$\dom(\chase{I_0}{\ids})$ which are relevant to
determine the $\bbsim_k$-class of~$d$ (that is, the elements whose distance to
$d$ in the Gaifman graph of $\chase{I_0}{\ids}$ is $\leq k$) is bounded by $M$.

This clearly implies that $\rfcl$ is finite, because the number of~$m$-tuples of
equivalence classes of~$\bbsim_k$ that occur in $\chase{I_0}{\ids}$ is then finite
for any $m \leq \max_{R \in \sigma} \arity{R}$, and $\positions(\sigma)$ is finite.

\mysubsection{Proof of the Fact-Saturated Solutions Lemma (Lemma~\ref{lem:nondetexhaust})}
\label{apx:prf_nondetexhaust}
\nondetexhaust*

For every $D \in \rfcl$, let $n_{D} \in \mathbb{N}$ be such
that $D$ is achieved
by a fact of~$\chase{I_0}{\ids}$ created at round $n_{D}$.
As $\rfcl$ is
finite, $n \defeq \max_{D \in \rfcl} n_D$ is finite. Hence, all classes
of $\rfcl$ are achieved after $n$ chase rounds on $I_0$.

Consider now $I_0'$ obtained from the aligned superinstance $I_0$ by $n$ rounds
of the $\uid$ chase, and $J_0 = (I_0', \ids)$.
It is clear that for any $D \in \rfcl$, there is an achiever $F =
R(\mybf{b})$ of
$D$ in $I_0'$.
Hence,
the corresponding fact in $J_0$ is an achiever of~$D$ in~$J_0$.

\mysubsection{Proof of the Fact-Thrifty Chase Steps Lemma
(Lemma~\ref{lem:ftok})}
\label{apx:prf_ftok}

We first prove the following lemma, which we will use to justify that we can
extend aligned instances.

\begin{lemma}
  \label{lem:addfact}
  Let $n \in \mathbb{N}$.
  Let $I_1$ and $I$ be instances and $\cov$ be a $n$-bounded simulation from
  $I_1$ to $I$. Let $I_2$ be a superinstance of~$I_1$ 
  defined by adding one fact $F_\n = R(\mybf{a})$ to $I_1$,
  and let $\cov'$ be a mapping from $I_2$
  to $I$ such that $\restr{\cov'}{I_1} = \cov$. Assume there is a fact $F_\w =
  R(\mybf{b})$ in $I$ such that, for all $R^i \in \pos(R)$, $\cov'(a_i) \bbsim_n b_i$. Then
  $\cov'$ is a $n$-bounded simulation from $I_2$ to $I$.
\end{lemma}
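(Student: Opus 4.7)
The plan is to prove the lemma by induction on $n$. The base case $n = 0$ is immediate: $\bsim_0$ holds trivially between any pair of elements by definition.

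For the inductive step, I will fix $a \in \dom(I_2)$ and a fact $F \in I_2$ containing $a$ at some position $p$, and search for a matching fact $F' \in I$ containing $\cov'(a)$ at $p$ with the required level-$(n-1)$ simulations on the remaining positions. The case split is on whether $F \in I_1$ or $F = F_\n$. Note that the hypotheses are preserved when weakening $n$ to $n-1$ (since $\cov$ is also $(n-1)$-bounded and $\bbsim_n$ implies $\bbsim_{n-1}$), so the induction hypothesis directly gives $(I_2, a) \bsim_{n-1} (I, \cov'(a))$ for every $a \in \dom(I_2)$.

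If $F = F_\n$, so $a = a_p$, I will exploit the backward direction of $\cov'(a_p) \bbsim_n b_p$, namely $(I, b_p) \bsim_n (I, \cov'(a_p))$. Applied to $F_\w$, which is a fact of $I$ containing $b_p$ at position $p$, this yields a witness fact $F' = R(\mybf{b'}) \in I$ with $b'_p = \cov'(a)$ and $(I, b_q) \bsim_{n-1} (I, b'_q)$ for every $q$. The required $(I_2, a_q) \bsim_{n-1} (I, b'_q)$ then follows by chaining three level-$(n-1)$ simulations, using transitivity of $\bsim_{n-1}$: the induction hypothesis gives $(I_2, a_q) \bsim_{n-1} (I, \cov'(a_q))$; the weakening of $\cov'(a_q) \bbsim_n b_q$ gives $(I, \cov'(a_q)) \bsim_{n-1} (I, b_q)$; and the witness construction gives $(I, b_q) \bsim_{n-1} (I, b'_q)$.

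If $F \in I_1$, the $n$-bounded simulation $\cov$ from $I_1$ to $I$ produces a witness $F' = R(\mybf{b'}) \in I$ with $b'_p = \cov'(a)$ and $(I_1, a'_q) \bsim_{n-1} (I, b'_q)$ for every $q$, where $\mybf{a'}$ denotes the tuple of $F$. The delicate point is upgrading this from $(I_1, a'_q) \bsim_{n-1} (I, b'_q)$ to $(I_2, a'_q) \bsim_{n-1} (I, b'_q)$: this is not immediate since $b'_q$ need not equal $\cov'(a'_q)$ nor be $\bbsim_n$-related to any coordinate of $F_\w$, so naive composition with the induction hypothesis does not close the argument.

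The main obstacle is precisely this upgrade step. I plan to resolve it by strengthening the induction hypothesis to claim that $(I_2, a) \bsim_n (I, c)$ holds whenever the target $c$ is ``compatible'' both with the $\cov$-simulation from $I_1$ (that is, $(I_1, a) \bsim_n (I, c)$ if $a \in \dom(I_1)$) and with the $F_\w$-hypothesis at every position $p'$ with $a_{p'} = a$ (that is, $(I, b_{p'}) \bsim_n (I, c)$). The original lemma then follows by taking $c = \cov'(a)$, since the $\cov$-simulation automatically covers $I_1$-facts and the two-way hypothesis $\cov'(a_{p'}) \bbsim_n b_{p'}$ yields the $F_\w$-compatibility. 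Verifying that this compatibility condition propagates correctly through the recursive structure of $\bsim_n$---in particular, that the components $b'_q$ produced by the $\cov$-simulation inherit both compatibilities---will be the central technical point and is where the careful bookkeeping about dangerous positions and $\bbsim_n$-classes will be needed.
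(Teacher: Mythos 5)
Your base case and your handling of the case $F = F_\n$ coincide exactly with the paper's argument: the paper also inducts on $n$, applies the backward direction of $\cov'(a_p) \bbsim_n b_p$ to the fact $F_\w$ to obtain the witness fact, and then discharges the component conditions by chaining $(I_2, a_q) \bsim_{n-1} (I, \cov'(a_q))$ (induction hypothesis), $(I, \cov'(a_q)) \bsim_{n-1} (I, b_q)$, and $(I, b_q) \bsim_{n-1} (I, c_q)$ by transitivity. That half of your proposal is correct.

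The case $F \in I_1$ is where your proposal stops being a proof. The paper dispatches this case in one sentence, taking the witness $F'$ supplied by $(I_1, a) \bsim_n (I, \cov(a))$ and treating the passage from $(I_1, a'_q) \bsim_{n-1} (I, b'_q)$ to $(I_2, a'_q) \bsim_{n-1} (I, b'_q)$ as immediate; you are right that this step is the delicate one, since $a'_q$ may occur in $F_\n$ while $b'_q$ need not have any matching fact. But your proposed repair --- strengthening the induction hypothesis to ``$(I_2,a)\bsim_n(I,c)$ whenever $c$ is compatible with the $\cov$-simulation and with $F_\w$'' --- is only announced, and as formulated it does not propagate: in the $I_1$-fact case the witness components $b'_q$ are merely \emph{some} elements satisfying $(I_1, a'_q) \bsim_{n-1} (I, b'_q)$, and nothing in the hypotheses relates $b'_q$ to $\cov(a'_q)$, hence nothing gives you $(I, b_{p'}) \bsim_{n-1} (I, b'_q)$ for the positions $p'$ at which $a'_q$ occurs in $F_\n$. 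So the compatibility clause of your strengthened statement cannot be established for the pairs $(a'_q, b'_q)$, and the induction does not close. (Your appeal to ``dangerous positions'' is also misplaced: those belong to the thrifty-chase machinery and play no role in this purely simulation-theoretic lemma.) To finish, you would need to argue that the witness $F'$ can be \emph{chosen} so that its components are suitably related to the $\cov$-images of the components of $F$ --- something the bare assumption that $\cov$ is an $n$-bounded simulation does not provide --- or to invoke additional structure of the instances the lemma is applied to. As written, the hard half of the lemma is deferred to bookkeeping you have not supplied.
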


\begin{proof}
  We prove the claim by induction on $n$. The base case of $n = 0$ is immediate.

  Let $n > 0$, assume that the claim holds for $n-1$, and show that it holds for
  $n$. As $\cov$ is a $n$-bounded simulation, it is a $(n-1)$-bounded
  simulation, so we know by the induction hypothesis that $\cov'$ is a
  $(n-1)$-bounded simulation.

  Let us now show that it is a $n$-bounded simulation.
  Let $a \in \dom(I_2)$ be an element and
  show that $(I_2, a) \bsim_{n} (I, \cov'(a))$. To do this,
  choose $F = S(\mybf{a})$ a fact of~$I_2$ with $a_p = a$ for some $p$, 
  and show that there exists a fact $F' =
  S(\mybf{a}')$ of~$I$ with $a'_p = \cov'(a_p)$ and 
  $(I_2, a_q) \bsim_{n-1} (I, a'_q)$ for all $S^q \in \pos(S)$.
  
  The first possibility is that $F$ is the new fact $F_\n = R(\mybf{a})$.
  In this case, as we have $(I, b_p) \bsim_n (I, \cov'(a_p))$,
  considering $F_\w$, we deduce the existence of a fact
  $F_\w' = R(\mybf{c})$ in~$I$ such that $c_p = \cov'(a_p)$
  and $(I, b_q) \bsim_{n-1} (I, c_q)$ for all $1 \leq q \leq \arity{R}$.
  We take $F' = F_\w'$.
  By construction we have $c_p = \cov'(a_p)$.
  Fixing $1 \leq q \leq \arity{R}$,
  to show that $(I_2, a_q) \bsim_{n-1} (I, c_q)$,
  we use the fact that $\cov'$ is an $(n-1)$-bounded simulation
  to deduce that $(I_2, a_q) \bsim_{n-1} (I, \cov'(a_q))$.
  Now, we have $(I, \cov'(a_q)) \bsim_{n-1} (I, b_q)$,
  and as we explained we have $(I, b_q) \bsim_{n-1} (I, c_q)$,
  so we conclude by transitivity.
  
  If $F$ is another fact, then it is a fact of~$I_1$,
  so its elements are in $\dom(I_1)$,
  and as $\cov'$ coincides with $\cov$ on such elements,
  we conclude because $\cov$ is a $n$-bounded simulation.
\end{proof}
 
We then prove the main result:

\ftok*

We first observe that fact-thrifty chase steps are well-defined because a suitable
$F_\r = S(\mybf{c})$ always exists, as $J$ is fact-saturated. It is immediate
that $J'$ is finite.

It is immediate that, letting $J' = (I', \cov')$ be the result of the process,
$I'$ is still a superinstance of~$I_0$, and the previously active fact $F_\a$ is no longer active in $I'$.
To show that $\cov'$ is still a $k$-bounded simulation, use Lemma~\ref{lem:addfact}
with $F_\n = S(\mybf{b})$ and $F_\w = S(\mybf{b}')$.
The fact that $\cov'$ is the identity on~$I_0$
is immediate because $\restr{\cov'}{I_0} = \restr{\cov}{I_0}$.

We now show that $J'$ satisfies $\ufds$, using the fact that $J$ does. Indeed, any
violation of~$\ufds$ in $J'$ would have to include the one new fact $F_\n =
S(\mybf{b})$,
By way of contradiction, let $\phi: S^l \rightarrow S^r$ be a violated $\ufd$ in
$\ufds$ and let $\{F, F_\n\}$ be a violation, where $F = S(\mybf{d})$ is some
fact of~$I'$. It is clear that we cannot have $d_q = b_q$, as otherwise this
would contradict the fact that $F_\a$ was an active fact.
Hence, 
by construction of the new fact $F_\n$, we can only have $b_i =
d_i$ if $S^i \in \nondanger(S^q)$. As $\{F, F_\n\}$ violates $\phi$,
this implies that $S^l \in \nondanger(S^q)$, 
so that, by Lemma~\ref{lem:nondgr}, $S^r \in
\nondanger(S^q)$. Now, observe that we have $\pi_{\nondanger(S^q)}(F_\n) =
\pi_{\nondanger(S^q)}(F_\r)$, with $F_\r$ the fact used
to fill the non-dangerous position in the definition of fact-thrifty chase steps.
Now, we cannot have $F = F_\r$ because
they must disagree on $S^r$, so that $\{F, F_\r\}$ also witnesses a violation of~$\phi$ in
$J$.
This contradicts our assumption that $J \models \ufds$.

We must now check the last part of the definition of aligned
superinstances, which only needs to be verified for the fresh elements:
for $S^r \neq S^q$, if $b_r$ is fresh, then it occurs in $J'$ at the position where
$\cov(b_r)$ was introduced in $\chase{I_0}{\ids}$. For this, it suffices to show
that $b'_q$ was the exported element of $F_\w$. In this case, as $\cov(b_r) =
b'_r$, we will know that $b'_r$ was introduced at position $S^r$
in~$F_\w$ in~$\chase{I_0}{\ids}$,
so the condition is respected. We make this a separate
lemma:

\begin{lemma}
  \label{lem:wadef}
  Let $J$ be an aligned superinstance of~$I_0$ and consider
  the application of a thrifty chase step for a $\uid$ $\tau: \ui{R^p}{S^q}$. 
  Consider the chase witness $F_\w = S(\mybf{b}')$. Then $b'_q$ is the
  exported element of~$F_\w$.
\end{lemma}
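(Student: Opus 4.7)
The plan is a proof by contradiction that combines the activeness of $F_\a$ with alignedness of~$\cov$ and the structure of $\chase{I_0}{\ids}$. As preliminaries I would recall two facts: every fact $F$ of $\chase{I_0}{\ids} \backslash I_0$ has a unique exported position, and at every non-exported position $F$ contains a null introduced at $F$ itself, so such nulls do not lie in~$\dom(I_0)$; moreover every element $e \in \dom(\chase{I_0}{\ids}) \backslash \dom(I_0)$ is introduced at a unique position in a unique fact (implicit in the chase definition and consistent with the Unique Witness Property). The first step is then to rule out $F_\w \in I_0$: if $a_p \in \dom(I_0)$, then $\cov(a_p) = a_p$, so $F_\w \in I_0$ would give $a_p \in \pi_{S^q}(I_0) \subseteq \pi_{S^q}(I)$, contradicting activeness of $F_\a$ for $\tau: \ui{R^p}{S^q}$; if $a_p \notin \dom(I_0)$, then alignedness forces $\cov(a_p) \notin \dom(I_0)$, which prevents $F_\w \in I_0$. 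Hence $F_\w$ was produced by some chase step with a well-defined exported position $S^{q^*}$.

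The core step is to show $q = q^*$. Suppose otherwise: then $b'_q$ is a null introduced at $F_\w$ at position~$S^q$, and in particular $b'_q \in \dom(\chase{I_0}{\ids}) \backslash \dom(I_0)$. If $a_p \in \dom(I_0)$, the equality $b'_q = a_p$ is already contradictory. If $a_p \notin \dom(I_0)$, alignedness provides a position $T^s$ at which $\cov(a_p)$ was introduced in $\chase{I_0}{\ids}$, together with the guarantee $a_p \in \pi_{T^s}(I)$; the uniqueness of the introducing position, applied to the non-initial element $\cov(a_p) = b'_q$, forces $T^s = S^q$, so $a_p \in \pi_{S^q}(I)$, again contradicting activeness of $F_\a$. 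Therefore $q = q^*$, which is exactly the desired conclusion.

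The delicate point, rather than any calculation, is really the invocation of alignedness in the non-$I_0$ case: the seemingly technical clause requiring that $a \in \pi_{R^p}(I)$ whenever $R^p$ is the position where $\cov(a)$ was introduced is exactly what pins $a_p$ down inside $I$ and blocks the bad scenario where $F_\w$ would be the introducing fact of $\cov(a_p)$ with $\cov(a_p)$ sitting at a non-exported position coinciding with~$S^q$. Everything else is a clean two-way case split on whether $a_p$ lies in $\dom(I_0)$, together with the standard observation that non-exported positions of chase facts are filled by freshly introduced nulls.
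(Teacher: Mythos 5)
Your proof is correct and takes essentially the same route as the paper: assume $b'_q$ is not the exported element, so it is introduced at $S^q$ in $F_\w$, then use the last clause of the aligned-superinstance definition to conclude $a_p \in \pi_{S^q}(I)$, contradicting the activeness of $F_\a$ for $\tau$. You merely spell out the degenerate cases ($F_\w \in I_0$, $a_p \in \dom(I_0)$) that the paper's terser proof leaves implicit.
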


Using this lemma, it is also clear that $\restr{\cov'}{I' \backslash I_0}$ maps
to $\chase{I_0}{\ids} \backslash I_0$, which is the last thing we had to verify.
Indeed, for all fresh elements $b_r \in \dom(I') \backslash \dom(I)$
(with $S^r \neq S^q$),
which are clearly not in~$I_0$,
we have fixed $\cov'(b_r)$ to be~$b_r'$,
which by the lemma is introduced in $F_\w$
so it cannot be an element of~$I_0$; hence it is indeed
an element of $\chase{I_0}{\ids} \backslash I_0$.

We conclude by proving Lemma~\ref{lem:wadef}:

\begin{proof}
  Let $F_\a = R(\mybf{a})$ be the active fact in $J$,
  $F_\n = S(\mybf{b})$ be the new fact of~$J'$,
  and $\tau: \ui{R^p}{S^q}$ be the $\uid$,
  so $a_p = b_q$ is the exported element of this chase step.
  Assume by way of contradiction 
  that $b'_q$ was not the exported element in~$F_\w$,
  so that it was introduced in $F_\w$.
  In this case, as $\cov(a_p) = \cov(b_q) = b'_q$,
  by the last part of the definition of aligned superinstances,
  we have $a_p \in \pi_{S^q}(J)$,
  which contradicts the fact that $a_p \in \appelem{J}{\tau}$.
  Hence, we have proved by contradiction
  that $b'_q$ was the exported element in~$F_\w$.
\end{proof}

\mysubsection{Proof of the Fact-Thrifty Completion Proposition
(Proposition~\ref{prp:ftcomp})}
\label{apx:prf_ftcomp}

\ftcomp*

There are two steps to the proof.
The first one is to apply initial chasing by
fresh fact-thrifty chase steps to ensure a certain property, \emph{$k$-reversibility}.
The
second one is to use fact-thrifty chase steps to satisfy $\ids$, using the
constructions of Section~\ref{sec:wsound}.

\medskip

We start with the first step.
We consider a forest structure on the facts of
$\chase{I_0}{\ids}$: the facts of~$I_0$ are the roots, and the parent of a fact
$F$ not in $I_0$
is the fact~$F'$ that was the active fact for which $F$ was created,
so that $F'$ and~$F$ share the exported element of~$F$.
For $a \in \dom(\chase{I_0}{\ids})$, if $a$ was introduced at position $S^r$ of an
$S$-fact $F = S(\mybf{a})$ created by applying the $\uid$ $\tau: \ui{R^p}{S^q}$
(with $S^q \neq S^r$) to its parent fact~$F'$,
we call $\tau$ the \defo{last $\uid$} of~$a$. The last two
$\uid$s of~$a$ are $(\tau, \tau')$ where $\tau'$ is the last $\uid$ of the
exported element~$a_q$ of~$F$ (which was introduced in~$F'$).
For $n \in \mathbb{N}$,
we define the last $n$ $\uid$s in the same way, for elements of
$\chase{I_0}{\ids}$ introduced after sufficiently many rounds. We say that $a$ is
\deft{$n$-reversible} if its last $n$ $\uid$s are reversible.

We accordingly define the notion of 
\defp{$n$-reversible aligned superinstance},
which requires that elements where a $\uid$ is violated
are mapped by $\cov$ to a $n$-reversible element in the chase.
Recall that, for any position $R^p$, we write $\eqidsclass{R^p}$
the $\eqids$-class of~$R^p$.

\begin{definition}
  \label{def:nreverse}
  An aligned superinstance $J$ of~$I_0$ is \mbox{\deft{$n$-reversible}} if for any
  position $S^q$
  and $a \in \appelem{J}{S^q}$,
  $\cov(a)$
  is a $n$-reversible element of~$\chase{I_0}{\ids}$ introduced at a position of
  $\eqidsclass{S^q}$ in $\chase{I_0}{\ids}$.
\end{definition}

The first step of the proof of Proposition~\ref{prp:ftcomp} is to perform $k+1$
fresh fact-thrifty chase rounds on the input fact-saturated aligned
superinstance~$J$, to ensure that the result~$J'$ is $k$-reversible for
$\ids$:

\begin{restatable}[Ensuring $n$-reversibility]{proposition}{achkrev}
  \label{prp:achkrev}
  For any $n \in \mathbb{N}$, applying $n+1$ fresh fact-thrifty chase rounds on
  a fact-saturated aligned superinstance $J$ by the
  $\uid$s of~$\ids$ yields a fact-saturated aligned superinstance $J'$ that is
  $n$-reversible for $\ids$.
\end{restatable}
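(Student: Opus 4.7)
The plan is to argue by induction on the round number $m \in \{0, \ldots, n+1\}$ that after $m$ fresh fact-thrifty chase rounds, $J_m$ is a fact-saturated aligned superinstance in which every element still wanting to appear somewhere was fresh-introduced in round $m$ exactly, and whose $\cov$-image's chase history exactly mirrors the $m$ UIDs applied along the $J$-chase chain leading to it.

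First I would set up the inductive invariant. A chase round simultaneously addresses every active fact of $J_{m-1}$ by all applicable $\uid$s, and transitive closure of $\ids$ then guarantees no indirect wants survive; reused elements at non-dangerous positions in new facts do not acquire new wants, because by the definition of fact-thrifty chase steps they already sat at the reused position in $J_{m-1}$. Hence $\appelem{J_m}{T^u}$ can only contain elements freshly introduced at dangerous positions of the round-$m$ new facts. Fact-saturation is preserved because chasing only adds facts (and $\rfcl$ depends solely on $\chase{I_0}{\ids}$), and $\ufds$ is preserved stepwise by Lemma~\ref{lem:ftok}, so each $J_m$ is indeed an aligned superinstance.

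Next I would trace $\cov$-histories. Let $a = b_r$ be fresh-introduced at dangerous position $S^r$ of a new fact $F_\n = S(\mybf{b})$ created in round $m$ by applying $\tau\colon R^p \to S^q$ to active fact $F_\a = R(\mybf{a})$, with chase witness $F_\w = S(\mybf{b}')$. By Lemma~\ref{lem:wadef}, $b'_q = \cov(a_p)$ is the exported element of $F_\w$, and by construction $\cov(a) = b'_r$ is introduced at position $S^r$ of $F_\w$; hence the last $\uid$ of $\cov(a)$ in the chase is $\tau \in \ids$. If $m \geq 2$, the invariant forces $a_p$ to be fresh-introduced in round $m-1$ by some $\tau' \in \ids$, and the same analysis shows its $\cov$-image has last $\uid$ $\tau'$; the Unique Witness Property then identifies the chase-parent of $F_\w$ with the chase witness of that round-$(m-1)$ step. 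Iterating $m$ times, the last $m$ $\uid$s of $\cov(a)$ are exactly the $\uid$s of $\ids$ used along the $J$-chase chain tracing back to an element of $\dom(J_0)$.

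Applied at $m = n+1$: any $a \in \appelem{J_{n+1}}{T^u}$ is fresh-introduced at some position $S^r$ in round $n+1$, so $\cov(a)$ has its last $n$ chase $\uid$s all in $\ids$ and hence all reversible under \assm{reversible}, giving $n$-reversibility. Since $a$ is fresh it occurs only at $S^r$ in $J_{n+1}$, so $a \in \appelem{J_{n+1}}{T^u}$ forces $\ui{S^r}{T^u} \in \ids$, and \assm{reversible} yields $S^r \eqids T^u$; combined with $\cov(a)$ being introduced at $S^r$, this places its introduction position in $\eqidsclass{T^u}$, as required. The main obstacle is making the invariant truly airtight --- in particular, using the Unique Witness Property to verify that the chase-parent of each successive $F_\w$ is exactly the chase witness of the previous $J$-round's step, so that the $\uid$ chain read off from $\cov(a)$'s chase history is precisely the reversible chain supplied by our own chasing.
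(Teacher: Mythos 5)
Your proof is correct and follows essentially the same route as the paper's: the paper also inducts over the chase rounds (packaging your two key observations as Lemma~\ref{lem:chasestruct}, that surviving wants live only on elements freshly introduced in the last round, and Lemma~\ref{lem:newkrev}, that each fresh round adds one level of reversibility to the $\cov$-images, with the parent of each successive chase witness identified via the Unique Witness Property exactly as you describe). One small overstatement: the chase history of $\cov(a)$ need not ``exactly mirror'' all $m$ applied $\uid$s --- the round-$1$ chase witness may have been created in $\chase{I_0}{\ids}$ by a $\uid$ with the same head $S^q$ but a different body position than the $\uid$ you applied --- but this first link lies beyond the last $n$ $\uid$s after $n+1$ rounds, and all links are $\uid$s of $\ids$ and hence reversible, so the conclusion is unaffected.
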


This proposition is proved in Appendix~\ref{apx:prf_achkrev}.

\medskip

The second step of the proof is simply to apply the following lemma to~$J'$.

\begin{restatable}[Guided chase]{lemma}{guidedb}
  \label{lem:guidedb}
  For any fact-saturated \mbox{$k$-reversible} aligned superinstance $J = (I, \cov)$
  of~$I_0$,
  we can build by fact-thrifty chase steps
  an aligned superinstance $J' = (I', \cov')$
  of~$I_0$ such that $I \subseteq I'$, $\restr{\cov'}{I} =
  \cov$, and $J'$ satisfies $\ids$.
\end{restatable}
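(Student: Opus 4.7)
The plan is to reuse the machinery developed in Section~\ref{sec:wsound}---the Balancing and Realizations lemmas---to produce a blueprint $\pire$ of new facts, and then to realize this blueprint as a sequence of fact-thrifty chase steps rather than by direct assembly as in Lemma~\ref{lem:sreal}. Concretely, I would first apply the Balancing Lemma (Lemma~\ref{lem:hascompletion}) to $I$ to obtain a balanced pssinstance $P = (I, \calH, \mapb)$, and then the Realizations Lemma (Lemma~\ref{lem:balwsndc}) to $P$ to obtain a $\ucon$-compliant piecewise realization $\pire = (K_1, \ldots, K_{\neqfunc})$ of~$P$.

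For each inner $\eqfun$-class $\Pi_i$ whose relation is achieved by $I_0$ and $\ids$, and each tuple $\mybf{a} \in K_i \setminus \pi_{\Pi_i}(I)$, I would add a corresponding fact via a fact-thrifty chase step. The exported position would be some $R^p \in \Pi_i$ with $a_p$ wanting to appear at~$R^p$, witnessed by an active fact for some $\uid$ $\tau: \ui{T^r}{R^p}$. The new fact uses $\mybf{a}$ at the positions of $\Pi_i$ (helpers from $\calH$ occurring among $\mybf{a}$ at positions other than $R^p$ are introduced as fresh elements, which is valid since $\Pi_i \setminus \{R^p\} \subseteq \danger(R^p)$), reuses elements from a fact $F_\r$ achieving the chase witness's fact class at non-dangerous positions (available by fact-saturation of~$J$), and places fresh elements at the remaining dangerous positions. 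The chase steps must be sequenced so that each exported element is already present when its step is performed; this requires a topological sort among target facts, which exists because any helper used as an exported element must have been introduced earlier as a fresh element at a dangerous position, with the chain ultimately grounding in elements of $\dom(I)$.

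The main technical obstacle is verifying that each step is a valid fact-thrifty chase step compatible with alignedness, which is precisely where the hypothesis of $k$-reversibility is essential: it guarantees that every element wanting to appear at a position $R^p$ has $\cov$-image whose $\bbsim_k$-class depends only on $\eqidsclass{R^p}$. Consequently, the fact class of the chase witness at each step depends only on $\eqidsclass{R^p}$ and the applied $\uid$; fact-saturation of the original $J$ then supplies an appropriate $F_\r$, and the required $\cov$-image condition on reused non-dangerous elements is automatically met. Lemma~\ref{lem:ftok} then ensures each step preserves alignedness and $\ufds$. After all facts prescribed by $\pire$ have been added, no $\uid$ violations remain, since $\pire$ is $\ids$-compliant: every element appearing at some position $R^p$ also appears at every $S^q$ with $\ui{R^p}{S^q} \in \ids$. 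The resulting $J' = (I', \cov')$ satisfies $I \subseteq I'$ and $\restr{\cov'}{I} = \cov$ by construction, completing the argument.
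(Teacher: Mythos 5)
Your overall strategy is exactly the paper's: build a balanced pssinstance and a $\ucon$-compliant piecewise realization $\pire$, then realize it by fact-thrifty chase steps, using $k$-reversibility and fact-saturation to legitimize each step and the $\ids$-compliance of~$\pire$ to conclude. (The paper drives the steps by repeatedly picking an actual $\uid$ violation and looking up the matching tuple of~$\pire$, rather than topologically sorting target facts; this is the same idea with the ordering handled implicitly.) But two verifications are missing at precisely the delicate points. First, the elements of $\mybf{a}$ placed at positions of $\Pi_i \setminus \{R^p\}$ --- which are \emph{dangerous} for $R^p$ --- are not all fresh helpers: they may be elements of $\dom(I)$ that want to appear there, or helpers already introduced by earlier steps. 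For such non-fresh reuses the definition of thrifty chase steps imposes two conditions you never check: that the element does not already occur at that position in the current instance (the paper derives this from the $\ufds$-compliance of~$\pire$ together with an invariant that the instance built so far ``follows''~$\pire$), and that its $\cov$-image is $\bbsim_k$-equivalent to the corresponding element of the chase witness. The latter is the real job of the $k$-reversibility hypothesis: one shows the reused element lies in $\appelem{J'}{S^r}$, so its $\cov$-image is $k$-reversible and introduced at a position of $\eqidsclass{S^r}$, and the Chase Locality Theorem (Theorem~\ref{thm:locality}) then yields the required equivalence. Your write-up instead spends $k$-reversibility on the non-dangerous reuses, where the condition is automatic from choosing $F_\r$ to achieve the chase witness's fact class.

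Second, your concluding claim --- that no $\uid$ violations remain because $\pire$ is $\ids$-compliant --- only accounts for elements occurring at positions tracked by~$\pire$, i.e.\ positions of inner $\eqfun$-classes. Each chase step also creates genuinely fresh elements at the dangerous positions \emph{outside} $\Pi_i$, about which $\pire$ says nothing. One must argue separately, using assumption \assm{reversible}, that every position of $\danger(R^p) \setminus \Pi_i$ lies in an outer $\eqfun$-class, i.e.\ occurs in no $\uid$ of~$\ids$ (if such a position $S^r$ occurred in~$\ids$, the $\ufd$ $S^r \rightarrow R^p$ would force $S^r \eqfun R^p$ and hence $S^r \in \Pi_i$). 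Without this observation, those fresh elements could trigger new violations, and both the termination of your process and the final satisfaction of~$\ids$ are unproven.
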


The lemma is proved in Appendix~\ref{apx:prf_guidedb}.
It uses the constructions of Section~\ref{sec:wsound},
and relies on an independent result about the UID chase,
the Chase Locality Theorem (Theorem~\ref{thm:locality}),
proved in Appendix~\ref{apx:prf_locality}.
Clearly, applying the Guided Chase Lemma to~$J'$ concludes the proof
of the Fact-Thrifty Completion Proposition.

\mysubsection{Proof of Proposition ``Ensuring $n$-reversibility''
(Proposition~\ref{prp:achkrev})}
\label{apx:prf_achkrev}

We first make the following easy observation:

\begin{lemma}
  \label{lem:chasestruct}
  Let $J$ be an aligned superinstance, and $J'$ be the result of applying 
  one chase round to $J$ with fresh fact-thrifty chase steps. Let $a \in
  \appelem{J'}{\tau}$  for any $\uid$ $\tau$. Then we have $a \in \dom(J')
  \backslash \dom(J)$, and $a$ occurs in a single fact $F$ (which is an active
  fact for $\tau$).
\end{lemma}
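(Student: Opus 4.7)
The plan is to reason by case analysis on how an element $a$ can end up in $\appelem{J'}{\tau}$ for some $\uid$ $\tau: \ui{T^u}{U^v}$, leveraging the anatomy of a fresh fact-thrifty chase step. So fix such an $a$ and a witnessing fact $F \in J'$ containing $a$ at position $T^u$; by assumption $a \notin \pi_{U^v}(J')$.

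First I would show $F$ cannot lie in $J$. If it did, then $a \in \pi_{T^u}(J)$; either $a \in \pi_{U^v}(J) \subseteq \pi_{U^v}(J')$ (contradicting $a \in \appelem{J'}{\tau}$) or $a \in \appelem{J}{\tau}$. In the latter case, a chase round applies $\tau$ simultaneously to every active fact of~$J$, so a new fact of $J'$ must contain $a$ at position $U^v$, again a contradiction.

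Hence $F$ is a newly created fact, arising from some fresh fact-thrifty step that applied a $\uid$ $\sigma: \ui{R^p}{S^q}$ to an active fact of $J$. By definition of such a step, each position of $F$ is either the exported position $S^q$ (filled by the exported element, which lies in $\dom(J)$), a non-dangerous position in $\nondanger(S^q)$ (filled by an element reused from~$J$, hence in $\dom(J)$), or a dangerous position in $\danger(S^q)$ (filled by a fresh element of $\dom(J') \setminus \dom(J)$, chosen distinct across the round and thus used only once). The substantive part is to rule out $T^u = S^q$ and $T^u \in \nondanger(S^q)$: both yield $a \in \dom(J)$, and then the dichotomy from the previous paragraph ($a \in \pi_{U^v}(J)$ versus $a \in \appelem{J}{\tau'}$ for a suitable $\tau'$) produces a contradiction. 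In the reused case, $\tau' = \tau$ works directly since $a \in \pi_{T^u}(J)$. In the exported case, $a \in \pi_{R^p}(J)$, and I use transitive closure of $\ids$: either $R^p = U^v$, giving $a \in \pi_{U^v}(J)$ directly, or $\ui{R^p}{U^v} \in \ids$, in which case that $\uid$ would have been applied to $a$ in the same round, placing $a$ at $U^v$ in $J'$.

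This forces $T^u \in \danger(S^q)$, so $a$ is a fresh element introduced only at this position of $F$. Because fresh elements are chosen pairwise distinct across the round, $a$ appears in no other new fact, and being fresh it appears in no old fact either; so $F$ is the unique fact of $J'$ containing~$a$, and it is an active fact for~$\tau$ by the choice of $a$. I do not expect any real obstacle: the only mildly delicate point is the transitivity argument in the exported subcase, which goes through immediately because $\ids$ is assumed transitively closed and a chase round applies every $\uid$ of~$\ids$ simultaneously.
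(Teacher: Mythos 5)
Your proof is correct and follows essentially the same route as the paper's: rule out $a\in\dom(J)$ by examining how an old element could occur in a new fact (as exported element, handled via transitive closure of $\ids$, or at a non-dangerous position, handled because such elements already occurred there in $J$), then conclude uniqueness of the fact from the freshness and distinctness of the new nulls. Your write-up is a bit more explicit than the paper's terse version, but the ideas coincide.
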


\begin{proof}
  For the first part of the claim, let us assume by way of contradiction that $a
  \in \dom(J)$. Note that, by definition of chase rounds, we
  cannot have $a \in \appelem{J}{\tau}$, otherwise we could not have $a \in
  \appelem{J'}{\tau}$. Hence, if we have $a \in \dom(J)$ but $a \notin
  \appelem{J}{\tau}$, any active fact $F$ witnessing $a \in \appelem{J}{\tau}$ must be
  in $J'$.

  Now, by definition of fact-thrifty chase steps, if $a \notin \dom(J)$, 
  there are two
  possibilities. Either $a$ was the exported element in~$F$, or it was an element
  reused at a non-dangerous position. The first case is impossible: because $\ids$ is
  transitively closed, the new facts created in $J'$ cannot make new
  $\uid$s applicable to old elements of~$J$. The second case is also
  impossible: elements reused at non-dangerous positions already occurred at the
  same position in $J$, so this cannot make new $\uid$s applicable to them.
  This proves the first claim.
  
  The second part of the claim
  is by observing that elements created in $J'$ occur in a
  single fact, by definition of chase rounds, and by definition of fresh
  fact-thrifty chase steps (elements in new facts are either in $\dom(J)$ or are
  fresh). So the one fact where $a$ occurs must be the active fact witnessing
  that $a \in \appelem{J'}{\tau}$.
\end{proof}

We then show the following simple lemma about $n$-reversibility:

\begin{lemma}
  \label{lem:newkrev}
  Let $n \in \NN$,
  let $J$ be a $n$-reversible aligned superinstance of~$I_0$ and let $F_\n
  = S(\mybf{b})$
  be a new fact obtained by applying a thrifty chase step to $J$. For all $S^r
  \in \pos(S)$, such that $b_r \notin \dom(J)$, $\cov(b_r)$ is $(n+1)$-reversible
  and introduced at position~$S^r$ in $\chase{I_0}{\ids}$.
\end{lemma}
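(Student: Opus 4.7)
The plan is to unpack the definitions of thrifty chase steps, chase witnesses, and ``last UIDs'' of a chase element. Write the thrifty step as applying $\tau : \ui{R^p}{S^q}$ to the active fact $F_\a = R(\mybf{a})$, producing the new fact $F_\n = S(\mybf{b})$ with $b_q = a_p$, and let $F_\w = S(\mybf{b}')$ be the chase witness in $\chase{I_0}{\ids}$, which satisfies $b'_q = \cov(a_p)$. For a position $S^r$ with $b_r \notin \dom(J)$, the condition forces $S^r \neq S^q$ (because $b_q = a_p \in \dom(J)$) and forces $b_r$ to be fresh, so by the definition of thrifty chase steps we have $\cov(b_r) = b'_r$ in the updated simulation.

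The first sub-claim is that $b'_r$ is introduced at position $S^r$ in $\chase{I_0}{\ids}$. I would apply Lemma~\ref{lem:wadef} to deduce that $b'_q$ is the \emph{exported} element of $F_\w$; hence for every $S^r \neq S^q$, the value $b'_r$ is necessarily introduced at position $S^r$ during the creation of $F_\w$, giving this half of the conclusion. A small side remark is that $F_\w$ itself cannot be a fact of $I_0$: if it were, then $\cov(a_p) = b'_q \in \pi_{S^q}(I_0) \subseteq \pi_{S^q}(J)$, contradicting $a_p \in \appelem{J}{\tau} \subseteq \appelem{J}{S^q}$; so $F_\w$ was really created by a chase step, and the notion of ``last UID'' is meaningful for its introduced elements.

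The second, main sub-claim is that $b'_r$ is $(n+1)$-reversible. I would identify explicitly the last $n+1$ UIDs of $b'_r$ and check that each is reversible. Let $\tau'$ be the UID whose application created $F_\w$; by definition $\tau'$ is the last UID of $b'_r$, and $\tau'$ lies in $\ids$ since $\chase{I_0}{\ids}$ is built using only UIDs of $\ids$. Under assumption \assm{reversible}, every UID of $\ids$ is reversible, so $\tau'$ is reversible. By the inductive definition of the last UIDs, the remaining $n$ UIDs in the history of $b'_r$ are the last $n$ UIDs of the exported element $b'_q = \cov(a_p)$ of $F_\w$. Since $a_p \in \appelem{J}{\tau}$ entails $a_p \in \appelem{J}{S^q}$, the $n$-reversibility hypothesis on $J$ gives that $\cov(a_p)$ is $n$-reversible, i.e., its last $n$ UIDs are reversible. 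Concatenating, the last $n+1$ UIDs of $b'_r$ are all reversible, which is the desired conclusion.

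The main subtle point, and the reason assumption \assm{reversible} is genuinely needed, is that the UID $\tau'$ which created $F_\w$ in $\chase{I_0}{\ids}$ need not equal the UID $\tau$ applied in the thrifty step: the chase could have produced $F_\w$ via any UID of $\ids$ with target $S^q$, and the thrifty step only requires the existence of \emph{some} such witness. The argument sidesteps this by relying only on $\tau' \in \ids$ together with \assm{reversible}; everything else is bookkeeping with Lemma~\ref{lem:wadef} and the definition of thrifty chase steps.
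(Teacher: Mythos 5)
Your proof is correct and follows the same skeleton as the paper's: reduce to the chase-witness elements $b'_r$, use Lemma~\ref{lem:wadef} to see that $b'_q$ is the exported element of $F_\w$ (so each $b'_r$ with $S^r \neq S^q$ is introduced at position $S^r$), and obtain $(n+1)$-reversibility by prepending the UID that created $F_\w$ to the last $n$ UIDs of $b'_q = \cov(a_p)$, whose reversibility follows from the $n$-reversibility of $J$ applied to $a_p \in \appelem{J}{S^q}$.

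The one place you diverge is in justifying that the last UID of $b'_r$ --- the UID $\tau'$ that created $F_\w$ --- is reversible: you appeal to assumption \assm{reversible} (``every UID of $\ids$ is reversible''). That is valid in the section where the lemma lives, but it is weaker than what the hypothesis already gives you. Since $J$ is $n$-reversible and $a_p \in \appelem{J}{S^q}$, Definition~\ref{def:nreverse} says $\cov(a_p) = b'_q$ was introduced at a position $V^w \eqids S^q$; by the chase structure, $b'_q$ occurs (before its children are created) only at $V^w$ in the fact where it was introduced, so $\tau'$ is precisely $\ui{V^w}{S^q}$, which is reversible because $V^w \eqids S^q$ and $V^w \neq S^q$. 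This alternative derivation is the more robust one: the lemma is reused, via Proposition~\ref{prp:achkrev} and the Guided Chase Lemma, in Section~\ref{sec:manyscc}, where \assm{reversible} is only guaranteed for the class $P_i$ currently being processed while the chase witness still lives in $\chase{I_0}{\ids}$ for the full $\ids$; there, ``every UID of $\ids$ is reversible'' fails and your justification would have to be replaced by the one through the $n$-reversibility of $J$.
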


\begin{proof}
  Let $F_\a$ be the active fact, $F_\w$ be the chase witness, and $\tau:
  \ui{R^p}{S^q}$ be the $\uid$ for this chase step. By Lemma~\ref{lem:wadef} we
  know that $b'_q$ is the exported element of~$F_\w$. Hence, for all $S^r \in
  \pos(S) \backslash \{S^q\}$, $b'_r$ is $(n+1)$-reversible and introduced at
  position~$S^r$.
  Now, for all $S^r \in \pos(S)$ such that $b_r$ is fresh in $F_\n$,
  we have set $\cov(b_r) = b'_r$, so the result follows.
\end{proof}

We now prove the main result:
\achkrev*

Fix the aligned superinstance $J = (I, \cov)$.
We prove the result by induction on~$n$. For the base case $n = 0$, 
letting $J'$ be the result of applying one chase round to~$J$,
we need only show that for any position $S^q$ and $a \in \appelem{J'}{S^q}$,
$\cov(a)$ was introduced at a position of~$\eqidsclass{S^q}$ in
$\chase{I_0}{\ids}$. By Lemma~\ref{lem:chasestruct}, $a$ occurs in a single fact
$F$ at some position~$R^p$ (so that, using assumption \assm{reversible},
$R^p \eqids S^q$), and we have $a \in \dom(J') \backslash \dom(J)$, so it was
created by the application of a thrifty chase step to~$J$. By
Lemma~\ref{lem:newkrev}, we conclude that $\cov(a)$ was introduced at position
$R^p$ in~$\chase{I_0}{\ids}$, which implies the desired claim.

For the induction, fix $n > 0$ and assume that the result is true for $n-1$. Let
$J' = (I', \cov')$ be the result of applying $(n-1)+1$ chase rounds to $J$.
By induction hypothesis,
$J$ is $(n-1)$-reversible. We want to show that $J'' = (I'', \cov'')$ obtained
by applying one more chase round to $J'$ is $n$-reversible.
This is shown exactly as in the base case, except that,
when applying Lemma~\ref{lem:newkrev}, we use the $(n-1)$-reversibility of~$J$
to deduce the $n$-reversibility of the element under consideration.

This proves the desired claim by induction. Note that
we have relied implicitly on the
Fact-Thrifty Chase Steps Lemma (Lemma~\ref{lem:ftok})
to justify that the result of chase rounds by fact-thrifty chase steps
are indeed aligned superinstances; it is immediate that
fact-saturation is preserved.

\mysubsection{Proof of the Guided Chase Lemma (Lemma~\ref{lem:guidedb})}
\label{apx:prf_guidedb}

Recall that the $\eqfun$-classes of~$\pos(\sigma)$
are numbered $\Pi_1, \ldots, \Pi_{\neqfunc}$.
Recall the notion of inner and outer $\eqfun$-classes
(Definition~\ref{def:inout}),
and the notion of piecewise realization
(Definition~\ref{def:superbvpw}).
We define:

\begin{definition}
  \label{def:follows}
  A superinstance $I'$ of the instance $I$
  \deft{follows} the piecewise realization $\pire = (K_1, \ldots, K_{\neqfunc})$
  if for every inner $\eqfun$-class $\Pi_i$, we have
  $\pi_{\Pi_i}(I') \subseteq K_i$.
\end{definition}

We show the main claim:

\guidedb*

Fix the fact-saturated $k$-reversible aligned superinstance $J = (I, \cov)$
of $I_0$. Let $P = (I, \calH, \mapb)$ be a balanced pssinstance of~$J$
obtained by the Balancing Lemma (Lemma~\ref{lem:hascompletion}) and let
$\pire = (K_1, \ldots, K_{\neqfunc})$
be a finite $\ucon$-compliant piecewise realization of~$P$ obtained by the
Realizations Lemma (Lemma~\ref{lem:balwsndc}).

We will prove the result by satisfying $\uid$ violations in~$J$
with fact-thrifty chase steps
using the piecewise realization~$\pire$,
yielding a finite aligned superinstance $J_\f = (I_\f, \cov_\f)$
such that $I \subseteq I_\f$,
the restriction of~$\cov_\f$ to $I$ is $\cov$, $J_\f$
satisfies $\ids$, and
$I_\f$ follows $\pire$.
The process is a variant
of Lemma ``Using realizations to get completions''
(Lemma~\ref{lem:sreal}).

We call $J' = (I', \cov')$ the current state of our superinstance, starting at $J' \defeq J$. We will perform
fact-thrifty chase steps on $J'$.
We call $\calF$ the set of all fresh elements (not in $\dom(P)$) that we will introduce
(only in outer classes)
during the chase steps.
It is immediate that our construction will maintain the following:

\begin{description}
  \item[\invar{fsat}:~] $J'$ is a fact-saturated aligned superinstance of~$I_0$
    (this uses Lemma~\ref{lem:ftok});
  \item[\invar{sub}:~] $I \subseteq I'$;
  \item[\invar{sim}:~] $\restr{\cov'}{\dom(I)} = \cov$.
\end{description}

\noindent Further, we will additionally maintain the following invariants:

\begin{description}
  \item[\invar{fw}:~] $I'$ follows $\pire$;
  \item[\invar{krev}:~] $J'$ is $k$-reversible;
  \item[\invar{out}:~] elements of outer classes are only in $\calF$ or in
    $\dom(I)$.
\end{description}

\medskip

We now describe formally how we apply each fact-thrifty chase step. Choose an
element $a \in
\appelem{J'}{\tau}$ to which some $\uid$ $\tau:
\ui{R^p}{S^q}$ is applicable. Let $F_\a = R(\mybf{a})$ be the active fact,
with $a = a_p$.
The $\uid$ $\tau$ witnesses that the $\eqfun$-classes~$\Pi_i$ and~$\Pi_{i'}$,
of~$R^p$ and~$S^q$ respectively, are inner,
so by invariant \invar{fw} we have $a \in \pi_{R^p}(\pire)$.
As $\pire$ is $\ids$-compliant, we must have $a \in \pi_{S^q}(\pire)$,
and there is a $\card{\Pi_{i'}}$-tuple $\mybf{t} \in K_{i'}$
such that $t_q = a$.

We choose a fact $F_\r = S(\mybf{c})$ of~$J$ that achieves the fact class of
the chase witness~$F_\w$ (this is possible by invariant \invar{fsat}), and
create a new fact $F_\n = S(\mybf{b})$ with the fact-thrifty chase step
defined as follows:
\begin{samepage}
\begin{itemize}
  \item For the exported position $S^q$, we set $b_q \defeq a_p$.
  \item For any $S^r \in \Pi_{i'}$, noting that necessarily $S^r \in
    \danger(S^q)$, we set $b_r \defeq t_r$.
  \item For any position $S^r \in \danger(S^q) \backslash \Pi_{i'}$,
    we take $b_r$ to be a fresh element from~$\calF$.
  \item For any position $S^r \in \nondanger(S^q)$, we set $b_r \defeq c_r$.
\end{itemize}
\end{samepage}

\medskip

We must verify that this satisfies the conditions of thrifty chase steps. The
fact that $b_r \in \pi_{S^r}(J')$ for $S^r \in \nondanger(S^q)$ is immediate by
definition of $F_\r$. We now show the two other points.

First, we show that $b_r \notin \pi_{S^r}(J')$ for $S^r \in \danger(S^q)$.
Obviously this needs only to be checked for $S^r \in \Pi_{i'}$ (as the
other~$b_r$ are always fresh).
Assume to the contrary that $t_r \in \pi_{S^r}(J')$,
and let $F = S(\mybf{d})$ be a
witnessing fact.
As $\Pi_{i'}$ is inner,
by invariant
\invar{fw}, we deduce that
$\pi_{\Pi_{i'}}(\mybf{d}) \in \pi_{\Pi_{i'}}(\pire)$.
Now, as $d_r = t_r$ and $\pire$ is $\ufds$-compliant, we deduce that
$\mybf{d} = \mybf{t}$, so that $F$ witnesses that
$d_q$ is in $\pi_{S^q}(J')$.
As we have $d_q = t_q = a$,
this contradicts the applicability of~$\tau$ to~$a$.
Hence, the claim is proven.

Second, we check that reused elements have the right $\cov$-image. This is the
case by definition of fact-thrifty chase steps for the non-dangerous positions,
so again we need only check this for elements at a position $S^r \in \Pi_{i'}$, and
only if they are not fresh. We start by showing that, for such $S^r$, we have $b_r
\in \appelem{J'}{S^r}$.

Indeed, we have $b_r = t_r$ which is in
$\pi_{S^r}(\pire)$, and we cannot have $\mybf{t} \in \pi_{\Pi_{i'}}(J')$,
as otherwise this would contradict the applicability of
$\tau$ to $a$; so in particular, by invariant \invar{sub},
we cannot have $\mybf{t} \in \pi_{\Pi_{i'}}(I)$.
Thus, by definition of a piecewise realization, we have
$t_r \in \appelem{P}{S^r}$. Recalling that we have $t_r \in \dom(J')$, we show
that this implies $t_r \in \appelem{J'}{S^r}$.
Recalling the definition of $t_r \in \appelem{P}{S^r}$,
we distinguish two subcases: (1.) $t_r \in \dom(J)$
and $t_r \in \appelem{J}{S^r}$, or (2.) $t_r
\in \calH$ and $S^r \in \mapb(t_r)$.

In the subcase (1.) $t_r \in \dom(J)$ and $t_r \in \appelem{J}{S^r}$,
we remember that in the first point
we showed that $t_r \notin \pi_{S^r}(J')$.
So we still have $t_r \in \appelem{J'}{S^r}$,
which is what we claimed.

In the subcase (2.) $t_r \in \calH$ and $S^r \in \mapb(t_r)$,
consider a fact $F'$ of~$J'$ witnessing $t_r \in \dom(J')$, where $t_r$ occurs at a
position $T^l$; let $\Pi_{i''}$ be the $\eqfun$-class of $T^l$.
As $t_r \in \calH$, by invariant \invar{out}, $\Pi_{i''}$ is inner,
so by invariant \invar{fw} there is a tuple $\mybf{t}'$ of~$K_{i''}$
such that $t'_l = t_r$.
Now, as $t_r \in \calH$,
by definition of piecewise realizations,
we have $T^l \in \mapb(t_r)$.
Hence, either the $\uid$ $\tau' : \ui{T^l}{S^r}$ is in~$\ids$
or we have $T^l = S^r$.
As $t_r \in \pi_{T^l}(J')$ and we have shown in the first point that
$t_r \notin \pi_{S^r}(J')$, we know that $T^l \neq S^r$, so $\tau'$ is in
$\ids$.
Hence, as $F'$ witnesses that $t_r \in \pi_{T^l}(J')$,
and as $t_r \notin \pi_{S^r}(J')$,
we have $t_r \in \appelem{J'}{S^r}$, as we claimed.

Hence, we know that $b_r = t_r$ is in~$\appelem{J'}{S^r}$ in either subcase.
By invariant \invar{krev},
this implies that $\cov(b_r)$ is a $k$-reversible element of
$\chase{I_0}{\ids}$ introduced at a position of~$\eqidsclass{S^r}$. By
Lemma~\ref{lem:newkrev}, we know that the $\cov$-image $b'_r$ of a fresh element at
position $S^r$ would be $k$-reversible and
introduced at position $S^r$. Hence, by the Chase Locality Theorem
(Theorem~\ref{thm:locality}), we have $\cov(b_r) \bbsim_k b'_r$, so the
condition is satisfied. This proves that, indeed,
we can perform the fact-thrifty chase step that we described.

\medskip

We now check that the invariants are preserved. We first observe that for any
$S^r \in \danger(S^q) \backslash \Pi_{i'}$, the $\eqfun$-class of
$S^r$ is outer. Indeed, if $S^r$ occurred in $\ids$,
as $S^q$ does because of~$\tau$, we know by
assumption \assm{reversible} that, as the $\ufd$ $S^r \rightarrow S^q$ is in
$\ufds$ by dangerousness of~$S^r$, the $\ufd$ $S^q \rightarrow S^r$ also should,
but then we would have $S^r \eqfun S^q$, so $S^r \in \Pi_{i'}$, a contradiction.
Hence, the $\eqfun$-class of~$S^r$ is indeed outer.

Now, invariant \invar{fw} is preserved because, by the above observation, the
new fact~$F_\n$
is defined on the inner classes either following $\mybf{t}$ or
following an existing fact of~$J'$.
Invariant \invar{krev} is preserved by Lemma~\ref{lem:newkrev} for the fresh
elements, or by \invar{krev} on the previous state~$J'$ for the existing elements.
Invariant \invar{out}
is preserved because the only elements of~$F_\n$ that are not in~$\calF$ or
in~$\dom(I)$ are those of $\Pi_{i'}$, which is inner. 
This shows that the invariant is preserved by the fact-thrifty chase step.

\medskip

We perform fact-thrifty chase steps until no violations of~$\ids$ remain: invariant
\invar{fw} guarantees that we terminate.
Indeed, $\pire$ is finite, the domain of the resulting instance is bounded by
that of~$\pire$ for all inner classes, and new elements created in outer classes
cannot create violations of~$\ids$ or cause the creation of further elements, by
definition of their class being outer. Hence, the result of the process is
finite, and it satisfies $\ids$ because no violations remain.
This concludes the proof.

\mysubsection{Proof of the Chase Locality Theorem
(Theorem~\ref{thm:locality})}
\label{apx:prf_locality}

We give an equivalent rephrasing
of the Chase Locality Theorem (Theorem~\ref{thm:locality})
using the notion of $n$-reversible elements (Definition~\ref{def:nreverse}):

\begin{restatable}[Chase locality theorem]{theorem}{locality2}
  \label{thm:locality2}
  For any instance~$I_0$, transitively closed set of $\uid$s $\ids$,
  and $n \in \mathbb{N}$,
  for any two
  elements $a$ and $b$
  respectively introduced at positions~$R^p$ and~$S^q$ in $\chase{I_0}{\ids}$
  such that $R^p \eqids S^q$,
  if $a$ and $b$ are $n$-reversible
  then $a \bbsim_n b$.
\end{restatable}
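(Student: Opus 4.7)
The proof proceeds by induction on $n$. The base case $n=0$ is immediate since $\bbsim_0$ holds vacuously. Writing $I \defeq \chase{I_0}{\ids}$, for the inductive step I assume the theorem for all $m < n$ and fix $a, b$ introduced at $R^p, S^q$ with $R^p \eqids S^q$ and both $n$-reversible. By symmetry (swapping the roles of $a$ and $b$), it suffices to establish $(I, a) \bsim_n (I, b)$.

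The plan is, for each fact $F$ containing $a$, to exhibit a matching fact $F'$ containing $b$ at the same position, and then verify that corresponding elements at each position are $\bsim_{n-1}$-related. By the Unique Witness Property, since $a$ is introduced, the facts containing $a$ are either its home fact $F_a$ (where $a$ occupies $R^p$, a non-exported position) or a child fact $F_a^{T^l}$ obtained by applying some UID $\ui{R^p}{T^l} \in \ids$ to $a$ (with $a$ at the exported position $T^l$). I match these as follows: $F_a$ with $F_b$ when $R^p = S^q$, else with the child of $b$ via $\ui{S^q}{R^p}$ (which lies in $\ids$ by $R^p \eqids S^q$); $F_a^{T^l}$ with $F_b$ when $S^q = T^l$, else with the child of $b$ via $\ui{S^q}{T^l}$, which lies in $\ids$ by transitivity from $\ui{S^q}{R^p}$ and $\ui{R^p}{T^l}$.

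For each matched pair, the primary pair $(a, b)$ is $\bbsim_{n-1}$-related by the inductive hypothesis applied at $n-1$, giving the needed $\bsim_{n-1}$. The remaining element pairs (exported-vs-exported, exported-vs-fresh, or fresh-vs-fresh) are handled by verifying that both elements are introduced at $\eqids$-equivalent positions and are both $(n-1)$-reversible, then invoking the inductive hypothesis at $n-1$. The $\eqids$-equivalence of introduction positions follows from the reversibility of the UIDs that created the home facts of $a$ and $b$ (reversible by the $n$-reversibility of $a, b$), which relates the home position of each exported element to its position in the home fact. The $(n-1)$-reversibility is obtained by composing the reversibility of the UID creating the matched fact (either the UID creating $F_a$ or $F_b$, or $\ui{R^p}{S^q}$, or $\ui{S^q}{R^p}$---all reversible in these sub-cases) with the reversibility of the last $n-2$ UIDs inherited from $a$'s or $b$'s history.

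The hard case is when the matched pair involves child facts $F_a^{T^l}$ and $F_b^{T^l}$ via UIDs $\ui{R^p}{T^l}$ and $\ui{S^q}{T^l}$ with $T^l \notin \{R^p, S^q\}$: these UIDs need not be reversible, so the fresh elements $e_j, f_j$ in the matched facts may fail to be $(n-1)$-reversible, and the inductive hypothesis does not apply to them directly. I handle each such fresh pair by unfolding $\bsim_{n-1}$ directly: both $e_j, f_j$ are introduced at the same position $T^j$, so their home facts ($F_a^{T^l}$ and $F_b^{T^l}$) and their child facts (via the same UIDs $\ui{T^j}{\cdot} \in \ids$) mirror each other structurally. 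The matching at each recursive level reduces either to $(e_j, f_j)$ at a smaller depth, to other fresh-vs-fresh pairs handled by the same scheme, or to the primary pair $(a, b)$ at some smaller depth $m \leq n-1$, which is covered because $a \bbsim_{n-1} b$ (from the inductive hypothesis) implies $a \bsim_m b$ for all $m \leq n-1$. The recursion terminates at depth $0$, completing the induction.
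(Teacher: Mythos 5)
Your proof is correct and follows essentially the same route as the paper's: the same induction on $n$, the same matching $\phi$ between the facts around $a$ and those around $b$ (home fact versus child facts, using transitivity of $\ids$), and the same case split between reversible and non-reversible child $\uid$s. The only organizational difference is that your direct ``unfolding'' for the non-reversible case is exactly what the paper factors out as a standalone auxiliary lemma (two elements introduced at the same position whose exported parents are $\bbsim_{n-1}$-equivalent are $\bbsim_n$-equivalent, proved by the same nested induction on depth).
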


Note that this result is for an arbitrary set of~$\uid$s and $\fd$s,  not relying
on any finite closure properties, or on assumption \assm{reversible}. (It only
assumes that the last $n$ dependencies used to create~$a$ and~$b$ were
reversible.)
However we still assume that $\ids$ is transitively closed.

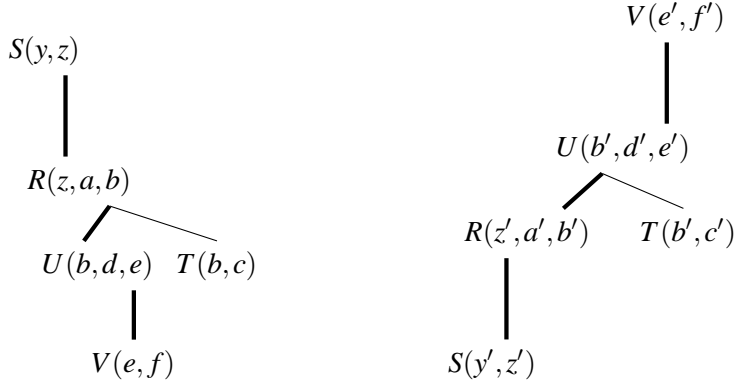
\begin{figure}
  \centering
  \begin{forest}
    [{$S(y, z)$}, parent anchor=-50, anchor=-50
      [{$R(z, a, b)$}, edge=ultra thick,
          child anchor=120, anchor=120, parent anchor=-40
        [{$T(b, c)$}, child anchor=90, anchor=-35,
        before computing xy={s=(70)}]
        [{$U(b, d, e)$}, edge=ultra thick, child anchor=120, parent anchor=-35,
          anchor=-35
          [{$V(e, f)$}, edge=ultra thick ]
        ]
      ]
    ]
  \end{forest}
  \hspace{2cm}
  \begin{forest}
    [{$V(e', f')$}, parent anchor=-110, anchor=-110
      [{$U(b', d', e')$}, edge=ultra thick,
        child anchor=30, anchor=30, parent anchor=-130
        [{$R(z', a', b')$}, edge=ultra thick, child anchor=35, anchor=-130, parent
          anchor=-130, before computing xy={s=(-60)}
          [{$S(y', z')$}, edge=ultra thick, child anchor=60, anchor=60 ]
        ]
        [{$T(b', c')$}, anchor=-130, child anchor=100,
          before computing xy={s=(0)}
        ]
      ]
    ];
  \end{forest}
  \caption{Chase locality example. Elements $b$ and $b'$ are $1$-reversible and
    introduced at positions $R^3$ and $U^1$. Reversible $\uid$s are represented
  by thick edges.}
  \label{fig:chase}
\end{figure}

Figure~\ref{fig:chase} illustrates the result in a simple situation. The
intuition is the following: $n$-reversible elements in
the chase have the same neighborhoods up to distance~$n$, no matter their exact
histories, as long as they were introduced in $\eqids$-equivalent positions:
intuitively, the facts that go ``downwards'' in the neighborhood of~$a$ in the forest structure can be
matched to facts in the neighborhood of~$b$
because they are required by $\ids$, and the facts ``upwards'' are also
matched up to distance $n$ because of the reverses of the $\uid$s
used along this chain.

To prove the theorem,
fix the instance $I_0$ and the set $\ids$ of~$\uid$s. 
We first show the following easy lemma:

\begin{lemma}
  \label{lem:samepos}
For any $n > 0$ and position $R^p$, for any two elements $a, b$ of
$\chase{I_0}{\ids}$ introduced at position $R^p$ in two facts $F_a$ and $F_b$,
letting $a'$ and $b'$ be the exported elements of~$F_a$ and $F_b$, if $a'
  \bbsim_{n-1} b'$, then $a \bbsim_n b$.
\end{lemma}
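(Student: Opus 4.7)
The plan is to prove the lemma by strong induction on $n$, exploiting the fact that the structure of the chase around an element introduced at a given position is essentially canonical. The key observation, which follows from the Unique Witness Property and from $\ids$ being transitively closed, is that the facts of $\chase{I_0}{\ids}$ containing $a$ are exactly the introduction fact $F_a$ (where $a$ sits at position $R^p$) together with, for each $\uid$ $\tau : \ui{R^p}{S^q}$ of $\ids$, a unique child fact $F_\tau^a$ in which $a$ has been exported to position $S^q$; the corresponding statement holds for $b$ via $F_b$ and $F_\tau^b$. This canonical structure gives a natural pairing $F_a \leftrightarrow F_b$ and $F_\tau^a \leftrightarrow F_\tau^b$ which I will use to verify both directions of the bisimulation, focusing on $(\chase{I_0}{\ids}, a) \bsim_n (\chase{I_0}{\ids}, b)$ since the other direction is symmetric.

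For the base case $n = 1$, the hypothesis $a' \bbsim_0 b'$ is vacuous. The set of positions at which $a$ occurs in $\chase{I_0}{\ids}$ is $\{R^p\} \cup \{S^q \mid \ui{R^p}{S^q} \in \ids\}$, and is identical for $b$ by symmetry. The canonical pairing then matches each fact containing $a$ with a fact containing $b$ at the same position, and the $\bsim_0$ requirement on the remaining positions is trivial, giving $a \bbsim_1 b$ immediately.

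For the inductive step with $n \geq 2$, I check that each pair of matched facts agrees up to $\bsim_{n-1}$ on every position. In the pair $(F_a, F_b)$: at the exported position, the hypothesis $a' \bbsim_{n-1} b'$ applies directly; at position $R^p$ itself, an application of the induction hypothesis at level $n-1$ (using the weakening $a' \bbsim_{n-2} b'$) yields $a \bbsim_{n-1} b$; at each remaining position, the freshly introduced elements $c_r, d_r$ have exported parents $a', b'$, so another application of the induction hypothesis at level $n-1$ gives $c_r \bbsim_{n-1} d_r$. In each pair $(F_\tau^a, F_\tau^b)$: the exported position is handled by the $a \bbsim_{n-1} b$ just derived, and for each other position the freshly introduced children have parents $a, b$, so a final appeal to the induction hypothesis at level $n-1$---whose prerequisite $a \bbsim_{n-2} b$ follows from $a \bbsim_{n-1} b$ (trivially so when $n = 2$)---delivers the required bisimulation on these children.

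The main subtlety will be the bookkeeping of the induction: I must first derive $a \bbsim_{n-1} b$ itself from the induction hypothesis before reusing it in later sub-cases (in particular, to get the weakening $a \bbsim_{n-2} b$ needed for the children of the $F_\tau$ facts), and I must consistently weaken $\bbsim_{n-1}$ hypotheses to $\bbsim_{n-2}$ whenever I pass parents into the induction at level $n-1$. Once this ordering of inferences is fixed, the argument is entirely mechanical, and the symmetric direction $(\chase{I_0}{\ids}, b) \bsim_n (\chase{I_0}{\ids}, a)$ follows by the same analysis with the roles of $a, b$ and $a', b'$ swapped.
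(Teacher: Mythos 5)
Your proof is correct and follows essentially the same route as the paper's: induction on $n$, using the Unique Witness Property to pair $F_a$ with $F_b$ and each child fact $F_\tau^a$ with $F_\tau^b$, handling the exported position of $F_a,F_b$ via the hypothesis on $a',b'$ and all other elements via the induction hypothesis applied to their (introduction-position, exported-parent) data. The paper's version is terser but the bookkeeping you spell out (deriving $a \bbsim_{n-1} b$ first, then weakening to feed the children of the $F_\tau$ facts) is exactly what its appeal to "the induction hypothesis for other elements" amounts to.
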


\begin{proof}
  By symmetry, it suffices to show that $a \bsim_n b$.
  We proceed by induction on~$n$.

  For the base case  $n=1$, 
  observe that, for every fact $F$ of
  $\chase{I_0}{\ids}$ where $a$ occurs at some position $S^q$,
  there are only two cases. Either $F = F_a$, so
  we can pick $F_b$ as the representative fact, or the $\uid$ $\ui{R^p}{S^q}$
  is in $\ids$ so we can pick a corresponding fact for $b$ by definition of
  the chase.

  For the induction step, we proceed in the same way. If $F = F_a$, we pick
  $F_b$ and use either the hypothesis on $a'$ and $b'$ or the induction
  hypothesis (for other elements of~$F_a$ and $F_b$) to justify that $F_b$ is a
  suitable witness. Otherwise, we pick the corresponding fact for $b$ which must
  exist by definition of the chase, and apply the induction hypothesis to the
  other elements of the fact to conclude.
\end{proof}

We now prove the Chase Locality Theorem.
Recall the definition of~$\eqrev$ (Definition~\ref{def:eqids}). However, note
that, as we no longer make assumption \assm{reversible},
while $\eqids$ is still an equivalence relation,
it is no longer the case that all $\uid$s of~$\ids$ are reflected in $\eqrev$:
the $\uid$ $\ui{R^p}{S^q}$ may be in~$\ids$ even though $R^p \not\eqids S^q$ if
$\ui{S^q}{R^p}$ is not in~$\ids$.

We prove by induction on~$n$ the main claim:
for any positions $R^p$ and $S^q$ such that $R^p \eqrev
S^q$, for any two $n$-reversible elements $a$ and $b$ respectively introduced at
positions $R^p$ and $S^q$, we have $a \bbsim_n b$.
By symmetry it suffices to show that
$(\chase{I_0}{\ids}, a) \bsim_n (\chase{I_0}{\ids}, b)$. 

The base case of~$n = 0$ is immediate.

For the induction step, fix $n > 0$, and assume that the result holds
for $n-1$. Fix $R^p$ and $S^q$, and let $a,
b$ be two $n$-reversible elements introduced respectively at $R^p$ and $S^q$ in
facts $F_a$ and $F_b$.
Note that by the induction hypothesis we already
know that $(\chase{I_0}{\ids}, a) \bsim_{n-1} (\chase{I_0}{\ids}, b)$; we must
show that this holds for~$n$.

First, observe that, as~$a$ and~$b$ are $n$-reversible with $n > 0$, they are not
elements of~$I_0$. Hence, by definition of the chase, for each one of them, the
following is true: for each fact of the chase where the element occurs, it only
occurs at one position, and all other elements co-occurring with it in a fact of the chase
occur only at one position in only one of these facts.
Thus, to prove the claim,
it suffices to construct a mapping $\phi$ from the set $N_1(a)$ of the
facts of~$\chase{I_0}{\ids}$ where $a$ occurs, to the set $N_1(b)$ of the facts
where $b$ occurs, such that the following holds:
for every fact $F = T(\mybf{a})$ of~$N_1(a)$, letting $T^c$ be the position of
$F$ such that $a_c = a$ (there is only one such position by construction of the
chase), $b$ occurs at position $T^c$ in $\phi(F) = T(\mybf{b})$, and for every
$i$, $a_i \bsim_{n-1} b_i$.

By construction of the chase (using the
Unique Witness Property), $N_1(a)$ consists of exactly the following facts:

\begin{itemize}
  \item The fact $F_a = R(\mybf{a})$, where $a_d = a'$ is the
    exported element (for a certain $d$),
    $a_p = a$ was introduced at $R^p$ in $F_a$,
    and for $i \notin \{p, d\}$, $a_i$ was
    introduced at $R^i$ in $F_a$
  \item For every $\uid$ $\tau : \ui{R^p}{V^g}$ of~$\ids$, a $V$-fact $F^{\tau}_a$ where
    all elements were introduced in this fact except the one at position $V^g$
    which is~$a$.
\end{itemize}

A similar characterization holds for $b$, with the analogous notation.
We construct the mapping $\phi$ as follows:

\begin{itemize}
  \item If $R^p = S^q$ then set $\phi(F_a) = F_b$; otherwise, as $\tau : \ui{S^q}{R^p}$
    is in $\ids$, set $\phi(F_a)$ to be the fact $F^{\tau}_b$.
  \item For every $\uid$ $\tau : \ui{R^p}{V^g}$ of~$\ids$, as $R^p \eqids S^q$,
    by transitivity,
    either $S^q = V^g$ or the $\uid$ $\tau': \ui{S^q}{V^g}$ is in $\ids$. 
    In the first case, set $\phi(F^{\tau}_a) = F_b$.
    In the second case, set $\phi(F^{\tau}_a) = F^{\tau'}_b$.
\end{itemize}

We must now show that $\phi$ satisfies the required conditions. First,
verify that indeed, by construction, whenever $a$ occurs at position $T^c$ in $F$ then $b$ occurs
at position $T^c$ in $\phi(F)$. Second, fix $F \in N_1(a)$, write $F =
T(\mybf{a})$ and $\phi(F) = T(\mybf{b})$, with $a_c = a$ and $b_c = b$ for
some $c$, and show that $a_i \bsim_{n-1} b_i$ for all $T^i \in \pos(T)$.
If $n = 1$ there is nothing to show and we are done, so we assume $n \geq 2$.
If $i = c$ then the claim is immediate by the induction hypothesis; otherwise, we distinguish two cases:

\begin{enumerate}
  \item  $F = F_a$ (so that $T = R$ and $c = p$), or $F = F_a^{\tau}$ such that the $\uid$
    $\tau : \ui{R^p}{T^c}$ is
    reversible. In this case, by construction,
    either $\phi(F) = F_b$ or $\phi(F) = F^{\tau'}_b$ for $\tau' :
    \ui{S^q}{T^c}$; $\tau'$ is then reversible,
    because $R^p \eqids S^q$ and $R^p \eqids T^c$.

    We show that for all $1 \leq i \leq \arity{T}$, $i \neq c$, $a_i$ is
    $(n-1)$-reversible and was introduced in~$\chase{I_0}{\ids}$ at a position
    in the $\eqids$-class of~$T^i$. Once we have proved this, by symmetry we can
    show the same for all~$b_i$, so that we can conclude that $a_i \bsim_{n-1}
    b_i$ by induction hypothesis.
    To see why the claim holds, we distinguish two subcases.
    Either $a_i$ was introduced in $F$, or we have $F = F_a$, $i = d$
    and $a_i$ is the exported element for $a$.

    In the first subcase, $a_i$ was created by applying the reversible $\uid$
    $\tau$ and the exported element~$a$ is $n$-reversible, so $a_i$ is
    $(n-1)$-reversible (in fact it is $(n+1)$-reversible), and is introduced
    at position~$T^i$. In the second subcase, $a_i$ is the exported element used to
    create~$a$, which is $n$-reversible, so $a_i$ is $(n-1)$-reversible; and as
    $n \geq 2$, the last dependency applied to create $a_i$ is reversible, so
    that $a_i$ was introduced at a position in the same $\eqids$-class as~$T^i$.
    Hence, we have proved the desired claim in the first case.

  \item $F = F_a^{\tau}$ such that $\tau : \ui{R^p}{T^c}$ is not reversible. In this
    case, we cannot have $T^c = S^q$ (because we have $R^p \eqids S^q$),
    so that $\phi(F) = F_b^{\tau}$,
    and all $a_i$ for $i \neq c$ were introduced in~$F$ at position $T^i$,
    and likewise for the~$b_i$ in~$\phi(F)$. 
    Using Lemma~\ref{lem:samepos}, as $a \bbsim_{n-1} b$, we conclude
    that $a_i \bbsim_{n} b_i$, hence $a_i \bsim_{n-1} b$.
\end{enumerate}

This concludes the proof.

\section{Proofs for Section~\ref{sec:manyscc}: Arbitrary $\uid$s:
Lifting Assumption Reversible}
\label{apx:manyscc}

This appendix proves the claims needed to complete our proof of
Theorem~\ref{thm:myuniv}, the existence of universal instances for $\uid$s,
$\ufd$s, and acyclic
$\cq$s of fixed size. The main claim is the existence of manageable
partitions (Lemma~\ref{lem:nontrivconv}).

Remember that we are assuming the
``Unique Witness Property'' (Section~\ref{sec:prelim})
and that the constraints $\ucon$
are closed under the finite closure rule (in particular, $\ids$ is transitively
closed).

\mysubsection{Finite closure computation algorithm}
\label{apx:cosm}

For convenience we recall here how the finite closure is computed, from~\cite{cosm}.

Given a set $\con = \fds \sqcup \ids$ of~$\fd$s and $\uid$s,
an \defp{$\incd$ path} of~$\con$ is a
sequence of~$\uid$s of~$\ids$ of the following form:
$\ui{R_1^{i_1}}{R_2^{j_2}},\allowbreak \ui{R_2^{i_2}}{R_3^{j_3}}, \ldots,\allowbreak
\ui{R_{n-1}^{i_{n-1}}}{R_n^{j_n}}$, with $i_k \neq j_k$ for all $k$. The path is
\defp{functional} if,
for all $1 < k < n$,
$R_k^{i_k} \rightarrow R_k^{j_k} \in \fds$.
Note that our definition of the $\idprec$ relation ensures that $\tau \idprec
\tau'$ iff $\tau, \tau'$ is a functional $\incd$ path.

An \defp{invertible cycle} $C$ of~$\con$ is a functional $\incd$ path with
$R_{n} = R_1$ and $j_{n} = j_1$ (so that $R_1^{i_1}
\rightarrow R_1^{j_1} \in \fds$): a $\uid$ that occurs in an invertible cycle is
said to be \defp{invertible}. The \defp{reverse}
$\overline{C}$ of an invertible cycle $C$ is
$\ui{R_n^{j_n}}{R_{n-1}^{i_{n-1}}}, \ldots, \ui{R_2^{j_2}}{R_1^{i_1}}$.

Applying the \defp{cycle closure rule} in $\con$ means taking every invertible
cycle~$C$ of~$\con$ and adding to $\con$ the $\uid$s and $\ufd$s
needed to make $\overline{C}$ an invertible cycle in $\con$, namely,
$\ui{R_2^{j_2}}{R_1^{i_1}},\allowbreak \ui{R_3^{j_3}}{R_2^{i_2}}, \ldots,\allowbreak
\ui{R_n^{j_n}}{R_{n-1}^{i_{n-1}}}$, 
and $R_k^{j_k} \rightarrow R_k^{i_k}$ for $1 \leq k \leq n$.
The finite closure  is computed by closing under the rule above
and by implication of the $\uid$s and of the
$\fd$s in isolation.

The fact that the result is exactly the finite closure of~$\con$ is shown
in~\cite{cosm}.

\mysubsection{Proof of Lemma~\ref{lem:sccbyscc} (New violations follow
$\idprec$)}
\label{apx:prf_sccbyscc}
\sccbyscc*

Fix $J$, $J'$ and $\tau : \ui{R^p}{S^q}$ and $\tau'$. As chase steps add a
single fact, the only new $\uid$ violations in $J'$ relative to $I$ are on
elements in the newly created fact $F_\n = S(\mybf{b})$, As $\ids$ is
transitively closed, $F_\n$ can introduce no new violation on the exported element
$b_q$. Now, as thrifty chase steps
always reuse existing elements at
non-dangerous positions, we know that if $S^r \in \nondanger(S^q)$ then
no new $\uid$ can be applicable to $b_r$. Hence, if a new $\uid$ is applicable
to $b_r$ for $S^r \in \pos(S)$, then necessarily $S^r \in \danger(S^q)$. By
definition of dangerous positions, the $\ufd$ $S^r \rightarrow S^q$ is in
$\ufds$, and it is non-trivial because $S^r \neq S^q$.
Hence, writing $\tau' : \ui{S^r}{T^r}$, we see that $\tau \idprec \tau'$.

\mysubsection{Proof of Corollary~\ref{cor:trivscc} (Dealing with
trivial classes)}
\label{apx:prf_trivscc}
\trivscc*

Fix $J$, $J'$ and $\tau$. All violations of~$\tau$ in $J$ have been satisfied in
$J'$ by definition of~$J'$, so we only have to show that no new violations of
$\tau$ were introduced in $J'$. But by
Lemma~\ref{lem:sccbyscc}, as $\tau \not\idprec \tau$, each fresh fact-thrifty chase
step cannot introduce such a violation, hence there is no new violation of
$\tau$ in $J'$. Hence, $J' \models \tau$.

\mysubsection{Proof of Lemma~\ref{lem:nontrivconv} (Existence of manageable partitions)}
\label{apx:prf_nontrivconv}

Our goal in this section is to show:

\nontrivconv*

We assume that $\ids$ is closed under the finite closure rule (see
Appendix~\ref{apx:cosm}). Hence, in particular, it is transitively closed.

We start by introducing definitions about the $\idprec$ relation,
which we recall is defined so that $\tau \idprec \tau'$ for $\tau, \tau' \in
\ids$ whenever $\tau, \tau'$ is a functional $\incd$ path, namely: letting $\tau
: \ui{R^p}{S^q}$ and $\tau' : \ui{S^r}{T^u}$, the $\ufd$ $S^r \rightarrow S^q$
is non-trivial and is in $\ufds$.

We extend $\idprec$ to sets of~$\uid$s in the expected way: $P \idprec P'$ if
there exists $\tau \in P$, $\tau' \in P'$ such that $\tau \idprec \tau'$.

\begin{definition}
  The \deft{$\incd$ graph} $\idgraph(\ids)$ is the directed graph (with
  self-loops) defined on $\ids$ by the $\idprec$ relation. We define the
  \deft{strongly connected components} of~$\idgraph(\ids)$ as usual: an SCC is a
  maximal subset $P$ of~$\ids$ such that for all $\tau, \tau' \in P$, we
  have $\tau \idprec^* \tau'$, where $\idprec^*$ denotes the transitive and
  reflexive closure of the $\idprec$ relation. The \deft{SCC graph}
  $\sccgraph(\ids)$ is the
  directed acyclic graph (without self-loops) defined on the SCCs of
  $\idgraph(\ids)$ such that, for any two SCCs $P \neq P'$ of~$\idgraph(\ids)$,
  there is an edge from $P$ to $P'$ iff $P \idprec P'$.

  Note that the definition of SCCs allows both singleton SCCs $\{\tau\}$ where
  we have a self-loop ($\tau \idprec \tau$), and singletons where there is none
  ($\tau \not\idprec \tau$). We say that an SCC is \deft{trivial} if it is a
  singleton without self-loops.
  Otherwise, if the SCC is not a singleton or if it has a self-loop,
  we call it \deft{non-trivial}.
\end{definition}

We first show the following lemma to understand the structure of
the SCCs of $\idgraph(\ids)$.
This lemma is proved in Appendix~\ref{apx:prf_sccs}.

\begin{restatable}[SCC structure]{lemma}{sccs}
  \label{lem:sccs}
  The SCCs of $\idgraph(\ids)$ are transitively closed sets of $\uid$s. Further,
  for any non-trivial SCC $P$, letting $P^{-1} \defeq \{\tau^{-1} \mid \tau \in
  P\}$, all $\uid$s of $P^{-1}$ are in $\ids$, and $P^{-1}$ is an SCC
  of~$\idgraph(\ids)$.
\end{restatable}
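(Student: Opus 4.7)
The plan is to tackle transitive closure first, then use the structure of $\idprec$-cycles within an SCC to connect to the finite closure machinery of~\cite{cosm}. The key observation underlying both claims is that the $\idprec$ relation is ``decoupled'': whether $\sigma \idprec \tau$ holds depends only on the target position of $\sigma$ and the source position of $\tau$. Hence $\uid$s sharing a source have identical sets of incoming $\idprec$-edges, and those sharing a target have identical sets of outgoing $\idprec$-edges.

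For transitive closure, consider $\tau = \ui{R^p}{S^q}$ and $\tau' = \ui{S^q}{T^u}$ both in an SCC $P$, with non-trivial composition $\tau'' = \ui{R^p}{T^u}$ (necessarily in $\ids$ by its transitive closedness). Since $\tau''$ shares source with $\tau$ and target with $\tau'$, any $\idprec$-edge entering $\tau$ can be redirected into $\tau''$, and any $\idprec$-edge leaving $\tau'$ can be redirected to leave from $\tau''$. Combined with the strong connectedness of $P$ (which gives paths $\tau' \idprec^* \tau$ and $\tau \idprec^* \tau$ through $P$), this yields $\tau \idprec^* \tau''$ and $\tau'' \idprec^* \tau$, so $\tau''$ lies in the same SCC, i.e., $\tau'' \in P$. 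In the trivial SCC case, the single element cannot compose with itself to produce a non-trivial $\uid$, so the claim is vacuous.

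For the second claim, fix $\tau \in P$ with $P$ non-trivial and consider a $\idprec$-cycle through $\tau$ (either the self-loop, if $P$ is a singleton with self-loop, or a longer cycle through $P$). Writing this cycle as $\tau_k = \ui{R_k^{i_k}}{R_{k+1}^{j_{k+1}}}$ for $k = 1, \ldots, m$ with $R_{m+1} = R_1$ and $j_{m+1} = j_1$, the $\idprec$-conditions unfold precisely into the ``functional $\incd$ path'' conditions $R_k^{i_k} \to R_k^{j_k} \in \ufds$ of~\cite{cosm}, making this an invertible cycle. Since $\ids$ and $\ufds$ are closed under finite implication, the cycle closure rule of~\cite{cosm} has already added all reverse $\uid$s $\tau_k^{-1}$ to $\ids$ and all reverse transit $\ufd$s $R_k^{j_k} \to R_k^{i_k}$ to $\ufds$. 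In particular, $\tau^{-1} \in \ids$ for every $\tau \in P$, giving $P^{-1} \subseteq \ids$; and since the reverses of transit $\ufd$s are available, whenever $\tau \idprec \tau'$ holds in $P$ we obtain $\tau'^{-1} \idprec \tau^{-1}$, so $P^{-1}$ is strongly connected.

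For maximality of $P^{-1}$ as an SCC, suppose $P^{-1}$ were properly contained in an SCC $Q$ and pick $\sigma \in Q \setminus P^{-1}$. Then $Q$ is itself non-trivial, so the already-established parts of the lemma applied to $Q$ give $\sigma^{-1} \in \ids$ and strong connectedness of $Q^{-1}$; but $P \subseteq Q^{-1}$ forces $\sigma^{-1}$ into the same SCC as any element of $P$, i.e., $\sigma^{-1} \in P$, contradicting $\sigma \notin P^{-1}$. The main obstacle is recognizing $\idprec$-cycles in a non-trivial SCC as exactly the invertible cycles of~\cite{cosm}, which requires careful index bookkeeping to match the two definitions; once this bridge is in place, both claims reduce to direct invocations of the finite closure axioms.
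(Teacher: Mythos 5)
Your proof is correct and follows essentially the same route as the paper's: transitivity is handled by redirecting the first/last edges of a path within the SCC onto the composed $\uid$ (which shares its first position with $\tau$ and its second with $\tau'$), and the claims about $P^{-1}$ come from identifying $\idprec$-cycles in a non-trivial SCC with the invertible cycles of~\cite{cosm} and invoking the cycle closure rule. The only divergence is the maximality of $P^{-1}$, where the paper explicitly concatenates functional $\incd$ paths through the candidate $\uid$ into an invertible cycle and reverses it, whereas you apply the already-established parts to the putative larger SCC $Q$ and contradict the maximality of $P$ via $Q^{-1} \subseteq P$ --- a slightly cleaner bootstrap that is equally valid since those parts hold for arbitrary non-trivial SCCs.
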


Note that $P$ and $P^{-1}$, as SCCs of $\idgraph(\ids)$, may be equal or
disjoint. We accordingly call \deft{self-inverse} an SCC $P$ that is non-trivial
but satisfies $P = P^{-1}$; non-trivial SCCs such that $P$ and $P^{-1}$ are
disjoint are called \deft{non-self-inverse}.

Given the structure of the SCCs, the first step to construct a manageable
partition
is to construct a topological
sort of the SCC graph $\sccgraph(\ids)$ of~$\idgraph(\ids)$, but with an
additional property, motivated by what we showed in Lemma~\ref{lem:sccs}:

\begin{definition}
  A topological sort of~$\sccgraph(\ids)$ is \deft{inverse-sequential} if, for
  any non-self-inverse SCC $P$, the SCCs $P$ and $P^{-1}$ are enumerated consecutively.
\end{definition}

The first result, proven in Appendix~\ref{apx:prf_sortexists},
is to justify that we can indeed construct an inverse-sequential topological
sort of the SCC graph of $\idgraph(\ids)$:

\begin{restatable}[Inverse-sequential topological sort]{proposition}{sortexists}
  \label{prp:sortexists}
  For any conjunction $\ids$ of $\uid$s closed under finite implication, 
  $\sccgraph(\ids)$ has an inverse-sequential topological sort.
\end{restatable}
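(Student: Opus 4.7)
My plan is to reduce the construction of an inverse-sequential topological sort to showing that a certain quotient graph $G$ is a DAG. I would define $G$ from $\sccgraph(\ids)$ by identifying each non-self-inverse SCC $P$ with its inverse $P^{-1}$ (which is again an SCC by Lemma~\ref{lem:sccs}), while leaving trivial and self-inverse SCCs as singletons, and inheriting edges from $\sccgraph(\ids)$. Once $G$ is shown to be acyclic, any topological sort of $G$ refines to an inverse-sequential sort of $\sccgraph(\ids)$: for each two-element equivalence class $\{P, P^{-1}\}$ we list $P$ and $P^{-1}$ consecutively, in the order forced by any direct edge $P \idprec P^{-1}$ when one is present and arbitrarily otherwise.

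The substantive step is therefore to show that $G$ is acyclic. A hypothetical cycle of length $\geq 2$ in $G$ lifts to a sequence of SCCs $R_0, R_1, \ldots, R_m = R_0$ in $\sccgraph(\ids)$ together with at least one ``inversion jump'' $R_i \rightsquigarrow R_i^{-1}$, which is free in the quotient but would otherwise create a forbidden cycle in $\sccgraph(\ids)$. Unfolding, one obtains a functional $\incd$ path $\tau_0 \idprec \tau_1 \idprec \cdots \idprec \tau_n$ in $\ids$ whose endpoints lie in SCCs related by inversion and which passes through intermediate SCCs $R_i$ that are assumed to be outside $\{R_0, R_0^{-1}\}$.

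To derive a contradiction I would exploit that $\ids$ is closed under the cycle closure rule of Cosmadakis et al.~\cite{cosm} (cf.\ Appendix~\ref{apx:cosm}). Each non-trivial SCC on the candidate cycle carries an invertible $\idprec$-cycle through its elements, whose reverse $\uid$s and whose reverse $\ufd$s at the ``middle'' positions are therefore already in $\ids$. By concatenating the functional $\incd$ path $\tau_0 \idprec \cdots \idprec \tau_n$ with such reverse cycles in the endpoint SCCs, one constructs new invertible cycles passing through each intermediate $R_i$; applying the cycle rule to these in turn forces still more reverse $\uid$s and $\ufd$s into $\ids$. One then argues that after finitely many applications, either a genuine cycle appears in $\sccgraph(\ids)$ (contradicting the DAG property) or the intermediate SCC $R_i$ collapses into $R_0$ or $R_0^{-1}$ (contradicting $R_i \notin \{R_0, R_0^{-1}\}$).

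The main obstacle will be this cycle-rule bookkeeping: each invocation adds new $\uid$s and $\ufd$s that must be tracked, and the argument is most naturally organized as an induction on the length of the minimal lifted cycle or on the length $n$ of the realizing $\incd$ path, showing that the collapse or contradiction is always forced in finitely many steps. This parallels the style of argument already needed in Lemma~\ref{lem:sccs} to deduce that $P^{-1}$ is an SCC whenever $P$ is non-trivial.
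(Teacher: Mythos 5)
Your reduction is sound: the existence of an inverse-sequential topological sort is indeed equivalent to acyclicity of the quotient of $\sccgraph(\ids)$ that merges each non-self-inverse SCC $P$ with $P^{-1}$ (self-loops $[P]\to[P]$ coming from a direct edge $P \idprec P^{-1}$ are harmless, since $P \idprec^* P^{-1}$ and $P^{-1} \idprec^* P$ cannot both hold without forcing $P = P^{-1}$). However, the substantive step --- acyclicity of the quotient --- is where your argument has a genuine gap, in two places. First, the ``unfolding'' of a quotient cycle into a single functional $\incd$ path $\tau_0 \idprec \cdots \idprec \tau_n$ only works when the cycle contains exactly one inversion jump: at every jump $R_i$ to $R_i^{-1}$ the lifted walk teleports between two SCCs that need not be connected by $\idprec^*$ at all, so with two or more jumps there is no single path in $\ids$ to reason about, and you have not said how to reduce to the one-jump case. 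Second, even in the one-jump case, what you obtain is merely $R_0 \idprec^* R_0^{-1}$ via intermediate classes, and the proposed contradiction (``apply the cycle rule finitely many times until a cycle appears or $R_i$ collapses'') is not an argument: $\ids$ is already closed under finite implication, so nothing new is ``added'' by further applications, and you never exhibit the concrete invertible cycle whose reverse would force the collapse. In particular, the delicate case of a \emph{trivial} SCC $T$ with $P \idprec T \idprec P^{-1}$ is exactly the configuration your sketch must rule out, and it is not addressed.

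For comparison, the paper does not prove global acyclicity of the quotient directly. It proves a local dichotomy (Lemma~\ref{lem:goodorder}): if $\tau \notin P \cup P^{-1}$ and $\tau \idprec P$, then either $\tau \idprec P^{-1}$ as well, or the SCC of $\tau$ is trivial and every predecessor of $\tau$ already reaches $P^{-1}$ via $\idprec^*$. The proof of that dichotomy is the real combinatorial content you are missing: it takes the predecessor $\tau'_{n-1}$ of the target $\uid$ on an invertible cycle inside $P$, compares positions, and uses transitivity of $\ufds$ and of $\ids$ to either produce the edge $\tau \idprec P^{-1}$ directly or to manufacture a transitivity-derived $\uid$ $\tau''$ that reroutes all predecessors of $\tau$ toward $P^{-1}$. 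The sort is then built greedily, always emitting trivial and self-inverse SCCs first and using the dichotomy to show that once $P$ is emittable, so is $P^{-1}$ immediately afterwards. To salvage your approach you would essentially need to prove this same dichotomy and then derive quotient-acyclicity from it; as written, the proposal identifies the right target but does not supply the argument that reaches it.
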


The second step is to construct the manageable partition itself from the
inverse-sequential topological sort. Here is how we define the ordered partition
from the topological sort:

\begin{definition}
  An inverse-sequential topological sort defines an ordered partition $(P_1,
  \ldots, P_{\neqidsc})$ of~$\ids$, in the following way: each class $P_i$ of
  the partition either corresponds to one SCC of~$\sccgraph(\ids)$ (which is
  either trivial or self-inverse), or to the union of an SCC and its inverse SCC
  (which were enumerated consecutively because the topological
  sort is inverse-sequential).
  It is immediate that $(P_1, \ldots, P_{\neqidsc})$ is indeed an ordered
  partition, as it is constructed from a topological sort by merging some
  classes that were enumerated consecutively.
\end{definition}

The second result
is to show that the resulting ordered partition
is indeed a manageable partition. In other words, we must
show that the classes of the partitions are either trivial, or that they are a
set of $\uid$ that is transitively closed and satisfies assumption
\assm{reversible}.

\begin{restatable}[Manageable partitions from sorts]{proposition}{sorttopart}
  \label{prp:sorttopart}
  For any conjunction $\ids$ of $\uid$s closed under finite implication, letting
  $\mybf{P}$ be an ordered partition obtained from an inverse-sequential topological
  sort of $\sccgraph(\ids)$, $\mybf{P}$ is a manageable partition.
\end{restatable}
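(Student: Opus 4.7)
The plan is to proceed by a three-way case split on the structure of each class $P_i$ in the partition $\mybf{P}$. By construction, $P_i$ is one of: (1) a trivial SCC $\{\tau\}$ with $\tau \not\idprec \tau$; (2) a self-inverse non-trivial SCC $P$ with $P = P^{-1}$; or (3) the union $P \cup P^{-1}$ where $P$ is a non-self-inverse non-trivial SCC, grouped because the sort is inverse-sequential. Case (1) immediately satisfies the definition of trivial. For cases (2) and (3), I must show $P_i$ is reversible, i.e., it is transitively closed and satisfies both clauses of assumption \assm{reversible}.

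For transitive closure: in case (2), this is immediate from the SCC Structure Lemma (Lemma~\ref{lem:sccs}). In case (3), suppose $\ui{R^p}{S^q}$ and $\ui{S^q}{T^r}$ both lie in $P \cup P^{-1}$ with their transitive consequence $\ui{R^p}{T^r}$ (non-trivial) lying in $\ids$. If both are in $P$ or both in $P^{-1}$, Lemma~\ref{lem:sccs} handles it. Otherwise one is in $P$ and one in $P^{-1}$; in that situation I will chain these two UIDs together with the reverses guaranteed by Lemma~\ref{lem:sccs} to produce a $\idprec^*$-cycle containing both the derived UID and an element of $P$ (or $P^{-1}$), forcing the derived UID into the same SCC. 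For the first clause of \assm{reversible}, closure under reversal is immediate: in case (2) since $P = P^{-1}$, and in case (3) since $(P\cup P^{-1})^{-1} = P \cup P^{-1}$, using that Lemma~\ref{lem:sccs} guarantees $P^{-1} \subseteq \ids$.

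The main obstacle is the second clause of \assm{reversible}: whenever positions $R^p, R^q$ both occur in some UIDs of $P_i$ and $R^p \to R^q \in \ufds$, one must show $R^q \to R^p \in \ufds$. Here I will invoke the finite closure characterization of Cosmadakis et al.\ recalled in Appendix~\ref{apx:cosm}: since $\ids$ is closed under finite implication, $\ucon$ is closed under the cycle closure rule. The idea is that positions $R^p$ and $R^q$ appearing in UIDs of a non-trivial SCC $P_i$ lie on some functional $\incd$ cycle of $\ucon$ through $R$. Using that cycle together with the edge $R^p \to R^q$ as a bypass within the $R$-node, I will exhibit an invertible cycle of $\ucon$ that traverses $R$ using the UFD $R^p \to R^q$; applying the cycle closure rule to this invertible cycle then forces the reverse UFD $R^q \to R^p$ into $\ufds$.

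The bulk of the technical work lies in two places. First, in case (3) of transitive closure, I must carefully track how composing a UID from $P$ with one from $P^{-1}$ places the result in the same combined class — this uses both the reverses guaranteed by Lemma~\ref{lem:sccs} and the fact that these reverses provide return paths in $\idgraph(\ids)$. Second, and more substantially, for the UFD clause I must construct an explicit invertible cycle witnessing that $R^q \to R^p$ lies in the finite closure; this requires a careful manipulation of the SCC's functional $\incd$ paths (treating the self-inverse case and the paired case uniformly, since both yield cycles in $\idgraph(\ids)$ that pass through $R^p$ and $R^q$) and an application of the cycle closure rule to the constructed cycle.
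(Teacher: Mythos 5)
Your case split and two of the three ingredients line up with the paper: trivial SCCs give trivial classes; closure under reversal follows from Lemma~\ref{lem:sccs}; and your plan for the $\ufd$ clause --- take an invertible cycle through the two positions (Lemma~\ref{lem:cycleexists}), splice in the $\ufd$ $R^p \rightarrow R^q$ as a bypass to get a new invertible cycle, and apply the cycle closure rule to obtain $R^q \rightarrow R^p$ --- is exactly the paper's Lemma~\ref{lem:fdrev}, up to the case analysis on coinciding positions that you defer to "careful manipulation."

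The gap is in your transitive-closure argument for a class $P \cup P^{-1}$ with $P \neq P^{-1}$. Take $\tau = \ui{R^p}{S^q} \in P$ and $\tau' = \ui{S^q}{T^r} \in P^{-1}$ with non-trivial composite $\tau'' = \ui{R^p}{T^r}$. Since $\idprec$-in-edges of a $\uid$ depend only on its first position and out-edges only on its second, chaining with the reverses from Lemma~\ref{lem:sccs} yields only $P \idprec \tau'' \idprec P^{-1}$ and, symmetrically, $P \idprec (\tau'')^{-1} \idprec P^{-1}$: every path you can build this way runs \emph{from} $P$ \emph{to} $P^{-1}$, and there is no edge between $\tau''$ and $(\tau'')^{-1}$ (they share the relevant position, so the $\idprec$ condition fails). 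Hence no $\idprec^*$-cycle through $\tau''$ and an element of $P$ materializes unless you separately establish $P^{-1} \idprec^* P$, i.e.\ $P = P^{-1}$ --- which contradicts the case you are in. This is precisely where the paper invests its real effort: Lemma~\ref{lem:notrans} shows that the mere existence of such a cross-composition with $\tau^{-1} \neq \tau'$ already forces $P = P^{-1}$, by walking the functional $\incd$ paths $\tau \idprec^* (\tau')^{-1}$ in $P$ and $\tau^{-1} \idprec^* \tau'$ in $P^{-1}$ in parallel, locating their first divergence, and extracting either $P^{-1} \idprec P$ or a smaller instance of the same configuration. With that lemma, case (3) collapses into case (2) (Corollary~\ref{cor:invtrans}); without it, or an equivalent, your argument cannot place $\tau''$ in the class.
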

 
This second result is proven in Appendix~\ref{apx:prf_sorttopart} and
concludes the proof of our original claim.

\mysubsection{Proof of the SCC Structure Lemma (Lemma~\ref{lem:sccs})}
\label{apx:prf_sccs}

\sccs*

We first show an general lemma:

\begin{lemma}
  \label{lem:cycleexists}
  Let $P$ be a non-trivial SCC of~$\idgraph(\ids)$.
  For any $\tau, \tau' \in P$, there is an invertible cycle of~$\uid$s
  of~$P$ in which $\tau$ and $\tau'$ occur. 
\end{lemma}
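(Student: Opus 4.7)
The plan is to convert the SCC structure of $P$ directly into an invertible cycle. First, I establish that for any $\tau, \tau' \in P$ there is a non-empty closed directed walk in $\idgraph(\ids)$ that passes through both $\tau$ and $\tau'$ and uses only $\uid$s of $P$. When $P$ contains two distinct elements, this comes from concatenating a $\idprec^*$-walk from $\tau$ to $\tau'$ with one from $\tau'$ to $\tau$, both guaranteed by the SCC property; any intermediate $\uid$ of such a walk is mutually $\idprec^*$-reachable with $\tau$, and hence already lies in $P$. When $\tau = \tau'$, I either route the walk through some other element $\tau'' \in P$ (possible whenever $P$ is not a singleton), or, if $P = \{\tau\}$, I rely on the self-loop $\tau \idprec \tau$ which must exist because $P$ is non-trivial.

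Second, I interpret the resulting closed walk $\sigma_1 \idprec \sigma_2 \idprec \cdots \idprec \sigma_m$ together with the closing edge $\sigma_m \idprec \sigma_1$ as an invertible cycle. Writing $\sigma_k = \ui{R_k^{i_k}}{R_{k+1}^{j_{k+1}}}$ for $1 \leq k \leq m$ with $R_{m+1} \defeq R_1$, the $\idprec$-condition on each step $\sigma_k \idprec \sigma_{k+1}$ (for $1 \leq k < m$) directly yields $i_{k+1} \neq j_{k+1}$ together with the interior functional dependency $R_{k+1}^{i_{k+1}} \to R_{k+1}^{j_{k+1}} \in \ufds \subseteq \fds$, so the sequence is a functional $\incd$ path. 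The closing step $\sigma_m \idprec \sigma_1$ then supplies exactly $j_{m+1} \neq i_1$ and $R_1^{i_1} \to R_1^{j_{m+1}} \in \ufds$, which under the identifications $n := m+1$ and $j_1 := j_{m+1}$ is precisely the additional requirement of an invertible cycle: $R_n = R_1$, $j_n = j_1$, and $R_1^{i_1} \to R_1^{j_1} \in \fds$. Both $\tau$ and $\tau'$ appear among the $\sigma_k$ by construction, and every $\sigma_k$ lies in $P$, so this is an invertible cycle of $\uid$s of $P$ containing $\tau$ and $\tau'$.

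The main obstacle is essentially bookkeeping: carefully aligning the position-indexing convention of functional $\incd$ paths with the symmetric $\idprec$ definition, and checking that the degenerate cases do not collapse. In particular, when $P = \{\tau\}$ is a singleton with a self-loop the walk has length $m = 1$, so that the closing edge coincides with the only edge and directly produces a one-step invertible cycle; when $\tau = \tau'$ in a non-singleton SCC, routing through a distinct $\tau'' \in P$ avoids a spurious length-zero walk. Nothing in the argument uses the finite-closure hypothesis on $\ids$, so the lemma is a purely combinatorial consequence of the definitions of $\idprec$ and of invertible cycles.
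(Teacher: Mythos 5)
Your proof is correct and follows essentially the same route as the paper's: concatenate the two functional $\incd$ paths witnessing $\tau \idprec^* \tau'$ and $\tau' \idprec^* \tau$, note that every intermediate $\uid$ is mutually reachable with $\tau$ and hence in $P$, and observe that the resulting closed $\idprec$-walk is by definition an invertible cycle. Your treatment is merely more explicit than the paper's three-sentence argument, in particular on the index bookkeeping and on the degenerate cases ($\tau = \tau'$ and the singleton SCC with a self-loop), which the paper leaves implicit.
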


\begin{proof}
  Because $P$ is a non-trivial SCC, we have $\tau \idprec^* \tau'$ and
  $\tau' \idprec^* \tau$, and the desired invertible cycle is obtained by
  concatenating the functional $\incd$ paths from $\tau$ to $\tau'$, and from
  $\tau'$ to $\tau$. Because $P$ is an SCC, it is immediate that the $\uid$s
  of the resulting path are all in $P$.
\end{proof}

We then divide our claim in two lemmas:

\begin{lemma}
  \label{lem:transok}
  Let $P$ be an SCC of~$\idgraph(\ids)$. Then $P$ is closed under the transitivity
  rule.
\end{lemma}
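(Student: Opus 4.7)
The plan is to show that whenever $\tau_1 = \ui{R^p}{S^q}$ and $\tau_2 = \ui{S^q}{T^r}$ are both in $P$, and their transitive consequence $\tau_3 = \ui{R^p}{T^r}$ is non-trivial, then $\tau_3 \in P$. Note that $\tau_3 \in \ids$ already holds since $\ids$ is transitively closed by hypothesis, so the task is purely to verify that $\tau_3$ lies in the same SCC as $\tau_1$ and $\tau_2$, i.e.\ that $\tau_3 \idprec^* \tau_1$ and $\tau_1 \idprec^* \tau_3$.

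Handle the trivial case first: if $P = \{\tau\}$ with $\tau \not\idprec \tau$, then $\tau_1 = \tau_2 = \tau$ forces $R^p = S^q$ and $S^q = T^r$, which makes $\tau_3 = \tau$, so $\tau_3 \in P$. Henceforth assume $P$ is non-trivial, so that a functional $\incd$ path $\pi: \tau_2, \sigma_1, \ldots, \sigma_m, \tau_1$ exists (witnessing $\tau_2 \idprec^* \tau_1$ inside $P$).

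The key observation is a substitution principle: the relation $\sigma \idprec \tau$ depends only on the \emph{starting position} of $\tau$, and the relation $\tau \idprec \sigma$ depends only on the \emph{ending position} of $\tau$. Since $\tau_1$ and $\tau_3$ share the starting position $R^p$, any UID $\sigma$ with $\sigma \idprec \tau_1$ automatically satisfies $\sigma \idprec \tau_3$. Likewise, since $\tau_2$ and $\tau_3$ share the ending position $T^r$, any $\sigma$ with $\tau_2 \idprec \sigma$ automatically satisfies $\tau_3 \idprec \sigma$. Applying these substitutions to the path $\pi$ in two ways yields: (i) replace the terminal $\tau_1$ by $\tau_3$ to get $\tau_2, \sigma_1, \ldots, \sigma_m, \tau_3$, a functional $\incd$ path (the last edge $\sigma_m \idprec \tau_3$ holding by the starting-position principle), so $\tau_2 \idprec^* \tau_3$; (ii) replace the initial $\tau_2$ by $\tau_3$ to get $\tau_3, \sigma_1, \ldots, \sigma_m, \tau_1$, a functional $\incd$ path (the first edge $\tau_3 \idprec \sigma_1$ holding by the ending-position principle), so $\tau_3 \idprec^* \tau_1$. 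Combining with $\tau_1 \idprec^* \tau_2$ (which holds since both are in $P$), we obtain $\tau_1 \idprec^* \tau_3$ and $\tau_3 \idprec^* \tau_1$, so $\tau_3 \in P$.

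The only mild subtlety is verifying the degenerate length cases: $m = 0$ (so $\pi$ reduces to $\tau_2 \idprec \tau_1$ directly, which forces $R = T$, $p \neq r$, and $R^p \to R^r \in \ufds$, making both $\tau_2 \idprec \tau_3$ and $\tau_3 \idprec \tau_1$ hold from the same conditions), and the case where $\tau_2 = \tau_1$ (which would force $\tau_3$ to be trivial and is excluded). I expect no genuine obstacle here — the whole argument rests on unpacking the definition of $\idprec$ and observing that its two ``sides'' are independent, so the SCC membership of $\tau_3$ follows by splicing into the cycle that already witnesses $\tau_1, \tau_2 \in P$.
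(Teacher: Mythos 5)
Your proof is correct and follows essentially the same route as the paper's: both take the functional $\incd$ path witnessing $\tau_2 \idprec^* \tau_1$ inside the non-trivial SCC and splice $\tau_3$ in at either end, using the observation that $\sigma \idprec \tau$ depends only on the second position of $\sigma$ and the first position of $\tau$. The degenerate cases you flag are handled (or dismissed as vacuous) at the same level of detail as in the paper.
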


\begin{proof}
  Let $P$ be an SCC.
  If $P$ consists of a single $\uid$, then transitivity is immediately
  respected, so we assume that $P$ contains $>1$ $\uid$s. In particular, $P$ is
  non-trivial.
  Let $\tau: \ui{R^p}{S^q}$ and $\tau': \ui{S^q}{T^r}$ be
  two $\uid$s of~$P$ with $R^p \neq T^r$. As $\ids$ is closed under transitivity, we know
  $\tau'' : \ui{R^p}{T^r}$ is in $\ids$. We show that $\tau'' \in P$.

  As $P$ is a non-trivial SCC, there is a functional
  $\incd$ path $\tau' = \tau_1 \idprec \cdots \idprec \tau_n
  = \tau$, where $\tau_i \in P$ for all $1 \leq i \leq n$. Because of the $\ufd$s that must
  be in $\ufds$ to make it a functional $\incd$ path, it is immediate that
  the following two paths are functional $\incd$ paths as well:
  $\tau'' \idprec \tau_2 \idprec \cdots \idprec \tau_n$ and $\tau_1 \idprec
  \cdots \idprec \tau_{m-1} \idprec
  \tau''$.
Thus we have $\tau''
  \idprec^* \tau$, and $\tau' \idprec^* \tau''$ where $\tau, \tau' \in P$, so
  that $\tau'' \in P$ by definition of an SCC.
\end{proof}

\begin{lemma}
  \label{lem:invscc}
  Let $P$ be a non-trivial SCC of~$\idgraph(\ids)$, and let $P^{-1}
  \defeq \{\tau^{-1} \mid \tau \in P\}$. Then $P^{-1} \subseteq \ids$, and
  $P^{-1}$ is an SCC of~$\idgraph(\ids)$.
\end{lemma}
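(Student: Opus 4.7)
The plan is to prove the two assertions separately, both leaning on Lemma~\ref{lem:cycleexists} combined with the closure of $\ids$ under the cycle closure rule of Appendix~\ref{apx:cosm}.

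First, to show $P^{-1} \subseteq \ids$, I will take any $\tau \in P$ and apply Lemma~\ref{lem:cycleexists} with $\tau' = \tau$ to place $\tau$ on an invertible cycle $C$ of $\uid$s all in $P$ (this uses non-triviality of $P$). The cycle closure rule applied to $C$ then forces $\tau^{-1}$ into $\ids$. Crucially, the same rule also forces the reverses of all $\ufd$s appearing along $C$ into $\ufds$; this will be the key ingredient for the second part.

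Second, to show $P^{-1}$ is an SCC, I will first establish that all of $P^{-1}$ lies in a common SCC $Q$ of $\idgraph(\ids)$. Given $\tau_1, \tau_2 \in P$, Lemma~\ref{lem:cycleexists} yields an invertible cycle $C$ through both, hence a functional $\incd$ path $\tau_1 = \sigma_1 \idprec \cdots \idprec \sigma_n = \tau_2$ with every $\sigma_i \in P$. Applying the first part gives $\sigma_i^{-1} \in \ids$ for every $i$, and the cycle closure rule applied to $C$ makes each reversed intermediate $\ufd$ appear in $\ufds$. This is precisely what is needed so that $\tau_2^{-1} = \sigma_n^{-1} \idprec \sigma_{n-1}^{-1} \idprec \cdots \idprec \sigma_1^{-1} = \tau_1^{-1}$ is a genuine functional $\incd$ path, yielding $\tau_2^{-1} \idprec^* \tau_1^{-1}$ and, by symmetry, $\tau_1^{-1} \idprec^* \tau_2^{-1}$. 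So all of $P^{-1}$ lies in a single SCC $Q$.

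For the reverse inclusion $Q \subseteq P^{-1}$, I take any $\sigma \in Q$. If $|Q| = 1$ then $P^{-1} = Q = \{\sigma\}$ and there is nothing to prove; otherwise $Q$ is itself non-trivial, so the first part applied to $Q$ gives $\sigma^{-1} \in \ids$, and rerunning the previous paragraph for $Q$ shows that $\sigma^{-1}$ lies in the same SCC as $(\tau_1^{-1})^{-1} = \tau_1 \in P$, which is $P$ itself. Hence $\sigma^{-1} \in P$, i.e., $\sigma \in P^{-1}$. The main obstacle is checking that the reversed sequence of $\uid$s really is a functional $\incd$ path: reversing an edge $\sigma_i \idprec \sigma_{i+1}$ swaps the roles of the ``right position of $\sigma_i$'' and the ``left position of $\sigma_{i+1}$'', so the $\ufd$ needed to certify the reversed edge is the \emph{reverse} of the one on the original path, and one must verify that this reversed $\ufd$ is exactly what the cycle closure rule supplies through the invertible cycle $C$.
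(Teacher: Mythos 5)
Your proof is correct and follows essentially the same route as the paper's: both halves rest on Lemma~\ref{lem:cycleexists} together with the cycle closure rule of the finite closure, which is exactly what lets you reverse an invertible cycle of~$P$ (both its $\uid$s and the certifying $\ufd$s) into a functional $\incd$ path among the $\uid$s of~$P^{-1}$. The only cosmetic difference is in the maximality step, where the paper concatenates three functional $\incd$ paths through a candidate $\tau$ into a single invertible cycle and reverses it, whereas you run the same reversal argument on the ambient SCC $Q \supseteq P^{-1}$ and then invoke maximality of~$P$; the two arguments are interchangeable.
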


\begin{proof}
  We first prove that, for any $\tau \in P$, $\tau^{-1} \in \ids$. This is a
  direct consequence of Lemma~\ref{lem:cycleexists}: there is an invertible
  cycle of~$P$ containing $\tau$, so that by definition of an invertible cycle,
  $\tau^{-1}$ is in $\ids$. We now turn to the second part of the claim.
  
  First, we show that for any two
  $\tau, \tau' \in P^{-1}$, there is a functional $\incd$ path from $\tau$ to
  $\tau'$, so that $P^{-1}$ is strongly connected. This is clear: by
  Lemma~\ref{lem:cycleexists}, there exists an invertible cycle $C$ of~$P$
  containing $\tau^{-1}$ and $(\tau')^{-1} \in P$, and the reverse $\overline{C}$ of
  this cycle is also an invertible cycle, because $\ucon$ is finitely closed;
  $\overline{C}$ is then 
  a cycle of~$\uid$s of~$P^{-1}$ containing $\tau$ and $\tau'$.
  
  Second, we show that for any
  $\uid$ $\tau \in \ids$, if $P^{-1} \idprec^* \tau$ and $\tau \idprec^* P^{-1}$
  then $\tau \in P^{-1}$.
  Consider such a $\uid$ $\tau$,
  and let $p_1: \tau' =
  \tau'_1 \idprec \cdots \idprec \tau'_n = \tau$
  and $p_2 : \tau = \tau''_1 \idprec \cdots \idprec \tau''_m =
  \tau''$ be the witnessing functional $\incd$ paths,
  with $\tau', \tau'' \in P^{-1}$. We showed in the previous paragraph that $P^{-1}$ is
  strongly connected: consider a (possibly empty) functional $\incd$ path $p_3$
  from $\tau''$ to $\tau'$ witnessing the fact that $\tau'' \idprec^* \tau$.
  Concatenating $p_1$, $p_2$ and $p_3$ yields an invertible cycle $C$, so that
  because $\ucon$ is finitely closed, its reverse $\overline{C}$ is also an invertible
  cycle. But $\overline{C}$ witnesses the fact that $(\tau'')^{-1} \idprec^*
  \tau^{-1}$ and $\tau^{-1} \idprec^* (\tau')^{-1}$. Now, as $(\tau')^{-1},
  (\tau'')^{-1} \in P$ and $P$ is an SCC, we have $\tau^{-1} \in P$, so that
  $\tau \in P^{-1}$, the desired claim. Hence, $P^{-1}$ is both strongly
  connected and maximal, so it is an SCC.
\end{proof}

This concludes the proof.
Note that, as $P$ and $P^{-1}$ are both SCCs of~$\idgraph(\ids)$, either they
are equal or they are disjoint. We observe that both cases may occur:

\begin{example}
  \label{exa:sccs}
  Consider the $\uid$s $\tau: \ui{R^2}{S^2}$ and $\tau': \ui{S^1}{R^1}$, and the
  $\ufd$s $\phi: R^2 \rightarrow R^1$ and $\phi': S^1 \rightarrow S^2$. $\tau,
  \tau'$ is an invertible cycle, so that by the finite closure rule, the $\uid$s
  $\tau^{-1}$ and $(\tau')^{-1}$ and the reverse $\ufd$s are implied. However in
  $\idgraph(\ids)$ we have $\tau \idprec \tau'$, $\tau' \idprec \tau$, 
  $\tau^{-1} \idprec (\tau')^{-1}$, $(\tau')^{-1} \idprec \tau^{-1}$, so that
  $\{\tau, \tau'\}$ and $\{\tau^{-1}, (\tau')^{-1}\}$ are two disjoint SCCs.

  Consider now the $\uid$s $\tau: \ui{R^2}{S^2}$, $\tau^{-1} : \ui{S^2}{R^2}$,
    $\tau': \ui{R^1}{R^3}$, $\tau'': \ui{S^3}{S^1}$, and the $\ufd$s $R^1
    \rightarrow R^2$, $R^2 \rightarrow R^3$, $S^3 \rightarrow S^2$ and $S^2
    \rightarrow S^1$. We can construct the invertible cycles $\tau'$ and
    $\tau''$, so that $(\tau')^{-1}$ and $(\tau'')^{-1}$ are implied
    by the finite closure rule.
    However, besides $\tau' \idprec \tau'$, $\tau'' \idprec
    \tau''$, $(\tau')^{-1} \idprec (\tau')^{-1}$, $(\tau'')^{-1} \idprec
    (\tau'')^{-1}$, it is also the case that $\tau \idprec \tau''$, $\tau''
    \idprec \tau^{-1}$, $\tau^{-1} \idprec \tau'$ and $\tau' \idprec \tau$, and
    using the reverse $\ufd$s the same is true of the inverses of~$\tau'$,
    $(\tau')^{-1}$, $\tau''$, and $(\tau'')^{-1}$. So in fact there is only one
    SCC $P = \{\tau, \tau^{-1}, \tau', (\tau')^{-1}, \tau'', (\tau'')^{-1}\}$,
    with $P^{-1} = P$.
\end{example}

\mysubsection{Proof of the Inverse-Sequential Topological Sort Proposition
(Proposition~\ref{prp:sortexists})}
\label{apx:prf_sortexists}

We now prove that $\sccgraph(\ids)$
has an inverse-sequential topological sort:

\sortexists*

For this we need the following observation about $\sccgraph(\ids)$:

\begin{lemma}
  \label{lem:goodorder}
  Let $P$ be a non-self-inverse SCC and consider $\tau \in
  \ids \backslash (P \cup P^{-1})$ such that $\tau \idprec P$. Then one of the following holds:
  \begin{itemize}
    \item we have $\tau \idprec P^{-1}$
    \item the SCC of~$\tau$ is trivial, and for any
      $\tau_\p \in \ids$ such that $\tau_p \idprec \tau$, we have
      $\tau_\p \idprec^* P^{-1}$.
  \end{itemize}
\end{lemma}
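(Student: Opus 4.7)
The plan is to prove the contrapositive: assuming $\tau \not\idprec P^{-1}$, I show both that the SCC of $\tau$ is trivial and that every $\tau_\p \idprec \tau$ satisfies $\tau_\p \idprec^* P^{-1}$. Write $\tau : \ui{R^p}{S^q}$ and fix $\tau' : \ui{S^r}{T^u} \in P$ with $\tau \idprec \tau'$, so $S^r \rightarrow S^q \in \ufds$. Since $P$ is non-self-inverse it is non-trivial, so by Lemma~\ref{lem:cycleexists}, $\tau'$ lies on an invertible cycle in $P$; let $\sigma^* : \ui{Y^y}{S^{r_\sigma^*}}$ be its predecessor in that cycle and $\sigma^{**}$ be the predecessor of $\sigma^*$, ending at $Y^{y^*}$.

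First I would extract the key UFDs. The functional path condition at $S$ yields $S^r \rightarrow S^{r_\sigma^*}$, and the cycle closure rule (finite implication) gives the reverse $S^{r_\sigma^*} \rightarrow S^r$; chaining with $S^r \rightarrow S^q$, we obtain $S^{r_\sigma^*} \rightarrow S^q$. If $r_\sigma^* \neq q$ then $\tau \idprec (\sigma^*)^{-1} \in P^{-1}$, contradicting the assumption; hence $r_\sigma^* = q$, so $\sigma^*$ ends at $S^q$. Analogously at position $Y$, cycle closure gives $Y^{y^*} \rightarrow Y^y \in \ufds$ with $y \neq y^*$.

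For triviality of $\tau$'s SCC, suppose toward contradiction $\tau \in Q$ with $Q$ non-trivial and $Q \neq P$. Let $\sigma \in Q$ be the successor of $\tau$ in $Q$'s invertible cycle, so $\sigma$ starts at $S^{r_\sigma}$ with $r_\sigma \neq q$ and $S^{r_\sigma} \rightarrow S^q$. Since $\sigma^* \in P$ ends at $S^q$, this immediately gives $\sigma^* \idprec \sigma$, hence an edge $P \to Q$ in $\sccgraph(\ids)$; combined with $Q \to P$ from $\tau \idprec \tau'$, this forces $P = Q$, contradicting $\tau \notin P$.

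Finally, given $\tau_\p \idprec \tau$, I invoke UID transitivity: because $(\sigma^*)^{-1} : \ui{S^q}{Y^y} \in \ids$ and $\tau$ ends at $S^q$, the closure of $\ids$ contains $\tau^\dagger \defeq \ui{R^p}{Y^y}$. Then $\tau_\p \idprec \tau^\dagger$ follows directly from the same UFD witnessing $\tau_\p \idprec \tau$, and $\tau^\dagger \idprec (\sigma^{**})^{-1} \in P^{-1}$ follows from $Y^{y^*} \rightarrow Y^y$ and $y \neq y^*$; hence $\tau_\p \idprec^* P^{-1}$. The main delicate points will be degenerate invertible cycles (e.g., self-loops in $\idgraph(\ids)$ where $\sigma^*$, $\sigma^{**}$, and $\tau'$ coincide) and checking that the transitive UID $\tau^\dagger$ is non-trivial, which should hold since any collapse forcing $\tau^\dagger$ trivial would also force $\tau \in P$, contradicting the hypothesis.
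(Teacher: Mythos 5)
Your proof is correct and follows essentially the same route as the paper's: your $\sigma^*$, $\sigma^{**}$, and $\tau^\dagger$ are exactly the paper's $\tau'_{n-1}$, $\tau'_{n-2}$, and $\tau''$, the dichotomy on $r_{\sigma^*}$ versus $q$ is the paper's case split on whether the second positions of $\tau$ and $\tau'_{n-1}$ coincide, and the triviality and $\tau_\p \idprec \tau^\dagger \idprec P^{-1}$ arguments match. The two delicate points you flag are resolved just as you suggest (a collapse making $\tau^\dagger$ trivial would force $\tau = \sigma^* \in P$, and length-one cycles cause no harm since Lemma~\ref{lem:cycleexists} still supplies well-defined predecessors).
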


\begin{proof}
  Fix $\tau \in \ids \backslash (P \cup P^{-1})$ and assume
  that we have $\tau \idprec P$, i.e., $\tau \idprec \tau'$ for some $\tau' \in P$.
  As $P$ is non-trivial,
  using Lemma~\ref{lem:cycleexists}, consider the predecessor $\tau'_{n-1} $ of
  $\tau'$ in an invertible cycle containing $\tau'$ (possibly $\tau'_{n-1} =
  \tau'$). Let $R^p$ be the second
  position of~$\tau$, $R^q$ be the first position of~$\tau'$, and $R^r$ be the
  second position of~$\tau'_{n-1}$. Note that we have $R^r
  \neq R^q$ because $\tau'_{n-1} \idprec \tau'$, and $R^p
  \neq R^q$ because $\tau \idprec \tau'$.
  Observe that if $R^p \neq R^r$, then $\tau \idprec (\tau'_{n-1})^{-1}$
  because $R^r \rightarrow R^q$ and $R^q \rightarrow R^p$ hold in $\ufds$ (as
  these $\ufd$s are used in an invertible cycle) and
  $\ufds$ is closed under transitivity. This proves the claim, as taking $\tau''
  \defeq
  (\tau'_{n-1})^{-1} \in P^{-1}$, we have $\tau \idprec \tau''$.

  If $R^p = R^r$, let $P'$ be the SCC of~$\tau$. Assume first that $P'$ is
  non-trivial. In this case, by Lemma~\ref{lem:cycleexists}, there is an
  invertible cycle $\tau = \tau_1, \ldots, \tau_m
  = \tau$ in $P'$. But then, we have $\tau_{n-1}' \idprec \tau_2$, so that $P
  \idprec P'$, and as $P' \idprec P$ we have $P = P'$, so $\tau \in P$, a
  contradiction.
  
  Hence, $P'$ is trivial. Let $S^q$ be the first position of~$\tau$ and
  $T^u$ be the first position of~$\tau'_{n-1}$. We must have $S^q \neq T^u$, as
  otherwise we have $\tau = \tau'_{n-1}$ so $\tau \in P$, a contradiction.
  Hence, because $(\tau'_{n-1})^{-1}$ is in $\ids$ (as $\tau'_{n-1} \in P$), by
  transitivity $\tau'': \ui{S^q}{T^u}$ is in $\ids$.
  We can then see that $\tau'' \idprec (\tau'_{n-2})^{-1}$
  because we had $(\tau'_{n-1})^{-1} \idprec (\tau'_{n-2})^{-1}$ and both
  $\uid$s share the same second position; hence, $\tau'' \idprec
  P^{-1}$.
  Now as $\tau''$ and $\tau$
  have the same first position, for any $\tau_\p \in \ids$, clearly $\tau_p
  \idprec \tau$ implies that $\tau_\p \idprec \tau'' \idprec P^{-1}$, proving the last part of the claim.
\end{proof}

We now construct the inverse-sequential topological sort of~$\sccgraph(\ids)$ by enumerating
the SCCs in a certain way that respects the $\idprec$ relation and maintains the
following invariant: whenever $P$ is non-self-inverse,
then $P$ and $P^{-1}$ are enumerated consecutively; this guarantees that the
result is a topological sort and that it is inverse-sequential.

First, whenever trivial or self-inverse SCCs can be enumerated,
enumerate them. Second, whenever the SCCs that can be enumerated are
all non-self-inverse,
choose one such~$P$ to enumerate.
By the invariant, $P^{-1}$ has not yet been enumerated,
otherwise $P$ would have been enumerated immediately after.
We want to enumerate $P$, and then enumerate $P^{-1}$.

To see why this is doable, we must show that, assuming that $P \neq P^{-1}$, if
$P$ can be enumerated and no trivial or self-inverse SCCs can be enumerated, then $P^{-1}$ can
also be enumerated.
Let $P'$ be
a parent SCC of~$P^{-1}$ in $\sccgraph(\ids)$ (so that $P' \idprec P^{-1}$), and show that it
has been enumerated already.
If we have $P' = P$, meaning that $P \idprec P^{-1}$, then this is not a
problem, because we are about to enumerate first $P$ and then $P^{-1}$, so we
may assume that $P' \neq P$. Hence, $P'$ is different from~$P$ and~$P^{-1}$, so
it is disjoint from it. We apply Lemma~\ref{lem:goodorder} to any $\tau \in P'$.
In the first case, we also have $P' \idprec P$, so as $P$ can be enumerated,
$P'$ was enumerated already.
In the second case, $P' = \{\tau\}$ is trivial; further, considering any $P''
\idprec P'$, we have $P'' \idprec^*
P$, so $P''$ was enumerated already. Hence, all such $P''$ are already
enumerated, so that $P'$ can be enumerated, but as it is trivial, it must have been
enumerated already. Hence, in both cases $P'$ was already enumerated unless it
is~$P$.
This ensures that we can indeed enumerate $P$ and~$P^{-1}$ consecutively,
maintaining our invariant.
Thus, we have constructed an inverse-sequential topological sort
of~$\sccgraph(\ids)$. This concludes the proof.

\mysubsection{Proof of the Manageable Partitions From Sorts Proposition
(Proposition~\ref{prp:sorttopart})}
\label{apx:prf_sorttopart}

\sorttopart*

Let $(P_1, \ldots, P_{\neqidsc})$ be the ordered partition. We 
prove that it is manageable.
Trivial SCCs are indeed trivial classes of the partition, so we must only
justify that any other class $P_i$ is transitively closed and satisfies
assumption \assm{reversible}.

We define $\positions(P)$ for $P$ a set of $\uid$s as the set of positions
occurring in~$P$, as in the definition of assumption \assm{reversible}. We first
prove a general lemma to take care of the second part of the assumption:

\begin{lemma}
  \label{lem:fdrev}
  Let $P$ be a non-trivial SCC of~$\idgraph(\ids)$.
  For any two positions $R^i \neq R^j$ of $\positions(P)$, if $R^i \rightarrow
  R^j$ is in $\ufds$ then so is $R^j \rightarrow R^i$.
\end{lemma}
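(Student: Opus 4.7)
The plan is to exploit the finite-closure hypothesis on $\ucon$ via the cycle rule of Cosmadakis et al.~\cite{cosm}: it suffices to exhibit, inside $\ids$, an invertible cycle whose wrap-around junction is governed by the given $\ufd$ $R^i \to R^j$, because the cycle rule then automatically forces the reverse $R^j \to R^i$ into $\ufds$.

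First I would establish a sub-lemma handling the ``main case'': if $R^i$ is the first position of some $\alpha \in P$ and $R^j$ is the second position of some $\beta \in P$, then $\beta \idprec \alpha$ is witnessed precisely by the $\ufd$ $R^i \to R^j$. Since $P$ is a non-trivial SCC, Lemma~\ref{lem:cycleexists} (applied to $\alpha$ and $\beta$) yields a functional $\incd$ path from $\alpha$ back to $\beta$ inside $P$. Splicing this path with the single edge $\beta \idprec \alpha$ closes into an invertible cycle in $\ids$ whose distinguished wrap-around junction is exactly the $\ufd$ $R^i \to R^j$; the cycle rule then yields $R^j \to R^i \in \ufds$. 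The self-loop case $P = \{\ui{R^i}{R^j}\}$ is subsumed: the path is trivial and the ``cycle'' is the length-one cycle $\tau$ itself.

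Second, I would reduce the general case to this sub-lemma by a case analysis on where $R^i$ and $R^j$ occur in $P$. Write $O(P)$ (resp.\ $I(P)$) for the set of positions occurring as first (resp.\ second) positions of $\uid$s in $P$. The sub-lemma applies directly when $R^i \in O(P)$ and $R^j \in I(P)$. If instead $R^j \in \positions(P) \setminus I(P)$, then $R^j \in O(P)$, and I pick any $\beta \in P$ starting at $R^j$ together with a $\idprec$-predecessor $\beta_p \in P$ (which exists by strong connectivity of $P$); then $\beta_p$ ends at some $R^c \in I(P)$ with $R^c \neq R^j$ and $R^j \to R^c \in \ufds$. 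By Armstrong transitivity $R^i \to R^c$, and the sub-lemma (or, if $R^c = R^i$, the fact itself) yields $R^c \to R^i$; combining with $R^j \to R^c$ gives $R^j \to R^i$. A symmetric workaround using a successor in $P$ of a $\uid$ ending at $R^i$ handles $R^i \notin O(P)$, and the two workarounds compose in the doubly-degenerate case.

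The only real obstacle is handling these degenerate scenarios, which are genuine when $P$ is non-self-inverse: Lemma~\ref{lem:invscc} places the inverse of a $\uid$ in $P$ into $\ids$ but possibly into $P^{-1}$ rather than $P$, so one cannot simply pass to $\alpha^{-1}$ inside $P$ to swap the side on which $R^i$ (or $R^j$) appears. The workaround sidesteps this by jumping to an adjacent position $R^c$ (or $R^d$) via a $\idprec$-edge that does stay in $P$. Once the sub-lemma is established, everything else is transitivity of $\to$ and case bookkeeping.
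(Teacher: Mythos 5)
Your proof is correct, and it reaches the paper's conclusion by the same underlying mechanism -- exhibit an invertible cycle of $\ids$ whose wrap-around junction is governed by (an $\fd$ equivalent to) $R^i \to R^j$, then invoke the cycle rule of the finite closure -- but the way you build that cycle and handle the position bookkeeping is genuinely different. The paper starts from Lemma~\ref{lem:cycleexists} applied to \emph{any} two $\uid$s of $P$ containing $R^i$ and $R^j$, observes that the two junction positions at each $R$-junction of that cycle are $\eqfun$-equivalent (because junction $\ufd$s of an invertible cycle are reversible under finite closure), and then \emph{splices} the existing cycle at its two $R$-junctions, using the derived $\ufd$ $R^{i_p}\to R^{j_q}$ as the new junction. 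The $\eqfun$-equivalence of junction pairs is what lets the paper be agnostic about whether $R^i$ and $R^j$ occur as first or second positions of their $\uid$s, so no case analysis on $O(P)$ versus $I(P)$ is needed; a separate short argument disposes of the case where two of the four junction positions coincide. Your route instead notices that the hypothesis $R^i \to R^j$ itself creates the edge $\beta \idprec \alpha$ and closes it into a cycle with an SCC path, which makes the main case cleaner but forces you to do the predecessor/successor hops when $R^i \notin O(P)$ or $R^j \notin I(P)$. Both your sub-lemma and your reductions check out (the needed $\idprec$-predecessor/successor of a $\uid$ in a non-trivial SCC does lie in $P$ by maximality, and the doubly-degenerate composition closes via transitivity of $\ufds$), and your remark that one cannot simply replace $\alpha$ by $\alpha^{-1}$ because the inverse may live in $P^{-1}$ rather than $P$ correctly identifies why the degenerate cases are real.
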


\begin{proof}
  Fix $R^i$ and $R^j$, assume that $\phi : R^i \rightarrow R^j$ is in
  $\ufds$, and show that $\phi^{-1} : R^j \rightarrow R^i$ also is.
  Let $\tau_i$ be a $\uid$ of~$P$ where $R^i$ occurs, and $\tau_j$ be a $\uid$
  of~$P$ where $R^j$ occurs. By Lemma~\ref{lem:cycleexists}, there exists an
  invertible cycle $C_1$ where $R^i$ and $R^j$ occur.
 
  We write $C_1 = \ui{R_1^{i_1}}{R_2^{j_2}}, \ldots, \ui{R_n^{i_n}}{R_1^{j_1}}$,
  with some $1 \leq p, q \leq n$ such that $R_p =
  R_q = R$, and either $i_p = i$ or $j_p = i$, and either $i_q = j$ or $j_q =
  j$. By definition of an invertible cycle, the $\ufd$s $\phi_p : R^{i_p} \rightarrow
  R^{j_p}$, $\phi_p^{-1} : R^{j_p} \rightarrow R^{i_p}$, $\phi_q : R^{i_q} \rightarrow R^{j_q}$ and
  $\phi_q^{-1} : R^{j_q} \rightarrow R^{i_q}$ are in $\ufds$. Thus, because $\ufds$ is closed under
  transitivity, it is clear that if two positions among $S = (R^{j_p}, R^{i_p},
  R^{j_q}, R^{i_q})$ are equal (in particular, if $p = q$), then we have $R^x
  \eqfun R^y$ for any two positions $R^x$, $R^y$ in~$S$. Hence, as we know that $R^i
  \neq R^j$, and $R^i$ and $R^j$ are in~$S$, 
  the only case where we cannot conclude is the one where
  all the positions of~$S$ are different.

  If all positions of~$S$ are different, then, because of~$\phi_p$, $\phi_q$,
  $\phi_p^{-1}$ and $\phi_q^{-1}$, by transitivity of~$\ufds$, we know that for
  any $x_1, x_2
  \in \{i_p, j_p\}$, $y_1, y_2 \in \{i_q, j_q\}$, the $\ufd$ $R^{x_1} \rightarrow
  R^{y_1}$ is in $\ufds$ iff the $\ufd$ $R^{x_2} \rightarrow R^{y_2}$ is. Hence,
  since $\phi$ is in $\ufds$, as $i \in \{i_p, j_p\}$
  and $j \in \{i_q, j_q\}$, we know that $R^x \rightarrow R^y$ is in $\ufds$
  for \emph{all} $x \in \{i_p, j_p\}, y \in \{i_q, j_q\}$, and,
  to prove that $\phi^{-1}$ is in~$\ufds$,
  it suffices to show that $R^y \rightarrow R^x$ is in
  $\ufds$ for \emph{some} $x \in \{i_p, j_p\}, y \in \{i_q, j_q\}$.

  So let us construct the cycle $C_2 = \ui{R_1^{i_1}}{R_2^{j_2}}, \ldots,
  \ui{R_{q-1}^{i_{q-1}}}{R_q^{j_q}}, \ui{R_p^{i_p}}{R_{p+1}^{j_{p+1}}}, \allowbreak \ldots,
  \ui{R_n^{i_n}}{R_1^{j_1}}$. This is an invertible cycle, because $R_q = R_p = R$, and
  $R^{i_p} \neq R^{j_q}$ and the $\fd$ $R^{i_p} \rightarrow R^{j_q}$ is in
  $\ufds$ by
  our assumption. Hence, as $C_2$ is an invertible cycle, and because $\ucon$ is
  finitely closed, the reverse $\fd$ $R^{j_q} \rightarrow R^{i_p}$ is in
  $\ufds$, which implies that $\phi^{-1}$ is in $\ufds$.
\end{proof}

We then show a lemma to help justify that the classes are transitively closed:

\begin{lemma}
  \label{lem:notrans}
  For any non-trivial SCC $P$, if there is $\tau \in P$ and $\tau' \in P^{-1}$
  such that $\tau^{-1} \neq \tau'$ but the second position of~$\tau$ is the
  first position of~$\tau'$, then $P = P^{-1}$.
\end{lemma}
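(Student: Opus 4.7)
The plan is to prove that the two $\uid$s $\tau \in P$ and $\tau' \in P^{-1}$ can be composed through their shared meeting position, and that the resulting composite $\uid$ forces $\tau$ and $\tau^{-1}$ into the same SCC. Write $\tau: \ui{R^p}{S^q}$ and $\tau': \ui{S^q}{T^r}$, so that $\tau^{-1} \neq \tau'$ is equivalent to $R^p \neq T^r$. Since $\ids$ is transitively closed, the non-triviality of $R^p$ versus $T^r$ lets us form $\rho \defeq \tau \cdot \tau' : \ui{R^p}{T^r}$ and, symmetrically, $\rho^{-1} = (\tau')^{-1} \cdot \tau^{-1} : \ui{T^r}{R^p}$, both of which lie in $\ids$. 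Note that $\tau \in P$ and $(\tau')^{-1} \in P$, while $\tau^{-1}, \tau' \in P^{-1}$, because $\tau' \in P^{-1}$.

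Next, I invoke Lemma~\ref{lem:cycleexists} to extract an invertible cycle $C$ in $P$ passing through both $\tau$ and $(\tau')^{-1}$; write it as $\tau_1 \idprec \cdots \idprec \tau_m$ with $\tau_a = \tau$ and $\tau_b = (\tau')^{-1}$ (and the wrap-around edge from $\tau_m$ to $\tau_1$). Because $\ucon$ is closed under finite implication, invertibility yields the reverse $\ufds$ on each position of $C$ and puts the reverse cycle $C^{-1}$ entirely in $P^{-1}$. In particular the $\ufds$ $R^p \to R_a^{j_a}$ (from $\tau_{a-1}\idprec\tau_a$) and $T^r \to T_b^{j_b}$ (from $\tau_{b-1}\idprec\tau_b$) are available, together with their reverses.

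I then use $\rho$ as a bridge in the SCC graph: the $\ufd$ $R^p \to R_a^{j_a}$ gives $\tau_{a-1}\idprec\rho$, and $T^r \to T_b^{j_b}$ gives $\rho \idprec \tau_{b-1}^{-1}$. Chaining with $\tau \idprec^*_P \tau_{a-1}$ inside $C$ and $\tau_{b-1}^{-1} \idprec^*_{P^{-1}} \tau^{-1}$ inside $C^{-1}$ produces $\tau \idprec^* \tau^{-1}$ in $\idgraph(\ids)$. For the converse direction $\tau^{-1} \idprec^* \tau$ I use $\rho^{-1}$: the dual invocation of the same $\ufds$ yields $\tau_{b-1}\idprec\rho^{-1}$ and $\rho^{-1}\idprec \tau_{a-1}^{-1}$, and one then re-applies the same bridge construction to the companion pair $((\tau')^{-1},\tau^{-1})$, which satisfies the hypothesis of the lemma (with $(\tau')^{-1}\in P$, $\tau^{-1}\in P^{-1}$, meeting at $S^q$, and $((\tau')^{-1})^{-1}=\tau'\neq\tau^{-1}$ since $R^p\neq T^r$). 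The two bridges together with the SCC cycles $C$ and $C^{-1}$ close up into an $\idprec^*$-loop through $\tau$ and $\tau^{-1}$, so $\tau^{-1}$ lies in the SCC of $\tau$, which is exactly $P = P^{-1}$.

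The main obstacle is step 4, obtaining $\tau^{-1} \idprec^* \tau$: a naive dualisation of the $\rho$-bridge still runs $P \to P^{-1}$ rather than $P^{-1}\to P$, because $\rho^{-1}$'s predecessors in $\idgraph$ again sit in $P$ (via $\tau_{b-1}$) and its successors in $P^{-1}$ (via $\tau_{a-1}^{-1}$). The fix is to combine the bridge via $\rho^{-1}$ with the companion application of the lemma's own hypothesis to $((\tau')^{-1},\tau^{-1})$, using transitivity of $\ids$ and the two-way $\ufds$ on the invertible cycle to reconcile the directions; this is where the finite-closure assumption on $\ucon$ is essential, and where the proof actually does its work.
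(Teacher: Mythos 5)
Your setup is fine and your first direction, $P \idprec^* P^{-1}$, is correct: the composite $\rho: \ui{R^p}{T^r}$ is non-trivial (since $\tau^{-1}\neq\tau'$ gives $R^p\neq T^r$), lies in the transitively closed $\ids$, and bridges a predecessor of $\tau$ in $P$ to a successor of $\tau'$ in $P^{-1}$. This matches the first paragraph of the paper's proof. The problem is the converse direction $P^{-1}\idprec^* P$, and your own diagnosis of the obstacle is accurate but your fix does not resolve it. The dual bridge via $\rho^{-1}$ has exactly the same orientation as the bridge via $\rho$: its $\idprec$-predecessor $\tau_{b-1}$ lies in $P$ and its $\idprec$-successor $\tau_{a-1}^{-1}$ lies in $P^{-1}$, so it only re-derives $P\idprec^* P^{-1}$. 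The same is true of your ``companion pair'' $((\tau')^{-1},\tau^{-1})$: it does satisfy the lemma's hypothesis, but with the same orientation as the original pair ($P$-element first, $P^{-1}$-element second), so applying the first-paragraph argument to it again composes to $\rho^{-1}$ and again yields only $P\idprec^* P^{-1}$. Nothing in your construction ever produces an $\idprec$-edge or $\idprec^*$-path \emph{from} $P^{-1}$ \emph{to} $P$, so no loop through $\tau$ and $\tau^{-1}$ closes up, and the conclusion $P=P^{-1}$ does not follow.

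The missing idea in the paper's proof is a simultaneous walk along two functional $\incd$ paths: the path $\tau=\tau_1\idprec\cdots\idprec\tau_n=(\tau')^{-1}$ inside $P$ and the path $\tau^{-1}=\tau'_1\idprec\cdots\idprec\tau'_m=\tau'$ inside $P^{-1}$. One takes the smallest $k\geq 2$ with $\tau_k^{-1}\neq\tau'_{m-k+1}$ (which exists precisely because $\tau^{-1}\neq\tau'$), and compares the first position $S^u$ of $\tau''\defeq\tau_k\in P$ with the second position $S^v$ of $\tau'''\defeq\tau'_{m-k+1}\in P^{-1}$. If $S^u\neq S^v$, the two-way $\ufd$s on the cycle give $\tau'''\idprec\tau''$ directly, i.e.\ an edge from $P^{-1}$ to $P$. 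If $S^u=S^v$, the pair $(\tau''',\tau'')$ is an instance of the lemma's hypothesis with the roles of $P$ and $P^{-1}$ \emph{swapped}, and the easy first-paragraph argument applied to that pair yields $P^{-1}\idprec^* P$. It is this divergence argument — producing a hypothesis instance with reversed orientation — that your proof lacks, and without it the claim is not established.
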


\begin{proof}
  We first observe that we have $P \idprec^* P^{-1}$. Indeed, as $P$ and
  $P^{-1}$ are non-trivial, consider $\tau_0 \in P$
  and $\tau_0' \in P^{-1}$ such that $\tau_0 \idprec \tau$
  and $\tau' \idprec \tau_0'$.
  Letting $\tau''$ be the $\uid$ which is
  transitively implied by $\tau$ and $\tau'$, we know that it must be in
  $\ids$ as it is transitively closed, and we observe that $\tau_0 \idprec \tau''
  \idprec \tau_0'$, so that $P \idprec^* P^{-1}$.
  
  Now, write $\tau: \ui{R^p}{S^q}$ and $\tau': \ui{S^q}{T^r}$, with $R^p \neq
  T^r$. As $P$ and $P^{-1}$ are non-trivial, using Lemma~\ref{lem:cycleexists},
  we can consider a functional $\incd$ path $\tau = \tau_1 \idprec \tau_2
  \idprec \cdots \idprec \tau_n =
  (\tau')^{-1}$, and a functional $\incd$ path $\tau^{-1} = \tau'_1 \idprec\cdots
  \idprec \tau'_m = \tau'$.
  By Lemma~\ref{lem:fdrev}, all $\ufd$s along these paths are
  such that their reverses are also in $\ufds$.
  Consider now the smallest~$k \geq 2$ such that we have
  $\tau_k^{-1} \neq \tau'_{m-k+1}$; such a~$k$ must exist because we have $\tau_n =
  (\tau')^{-1}$ and $\tau'_1 = \tau^{-1}$, and we know that $\tau^{-1} \neq
  \tau'$. Consider $\tau'' \defeq \tau_k \in P$, and $\tau''' \defeq \tau'_{m-k+1}
  \in P^{-1}$, and let~$S^u$ and~$S^v$ be respectively the first position
  of~$\tau''$ and the second position of $\tau'''$: indeed it is easily observed
  that these positions must be in the same relation~$S$, as this is true for
  $\tau_2$ and $\tau'_{m-1}$ and is preserved for~$\tau''$ and~$\tau'''$ because we
  have $\tau_l^{-1} = \tau'_{m-l+1}$ for all $1 \leq 2 \leq k$.

  We now distinguish two cases. The first case is $S^v \neq S^u$, and we then
  have $\tau''' \idprec \tau''$, so that $P^{-1} \idprec P$.
  The second case is $S^v = S^u$.
  In this case, $\tau'''$ and $\tau''$ are two $\uid$s of~$P^{-1}$ and~$P$ such that
  $(\tau''')^{-1} \neq \tau''$ but the second position of~$\tau'''$ is the first
  position of~$\tau''$. Hence, applying the reasoning of the first paragraph to~$\tau''$ and~$\tau$, we deduce
  that $P^{-1} \idprec^* P$. In either case, as we observed initially that $P
  \idprec^* P^{-1}$, we conclude that $P = P^{-1}$, the desired claim.
\end{proof}

\begin{corollary}
  \label{cor:invtrans}
  For any non-trivial SCC $P$, $P \cup P^{-1}$ is transitively closed.
\end{corollary}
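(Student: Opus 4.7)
The plan is to reduce the claim to the already-established Lemmas~\ref{lem:transok}, \ref{lem:invscc}, and~\ref{lem:notrans} by a short case analysis on which set each UID of a potential transitivity application lies in.

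First, I would unfold the definition: $P \cup P^{-1}$ is transitively closed iff for every $\tau_1 : \ui{R^p}{S^q}$ and $\tau_2 : \ui{S^q}{T^r}$ in $P \cup P^{-1}$ with $R^p \neq T^r$, the transitively-implied UID $\tau_3 : \ui{R^p}{T^r}$ (which must lie in $\ids$ by transitive closure of $\ids$) belongs to $P \cup P^{-1}$. I would then split into four cases depending on whether each of $\tau_1, \tau_2$ lies in $P$ or in $P^{-1}$.

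The two homogeneous cases are immediate. If $\tau_1, \tau_2 \in P$, then by Lemma~\ref{lem:transok} applied to the SCC $P$, we have $\tau_3 \in P \subseteq P \cup P^{-1}$. If $\tau_1, \tau_2 \in P^{-1}$, then Lemma~\ref{lem:invscc} gives that $P^{-1}$ is itself a (necessarily non-trivial) SCC of $\idgraph(\ids)$, so Lemma~\ref{lem:transok} applied to $P^{-1}$ yields $\tau_3 \in P^{-1} \subseteq P \cup P^{-1}$.

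The mixed cases are where Lemma~\ref{lem:notrans} is essential, and I expect this to be the only step requiring any real thought. Consider $\tau_1 \in P$ and $\tau_2 \in P^{-1}$ with $R^p \neq T^r$. If $\tau_2 = \tau_1^{-1}$, then $T^r = R^p$, contradicting our nontriviality assumption; so $\tau_2 \neq \tau_1^{-1}$. But $\tau_1$ and $\tau_2$ share the intermediate position $S^q$, i.e., the second position of $\tau_1$ equals the first position of $\tau_2$, which is exactly the hypothesis of Lemma~\ref{lem:notrans}. That lemma then forces $P = P^{-1}$, in which case $P \cup P^{-1} = P$ and we are back in the first homogeneous case, so $\tau_3 \in P$ by Lemma~\ref{lem:transok}. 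The remaining mixed case $\tau_1 \in P^{-1}$, $\tau_2 \in P$ is handled symmetrically: either $\tau_2 = \tau_1^{-1}$ (ruled out as before) or Lemma~\ref{lem:notrans} applied to the SCC $P^{-1}$ and its inverse $(P^{-1})^{-1} = P$ forces $P^{-1} = P$ and again reduces to the homogeneous case. This exhausts all cases and gives the corollary.
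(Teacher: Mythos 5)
Your proof is correct and follows essentially the same route as the paper's: Lemma~\ref{lem:transok} (together with Lemma~\ref{lem:invscc} to see $P^{-1}$ is an SCC) handles the homogeneous cases, and any genuinely mixed transitivity application triggers Lemma~\ref{lem:notrans} to force $P = P^{-1}$, collapsing back to the homogeneous case. The paper's version is just a more compressed statement of the same case analysis, leaving implicit the points you spell out (ruling out $\tau_2 = \tau_1^{-1}$ via triviality of the implied UID, and applying Lemma~\ref{lem:notrans} with $P^{-1}$ in the role of $P$ for the symmetric mixed case).
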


\begin{proof}
  By Lemma~\ref{lem:transok}, $P$ and~$P^{-1}$ are transitively closed. Hence, if no
  $\uid$s is transitively implied by one $\uid$ from $P$ and one from
  $P^{-1}$ (or one from~$P^{-1}$ and one from~$P$), then the claim is proven.
  Otherwise, by Lemma~\ref{lem:notrans}, we have $P = P^{-1}$, so we can
  conclude by applying Lemma~\ref{lem:transok} to $P = P \cup P^{-1}$.
\end{proof}

We now conclude the proof of Proposition~\ref{prp:sorttopart}. Let $P_i$ be a
class of the ordered partition $(P_1, \ldots, P_{\neqidsc})$. We must show that it is
either trivial or reversible. If it is not trivial, then we must show three
things:
\begin{itemize}
  \item $P_i$ is transitively closed
  \item For every $\tau \in P_i$, we have $\tau^{-1} \in P_i$.
  \item For every two positions $R^p, R^q \in \positions(P_i)$ such that $R^p
    \rightarrow R^q$ is in $\ufds$, $R^q \rightarrow R^p$ is also in
    $\ufds$.
\end{itemize}

For the first claim,  as $P_i$ is not
trivial, it is either a self-inverse SCC $P$ of~$\idgraph(\ids)$ (and the claim
follows by Lemma~\ref{lem:transok}) or it is a union $P
\cup P^{-1}$ where $P$ is a non-self-inverse SCC (and the claim follows by
Corollary~\ref{cor:invtrans}).
The second claim is immediate by construction. The third claim is what is shown by
Lemma~\ref{lem:fdrev}, noting that for any SCC~$P$ of~$\ids$, we have $\positions(P) =
\positions(P^{-1})$.
This concludes the proof of Proposition~\ref{prp:sorttopart}.

\section{Proofs for Section~\ref{sec:hfds}: Higher-Arity $\fd$s}
\label{apx:hfds}

In this section, we show what is needed to adapt
the Acyclic Unary Universal Models Theorem (Theorem~\ref{thm:myuniv})
to
produce aligned superinstances that satisfy the full set of constraints
$\con$ rather than just the unary subset
$\ucon$.

\mysubsection{Proof of the Sufficiently Envelope-Saturated Solutions Proposition
(Proposition~\ref{prp:preproc})}
\label{apx:prf_preproc}

We now prove the following result, which provides our way to construct the
initial instance on which we apply the completion process of the previous
sections:

\preproc*

We define the notation 
$\arity{\sigma} \defeq \max_{R \in \sigma} \arity{R}$, and
also define the following:

\begin{definition}
  \label{def:overlap}
  The \deft{overlap} $\ovl(F, F')$ between two facts $F = R(\mybf{a})$ and $F'
  = R(\mybf{b})$ of the same relation~$R$ in an instance~$I$ is the subset $O$ of~$\pos(R)$ such that $a_s = b_s$ iff $R^s
  \in O$. If $\card{O} > 0$, we say that $F$ and $F'$ \deft{overlap}.
\end{definition}

We also define the following, which are the $\fd$s used in the definition
of envelopes (Definition~\ref{def:envelope}):

\begin{definition}
  \label{def:fdproj}
  Given a set $\fds$ of $\fd$s on a relation~$R$ and $O \subseteq \pos(R)$,
  the \deft{$\fd$ projection} $\fdrestr{\fds}{O}$ of 
  $\fds$ to~$O$ are the $\fd$s $R^L \rightarrow R^r$ of~$\fds$ such
  that $R^L \subseteq O$ and $R^r \in O$, plus, for every $\fd$ $R^L \rightarrow
  R^r$ of~$\fds$ where $R^L \subseteq O$ and $R^r \notin O$, the key dependency
  $R^L \rightarrow O$.
\end{definition}

We first note the following immediate consequence of the Dense Interpretations
Theorem (Theorem~\ref{thm:combinatorial}):

\begin{corollary}
  \label{cor:combinatorial2}
  We can assume in the Dense Interpretations Theorem
  (Theorem~\ref{thm:combinatorial}) that the resulting instance $I$
  is such that each element occurs at exactly one position of the relation $R$:
  formally, for all $a \in \dom(I)$, there exists exactly one $R^p \in
  \positions(R)$ such that $a \in \pi_{R^p}(I)$.
\end{corollary}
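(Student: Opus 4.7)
The plan is to derive the corollary from the Dense Interpretations Theorem by a standard \emph{position-disambiguation} trick: given the instance produced by the theorem, we replace each element by one fresh copy per position of~$R$, so that each new element is ``tagged'' with the unique position it is allowed to occupy.

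More precisely, I would first apply Theorem~\ref{thm:combinatorial} with the constant $K' \defeq K \cdot \arity{R}$ in place of $K$, obtaining a non-empty instance $I$ of~$R$ with $I \models \fds$ and $\card{I} \geq K' \card{\dom(I)}$. Then I would define $I'$ on the domain $\dom(I') \defeq \{(a, i) \mid a \in \dom(I), \, 1 \leq i \leq \arity{R}\}$ by replacing each fact $R(a_1, \ldots, a_{\arity{R}})$ of~$I$ by the fact $R((a_1, 1), \ldots, (a_{\arity{R}}, \arity{R}))$. By construction, the element $(a, i)$ occurs in $I'$ only at position $R^i$, so each element of $\dom(I')$ occurs at exactly one position.

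The two things to check are that $I' \models \fds$ and that the density bound is preserved. For the first: any pair of $R$-facts in $I'$ is the image of a unique pair of $R$-facts in $I$, and since the tagging $(a, i)$ only depends on the position, two facts of~$I'$ agree at position $R^i$ iff the original facts of $I$ agree at $R^i$. Hence an $\fd$ violation in~$I'$ on some $\fd$ $R^L \rightarrow R^r$ would lift to a violation of the same $\fd$ in~$I$, contradicting $I \models \fds$. For the second: the map sending each fact of $I$ to its corresponding fact in~$I'$ is a bijection, so $\card{I'} = \card{I}$; and clearly $\card{\dom(I')} \leq \arity{R} \card{\dom(I)}$. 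Combining these,
\[
\card{I'} \;=\; \card{I} \;\geq\; K \cdot \arity{R} \cdot \card{\dom(I)} \;\geq\; K \cdot \card{\dom(I')},
\]
which is exactly the bound required by the original theorem.

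The only subtle point is the $\fd$-preservation argument, and it is actually elementary because all $\fd$s are within the single relation~$R$ and the tagging depends only on the position. No step involves an obstacle beyond bookkeeping; the corollary is essentially a cosmetic strengthening of Theorem~\ref{thm:combinatorial}, obtained by inflating $K$ by the factor $\arity{R}$.
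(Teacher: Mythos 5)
Your proposal is correct and is essentially the same as the paper's proof: both tag each element with the position it occupies (inflating the constant by the arity factor to compensate for the larger domain), and both argue that the fact-level bijection preserves overlaps, so any $\fd$ violation in the tagged instance would lift back to one in the original.
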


\begin{proof}
  Create from $I$ the instance $I'$ whose domain is $\{(a, R^p) \mid a \in
  \dom(I), R^p \in \positions(\sigma)\}$ and which contains for every fact $F =
  R(\mybf{a})$
  of~$I$ a fact $F' = R(\mybf{b})$ such that $b_p = (a_p, R^p)$ for every
  $R^p \in \positions(\sigma)$. Clearly this defines a bijection $\phi$ from the
  facts of~$I$ to the facts of~$I'$, and for any facts $F$, $F'$ of~$I'$,
  $\ovl(F, F') = \ovl(\phi^{-1}(F), \phi^{-1}(F'))$. Thus any violation
  of the $\fd$s $\fds$ in $I'$ would witness one in $I$. Of course,
  $\card{\dom(I')} = \arity{\sigma} \cdot \card{\dom(I)}$, so that, letting $K'$ be our
  target constant factor between $\card{\dom(I')}$ and $\card{I}$, we must use
  $K \defeq K' \arity{\sigma}$ as the constant for the Dense Interpretation
  Theorem, so that $\card{I} \geq K' \arity{\sigma} \cdot \card{\dom(I)}$, which
  implies $\card{I'} \geq K' \card{\dom(I')}$.
\end{proof}

We also show two easy lemmas:

\begin{lemma}
  \label{lem:liftovl}
  Let $I$ be an instance, $\fds$ be a conjunction of $\fd$s, and $F \neq F'$ be two 
  facts of~$I$. Assume there is a position $R^p \in \pos(\sigma)$ such that,
  writing $O \defeq \nondanger(R^p)$, we have $\ovl(F, F')
  \subsetneq O$, and
  that $\{\pi_O(F), \pi_O(F')\}$  is not a violation of~$\fdrestr{\fds}{O}$.
  Then $\{F, F'\}$ is not a violation of~$\fds$.
\end{lemma}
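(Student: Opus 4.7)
The plan is to proceed by contraposition: assume $\{F,F'\}$ violates some $\fd$ $\phi : R^L \to R^r$ of $\fds$, and derive that $\{\pi_O(F), \pi_O(F')\}$ must violate some $\fd$ of $\fdrestr{\fds}{O}$, contradicting the hypothesis. Since violation of an $\fd$ presupposes the two facts are over the same relation, $F$ and $F'$ are both $R$-facts, so the overlap is well-defined. Unpacking the definition of violation, $R^L \subseteq \ovl(F,F')$ and $R^r \notin \ovl(F,F')$; in particular, using the hypothesis $\ovl(F,F') \subsetneq O$, we immediately get $R^L \subseteq O$.

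Next I would split on whether $R^r$ itself lies in $O$. If $R^r \in O$, then $\phi$ itself belongs to $\fdrestr{\fds}{O}$ by the first clause of Definition~\ref{def:fdproj}. Viewing $\pi_O(F)$ and $\pi_O(F')$ as facts on $O$, they agree on the coordinates in $R^L \subseteq O$ (since $R^L \subseteq \ovl(F,F')$) and disagree at $R^r$ (since $R^r \in O$ but $R^r \notin \ovl(F,F')$), so $\phi$ is violated.

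The other case is $R^r \notin O$. Then by the second clause of Definition~\ref{def:fdproj}, the key dependency $\kappa : R^L \to O$ belongs to $\fdrestr{\fds}{O}$. Again $\pi_{R^L}(F) = \pi_{R^L}(F')$ since $R^L \subseteq \ovl(F,F')$. The strictness assumption $\ovl(F,F') \subsetneq O$ yields some $R^s \in O \setminus \ovl(F,F')$, at which $F$ and $F'$ disagree; hence $\pi_O(F) \neq \pi_O(F')$, so $\kappa$ is violated.

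Either way $\{\pi_O(F), \pi_O(F')\}$ violates some $\fd$ of $\fdrestr{\fds}{O}$, contradicting the hypothesis, which proves the lemma. There is no real obstacle here: the argument is pure case analysis driven by whether $R^r$ stays inside $O$, and the only subtlety is to remember that the ``key dependency'' clause of $\fdrestr{\fds}{O}$ is precisely what handles the case where $\phi$ escapes $O$.
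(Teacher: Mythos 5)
Your proof is correct and follows essentially the same route as the paper's: contradiction via the case split on whether $R^r \in O$, using the first clause of Definition~\ref{def:fdproj} in one case and the key dependency $\kappa : R^L \rightarrow O$ together with the strictness of $\ovl(F,F') \subsetneq O$ in the other. The only cosmetic difference is that the paper derives $\ovl(F,F') = O$ from the assumed satisfaction of $\kappa$, whereas you directly exhibit the violation of $\kappa$; these are the same step read in opposite directions.
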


\begin{proof}
  Assume by way of contradiction that $F$ and $F'$ violate an $\fd$ $\phi: R^L
  \rightarrow R^r$ of~$\fds$, which implies that
  $R^L \subseteq \ovl(F, F') \subseteq O$ and $R^r \notin \ovl(F, F')$. Now, if $R^r
  \in O$, then $\phi$ is in $\fdrestr{\fds}{O}$, so that $\pi_O(F)$ and
  $\pi_O(F')$ violate~$\fdrestr{\fds}{O}$, a contradiction. Hence, $R^r \in \pos(R) \backslash O$, and
  the key dependency $\kappa: R^L \rightarrow O$ is in $\fdrestr{\fds}{O}$, so that
  $\pi_O(F)$ and $\pi_O(F')$ must satisfy~$\kappa$.
  Thus, because $R^L \subseteq \ovl(F, F')$, we must have $\ovl(F, F') = O$,
  which is a contradiction because we assumed $\ovl(F, F') \subsetneq O$.
\end{proof}

\begin{lemma}
  \label{lem:connectndg}
  For any $(R^p, \mybf{C}) \in \rdfcl$, letting $O \defeq \nondanger(R^p)$, if
  $(R^p, \mybf{C})$ is unsafe, then there is no position $R^q \in O$ that
  determines $O$ in $\fdrestr{\fds}{O}$: formally, there is no $R^q \in O$ such that
  we have $R^q \rightarrow R^r$ in $\fdrestr{\fds}{O}$ for all~$R^r \in O$.
\end{lemma}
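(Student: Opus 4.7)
The plan is to prove the contrapositive by assuming some $R^q \in O \defeq \nondanger(R^p)$ determines all of $O$ in $\fdrestr{\fds}{O}$, and deriving that the class is safe. The key observation is that ``determining $O$'' gives functional dependencies with unary left-hand side, which when composed with the unsafety witness yields a unary $\fd$ on positions of $R$ that contradicts the membership of $R^q$ in $\nondanger(R^p)$.

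More concretely, I will first spell out what the assumption gives: for every $R^r \in O$, the $\fd$ $R^q \rightarrow R^r$ lies in $\fdrestr{\fds}{O}$, and since such $\fd$s are (by definition of the $\fd$ projection, the non-key case) already present in $\fds$ itself, we have $R^q \rightarrow R^r \in \fds$ for every $R^r \in O$. Now suppose for contradiction the class $(R^p, \mybf{C})$ is unsafe: there is some $\fd$ $R^L \rightarrow R^{r'}$ in $\fds$ with $R^L \subseteq O$ and $R^{r'} \notin O$. Applying the pointwise dependencies $R^q \rightarrow R^\ell$ for each $R^\ell \in R^L$ and then $R^L \rightarrow R^{r'}$, transitivity of $\fd$ implication yields that $R^q \rightarrow R^{r'}$ is implied by $\fds$.

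Since $\fds$ is closed under (finite = unrestricted) $\fd$ implication by our standing assumption on $\con$, and $R^q \rightarrow R^{r'}$ is unary, we get $R^q \rightarrow R^{r'} \in \ufds$. I then split on where $R^{r'}$ lies, using the partition $\pos(R) = \{R^p\} \sqcup \danger(R^p) \sqcup \nondanger(R^p)$ from Definition~\ref{def:dangerdef}. If $R^{r'} = R^p$, then $R^q \rightarrow R^p \in \ufds$, which directly contradicts $R^q \in \nondanger(R^p)$. Otherwise $R^{r'} \in \danger(R^p)$, so by definition $R^{r'} \rightarrow R^p \in \ufds$; composing with $R^q \rightarrow R^{r'}$ gives $R^q \rightarrow R^p \in \ufds$ (again using closure under implication), the same contradiction.

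The main thing to be careful about is the bookkeeping around $\fdrestr{\fds}{O}$: its non-key $\fd$s live inside $\fds$, whereas its key dependencies are not literally in $\fds$, so I need ``$R^q$ determines $O$'' to be unpacked as a collection of ordinary $\fd$s in $\fds$ rather than as a single key-style dependency. Once this is in hand, the argument is purely an Armstrong-axiom transitivity chase, and the closure of $\fds$ under implication (guaranteed by the assumptions introduced in Section~\ref{sec:overall}) does all the work.
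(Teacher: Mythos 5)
Your second and third paragraphs correctly reproduce the first half of the paper's argument: if some $R^q \in O$ determines all of $O$ via $\fd$s of $\fds$ itself, then composing with an unsafety witness $R^L \rightarrow R^{r'}$ ($R^L \subseteq O$, $R^{r'} \notin O$) and using closure of $\fds$ under implication yields $R^q \rightarrow R^p \in \ufds$, contradicting $R^q \in \nondanger(R^p)$; this is exactly the paper's first step (it cites Lemma~\ref{lem:nondgr} where you redo the transitivity inline).

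The gap is in your very first reduction, where you unpack ``$R^q \rightarrow R^r$ in $\fdrestr{\fds}{O}$'' as literal membership in the set $\fdrestr{\fds}{O}$ and conclude that these $\fd$s ``are already present in $\fds$ itself''. The lemma must be read with ``in'' meaning \emph{implied by} $\fdrestr{\fds}{O}$: it feeds the Dense Interpretations Theorem (Theorem~\ref{thm:combinatorial}) via Lemma~\ref{lem:oneenv}, and that theorem's notion of ``no unary key'' is computed over the $\ufd$s \emph{implied} by the given dependency set. Since $\fdrestr{\fds}{O}$ contains the freshly added key dependencies $R^L \rightarrow O$ and is not closed under implication, $R^q$ can be a unary key of $O$ derivable from $\fdrestr{\fds}{O}$ without any of the $\fd$s $R^q \rightarrow R^r$ being members of that set --- for instance $R^q \rightarrow R^a$ and $R^q \rightarrow R^b$ in $\fds$ inside $O$, together with a key dependency $R^a R^b \rightarrow O$ stemming from some $R^a R^b \rightarrow R^c$ of $\fds$ with $R^c \notin O$. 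Your proof says nothing about such derivations. The paper spends the entire second half of its proof exactly here: it takes a derivation of $R^q \rightarrow R^s$ from $\fdrestr{\fds}{O}$, considers the first use of a key dependency $\kappa: R^L \rightarrow O$, observes that at that point $R^q \rightarrow R^L$ had been derived from $\fds$ alone, and composes with the underlying $\fd$ $R^L \rightarrow R^r$ ($R^r \notin O$) of $\fds$ to contradict $R^r \notin \nondanger(R^p)$. You need this step (or an equivalent argument showing that every unary consequence of $\fdrestr{\fds}{O}$ within $O$ is already a consequence of $\fds$) before your transitivity chase can get started.
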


\begin{proof}
  Fix $D = (R^p, \mybf{C})$ in $\rdfcl$ and let $O$ be the non-dangerous
  positions of~$R^p$. We first show that if $\fds$ implies that $O$ has a unary key $R^s \in O$ in $\fds$, then
  $D$ is safe. Indeed, assume the existence of such a unary key $R^s$. If there were a $\fd$ $R^L \rightarrow R^r$ in $\fds$
  with $R^L \subseteq O$ and $R^r \notin O$, then, by transitivity, the $\ufd$
  $R^s \rightarrow R^r$ would be in $\ufds$, which by
  Lemma~\ref{lem:nondgr}
  implies that $R^r$ is non-dangerous for $R^p$ because $R^s \in O$ is
  non-dangerous for $R^p$. This contradicts our assumption that $R^r \notin O$.

  We must now show that if $O$ has a unary key in $O$ according to $\fdrestr{\fds}{O}$
  then $O$ has a unary key in $O$ according to $\fds$. It suffices to show that
  for any two positions $R^q, R^s \in O$, if $\phi: R^q \rightarrow R^s$ holds
  in~$\fdrestr{\fds}{O}$ then it also does in~$\fds$.
  Assuming to the contrary that there there is such a $\phi$, consider its
  derivation from the dependencies of~$\fdrestr{\fds}{O}$. Clearly the 
  derivation must be using one of the key dependencies $\kappa: R^L
  \rightarrow O$, which are the only dependencies in~$\fdrestr{\fds}{O}$ that
  are not in~$\fds$.
  But this means that, the first time we used such a dependency, we had derived
  a unary key dependency $R^q \rightarrow R^L$ using only the $\fd$s of $\fds$.
  Considering that $\kappa$ was created to stand for a $\fd$ $R^L \rightarrow
  R^r$ in $\fds$, with $R^r \notin O$, we deduce that we can derive from $\fds$
  that $R^q \rightarrow R^r$, contradicting again the fact that $R^r \notin O$
  (because $R^r$ should then be in $\nondanger(R^p)$). Hence, if $O$ has a unary
  key in~$O$ according to~$\fdrestr{\fds}{O}$ then $D$ is safe. Thus, we have
  proven the contrapositive of the desired result.
\end{proof}

We now prove Proposition~\ref{prp:preproc}.
The bulk of the work is to show the following claim, for each unsafe class of
$\rdfcl$. The construction of global envelopes from the individual envelopes is
then easy.

\begin{lemma}
  \label{lem:oneenv}
  For any unsafe class $D$ in $\rdfcl$ and constant $K$, one
  can construct a superinstance $I_0'$ of~$I_0$ that is $k$-sound for $\cq$,
  and an aligned superinstance
  $J = (I, \cov)$ of~$I_0'$ that satisfies $\fds$ with an envelope $E$ for $D$ of
  size $K \card{J}$.
\end{lemma}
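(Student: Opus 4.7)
The plan is to use the Dense Interpretations Theorem to build, for the unsafe class $D = (R^p, \mybf{C})$ with $O := \nondanger(R^p)$, a large set of tuples satisfying the relevant projected $\fd$s, and then to realize them combinatorially as facts in an aligned superinstance that maps homomorphically to a single fixed achiever of $D$. By Lemma~\ref{lem:connectndg}, since $D$ is unsafe, $O$ has no unary key under $\fdrestr{\fds}{O}$, so Corollary~\ref{cor:combinatorial2} of Theorem~\ref{thm:combinatorial} yields, for any density $K''$, a non-empty instance $I_D$ over the schema with positions $O$ satisfying $\fdrestr{\fds}{O}$, with each element of $\dom(I_D)$ occurring at a unique position and $|I_D| \ge K''|\dom(I_D)|$; by taking disjoint copies of $I_D$ we can make $|\dom(I_D)|$ arbitrarily large while preserving the density.

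First, I would chase $I_0$ via the Fact-Saturated Solutions Lemma (Lemma~\ref{lem:nondetexhaust}) to a fact-saturated $I_0^\ast \subseteq \chase{I_0}{\ids}$ containing a fact $F_0 = R(\mybf{a}^0) \in I_0^\ast \setminus I_0$ that achieves $D$; since $I_0^\ast$ is a subset of $\chase{I_0}{\ids}$, it is $k$-sound for $\cq$. I would then take $\dom(I_D)$ disjoint from $\dom(I_0^\ast)$ and set $I_0' := I_0^\ast \cup \{F_{\mybf{t}} : \mybf{t} \in I_D\}$, where each $F_{\mybf{t}} := R(\mybf{b}^{\mybf{t}})$ is defined by $b^{\mybf{t}}_q := t_q$ for $R^q \in O$ and $b^{\mybf{t}}_r$ a distinct fresh element for every $\mybf{t}$ and every $R^r \notin O$. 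The map $h : I_0' \to \chase{I_0}{\ids}$ that is the identity on $I_0^\ast$, sends every element of $\dom(I_D)$ at position $R^q$ to $a^0_q$, and sends each fresh $b^{\mybf{t}}_r$ to $a^0_r$ (so that every $F_{\mybf{t}}$ maps to $F_0$) is a homomorphism, and composing with it transfers any match in $I_0'$ of a $\cq$ of size $\le k$ to a match in $\chase{I_0}{\ids}$, witnessing that $I_0'$ remains $k$-sound for $\cq$ by universality of the chase and separability of $\ids$ and $\fds$.

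I would then take $J := (I_0', \cov)$ with $\cov$ lifting $h$ to an aligned simulation of $I_0'$ into $\chase{I_0'}{\ids}$, and set $E := I_D$. To verify $J \models \fds$: pairs of facts in $I_0^\ast$ do not violate $\fds$ because $I_0^\ast \subseteq \chase{I_0}{\ids}$ and the chase satisfies $\fds$ by the Unique Witness Property; a fact of $I_0^\ast$ and some $F_{\mybf{t}}$ have empty overlap since $\dom(I_D)$ is disjoint from $\dom(I_0^\ast)$ and the remaining elements of $F_{\mybf{t}}$ are fresh; and two new facts $F_{\mybf{t}}$, $F_{\mybf{t}'}$ overlap only within $O$ by construction, so by Lemma~\ref{lem:liftovl} any $\fd$ violation reduces to a violation of $\fdrestr{\fds}{O}$ by $\{\mybf{t}, \mybf{t}'\}$, impossible since $I_D \models \fdrestr{\fds}{O}$. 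The four envelope conditions for $E := I_D$ then follow mechanically from $I_D \models \fdrestr{\fds}{O}$, Corollary~\ref{cor:combinatorial2}, and the fact that every $F_{\mybf{t}}$ achieves $D$ (its $h$-image being the achiever $F_0$). For the size, $|E| = |I_D| \ge K''|\dom(I_D)|$ while $|J| = |I_0^\ast| + |I_D|$; choosing $K'' > K$ and enough disjoint copies of $I_D$ so that $|\dom(I_D)|$ dominates $|I_0^\ast|$ yields $|E| \ge K|J|$.

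The main obstacle is the interplay between the fact-class data $\mybf{C}$, defined via $\bbsim_k$-classes in $\chase{I_0}{\ids}$, and the codomain of $\cov$, which by the definition of aligned superinstance of $I_0'$ should be $\chase{I_0'}{\ids}$; one must check that the homomorphism $h$ preserves the relevant $\bbsim_k$-classes so that each $F_{\mybf{t}}$ genuinely achieves $D$ in $J$, and that the heavy reuse of the elements $t_q \in \dom(I_D)$ across many tuples $\mybf{t}$ does not break the $k$-bounded simulation condition. Carefully aligning these two views of ``achieves $D$'' is the most delicate step, and the rest of the argument is a relatively routine combinatorial verification.
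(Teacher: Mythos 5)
Your high-level plan (reduce to $\fdrestr{\fds}{O}$ via Lemma~\ref{lem:connectndg}, invoke the Dense Interpretations Theorem with Corollary~\ref{cor:combinatorial2}, and verify $\fds$ via Lemma~\ref{lem:liftovl}) matches the paper, but two steps of your realization break down. First, the size bound is arithmetically impossible as written: you realize \emph{every} tuple $\mybf{t}\in I_D$ as its own fact $F_{\mybf{t}}$ of $J$ and set $E:=I_D$, so $\card{J}\geq\card{I_D}=\card{E}$ and you can never reach $\card{E}\geq K\card{J}$ for $K>1$; worse, since $E\subseteq\pi_O(J)$, the envelope has no remaining tuples, i.e.\ $J$ is envelope-exhausted on arrival and useless for the subsequent completion process. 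The whole point of the dense instance is that it has far more tuples than elements; the paper therefore realizes as facts only a domain-covering subinstance $I'_D$ of $N=\card{\dom(I_D)}$ facts, keeping the remaining $\geq NK\card{J_1}-N$ tuples of $E$ unrealized for later envelope-thrifty chase steps.

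Second, the obstacle you flag at the end as "delicate" is in fact fatal to your construction and is exactly where the real work lies. A fact $F_{\mybf{t}}$ whose image under some homomorphism $h$ is the achiever $F_0$ does \emph{not} thereby achieve $D$: achievement requires $\cov(b^{\mybf{t}}_i)\bbsim_k a^0_i$ for a valid aligned simulation $\cov$, and achievement is only defined for facts of $I\setminus I_0'$, whereas you place the $F_{\mybf{t}}$ inside $I_0'$ itself (where $\cov$ must be the identity). Even relocating them to $I\setminus I_0'$ and setting $\cov(F_{\mybf{t}}):=F_0$ violates the last condition in the definition of aligned superinstances at the exported position $R^p$: $a^0_p$ was introduced in the \emph{parent} fact of $F_0$ at some position $S^q\neq R^p$, so the definition demands $b^{\mybf{t}}_p\in\pi_{S^q}(I)$, which fails for a fresh element occurring only in $F_{\mybf{t}}$. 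This condition is what makes Lemma~\ref{lem:wadef} and hence all thrifty chase steps work later. The paper's resolution is the missing idea: take $N$ disjoint copies of a chase truncation containing the achiever's \emph{entire ancestry} (so each copy of the achiever sits in a faithful chase context and genuinely achieves $D$), and then obtain the dense overlap pattern by \emph{identifying} elements at positions of $O$ across the copies --- safe because those elements occur nowhere else in the truncation --- rather than by manufacturing new facts from scratch.
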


\begin{proof}
  Fix the unsafe achieved fact class $D = (R^p, \mybf{C})$ and
  choose $F = R(\mybf{b})$ a fact of~$\chase{I_0}{\ids} \backslash I_0$
  that achieves $D$.
  Let $I_1$ be obtained from $I_0$ by
  applying $\uid$ chase steps on $I_0$
  to obtain a finite truncation of~$\chase{I_0}{\ids}$ that
  includes $F$ but no child fact of~$F$, and consider the aligned superinstance
  $J_1 = (I_1, \cov_1)$ where $\cov_1$ is the identity.

  Let $O \defeq \nondanger(R^p)$, and define a $\card{O}$-ary relation
  $\relrestr{R}{O}$; for convenience, we index its positions by $O$.
  Because $D$ is unsafe, by
  Lemma~\ref{lem:connectndg}, $\relrestr{R}{O}$ has no unary key in
  $\fdrestr{\fds}{O}$. Apply the Dense Interpretations Theorem
  (Theorem~\ref{thm:combinatorial}) to $\relrestr{R}{O}$ and $\fdrestr{\fds}{O}$
  with the additional condition of Corollary~\ref{cor:combinatorial2},
  taking $K \card{J_1}$ as the constant.
  We thus obtain an instance $I_D$ of~$\relrestr{R}{O}$ that satisfies~$\fdrestr{\fds}{O}$ and such that, letting $N \defeq
  \card{\dom(I_D)}$, we have $\card{I_D} \geq N K \card{J_1}$.
  Let $I'_D \subseteq I_D$ be an
  subinstance of size $N$ of~$I_D$ such that $\dom(I'_D) = \dom(I_D)$, that is, each
  element of~$\dom(I_D)$ occurs in some fact of~$I'_D$. This can clearly be
  ensured
  by picking, for any element of $\dom(I_D)$, one fact of~$I_D$ where it occurs, removing duplicate
  facts, and completing with other arbitrary facts of~$I_D$ to have $N$
  distinct facts. Number the facts of $I'_D$ as $F'_1, \ldots, F'_N$.

  We create $N-1$ disjoint 
  copies of~$J_1$, numbered $J_2$ to $J_N$.
  We call $J' = (I', \cov)$ the disjoint union $J_1 \sqcup \cdots \sqcup J_N$.
  It is clear that $J'$ is indeed an aligned
  superinstance of~$I_0'$, where $I_0'$ is formed of the $N$ disjoint copies of
  $I_0$, and $I_0'$ is clearly a $k$-sound superinstance of~$I_0$ for $\cq$.
For $1 \leq i \leq N$, we call $F_i = R(\mybf{a}^{\mybf{i}})$ the fact of~$I_i$ that
  corresponds to the achiever $F$ in $\chase{I_0}{\ids}$. In particular, for all
  $1 \leq i \leq N$, we have
  that $\cov(a^i_j) = b_j$ for all $j$, and $a^i_p$ is the only element of~$F_i$
  that also occurs in other facts of~$J_i$.

  We consider the application $f$ that maps $a^i_j$, for $1 \leq i \leq N$ and
  $R^j \in O$, to $\pi_{R^j}(F_i')$. This application $f$ is well-defined, because the $a^i_j$ are
  pairwise distinct. We extend $f$ to $\dom(I')$, and call the extension $f'$,
  by setting $f'(a) \defeq a$ if $a$ is not in the domain of~$f$.
  We call $I$ the image of~$I'$ under $f'$. In other words, $I$ is the
  underlying instance of~$J'$ except that 
  elements at positions of~$O$ in the facts $F_i$ were identified 
  so that the projections to~$O$ of the $f'(F_i)$ are isomorphic to the~$F_i'$. Because $a^i_j$ occurs only in $F_i$ for all $R^j \neq R^p$,
  and $R^p \notin O$, this means that the identified elements only occurred in
  the $F_i$ in~$I'$.

  We now build $J = (I, \cov)$ obtained by defining $\cov$ from the $\cov_i$ as
  follows: any element $a$ not in the domain of $f$ is mapped to $\cov_i(a)$ for
  the one $i$ such that $a \in \dom(I_i)$, and any $a$ in the domain of $f$ is
  mapped to $\cov_i(a')$ for any preimage of $a'$ by $f$.
  All that remains to show is that $J$ is indeed an
  aligned superinstance of~$I_0'$ satisfying the required conditions.

  We note that it is immediate that $J$ is a superinstance of~$I_0'$, as the
  achiever $F$ is not a fact of~$I_0$, so that $\dom(I_0)$ is not in the domain
  of $f$. It is clear that $J$ has $N \card{J_1}$ facts, because, as $R^p \notin
  O$, no facts can be identified by $f'$.
  We now claim that $J$ is an aligned superinstance of~$I_0'$, and that $E$,
  defined as the set of the tuples of~$I_D$, is an envelope for $I'$ and $D$.
  The fact that $\card{E} = K \card{J}$ is
  immediate.

  The fact that $\cov$ is a $k$-bounded simulation from $J$ to
  $\chase{I_0'}{\ids}$ is by induction. The case of~$k=0$ is trivial. The
  induction case is trivial for all facts except for the $h'(F_i)$,
  because the $a^i_j$ only occurred in~$I$ in the facts $F_i$, by our assumption
  that the $F_i$ have no children in the $I_i$ and by the fact that the exported
  position of~$F$ is $R^p \notin O$. Consider now one fact $F' =
  R(\mybf{c})$ of~$I'$ which is the image by $f'$ of a $F_i$.
  Choose $1 \leq p \leq \arity{R}$. We show that there exists a fact $F'' =
  R(\mybf{d})$ of
  $\chase{I_0'}{\ids}$ such that $\cov(c_p) = d_p$ and for all $1 \leq
  q \leq \card{R}$ we have $(I, c_q) \bsim_{k-1} (\chase{I_0'}{\ids}, d_q)$,
  which by induction hypothesis is implied by $\cov(c_q) \bbsim_k d_q$.
  Let $a^{i_0}_{j_0}$ be the preimage of~$a_p$ used to define $\cov(a_p)$; by the
  condition of Corollary~\ref{cor:combinatorial2}, we must have $j_0 = p$. Consider the fact
  $F'' = R(\mybf{d})$ of~$\chase{I_0'}{\ids}$ corresponding to $F_{i_0}$ in
  $I$. By definition, $\cov(c_p) = \cov(a^{i_0}_{j_0}) = d_p$. Fix now $1 \leq q
  \leq \arity{R}$. Let $a^{i_0'}_{j_0'}$ used to define $\cov(c_q)$; again $j_0'
  = q$ and $\cov(c_q)$ is $\pi_{R^q}(F''')$ for the fact $F''' = R(\mybf{e})$ of
  $\chase{I_0'}{\ids}$ corresponding to $F_{i_0'}$ in $I$. But as both $F'''$
  and $F''$ are copies of the same achiever fact $F$ of $\chase{I_0}{\ids}$, we
  have $d_q \bbsim_k e_q$, so that $\cov(c_q) \bbsim_k d_q$, what we
  wanted to show.
  This proves that $\cov$ is indeed a $k$-bounded simulation from~$J$
  to $\chase{I_0}{\ids}$.
  
  We show that $J$ satisfies $\fds$. 
  As $I$ satisfies $\fds$,
  any new violation of~$\fds$ in $I'$ relative to $I$
  must include some fact $F = h'(F'_{i_0})$, and some fact $F'$ overlapping with
  $F$, so necessarily $F' = h'(F'_{i_1})$ for some $i_1$ by construction of~$I'$,
  and $\ovl(F, F') \subseteq O$. We now use Lemma~\ref{lem:liftovl} to deduce
  that we cannot have $\ovl(F, F') \subsetneq O$, so $\ovl(F, F') = O$. By our
  definition of $f$ and of the $F'_i$ this implies that $F'_{i_0} = F'_{i_1}$, a
  contradiction because $F \neq F'$.

  Thus, from the above, and as the technical conditions of the definition of
  aligned superinstances are clearly respected, $J$ is indeed an aligned
  superinstance of~$I_0'$.

  Last, we check that $E$ is indeed an envelope. Indeed, it satisfies
  $\fdrestr{\fds}{O}$ by construction, so the first two conditions are
  respected. The third condition is respected by the condition of
  Corollary~\ref{cor:combinatorial2}, and because the $f(a^i_j)$ always occur at
  position $R^j$ in some fact of $I'_D$, as we constructed $I'_D$ such that
  $\dom(I'_D) = \dom(I_D)$. The last condition is true because the envelope
  elements are only used in the $f(F_i)$, and the $\cov$-images of the $f(F_i)$
  are copies in $\chase{I_0'}{\ids}$ of the same achiever fact $F$ in $\chase{I_0}{\ids}$.

  Hence, $J$ is indeed an aligned superinstance of a $k$-sound $I_0'$ that
  satisfies $\fds$ and has an envelope of size $K \card{J}$, proving the desired
  claim.
\end{proof}

We now prove the main result by building $I_0'$ and the aligned superinstance $J = (I,
\cov)$ of~$I_0'$ that has a global envelope $\calE$.
As $\rdfcl$
is finite, we build one $J_D$ per $D \in \rdfcl$. When $D$ is unsafe, we use the
previous lemma. When $D = (R^p, \mybf{C})$ is safe, we just take a
single copy $J_D$ of the truncated chase to achieve the class $D$, and take as the
only fact of the envelope the projection to $\nondanger(R^p)$ of the fact of~$J_D$ corresponding to
the achiever of~$D$ in $\chase{I_0}{\ids}$.
As $\rdfcl$ is finite and its size is a constant, we can ensure that $\card{\calE(D)}$ for all unsafe $D
\in \rdfcl$ is $\geq
(K+1)\card{I}$, by taking sufficiently large
$K$ when we apply Lemma~\ref{lem:oneenv} for each unsafe class.

Let $J$ be the disjoint union of the $J_D$.
Each $J_D$ is an aligned superinstance of an $(I_0')_D$ which is a $k$-sound superinstance
of~$I_0$. Hence, $J$ is an aligned superinstance of the union of the $(I_0')_D$
which is also $k$-sound.
There are no violations of~$\fds$ in $J$ because there are none in
any of the $J_D$, and the union is disjoint. The disjointness of domains of
envelopes is because the $J_D$ are disjoint. It is easy to see that $J$ is $(K
\card{I})$-envelope-saturated, because $\card{\calE(D)} \geq (K+1)\card{I}$ for
all unsafe $D \in \rdfcl$, so the
number of remaining facts of each envelope for an unsafe class is $\geq K \card{I}$ (every
fact of $I$ eliminates at most one fact in each envelope).
Hence, the proposition is proven.

\mysubsection{Proof of the Dense Interpretations Theorem
(Theorem~\ref{thm:combinatorial})}
\label{apx:prf_combinatorial}
Remember that we want to show:
\combinatorial*

Fix the relation $R$, and let $\fds$ be an arbitrary set of~$\fd$s which we
assume is closed under $\fd$ implication.
Let $\ufds$ be the $\ufd$s implied by $\fds$; it is also closed under $\fd$
implication. Recall
the definition of~$\ovl$ (Definition~\ref{def:overlap}).
We introduce a notion of \defo{safe overlaps} for $\ufds$,
which depends only on $\ufds$ but (we will show) is
a sufficient condition to satisfy~$\fds$:

\begin{definition}
  We say a subset $O \subseteq \pos(R)$ is \deft{safe} for $\ufds$ if $O$ is
  empty or for every
  $R^p \in \pos(R) \backslash O$, there exists $R^q \in \pos(R)$ such that the
  unary key dependency $R^q \rightarrow O$ is implied by $\ufds$ but the $\ufd$
  $R^q \rightarrow R^p$ does not hold in $\ufds$.
  
  We say that an instance $I$ has the \deft{safe
  overlaps} property (for $\ufds$) if for every $F \neq F'$ of~$I$, $\ovl(F, F')$ is
  safe.
\end{definition}

We now claim the following lemma, and its immediate corollary:

\begin{restatable}{lemma}{safefdso}
  If $O \subseteq \pos(R)$ is safe for~$\ufds$ then there is no $\fd$ $\phi: R^L
  \rightarrow R^r$ in $\fds$ such that $R^L \subseteq O$ but $R^r \notin O$.
\end{restatable}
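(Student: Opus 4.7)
The plan is to prove the contrapositive-style statement by direct contradiction. Suppose there is an $\fd$ $\phi: R^L \to R^r$ in $\fds$ with $R^L \subseteq O$ and $R^r \notin O$. Since the paper disallows degenerate $\fd$s (i.e., $\card{L} > 0$), the inclusion $R^L \subseteq O$ forces $O \neq \emptyset$, so the nontrivial clause of the safety definition applies.

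Apply the safety hypothesis to the position $R^p \defeq R^r \in \pos(R) \setminus O$. This yields some $R^q \in \pos(R)$ such that the unary key dependency $R^q \to O$ is implied by $\ufds$, but the $\ufd$ $R^q \to R^r$ does not hold in $\ufds$. Unpacking $R^q \to O$, we get $R^q \to R^l$ in $\ufds$ for every $R^l \in R^L$ (since $R^L \subseteq O$). Since $\ufds$ consists of unary $\fd$s implied by $\fds$, all these dependencies are in $\fds$ as well. By the union rule of Armstrong's axioms we derive $R^q \to R^L$, and then chaining with $\phi: R^L \to R^r$ via transitivity gives $R^q \to R^r$ as a consequence of $\fds$.

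Now the key observation: $R^q \to R^r$ is a \emph{unary} $\fd$ implied by $\fds$, so by definition of $\ufds$ (the set of all unary $\fd$s implied by $\fds$), we have $R^q \to R^r \in \ufds$. This directly contradicts the choice of $R^q$ provided by the safety condition, which required precisely that $R^q \to R^r$ \emph{not} hold in $\ufds$.

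There is essentially no obstacle here, as the argument is a two-line application of Armstrong's union and transitivity rules together with the closure property of $\ufds$ under unary consequences of $\fds$. The only subtlety to flag is the boundary case $O = \emptyset$, which is ruled out by the non-degeneracy convention $\card{L} \geq 1$; no further machinery from the paper is needed.
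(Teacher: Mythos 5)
Your proof is correct and follows essentially the same argument as the paper: apply the safety condition to $R^r$, obtain $R^q$ with $R^q \rightarrow O$ in $\ufds$ but $R^q \rightarrow R^r$ not in $\ufds$, derive $R^q \rightarrow R^r$ from $R^q \rightarrow R^L$ and $\phi$ by transitivity, and conclude it must lie in $\ufds$ since it is unary, a contradiction. Your handling of the empty case (via $\card{L} \geq 1$ forcing $O \neq \emptyset$) is equivalent to the paper's remark that the claim is immediate when $O$ is empty.
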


\begin{proof}
  If $O$ is empty the claim is immediate. Otherwise,
  assume to the contrary the existence of such an $\fd$~$\phi$. As $R^r \notin O$ and $O$ is
  safe, there is $R^q \in \pos(R)$ such that $R^q \rightarrow O$
  holds in $\ufds$ but $R^q \rightarrow R^r$ does not hold in $\ufds$.
  Now, as $R^L \subseteq O$, we know that $R^q \rightarrow R^L$
  holds in~$\ufds$, so that, by transitivity of~$\fds$,
  $\phi':R^q \rightarrow R^r$ holds in $\fds$. As $\phi'$ is a $\ufd$, this
  implies it holds in~$\ufds$, a contradiction.
\end{proof}

\begin{restatable}{corollary}{safefds}
  \label{lem:safefds}
  For any instance $I$,
  if $I$ has the safe overlaps property for $\ufds$, then $I$ satisfies $\fds$.
\end{restatable}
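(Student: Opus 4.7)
The plan is to derive this corollary directly from the preceding lemma (\textbf{safefdso}) by a simple contrapositive argument on $\fd$ violations.

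First I would suppose, for contradiction, that $I$ has the safe overlaps property but violates some $\fd$ $\phi : R^L \rightarrow R^r$ in $\fds$. By definition of a violation, there exist two distinct facts $F, F' \in I$ with $\pi_{R^L}(F) = \pi_{R^L}(F')$ but $\pi_{R^r}(F) \neq \pi_{R^r}(F')$. Set $O \defeq \ovl(F, F')$. The equality of projections on $R^L$ gives $R^L \subseteq O$, and the disagreement on $R^r$ gives $R^r \notin O$.

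Next I would invoke the safe overlaps hypothesis: since $F \neq F'$, the set $O = \ovl(F, F')$ is safe for $\ufds$. Applying the preceding lemma to $O$ then tells us that no $\fd$ in $\fds$ can have its left-hand side contained in $O$ and its right-hand side outside $O$. But $\phi$ has exactly this shape, yielding the desired contradiction and completing the proof.

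There is essentially no obstacle here beyond unpacking definitions, since all the combinatorial content has already been isolated in \textbf{safefdso}: the real work is in showing that safety of an overlap set rules out $\fd$s crossing its boundary, whereas the corollary just lifts this from a single overlap to the whole instance. The only thing to be slightly careful about is the degenerate case $O = \emptyset$ (allowed by the definition of safe), but this is consistent with the lemma, which handles the empty-$O$ case vacuously, so no separate argument is needed.
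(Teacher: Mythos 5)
Your proof is correct and matches the paper's argument: both apply the preceding lemma to $O = \ovl(F,F')$ for a pair of facts and observe that a violating $\fd$ would have $R^L \subseteq O$ and $R^r \notin O$, which safety forbids. Your remark on the empty-overlap case is also consistent with the paper, since $\fd$s are assumed to have non-empty left-hand sides.
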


\begin{proof}
  Considering two facts $F$ and $F'$ in~$I$, as $\ovl(F, F')$ is safe, we know
  that for any $\fd$ $\phi: R^L \rightarrow R^r$ in $\fds$, we cannot have $R^L
  \subseteq O$ but $R^r \notin O$. Hence, $F$ and $F'$ cannot be a violation
  of~$\phi$.
\end{proof}

Thus, it suffices to show the following generalization of the 
Dense Interpretations Theorem:

\begin{theorem}
  \label{thm:fds}
  Let $R$ be a relation and $\ufds$ be a set of~$\ufd$s over $R$.
  Let $D$ be the number of positions
  of the smallest key of~$R$ for $\ufds$: formally, $D \defeq \card{K}$, where
  $K \subseteq \pos(R)$ is such that $R^K \rightarrow R^p$ holds in $\ufds$ for
  all $R^p \in \pos(R)$, and $K$ has minimal cardinality among all subsets
  of~$\pos(R)$ with this property.
  Let $x$ be ${D \over D-1}$ if $D > 1$ and $1$ otherwise.
  
  For every
  $N \geq 1$, there exists a finite instance~$I$ of~$R$ such that
  $\card{\dom(I)}$ is $O(N)$, $\card{I}$ is $\Omega(N^x)$,
  and $I$ has the safe overlaps property for~$\ufds$.
\end{theorem}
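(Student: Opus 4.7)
The plan is to handle $D = 1$ trivially and then to develop a multi-group construction for $D \geq 2$. For $D = 1$, take $N$ pairwise-disjoint facts of the form $(a_i, a_i, \ldots, a_i)$ with distinct fresh $a_i$: then $\ufds$ is vacuously satisfied (no two distinct facts share any element) and all pairwise overlaps are empty, hence safe. This yields $\card{I} = N = \Omega(N^1)$ with $\card{\dom(I)} = N$.

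For $D \geq 2$, fix a minimum key $K = \{R^{k_1}, \ldots, R^{k_D}\}$. The goal is to produce approximately $m^D$ facts over a domain of size $O(m^{D-1})$, yielding $\card{I} = \Omega(N^{D/(D-1)})$. I will organize the facts into $g \sim m$ ``groups'', each containing $\sim m^{D-1}$ facts on a $(D-1)$-dimensional grid. The group index determines the values at positions forced to be constant within a group by $\ufds$ (typically the positions lying in intersections of multiple single-key closures $R^{k_j}^+$). Within each group, pick the key-position values via a Reed--Solomon / MDS-style encoding so that any two distinct facts in the group agree on at most one key coordinate. Fill non-key positions according to their source set $J_p \defeq \{j : R^p \in R^{k_j}^+\}$: single-source positions ($\card{J_p}=1$) mirror the corresponding key coordinate, multi-source positions ($\card{J_p} \geq 2$) follow the group index, and source-less positions ($J_p = \emptyset$) use a shared pool of elements reused across groups.

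For verification, first $\ufds$ compliance holds by construction: each $\ufd$ $R^p \to R^q$ has $R^q$'s value expressible as a function of $R^p$'s value via the source-set assignment. For the safe overlaps property, I will case-analyze on two distinct facts: (a) same group, agreeing on a key coordinate $R^{k_j}$---their overlap equals $R^{k_j}^+$ together with the group-shared constants, which is safe with witness $R^q \defeq R^{k_j}$ since minimality of $K$ forces $R^{k_j}^+ \subsetneq \pos(R)$; (b) same group, no key agreement---overlap reduces to the group-shared constants, safe by a similar argument; (c) different groups---overlap is limited to coincidental reuses at source-less positions, again safe via an appropriate key-position witness. Balancing $g \sim m$, the element count $\card{\dom(I)}$ is dominated by $\sim gm$ at the key positions (plus $\sim m^{D-1}$ at source-less positions), giving $N = O(m^{D-1})$ and $\card{I} = \Omega(N^{D/(D-1)})$.

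The hard part will be showing that every pairwise overlap in the construction falls into a safe class for arbitrary $\ufds$. This requires a delicate combinatorial design: the partition of $\pos(R)$ according to the $\ufds$-closure structure (key, single-source, multi-source, source-less) must match the group/grid decomposition, and every overlap must admit a key-position witness. The subtle case is when a non-key position lies in $R^{k_{j_1}}^+ \cap R^{k_{j_2}}^+$ but not in all key closures; for overlaps arising from agreement on $R^{k_{j_1}}$ alone, such positions threaten to extend the overlap beyond $R^{k_{j_1}}^+$ into positions that $R^{k_{j_1}}$ does not determine. The group structure resolves this by making such positions constant within each group, so that they are automatically shared on every same-group overlap and contribute to a safe subset. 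The rigorous treatment of this interaction, together with optimizing the balance between $g$ and $m$, is the heart of the argument.
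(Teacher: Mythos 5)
Your $D=1$ case and your counting strategy are fine, but the core of your construction --- making every ``multi-source'' position constant within a group --- does not yield the safe overlaps property, and this is precisely the step you defer as ``the heart of the argument.'' Concretely, take $D=3$ with $\pos(R)=\{R^1,R^2,R^3,R^4\}$ and $\ufds$ generated by $R^1\rightarrow R^4$ and $R^2\rightarrow R^4$; the minimum key is $K=\{R^1,R^2,R^3\}$, and $R^4$ is multi-source with $J_4=\{1,2\}$, hence constant on each group in your scheme. Two facts of the same group that agree on the key coordinate $R^3$ and on no other key coordinate then have overlap exactly $O=\{R^3,R^4\}$. This $O$ is unsafe: a witness $R^q$ for $R^1\notin O$ would need $R^q\rightarrow R^3$ and $R^q\rightarrow R^4$ implied by $\ufds$, but the only position determining $R^3$ is $R^3$ itself and $R^3\rightarrow R^4$ does not hold. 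So your case (a) fails --- the proposed witness $R^{k_j}$ determines the closure of $R^{k_j}$ but not the group constants --- and case (b) fails similarly as soon as two multi-source positions have disjoint source sets, since the set of group constants then has no unary key at all. Such pairs of facts cannot be avoided by the Reed--Solomon choice: forcing every key-coordinate projection to be injective on a group would cap the group at the alphabet size and destroy the fact count. (A smaller remark: since $\ufds$ is unary, the closure of $K$ is the union of the closures of its members, so your ``source-less'' category is empty.)

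The paper's construction resolves exactly this difficulty by \emph{not} making multi-source positions group-constant. It indexes facts by tuples $f\in\{1,\ldots,n\}^D$ and places at position $R^p$ the restriction of $f$ to the complement of the set $\mu(R^p)$ of key coordinates that determine $R^p$. Two facts then agree at $R^p$ iff they agree outside $\mu(R^p)$, so every overlap has the form $\{R^p \mid X\subseteq\mu(R^p)\}$ where $X$ is the set of coordinates on which the index tuples differ, and any key position whose label lies in $X$ is a safety witness for the whole overlap. In the example above, agreement at $R^3$ forces disagreement at $R^4$, so the unsafe overlap $\{R^3,R^4\}$ simply never occurs. To repair your argument you would have to replace the group constants by some such projection encoding; as written, the proof has a genuine gap.
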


It is clear that this theorem implies the Dense Interpretations Theorem, because
if $R$ has no unary key for $\fds$ then $D > 1$ and thus $x > 1$, which implies that, for any~$K$, by taking
a sufficiently large $N$, we can obtain an instance~$I$ for $R$ with $N$
elements and $KN$ facts that has the safe overlaps property for $\ufds$; now, by
Lemma~\ref{lem:safefds}, this implies that $I$ satisfies $\fds$.

\medskip

We will now prove Theorem~\ref{thm:fds}. Fix the relation $R$ and set of~$\ufd$s
$\ufds$. The case of~$D = 1$ is vacuous and can be eliminated directly (consider
the instance $\{R(a_i, \ldots, a_i) \mid 1 \leq i \leq N\}$). Hence,
assume that $D > 1$, and let $x \defeq {D \over D-1}$.

\medskip

We first show the claim on a specific relation $R_0$ and set $\ufds^0$ of~$\ufd$s. We
will then generalize the construction to arbitrary relations and $\ufd$s. Let
$T_0 \defeq \{1, \ldots, D\}$, and consider a bijection $\nu: \{1, \ldots, 2^D\}
\to \parts(T_0) \backslash \{\emptyset\}$. Let $R_0$ be a $(2^D-1)$-ary relation,
and take $\ufds^0 \defeq \{R^i \rightarrow R^j \mid \nu(i)
\subseteq \nu(j)\}$. Note that $\ufds^0$ is clearly closed under implication of
$\ufd$s.
Fix $N \in \mathbb{N}$, and let us construct an instance $I_0$ with $O(N)$
elements and $\Omega(N^x)$ facts.

Fix $n \defeq \lfloor N^{1/(D-1)} \rfloor$.
Let $\calF$ be the set of partial functions from $T_0$ to $\{1, \ldots, n\}$,
and write $\calF = \calF_\t \sqcup \calF_\p$, where $\calF_\t$ and $\calF_\p$
are respectively the total and the strictly partial functions.
We take $I_0$ to consist of one fact $F_f$ for each $f \in \calF_\t$, where $F_f
= R_0(\mybf{a}^{\mybf{f}})$
is defined as follows: for $1 \leq i \leq 2^D$, $a^f_i \defeq \restr{f}{T_0 \backslash \nu(i)}$. In particular:
\begin{compactitem}
\item $a^{f}_{\nu^{-1}(T_0)}$, the element of~$F_{f}$ at the position mapped to
  $T_0 \in \parts(T_0) \backslash \{\emptyset\}$, is the strictly partial function
that is nowhere defined;
\item $a^{f}_{\{i\}}$, the element of
  $F_{f}$ at the position mapped to $\{i\} \in \parts(T_0) \backslash
  \{\emptyset\}$,
  is the strictly partial function equal to $f$ except that it is undefined on $i$.
\end{compactitem}
Hence, $\dom(I_0) = \calF_\p$ (because $\emptyset$ is not in the image of
$\nu$), so that $\card{\dom(I_0)} = \sum_{0 \leq i < D} {D \choose i} n^i$.
Remembering that $D$ is a constant, this implies that $\card{\dom(I_0)}$ is
$O(n^{D-1})$, so it is $O(N)$ by definition of $n$.
Further, we claim that $\card{I_0} = \card{\calF_\t} = n^D = N^x$.
To show this, consider
two facts $F_f$ and $F_g$, and show that $F_f = F_g$ implies $f = g$,
so there are indeed $\card{\calF_\t}$ different facts in~$I_0$. As
$\pi_{\nu^{-1}(\{1\})}(F_f) =
\pi_{\nu^{-1}(\{1\})}(F_{g})$, we have $f(t) = g(t)$ for all $t \in T_0 \backslash \{1\}$,
and looking at $\pi_{\nu^{-1}(\{2\})}(F_f)$ and $\pi_{\nu^{-1}(\{2\})}(F_{g})$
concludes (here we use the fact that $D \geq 2$).
Hence, the cardinalities of~$I_0$ and of its domain are suitable.

We must now show that $I_0$ has the safe overlaps property. For this we first make the
following general observation:

\begin{lemma}
  Let $\ufds$ be any conjunction of $\ufd$s and $I$ be an instance such that $I
  \models \ufds$. Assume that, for any pair of facts $F \neq F'$ of~$I$ that overlap, there
  exists $R^p \in \ovl(F, F')$ which is a unary key for $\ovl(F, F')$. Then $I$
  has the safe overlaps property for $\ufds$.
\end{lemma}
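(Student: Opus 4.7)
The plan is to show directly that for any two overlapping facts $F \neq F'$ of $I$, the overlap $O \defeq \ovl(F, F')$ is safe for $\ufds$; the case $\ovl(F, F') = \emptyset$ makes $O$ trivially safe by definition. So fix $F \neq F'$ overlapping, and let $R^p \in O$ be a position which is a unary key for $O$ under $\ufds$ (which exists by hypothesis), meaning that $R^p \rightarrow R^q$ holds in $\ufds$ for every $R^q \in O$.

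I claim that the single witness $R^q \defeq R^p$ works uniformly for every $R^r \in \pos(R) \setminus O$ in the definition of safety. The unary key dependency $R^p \rightarrow O$ is already in $\ufds$ by choice of $R^p$, so it remains to verify that $R^p \rightarrow R^r$ does \emph{not} hold in $\ufds$ for any $R^r \notin O$. Suppose by contradiction that $R^p \rightarrow R^r$ were in $\ufds$ for some such $R^r$. Since $R^p \in O = \ovl(F, F')$, we have $\pi_{R^p}(F) = \pi_{R^p}(F')$, and because $I \models \ufds$, the $\ufd$ $R^p \rightarrow R^r$ then forces $\pi_{R^r}(F) = \pi_{R^r}(F')$. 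But this means $R^r \in \ovl(F, F') = O$, contradicting $R^r \notin O$.

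Hence $R^p$ witnesses safety for every $R^r \in \pos(R) \setminus O$, so $O$ is safe for $\ufds$. As this holds for every pair of overlapping facts in~$I$, the instance $I$ has the safe overlaps property. The argument is entirely routine once one notices that the same $R^p$ from the hypothesis serves as the universal witness in the definition of safety; no step here looks like an obstacle, and the lemma is essentially an unpacking of definitions combined with the fact that $I \models \ufds$ propagates equalities from $R^p$ to its functional consequences.
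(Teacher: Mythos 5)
Your proof is correct and follows essentially the same route as the paper's: pick the unary key $R^p \in \ovl(F,F')$ guaranteed by the hypothesis as the safety witness, and rule out $R^p \rightarrow R^r$ for $R^r \notin \ovl(F,F')$ because $I \models \ufds$ would otherwise force $R^r$ into the overlap (equivalently, $F, F'$ would violate that $\ufd$). The handling of the empty-overlap case is a fine, if unnecessary, addition.
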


\begin{proof}
  Consider $F, F' \in I$ and $O \defeq \ovl(F, F')$. If
  $F = F'$, then $O = \pos(R)$, and $O$ is clearly safe.
  Otherwise, if $F \neq F'$, let $R^p \in \pos(R) \backslash O$. Let
  $R^q \in O$ be the unary key of~$O$. We know that $R^q \rightarrow O$ holds in
  $\ufds$, so to show that $O$ is safe it suffices to show that $\phi: R^q \rightarrow
  R^p$ does not hold in $\ufds$. However, if it did, then as $R^q \in O$ and
  $R^p \notin O$, $F$ and $F'$ would witness a violation of~$\phi$,
  contradicting the fact that $I$ satisfies $\ufds$.
\end{proof}

So we show that $I_0$ satisfies $\ufds^0$ and that every non-empty overlap
between facts of~$I_0$ has a unary key.

First, to show that $I_0$ satisfies $\ufds^0$, observe that whenever $\phi: R_0^i
\rightarrow R_0^j$ holds in $\ufds$,
then $\nu(i) \subseteq \nu(j)$, so that, for any fact~$F$ of~$I_0$, for any $1
\leq t \leq T_0$,
whenever $(\pi_j(F))(t)$ is defined,
so is $(\pi_i(F))(t)$, and we have $(\pi_j(F))(t) = (\pi_i(F))(t)$.
Hence, letting $F$ and $F'$ be two facts of~$I_0$ such that $\pi_i(F) = \pi_i(F')$,
we know that $\pi_j(F)$ is defined iff $\pi_j(F')$ is (as this only depends
on~$j$), and, if both are defined, the previous observation shows
that $\pi_j(F) = \pi_j(F')$.
Hence, $F$ and $F'$ cannot witness a violation of $\phi$.

Second, considering two facts 
$F_{f} = R_0(\mybf{a}^{\mybf{f}})$ and $F_{g} = R_0(\mybf{a}^{\mybf{g}})$, with $f \neq g$ so
that $F_{f} \neq F_{g}$, we show that if $\ovl(F_f, F_g)$ is non-empty then it has a unary key.
Let $O \defeq \{t \in T_0 \mid f(t) = g(t)\}$, and let $X = T_0
\backslash O$; we have $X \neq \emptyset$, because otherwise $f =
g$, so we can define $p \defeq \nu^{-1}(X)$.
We will show that $\ovl(F_f, F_g) = \{R^i \in \positions(R_0) \mid X \subseteq
\nu(i)\}$. This implies that $R^p \in \ovl(F_f, F_g)$ and that $R^p$ is a unary
key of~$\ovl(F_f, F_g)$, because,
for all $R^q \in \overlap(F_{f}, F_{g})$, $X \subseteq \nu(R^q)$, so that
$R^p \rightarrow R^q$ holds in $\ufds$.

Indeed, consider $R^i$ such that $X \subseteq \nu(i)$. Then $T_0 \backslash
\nu(i) \subseteq T_0 \backslash X$, so that,
because $a^f_i = \restr{f}{T_0 \backslash \nu(I)}$ and $a^g_i = \restr{g}{T_0 \backslash
\nu(I)}$, we have $a^f_i = a^g_i$ by definition of~$O = T_0 \backslash X$. Thus
$R^i \in \overlap(F_{f}, F_{g})$. Conversely, if $R^i \in
\overlap(F_{f}, F_{g})$, then we have $a^f_i =
a^g_i$, so by definition of~$O$ we must have $T_0
\backslash \nu(i) \subseteq O' = T_0 \backslash X$, which implies $X \subseteq
\nu(i)$.

Hence, $I_0$ is a finite instance of~$\ufds$ which satisfies the safe overlaps
property and contains $O(N)$ elements and
$\Omega(N^{D/(D-1)})$ facts. This concludes the proof of Theorem~\ref{thm:fds}
for the specific case of~$R_0$ and $\ufds^0$.

\medskip

Let us now show the claim for the actual $R$ and $\ufds$. Let $K$ be 
a key of~$R$ of minimal cardinality, so that $\card{K} = D$.
Let $\lambda$ be any bijective labeling from $K$ to $T_0$.
Extend $\lambda$ to a function $\mu$ from $\positions(R)$ to $\parts(T_0) \backslash
\{\emptyset\}$ such that, for every $R^p \in \positions(R)$ and $R^k \in K$, we have
$\lambda(R^k) \in \mu(R^p)$ iff $R^k = R^p$ or $R^k \rightarrow R^p$ holds in
$\ufds$.

Now, create the instance $I$ of~$R$ from $I_0$ by creating, for every fact $F_0 =
R_0(\mybf{a})$ of~$I_0$, a fact $F = R(\mybf{b})$ in $I$, with $b_i =
a_{\nu^{-1}(\mu(R^i))}$ for all $1 \leq i \leq \card{R}$.

We do not create duplicate facts by the same argument as before, considering the
projection of the facts of~$I$ to $R^{k_1} \neq R^{k_2}$ in~$K$, because
$\mu(R^{k_1}) = \{\lambda(R^{k_1})\}$ and $\mu(R^{k_2}) = \{\lambda(R^{k_2})\}$
(otherwise this contradicts the minimality of~$K$). Hence $I$, as $I_0$, has a
suitable number of facts, and a suitable domain cardinality 
because $\dom(I) \subseteq \dom(I_0)$.

Let us now show that overlaps are safe in~$I$. Consider two facts $F, F'$ of
$I$ that overlap, and let $O \defeq \ovl(F, F')$.
We first claim that there exists $\emptyset \subsetneq K' \subseteq K$,
such that, letting $X' \defeq \{\lambda(R^k) \mid R^k \in K'\}$, we have
$\ovl(F, F') = \{R^i \in \positions(R) \mid X' \subseteq \mu(R^i)\}$. Indeed,
letting $F_f$ and $F_g$ be the facts of~$I_0$ used to create $F$ and $F'$, we
previously showed the existence of~$\emptyset \subsetneq X \subseteq T_0$ such that 
$\ovl(F_f, F_g) = \{R^i \in \positions(R_0) \mid X \subseteq \nu(i)\}$.
Our definition of~$F$ and $F'$ from $F_f$ and $F_g$ makes it clear that we can
satisfy the condition by taking $K' \defeq \lambda^{-1}(X)$, so that $X' = X$.

Consider now $R^p \in \pos(R) \backslash O$. We
cannot have $X' \subseteq \mu(R^p)$, otherwise $R^p \in O$. Hence, there exists
$R^k \in K'$ such that $\lambda(R^k) \notin \mu(R^p)$. This implies that $R^k
\rightarrow R^p$ does not hold in $\ufds$. However, as $R^k \in K'$, we have
$\lambda(R^k) \in \mu(R^q)$ for all $R^q \in O$, so that $R^k \rightarrow O$
holds in $\ufds$. This proves that~$O = \ovl(F, F')$ is safe. Hence, $I$ has the
safe overlaps property, which concludes the proof.

\mysubsection{Proof of Lemma~\ref{lem:envfds} (Envelope-thrifty chase steps
satisfy $\fds$)}
\envfds*

Consider an application of an envelope-thrifty chase step:
let $\tau : \ui{R^p}{S^q}$ be the $\uid$, 
let $O \defeq \nondanger(S^q)$,
let $J = (I, \cov)$ be the aligned superinstance of~$I_0$, 
let $F_\w = S(\mybf{b}')$ the chase witness,
let $D = (S^q, \mybf{C})$ be the fact class,
let $F_\n = S(\mybf{b})$ be the new fact to be created, and
let $\mybf{t}$ be the remaining tuple of~$\calE(D)$ used to define $F_\n$.

We first check that envelope-thrifty chase steps are well-defined in the sense that the
fact class $D = (S^q, \mybf{C})$ is indeed achieved in
$\chase{I_0}{\ids}$, so it is in $\rdfcl$. To see why,
observe that $F_\w$ is a fact of
$\chase{I_0}{\ids}$ whose fact class is $(S^q, \mybf{C})$. Indeed, by
Lemma~\ref{lem:wadef}, $b'_q$
is the exported element of~$F_\w$, and clearly $b'_i \in C_i$ for
all $S^i \in \pos(S)$.
Hence indeed $D \in \rdfcl$.

It is then clear that envelope-thrifty chase steps are well-defined, in the sense that
they are indeed thrifty chase steps: elements reused from the envelopes already
occur at the positions where they are used in the new fact~$F_\n$. Further, their
$\cov$-image is the right one, by definition of an envelope.

\medskip

We first prove that~$J'$ is still an aligned superinstance. This is shown
exactly as in Lemma~\ref{lem:ftok}, except for the fact that $J' \models \ufds$
which was specific to fact-thrifty chase steps.
We show instead that 
$J' \models \fds$, using the assumption that $J \models \fds$.
Recall the definition of~$\ovl$ (Definition~\ref{def:overlap}), and
assume by contradiction the existence of a violation of~$\fds$ in $J'$.
The violation must be between $F_\n$ and an
existing fact $F = S(\mybf{c})$. However, because only the elements at positions in $O$
already occur at their position, we must have $\ovl(F_\n, F) \subseteq O$.
As $\pi_O(F_\n)$ was defined using elements of
$\dom(\calE(D))$, taking $S^r \in \ovl(F_\n, F) \subseteq O$, we have $c_r = b_r
\in \pi_{S^r}(\calE(D))$, so that, by definition of~$\calE(D)$,  we know that
$\pi_O(\mybf{c})$ is a tuple of~$\calE(D)$.
If $\ovl(F_\n, F'') \subsetneq O$ then we have a contradiction by applying
Lemma~\ref{lem:liftovl} to $\mybf{t}$ and $\pi_O(\mybf{c})$ in
$\calE(D)$. Hence $\ovl(F_\n, F'') = O$
So, if $D$ is unsafe, we have a contradiction because 
$F$ witnesses that $\mybf{t}$ was not a remaining tuple,
so we cannot have used it to define $F_\n$. If $D$ is safe,
there is no $\fd$ $R^L \rightarrow R^r$ of~$\fds$ with $R^L \subseteq O$ and
$R^r \notin O$, so $F$ and $F_\n$ cannot violate $\fds$, a contradiction again.

\medskip

We now prove that $\calE$ is still a global envelope of~$J'$ after performing an
envelope-thrifty chase step. The condition on the disjointness of the envelope
domains only concerns $\calE$, which is unchanged. Hence, we need only show
that, for any $D' \in \rdfcl$,
$\calE(D')$ is still an envelope.
Except the last one, all conditions of the definition of envelopes either
concern only the envelope~$\calE(D')$, which is unchanged, or they are preserved
when more facts are created in~$J'$.
The last condition needs only to be checked about the new fact $F_\n$ created in this chase step.

Except for the elements of
$F_\n$ at positions in $O$, all elements of~$F_\n$ did not occur at
the positions where they occur in $F_\n$, by definition of a
thrifty chase step. So they cannot be elements of~$\dom(\calE)$ occurring in
$F_\n$ at the one position where they occur in the one envelope
where they occur, because
we know that elements from any envelope already occur in $J$ at that
position. So we only need to check the
condition for the~$b_r$ for~$S^r \in O$. But because the envelopes of~$\calE$
are pairwise disjoint and as the~$b_r$ are all in $\dom(\calE(D))$, we only need to check
the condition for $\calE(D)$. Now, $\mybf{t}$ witnesses that $\pi_O(\mybf{b}) \in \calE(D)$.
Hence $\calE$ is still a global envelope of~$J'$.

\medskip

Last, to see that the resulting $J'$ is $(n-1)$-envelope-saturated, it suffices to
observe that the new fact $F_\n$ witnesses that, for each unsafe class $D \in
\rdfcl$, the remaining tuples of $\calE(D)$ for $J'$ are those of $\calE(D)$ for
$J$ minus at most one tuple (namely, some projection of $F_\n$). This concludes the proof.

\mysubsection{Proof of the Envelope-Thrifty Completion Proposition
(Proposition~\ref{prp:etcomp})}
\label{apx:prf_etcomp}

\etcomp*

The completion process for envelope-thrifty chase steps is defined in the same
way as for fact-thrifty chase steps, except that the elements reused at
non-dangerous positions are different. By definition of thrifty chase steps, the
choice of elements reused at those positions cannot make any new $\uid$
applicable, or satisfy any $\uid$, because the elements thus reused are required to already
occur at the positions where they are used in the new fact. Further,
envelope-thrifty chase steps do not introduce $\ufd$ violations (in fact, they
do not introduce $\fd$ violations), as
follows from Lemma~\ref{lem:envfds}. Hence, we can indeed define the completion
process for envelope-thrifty chase steps exactly like the completion process for
fact-thrifty chase steps, are long as the instance is envelope-saturated.
Whenever an envelope-exhausted instance is obtained
at any point of the process, we abort and set it to be the final instance.

Assuming that we do not reach any envelope-exhausted instance,
the fact that $\calE$ is still a global envelope of the result $J'$ of the envelope-thrifty
completion process, and that $J'$ satisfies $\fds$ in addition to $\ids$, is by
Lemma~\ref{lem:envfds}.

\mysubsection{Proof of the Envelope Blowup Lemma (Lemma~\ref{lem:eblowup})}
\label{apx:prf_blowup}

\eblowup*

We first observe that applying a chase round to an aligned superinstance $J =
(I, \cov)$ of~$I_0$ by any form of thrifty chase steps
(Definition~\ref{def:thrifty}) only increases its size by a multiplicative constant. This
is because $\card{\dom(I)} \leq \arity{\sigma} \cdot \card{I}$, and the number of
facts created per element of~$I$ in a chase round is at most
$\card{\positions(\sigma)}$.

Remember that the envelope-completion process starts by constructing an ordered
partition $\mybf{P} = (P_1, \ldots, P_{\neqidsc})$
of $\ids$ (Definition~\ref{def:opartition}).
This $\mybf{P}$ does not depend on the aligned superinstance.
Hence, as we satisfy the
$\uid$s of each $P_i$ in turn, if we can show that the instance size only
increases by a multiplicative constant for each class, then the blow-up
for the entire process is by a multiplicative constant (obtained as the product
of the constants for each $P_i$).

For trivial classes, we apply one chase round by fresh envelope-thrifty chase
steps (Corollary~\ref{cor:trivscc}), so
the blowup is by a multiplicative constant by our initial observation.

For non-trivial classes, we apply the Fact-Thrifty Completion Proposition
(Proposition~\ref{prp:ftcomp}), modified to use envelope-thrifty rather than
fact-thrifty chase steps (but the exact same steps are applied). Remember that
this proposition first ensures $k$-reversibility by applying $k+1$
envelope-thrifty chase rounds (Proposition~\ref{prp:achkrev}) and then makes the
result satisfy $\ids$ using the Guided Chase Lemma (Lemma~\ref{lem:guidedb}).
Ensuring $k$-reversibility only implies a blowup by a multiplicative constant,
because it means applying $k+1$ envelope-thrifty chase rounds. Hence, we focus
on the Guided Chase Lemma.

The lemma starts by constructing a balanced
pssinstance $P$ using the Balancing Lemma (Lemma~\ref{lem:hascompletion}), and a
$\ucon$-compliant piecewise realization $\pire$ of $P$ by the Realizations Lemma
(Lemma~\ref{lem:balwsndc}), and then performs envelope-thrifty chase steps to
satisfy $\ids$ following $\pire$.
We know that, whenever we apply a envelope-thrifty chase step to an element $a$
in the guided chase, $a$ occurs after the chase step at a new
position where it did not occur before. Hence, it suffices to show that
$\card{\dom(P)}$ is within a constant factor of $\card{J}$, because then we know
that the final number of facts once the guided chase is over will be $\leq
\card{\dom(P)} \cdot \card{\positions(\sigma)}$.

To show this,
remember that $\dom(P) = \dom(J) \sqcup \calH$, where $\calH$ is the helper set.
Hence, we only need to show that $\card{\calH}$ is within a
multiplicative constant factor of~$\card{J}$.
From the proof of the Balancing
Lemma, we know that $\calH$ is a disjoint union of $\leq \card{\positions(\sigma)}$ sets
whose size is linear in $\card{\dom(J)}$ which is itself $\leq \arity{\sigma}
\cdot \card{J}$. Hence, the Guided Chase Lemma only gives rise to a blowup by a
constant factor.
As we justified, this implies the same about the entire
completion process, and concludes the proof.

\section{Proofs for Section~\ref{sec:cycles}: Cyclic Queries}
\label{apx:cycles}

In this section, we extend our construction of superinstances that satisfy
$\con$ and are $k$-sound for $\acq$, to superinstances that are $k$-sound for
$\cq$ while still satisfying $\con$.

\mysubsection{Proof of the Simple Product Lemma
(Lemma~\ref{lem:prodppty})}
\label{apx:prf_prodppty}
\prodppty*

Fixing the superinstance $I$ of~$I_0$ that is $k$-sound for $\acq$ and
$k$-instance-sound, and the $(2k+1)$-acyclic group $G$ generated by
$\lab{I}$, consider $I' \defeq (I, I_0) \sprod G$, which is a
superinstance of~$I_0$ (up to our identification of~$(a, e)$ to $a$ for $a \in
\dom(I_0)$, where $e$ is the neutral element of~$G$). We must show that $I'$ is
$k$-sound for $\cq$.

We start by proving a simple lemma:

\begin{lemma}
  \label{lem:diffact}
  For any $\cq$ $q$ and instance $I$, if
  $I \models q$ and some match $h$ of~$q$ in $I$ maps two different atoms
  of~$q$ to the same fact $F$, then there is a strictly smaller $q'$
  which entails $q$ and has a match $h'$ in $I$ such that, seeing matches as
  subinstances of~$I$, $\dom(h') \subseteq \dom(h)$.
\end{lemma}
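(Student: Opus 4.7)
The plan is to unify the two atoms that collapse under $h$ and then delete one of the two copies, obtaining a strictly smaller query $q'$ that entails $q$ and still maps into $I$ via a restriction of $h$.

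Concretely, let the two distinct atoms of $q$ that $h$ sends to the same fact $F = R(\mybf{a})$ be $A_1 = R(x_1,\ldots,x_m)$ and $A_2 = R(y_1,\ldots,y_m)$ (they must share a relation name since they share an image). Since $h(A_1) = h(A_2) = F$, we have $h(x_i) = h(y_i) = a_i$ for every $1 \leq i \leq m$. Define the substitution $\sigma$ on the variables of~$q$ by $\sigma(y_i) \defeq x_i$ for those $y_i$ that are not identical to $x_i$, and $\sigma(v) \defeq v$ otherwise. Apply $\sigma$ to every atom of~$q$, and then remove one of the two resulting copies of $\sigma(A_1) = \sigma(A_2) = A_1$; call the resulting query $q'$. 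Since we removed an atom, $\card{q'} < \card{q}$.

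I would then check the three required properties in turn. For entailment $q' \models q$: given any homomorphism $g : q' \to J$, extend it to a homomorphism $g' : q \to J$ by setting $g'(v) \defeq g(\sigma(v))$ for all variables $v$ of~$q$; every atom $A$ of~$q$ is mapped to $g(\sigma(A))$, which is an atom of~$q'$ (or the deleted duplicate, which equals $A_1$ and is still in~$q'$), and hence lies in~$J$. For the existence of $h'$: define $h'$ as the restriction of $h$ to the variables of $q'$, which is well-defined since $\text{Vars}(q') \subseteq \text{Vars}(q)$; for each atom $A'$ of~$q'$, we have $A' = \sigma(A)$ for some atom $A$ of~$q$, and $h'(A') = h(\sigma(A)) = h(A)$ by our observation that $h(y_i) = h(x_i)$, so $h'(A') \in I$. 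Finally, the image of~$h'$ is contained in the image of~$h$ because $h'$ agrees with $h$ on a subset of the variables, which immediately gives $\dom(h') \subseteq \dom(h)$ when matches are viewed as subinstances.

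There is no real obstacle here: the only subtlety is ensuring that the substitution $\sigma$ is well-defined when some $y_i$ equals some $x_j$ with $i \neq j$ or when $x_i = y_i$ already, but since we only need $\sigma$ to equate the $i$-th positions of $A_1$ and $A_2$ (and can orient the unification toward~$x_i$ uniformly), the induced substitution is consistent; and the paper's convention that queries are constant-free means we never have to worry about unifying a variable with a constant.
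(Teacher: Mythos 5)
Your proposal is correct and follows essentially the same route as the paper's proof: define the substitution sending $y_i \mapsto x_i$, apply it to $q$, drop one of the two coinciding atoms, take $h'$ as the restriction of $h$, and prove entailment by composing any match of $q'$ with the substitution. The well-definedness subtlety you flag at the end is also present (and likewise glossed over) in the paper's own argument, so there is nothing missing relative to it.
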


\begin{proof}
  Fix $q$, $I$, $h$, and let $A = R(\mybf{x})$ and $A' = R(\mybf{y})$ be the two
  atoms of~$q$ mapped to the same fact $F$ by $h$.
  Necessarily $A$ and $A'$ are atoms for the same relation $R$ of the fact $F$,
  and as $h(A) = h(A')$ we know that $h(x_i) = h(y_i)$ for all $R^i \in \pos(R)$.

  Let $\dom(q)$ be the set of variables occurring in $q$. Consider the
  application $f$ from $\dom(q)$ to $\dom(q)$ defined by $f(y_i) = x_i$ for all
  $i$, and $f(x) = x$ if $x$ does not occur in~$A'$. Observe that this ensures
  that $h(x) = h(f(x))$ for all $x \in \dom(q)$. Let $q' = f(q)$ be the query
  obtained by replacing every variable $x$ in $q$ by $f(x)$, and, as $f(A') =
  f(A)$, removing one of those duplicate atoms so that $\card{q'} < \card{q}$.
  Let $h' = \restr{h}{\dom(q')}$. Clearly the image of~$h'$ is a subset of that
  of~$h$, and to see why this is a match of~$q'$ observe that any atom $f(A'')$
  of~$q'$ is homomorphically mapped by $h'$ to $h(A'')$ because $h'(f(x)) =
  h(x)$ for all $x$ so $h'(f(A'')) = h(A'')$.

  To see why $q'$ entails $q$, observe that $f$ defines a homomorphism from
  $q$ to $q'$, so that, for any match $h''$ of~$q'$ on an instance $I'$, $h''
  \circ f$ is a match of~$q$ on $I'$.
\end{proof}

Fix now a $\cq$ $q$ such that $\card{q} \leq k$, and assume that $I' \models q$:
let $h$ be a match of~$q$ in~$I$.
Let us show that $\chase{I_0}{\ids} \models q$.

Let $\pr$ be the application from $I'$ to $I$ defined by $\pr: (a, g) \mapsto a$ for
all $a \in \dom(I)$ and $g \in G$. It is clear that $\pr$ is a homomorphism from
$I'$ to $I$ that
maps $\dom(I_0) \times G$ to $\dom(I_0)$.
Hence, if $h$ involves some element of~$\dom(I_0)
\times G$, then
$q$ has a match in $I$ involving an element of~$I_0$.
Hence, as $I$ is $k$-instance-sound, $\chase{I_0}{\ids} \models q$.
We accordingly assume that $h$ does not
involve an element of~$\dom(I_0) \times G$.

If we
can show that there is a query $q'$ of~$\acq$, $\card{q'} \leq k$, such that $q'$
entails $q$ and $I \models q'$, then, as $I$ is $k$-sound for $\acq$, this suffices to
conclude that $\chase{I_0}{\ids} \models q'$, hence $\chase{I_0}{\ids} \models
q$ because $q'$ entails $q$. So by way of contradiction we assume that $q$ is a
query with a match i$h$ n $I'$ involving no element of~$\dom(I_0) \times G$ such that there is no $q' \in \acq$, $\card{q'} \leq k$, where $q'$
entails $q$ and $I \models q'$; and we take this counterexample query $q$ to be of minimal size.

In particular, this means we assume that $q$ is not in $\acq$, otherwise we
could take $q' = q$, because $I \models q$, as evidenced by $\pr \circ h$.
So consider a Berge cycle $C$ of~$q$, of the form $A_1, x_1, A_2, x_2, \ldots,
A_n, x_n$, where the $A_i$ are pairwise distinct atoms and the $x_i$ pairwise
distinct variables, and for all $1 \leq i \leq n$, variable $x_i$ occurs at position $q_i$
of atom $A_i$ and position $p_{i+1}$ of~$A_{i+1}$, with addition modulo $n
\defeq \card{C}$. We assume without loss of generality that $p_i \neq q_i$ for
all $i$. However, we do not assume that $n \geq 2$: either $n \geq 2$ and $C$ is
really a Berge cycle according to our previous definition, or $n = 1$ and
variable $x_1$ occurs in atom $A_1$ at positions
$p_1 \neq q_1$, which corresponds to the case where there are
multiple occurrences of the same variable in an atom.

For $1 \leq i \leq n$,
we write $F_i = R_i(\mybf{a}^{\mybf{i}})$ the image of~$A_i$ by $h$ in $I'$;
by definition of~$I'$, because $h$ involves no element of~$I_0 \times G$ and
hence no fact of~$I_0 \times G$, there
is a fact $F'_i = R_i(\mybf{b}^{\mybf{i}})$ of~$I$ and $g_i \in G$ such that $a^i_j = (b^i_j,
g_i \cdot \l^{F'_i}_j)$ for $R_i^j \in \pos(R_i)$.
Now, for all $1 \leq i \leq n$, 
as $h(x_i) = a^i_{q_i} = a^{i+1}_{p_i+1}$ for all $1 \leq i \leq n$, we deduce
by projecting on the second component that $g_i \cdot \l^{F'_i}_{q_i} = g_{i+1}
\cdot \l^{F'_{i+1}}_{p_{i+1}}$, so that, by collapsing the equations of the cycle
together, $\l^{F'_1}_{q_1} \cdot
(\l^{F'_{2}}_{p_{2}})^{-1} \cdot \cdots \cdot \l^{F'_{n-1}}_{q_{n-1}} \cdot
(\l^{F'_{n}}_{p_{n}})^{-1} \cdot \l^{F'_{n}}_{q_{n}} \cdot
(\l^{F'_{1}}_{p_{1}})^{-1} = e$.

As the girth of~$G$ under $\lab{I}$ is $\geq 2k+1$, and this product contains
$2n \leq 2k$ elements, we must have either $\l^{F'_i}_{q_i} =
\l^{F'_{i+1}}_{p_{i+1}}$ for some $i$, or $\l^{F'_i}_{p_i} = \l^{F'_i}_{q_i}$ for
some $i$. The second case is impossible because we assumed that $p_i \neq q_i$
for all $1 \leq i \leq n$. Hence, necessarily $\l^{F'_i}_{q_i} = \l^{F'_{i+1}}_{p_{i+1}}$, so in
particular $F'_i = F'_{i+1}$. Hence the atoms $A_i \neq A_{i+1}$ of~$q$ are
mapped by $h$ to the same fact $F'_i = F'_{i+1}$. We conclude by
Lemma~\ref{lem:diffact} that there is a strictly smaller $q'$ which entails $q$
and has a match in $I'$ which is a submatch of~$h$; so in particular it involves
no element of~$\dom(I_0) \times G$. Now, by minimality of $q$, $q'$ cannot be a
counterexample query. So
there is $q'' \in \acq$, $\card{q''} \leq k$, where $q''$ entails $q'$ and $I \models
q''$. Now,
as $q''$
entails $q'$ and $q'$ entails $q$, then $q''$ entails $q$, so this contradicts
the fact that $q$ was a counterexample.

Hence, there is no such counterexample query~$q$, and $I'$ is indeed $k$-sound for
$\cq$. This concludes the proof.

\mysubsection{Proof of Lemma~\ref{lem:ksimhomom} (Lifting $k$-bounded simulations
to the quotient)}
\label{apx:prf_ksimhomom}

\ksimhomom*

Fix the instance $I$ and the $k$-bounded simulation $\cov$ to an instance $I'$, and
consider $I'' \defeq \quot{I}{\bbsim_k}$. We show that there is a
$k$-bounded simulation $\cov'$ from $I''$ to $I$, because $\cov
\circ \cov'$ would then be a $k$-bounded simulation from $I''$ to $I'$, the
desired claim. We define $\cov'(A)$ for all $A \in I''$ to be $a$ for any member
$a \in A$ of the equivalence class~$A$, and show that $\cov'$ thus defined is
indeed a $k$-bounded simulation.

We will show the stronger result that $(I'', A) \bsim_{k} (I, a)$ for all $A
\in \dom(I'')$ and for any $a \in
A$. We do it by proving, by induction on~$0 \leq k' \leq k$,
that $(I'', A) \bsim_{k'} (I, a)$ for all $A \in \dom(I'')$ and $a \in
A$.
The case $k' = 0$ is trivial.
Hence, fix $0 < k' \leq k$, assume that $(I'', A) \bsim_{k'-1} (I, a)$ for all $A
\in \dom(I'')$ and $a \in A$, and show that this is also true for $k'$. Choose
$A \in \dom(I'')$, $a \in A$, and show that $(I'', A) \bsim_{k'} (I, a)$. To do so,
consider any fact $F = R(\mybf{A})$ of $I''$ such that $A_p = A$ for
some $R^p \in \pos(R)$. Let $F' = R(\mybf{a}')$ be a fact of~$I$ that is a preimage
of $F$ by $\chi_{\bbsim_k}$, so that $a'_q \in A_q$ for all $R^q \in \pos(R)$.
We have $a'_p \in A$ and $a \in A$, so that $a'_p \bbsim_k a$
holds in $I$. Hence, in particular we have $(I, a'_p) \bsim_{k'} (I, a)$ because
$k' \leq k$, so there
exists a fact $F'' = R(\mybf{a}'')$ of $I$ such that $a''_p = a$ and $(I, a'_q)
\bsim_{k'-1} (I, a''_q)$ for all $R^q \in \pos(R)$. We show that $F''$ is a
witness fact for $F$. Indeed, we have $a''_p = a$. Let us now choose $R^q \in
\pos(R)$ and show that $(I'', A_q) \bsim_{k'-1} (I, a''_q)$. By induction
hypothesis, as $a'_q \in A_q$, we have $(I'', A_q) \bsim_{k'-1} (I, a'_q)$, and as
$(I, a'_q) \bsim_{k'-1} (I, a''_q)$, by transitivity we have indeed $(I'', A_q)
\bsim_{k'-1} (I, a''_q)$. Hence, we have shown that $(I'', A) \bsim_{k'} (I, a)$.

By induction, we conclude that $(I'', A) \bsim_{k} (I, a)$ for all $A
\in \dom(I'')$ and $a \in A$, so that there is indeed a $k$-bounded simulation
from $I''$ to $I$, which, as we have explained, implies the desired claim.

\mysubsection{Proof of the Cautiousness Lemma (Lemma~\ref{lem:cautious})}
\label{apx:prf_cautious}

\cautious*

Let $J_\f = (I_\f, \cov)$ be the aligned superinstance of~$I_0$
constructed by the Acyclic Universal Models Theorem (Theorem~\ref{thm:myuniv2}),
and show that it is cautious for $\chi_{\bbsim_k}$.

We first observe that the definition of cautiousness
(Definition~\ref{def:cautious}) can be generalized to apply to any function, and
not just homomorphisms. In this case,
writing $F = R(\mybf{a})$ and $F' = R(\mybf{a}')$,
we define cautiousness as requiring,
instead of $h(F) = h(F')$,
that $h(a_i) = h(a'_i)$ for all $1 \leq i \leq \arity{R}$,

Now, let $\chi_{\bbsim_k}'$ be the homomorphism from $\chase{I_0}{\ids}$ to its
quotient by $\bbsim_k$. (We distinguish it from $\chi_{\bbsim_k}$, which is the
homomorphism from $I_\f$ to $\quot{I_\f}{\bbsim_k}$.)
We first show that our construction ensures the following:

\begin{lemma}
  \label{lem:covcautious}
  $I_\f$ is cautious for $\chi_{\bbsim_k}' \circ \cov$.
\end{lemma}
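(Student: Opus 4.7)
The plan is to follow the roadmap laid out in the paragraph preceding the lemma: first establish the intermediate Lemma~\ref{lem:covcautious} (cautiousness of $I_\f$ with respect to $\chi'_{\bbsim_k} \circ \cov$), then promote it to cautiousness with respect to $\chi_{\bbsim_k}$ itself. Throughout, I would exploit the detailed bookkeeping already built into the envelope-thrifty construction.

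For Lemma~\ref{lem:covcautious}, fix two $R$-facts $F = R(\mybf{a})$ and $F' = R(\mybf{a}')$ of $I_\f$ with $a_p = a'_p$ for some $R^p$. If both lie in $I_0$ there is nothing to show, so assume without loss of generality that $F \in I_\f \setminus I_0$; thus $F$ was produced by some envelope-thrifty chase step with chase witness $F_\w = R(\mybf{b})$ and fact class $D = (R^q, \mybf{C})$, where $R^q$ is the exported position of $F$. By definition of thrifty chase steps, elements placed at dangerous positions of $F$ are fresh, so they can only co-occur in the unique fact where they were introduced; hence any overlap position $R^p$ with another fact must lie in $\{R^q\} \cup \nondanger(R^q)$. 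In both subcases (the exported position and non-dangerous positions), the envelope condition (fourth bullet of Definition~\ref{def:envelope}) forces $F'$ to also achieve $D$, i.e.\ $\cov(a'_i) \bbsim_k C_i$ for every $i$; similarly $\cov(a_i) \bbsim_k C_i$. By transitivity $\cov(a_i) \bbsim_k \cov(a'_i)$ in $\chase{I_0}{\ids}$ for every $i$, which is precisely $(\chi'_{\bbsim_k} \circ \cov)(F) = (\chi'_{\bbsim_k} \circ \cov)(F')$ in the generalized (elementwise) sense.

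To pass from $\chi'_{\bbsim_k} \circ \cov$ to $\chi_{\bbsim_k}$, I need to show that whenever $\cov(a) \bbsim_k \cov(a')$ in $\chase{I_0}{\ids}$ for two elements $a, a'$ of $I_\f$ that co-occur with the matched facts above, we also have $a \bbsim_k a'$ in $I_\f$. One direction is gratis: since $\cov$ is a $k$-bounded simulation, $(I_\f, a) \bsim_k (\chase{I_0}{\ids}, \cov(a)) \bsim_k (\chase{I_0}{\ids}, \cov(a'))$. For the reverse, I would argue that the envelope-saturation and fact-saturation conditions built into the construction of $I_\f$ give a matching ``backward'' $k$-bounded simulation on the restricted set of elements we actually care about (essentially, the $k$-neighborhood in $\chase{I_0}{\ids}$ of any element in $\cov(\dom(I_\f))$ admits a copy inside $I_\f$, because every achieved fact class is realized and, via the Chase Locality Theorem (Theorem~\ref{thm:locality}), $k$-reversibility makes these local neighborhoods depend only on the $\eqids$-class of the introducing position). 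Combined with the forward simulation, this yields $(I_\f, a) \bbsim_k (I_\f, a')$ and hence $\chi_{\bbsim_k}(a) = \chi_{\bbsim_k}(a')$ for each matched pair, establishing cautiousness.

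The main obstacle is exactly this promotion step: turning $\bbsim_k$-equivalence in $\chase{I_0}{\ids}$ into $\bbsim_k$-equivalence in $I_\f$. This is subtle because $I_\f$ is in general much smaller than the chase and $\cov$ is only a one-sided simulation. The argument hinges on the fact that $I_\f$ was obtained from a fact-saturated, $k$-reversible, envelope-saturated aligned superinstance $J$, from which the Chase Locality Theorem guarantees that the local $k$-neighborhoods in the chase are controlled by the $\eqids$-class of the introducing position, and this class-data is preserved by the envelope-thrifty completion. Once this backward matching is in place, the lemma follows by composing the forward and backward simulations and invoking Lemma~\ref{lem:covcautious}.
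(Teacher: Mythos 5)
Most of your proposal is devoted to the promotion from $\chi_{\bbsim_k}'\circ\cov$ to $\chi_{\bbsim_k}$, which belongs to the Cautiousness Lemma (Lemma~\ref{lem:cautious}) rather than to the statement under review; only your second paragraph addresses Lemma~\ref{lem:covcautious} itself. For the chase-step case, that paragraph matches the paper's argument: overlaps with a new fact $F_\n$ can only occur through envelope elements at non-dangerous positions, and the fourth condition of Definition~\ref{def:envelope} forces both facts to achieve the same fact class $D$, hence to have componentwise $\bbsim_k$-equivalent $\cov$-images.

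However, there is a genuine gap in your case analysis. You write ``assume without loss of generality that $F \in I_\f \setminus I_0$; thus $F$ was produced by some envelope-thrifty chase step.'' That implication is false: $I_\f$ is obtained by first building an envelope-saturated instance $I$ via the Sufficiently Envelope-Saturated Solutions Proposition (Proposition~\ref{prp:preproc}) and only then completing it by envelope-thrifty steps. The initial $I$ already contains overlapping facts outside $I_0$ --- namely the glued copies $f'(F_i)$ of the achiever fact in Lemma~\ref{lem:oneenv}, which deliberately share envelope elements at non-dangerous positions --- and these were created by ordinary $\uid$ chase steps followed by an identification, not by envelope-thrifty steps. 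Your argument never covers them. The missing half of the proof is to check cautiousness on $I$ directly: within each truncated chase copy the Unique Witness Property rules out same-position overlaps outside $I_0$, and the only remaining overlapping facts are the $f'(F_i)$, which by construction are all copies of the same achiever fact of $\chase{I_0}{\ids}$ and hence have $\bbsim_k$-equivalent $\cov$-images. A secondary slip: you invoke the fourth envelope bullet for the exported-position subcase, but the exported element of $F_\n$ is not an envelope element; the correct observation is that no overlap can occur at the exported position at all (the $\uid$ was applicable, so the exported element did not yet occur there, and it cannot be re-placed there later).
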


In other words, whenever two facts $F = R(\mybf{a})$ and $F' = R(\mybf{b})$
overlap in $I_\f$ and are not both in $I_0$, then, for any position $R^p \in
\pos(R)$, we have $\cov(a_p) \bbsim_k \cov(b_p)$ in $\chase{I_0}{\ids}$.

\begin{proof}
  In the proof of the Acyclic Universal Models Theorem
  (Theorem~\ref{thm:myuniv2}), $I_\f$ is constructed by first
  constructing an instance $I$
  using the Sufficiently Envelope-Saturated Solutions Proposition
  (Proposition~\ref{prp:preproc}),
  and then completing $I$
  using the Envelope-Thrifty Completion Proposition
  (Proposition~\ref{prp:etcomp}).

  Thus, we first check that this claim holds for~$I$.
  Indeed, we check it for each instance constructed in Lemma~\ref{lem:oneenv},
  and the only overlapping facts in each such instance
  which are not in $I_0$ are the $h(F_i)$,
  which all map to $\bbsim_k$-equivalent $\cov$-images.
  Hence, as $I$ is the disjoint union
  of the instances constructed in Lemma~\ref{lem:oneenv},
  we deduce that the claim holds for~$I$.

  Second, in the proof of the Envelope-Thrifty Completion Proposition,
  we only perform envelope-thrifty chase steps.
  By their definition, whenever we create a new fact $F_\n$ for a fact
class $D$, the only
elements of $F_\n$ that can be part of an overlap between $F_\n$ and an
existing fact are envelope elements, appearing at the one position at which
they appear in $\calE(D)$.
Then, by the last condition in the definition of envelopes
(Definition~\ref{def:envelope}), we deduce that the two overlapping facts
achieve the same fact class, which is what we wanted to show.
\end{proof}

We now want to show that two elements in $J_\f$ having $\bbsim_k$-equivalent
$\cov$ images in $\chase{I_0}{\ids}$
must themselves be $\bbsim_k$-equivalent in $J_\f$. We do it by showing that, in
fact, for any $a \in \dom(J_\f)$, not only do we have
$(I_\f, a) \bsim_k (\chase{I_0}{\ids}, \cov(a))$, but we also have the reverse:
$(\chase{I_0}{\ids}, \cov(a)) \bsim_k (I_\f, a)$.
In other words, intuitively, the facts of the chase must be ``mirrored'' in $I_\f$.

We define the \deft{ancestry} $\calA_F$ of a fact $F$ in $\chase{I_0}{\ids}$ as
$I_0$ plus the facts of the path in the chase forest that leads to $F$
(if $F \in I_0$ then $\calA_F$ is just $I_0$).
The \deft{ancestry} $\calA_a$ of~$a \in
\dom(\chase{I_0}{\ids})$ is that of the fact where $a$ was introduced. 

We now claim the following:

\begin{lemma}
  \label{lem:ancesthom}
  For any $a \in \dom(I_\f)$, there is a homomorphism $h_a$ from $\calA_{\cov(a)}$
  to $I_\f$ such that $h_a(\cov(a)) = a$.
\end{lemma}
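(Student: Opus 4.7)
The plan is to prove Lemma~\ref{lem:ancesthom} by induction on the depth $d$ of the fact $F_\w$ in the chase forest in which $\cov(a)$ is introduced.

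For the base case $d = 0$, the element $\cov(a)$ lies in $I_0$. By the alignment condition, $\cov$ maps $\dom(I_\f) \setminus \dom(I_0)$ into $\dom(\chase{I_0}{\ids}) \setminus \dom(I_0)$, so we must have $a \in \dom(I_0)$ and hence $\cov(a) = a$. The ancestry $\calA_{\cov(a)} = I_0$ embeds into $I_\f$ via the inclusion, which gives the desired $h_a$.

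For the inductive step with $d > 0$, write $F_\w = S(\mybf{\beta})$ and let $\tau \colon \ui{R^p}{S^q}$ be the $\uid$ used to create $F_\w$ from its parent $F_\w' = R(\mybf{\alpha})$, so $\beta_q = \alpha_p$ is the exported element while each $\beta_r$ for $r \neq q$ is a fresh null introduced at $S^r$. Let $s$ be the position with $\cov(a) = \beta_s$; since $\cov(a)$ is introduced in $F_\w$ (and $\beta_q$ is not), we have $s \neq q$. The crucial step is to locate a fact $F_\n = S(\mybf{\gamma})$ of $I_\f$ with $\gamma_s = a$ and $\cov(\gamma_q) = \beta_q$ \emph{exactly}. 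Such an $F_\n$ exists because $a \notin \dom(I_0)$ must have been introduced as a fresh null by some thrifty chase step in the construction of $I_\f$; the Unique Witness Property forces the chase witness of that step to be exactly $F_\w$, and the definition of thrifty chase steps then guarantees that the exported position of $F_\n$ carries an element whose $\cov$-image matches the exported element $\beta_q$ of the chase witness.

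Set $a^* \defeq \gamma_q$, so $\cov(a^*) = \beta_q$; since $\beta_q$ is introduced at depth strictly less than $d$, applying the inductive hypothesis to $a^*$ yields a homomorphism $h^* \colon \calA_{\beta_q} \to I_\f$ with $h^*(\beta_q) = a^*$. I then extend $h^*$ to $h_a \colon \calA_{F_\w} \to I_\f$ in two stages. First, I fill in any intermediate chase facts lying strictly between the introduction fact of $\beta_q$ and $F_\w'$ (inclusive): each such fact exports $\beta_q$ through a $\uid$ of $\ids$, and since $\ids$ is transitively closed and $I_\f \models \ids$, the element $a^*$ already appears in $I_\f$ at each corresponding position, so I can pick witness facts in $I_\f$ to serve as their images. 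Second, I extend to $F_\w$ itself by setting $h_a(\beta_r) \defeq \gamma_r$ for all $r$; this is consistent at $\beta_q$ because $\gamma_q = a^* = h^*(\beta_q)$, and it delivers $h_a(\cov(a)) = h_a(\beta_s) = \gamma_s = a$, completing the induction.

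The main difficulty is pinning down the right fact $F_\n$ and verifying that $\cov(\gamma_q)$ equals $\beta_q$ exactly rather than merely being $\bbsim_k$-equivalent. This rests on a careful case analysis of how the element $a$ was produced during the construction of $I_\f$, combining the initial aligned superinstance from the Sufficiently Envelope-Saturated Solutions Proposition with the envelope-thrifty chase steps from the Envelope-Thrifty Completion Proposition; in both sources, fresh nulls at non-exported positions inherit their $\cov$-image from the corresponding position of the unique chase witness.
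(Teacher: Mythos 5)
Your proof is correct, and its engine is the same as the paper's: extend an ancestry homomorphism for the exported element by mapping $F_\w$ to a fact $F_\n$ of $I_\f$ whose element at the exported position has $\cov$-image \emph{exactly} $\beta_q$, which is precisely what thrifty chase steps provide. The organization differs, though. The paper proves the statement as an invariant of the construction of $I_\f$: it holds for the instance of Proposition~\ref{prp:preproc}, and is preserved by every thrifty chase step, where $F_\n$ is simply the fact just created and the homomorphism for the exported element is already available. You instead induct on the depth of the introduction fact of $\cov(a)$ in the chase forest, working only with the final instance, and must therefore recover $F_\n$ a posteriori. The price of your route is exactly that recovery step, which you rightly identify as the crux; it does go through for the reasons you give, modulo the case of elements of the initial envelope-saturated instance, which you (like the paper) dispatch by inspection of Lemma~\ref{lem:oneenv}. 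One remark: your ``first stage'' is vacuous. Since $\ids$ is transitively closed, no $\uid$ remains applicable to $\beta_q$ after the chase round in which it is introduced, so the introduction fact of $\beta_q$ is exactly the parent $F_\w'$ of $F_\w$ (this is the content of Lemma~\ref{lem:wadef}, and of the paper's observation that $\calA_{\beta_r} = \calA_{\beta_q} \sqcup \{F_\w\}$ for $r \neq q$); hence $\calA_{F_\w} = \calA_{\beta_q} \sqcup \{F_\w\}$ and the only extension ever needed is $F_\w \mapsto F_\n$. Had intermediate facts actually existed, ``picking witness facts'' for them would have required more care about consistency, so it is fortunate that the case cannot arise.
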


\begin{proof}
  We prove that this property holds on $I_\f$, by first showing that it is true of the
  instance constructed in the Sufficiently Envelope-Saturated Solutions Proposition
  (Proposition~\ref{prp:preproc}).
  This is clearly the case because the instances created by
  Lemma~\ref{lem:oneenv} are just truncations of the chase where some elements
  are identified.

  Second, we show that the property is maintained by the construction of the
  Envelope-Thrifty Completion Proposition. We show the stronger claim that it is
  preserved by any thrifty chase step (Definition~\ref{def:thrifty}).
  Consider a thrifty chase step where,
  in a state $J_1 = (I_1, \cov_1)$
  of the construction of our aligned superinstance,
  we apply a $\uid$ $\tau: \ui{R^p}{S^q}$ to a fact $F_\a = R(\mybf{a})$
  to create a fact $F_\n = S(\mybf{b})$
  and obtain the aligned superinstance $J_2 = (I_2, \cov_2)$.
  Consider the chase witness $F_\w = S(\mybf{b}')$. By
  Lemma~\ref{lem:wadef}, $b'_q$ is the exported element
  between $F_\w$ and its parent in $\chase{I_0}{\ids}$.
  So we know that for any $i \neq q$,
  we have $\calA_{b'_i} = \calA_{b'_q} \sqcup \{F_\w\}$.

  We need to show that the property holds for the $b_i$ that are fresh
  (otherwise we already know that the property is satisfied,
  as adding more facts cannot violate the property in~$J_2$
  on an element for which it held in~$J_1$). 
  So, if none of the $b_i$ are fresh, there is nothing to do.
  Otherwise, choose $i$ such that $b_i$ is fresh. 
  By the definition of thrifty chase steps,
  we have set $\cov(b_i) \defeq b'_i$.
  Because $a_p = b_q$ is in $\dom(I_1)$,
  we know that there is a homomorphism $h_{b_q}$
  from~$\calA_{\cov(b_q)} = \calA_{b'_q}$ to~$I_1$
  such that we have $h(b'_q) = b_q$.
  We extend $h_{b_q}$ to the homomorphism $h_{b_i}$
  from~$\calA_{b'_i} = \calA_{b'_q} \sqcup \{F_\w\}$ to~$I_2$
  such that $h_{b_i}(b'_i) = b_i$,
  by setting $h_{b_i}(F_\w) \defeq F_\n$
  and $h_{b_i}(F) \defeq h(F)$ for any other~$F$ of~$\calA_{b'_i}$;
  we can do this because, by definition of the chase,
  $F_\w$ shares no element with the other facts of $\calA_{b'_i}$
  (that is, with $\calA_{b'_q}$),
  except $b'_q$ for which our definition coincides
  with the existing image.
  This proves the claim.
\end{proof}

We claim that this property implies the following:

\begin{corollary}
  For any $a \in \dom(I_\f)$, there is a homomorphism $h_a$ from
  $\chase{I_0}{\ids}$ to $I_\f$ such that $h_a(\cov(a)) = a$. 
\end{corollary}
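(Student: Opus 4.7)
The plan is to extend the homomorphism provided by Lemma~\ref{lem:ancesthom} (which is only defined on the ancestry $\calA_{\cov(a)}$) to all of $\chase{I_0}{\ids}$ by induction along the chase construction. The key observation driving the extension is that $I_\f \models \ids$: whenever a $\uid$ creates a new fact in the chase from a parent fact that is already mapped into $I_\f$, a corresponding witness fact in $I_\f$ must exist at the image of the exported element, which we use to define $h_a$ on the freshly introduced elements.

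Concretely, call $h_a^0$ the ancestry homomorphism of Lemma~\ref{lem:ancesthom}, and process the facts of $\chase{I_0}{\ids} \setminus \calA_{\cov(a)}$ in the order of their creation in the $\uid$ chase. Each such fact $F = S(\mybf{d})$ was obtained by applying a $\uid$ $\tau : \ui{R^p}{S^q}$ to a parent fact $F' = R(\mybf{c})$, with $d_q = c_p$ the exported element. By induction, $F'$ lies in the domain of $h_a$, so $h_a(d_q)$ is defined and the fact $h_a(F') = R(h_a(c_1), \ldots, h_a(c_{\arity{R}}))$ of $I_\f$ witnesses $h_a(d_q) \in \pi_{R^p}(I_\f)$. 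Since $I_\f \models \tau$, there is a fact $F_\star = S(\mybf{e})$ of $I_\f$ with $e_q = h_a(d_q)$; I set $h_a(d_r) \defeq e_r$ for each $r \neq q$, and $h_a(F) \defeq F_\star$.

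The main technical step is verifying well-definedness of this extension. By the Unique Witness Property, each fresh element $d_r$ (for $r \neq q$) is introduced at $F$ and does not occur at position $S^r$ in any other fact of $\chase{I_0}{\ids} \setminus I_0$; in particular, since facts are processed in creation order, $d_r$ is not yet in the domain of $h_a$ when $F$ is handled, so the assignment $h_a(d_r) \defeq e_r$ is consistent. Whenever $d_r$ later appears as the exported element of a descendant fact, the already-fixed value $h_a(d_r) = e_r$ is reused, and $I_\f \models \ids$ again supplies the required witness in $I_\f$. For facts of $\calA_{\cov(a)}$, the assignment coincides with $h_a^0$ by construction, so the inductive extension agrees with the base, and the homomorphism condition holds for every fact because $h_a(F)$ is by construction a fact of $I_\f$.

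The main obstacle is bookkeeping rather than depth: one must verify that the chase creation order is a valid induction order (each parent comes before its children), that fresh elements have not previously been assigned (via UWP), and that the definition of $h_a$ on a fresh element remains consistent when that element later serves as the exported element of multiple distinct descendant facts created by different $\uid$s. All of these are immediate consequences of the tree structure of the $\uid$ chase and the Unique Witness Property, so the extension goes through without further incident.
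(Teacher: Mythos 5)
Your proof is correct and follows essentially the same route as the paper's: start from the ancestry homomorphism of Lemma~\ref{lem:ancesthom} and extend it fact by fact along the chase forest, using $I_\f \models \ids$ to find a witness fact for each newly created chase fact and the tree structure (plus the Unique Witness Property) to see that the extension is well defined. The extra bookkeeping you supply about creation order and freshness of the $d_r$ is a slightly more explicit version of what the paper leaves implicit.
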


\begin{proof}
  Choose $a \in \dom(I_\f)$ and let us construct $h_a$.
Let $h'_a$ be the homomorphism
from $\calA_{\cov(a)}$ to $I_\f$ with $h'_a(\cov(a)) = a$ whose existence was
proved in Lemma~\ref{lem:ancesthom}. Now start by setting $h_a \defeq h'_a$, and
extend $h'_a$ to be the desired homomorphism,  fact by fact,
using the property that $I_\f \models \ids$: for any $b \in
\dom(\chase{I_0}{\ids})$ not in the domain of $h'_a$ but which was
introduced in a fact $F$ whose exported element $c$ is in
the current domain of~$h'_a$, let us extend $h'_a$ to the elements of $F$ in the
following way: consider the parent fact $F'$ of~$F$ and its match
by $h'_a$, let $\tau$ be the $\uid$ used to create $F'$ from $F$, and, because
$I_\f \models \tau$, there must be a suitable fact $F''$ to extend $h'_a$ to all
elements of $F$ by setting $h'_a(F) \defeq F''$;
this is consistent with the image of~$c$ previously defined in~$h'_a$.
Performing this process allows us to define the desired homomorphism~$h_a$.
\end{proof}

Clearly this result implies:

\begin{corollary}
  \label{cor:covrev}
  For any $a \in \dom(I_\f)$, we have $(\chase{I_0}{\ids}, \cov(a)) \bsim_k
  (I_\f,
  a)$.
\end{corollary}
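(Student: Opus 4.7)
The plan is to derive the corollary as a direct consequence of the preceding corollary that provides, for every $a \in \dom(I_\f)$, a homomorphism $h_a : \chase{I_0}{\ids} \to I_\f$ with $h_a(\cov(a)) = a$. The key observation is that any homomorphism from one instance to another induces a bounded simulation of arbitrary depth, matched with the corresponding element. Once this general fact is in hand, the corollary follows by instantiating it with $h_a$ at the elements $\cov(a)$ and $a$.

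Concretely, I would first isolate the following auxiliary lemma: for every $n \in \NN$, every pair of instances $I, I'$, every homomorphism $h: I \to I'$, and every $x \in \dom(I)$, one has $(I, x) \bsim_n (I', h(x))$. I would prove this by induction on $n$. The base case $n = 0$ is the trivial clause in Definition~\ref{def:bbsim}. For the induction step, take any fact $R(\mybf{b})$ of $I$ with $b_p = x$; since $h$ is a homomorphism, $R(h(\mybf{b}))$ is a fact of $I'$, and $h(b_p) = h(x)$, so it is a legitimate witness fact. For each position $R^q$, we must then show $(I, b_q) \bsim_{n-1} (I', h(b_q))$, which is exactly the induction hypothesis applied to $h$ at $b_q$. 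This establishes the lemma.

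Applying this lemma with $h := h_a$, $x := \cov(a)$, and $y := a = h_a(\cov(a))$ immediately yields $(\chase{I_0}{\ids}, \cov(a)) \bsim_k (I_\f, a)$, which is precisely the statement of the corollary. There is no real obstacle here: the preceding corollary does all the genuine work by constructing $h_a$ globally on $\chase{I_0}{\ids}$ (via first building it on the ancestry $\calA_{\cov(a)}$ and then extending through the chase using $I_\f \models \ids$), and the present claim is just a repackaging of that homomorphism as a bounded simulation. The only point to verify carefully is that one uses the same $h_a$ uniformly at every recursive step of the induction, rather than needing a new homomorphism per element; this is fine because the single $h_a$ already maps every element $b_q$ reachable in a few fact-hops from $\cov(a)$ to an element of $I_\f$ in a way that is compatible with the fact structure.
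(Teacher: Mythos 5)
Your proposal is correct and takes essentially the same route as the paper: the paper's one-line proof ("consider the restriction of $h_a$ to the neighborhood at distance $k$ in the Gaifman graph of $\cov(a)$") is exactly the standard fact you isolate, namely that a homomorphism induces an $n$-bounded simulation for every $n$, instantiated at $h_a$ and $\cov(a)$. Your explicit induction on $n$ just spells out what the paper leaves implicit.
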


\begin{proof}
  Consider the restriction of~$h_a$ to the neighborhood at distance $k$ in the
  Gaifman graph of~$\cov(a)$.
\end{proof}

We are now ready to show our desired claim:

\begin{lemma}
  \label{lem:covbbsim}
  For any $a, b \in \dom(I_\f)$, if $\cov(a)
\bbsim_k \cov(b)$ in $\chase{I_0}{\ids}$, then $a \bbsim_k b$ in $I_\f$.
\end{lemma}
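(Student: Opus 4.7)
The plan is to derive the lemma by chaining three $k$-bounded simulations end-to-end. By the symmetric definition of $\bbsim_k$, it suffices to show both $(I_\f, a) \bsim_k (I_\f, b)$ and the symmetric statement; by symmetry of the hypothesis $\cov(a) \bbsim_k \cov(b)$, we need only prove one of them.

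First, since $\cov$ is a $k$-bounded simulation from $I_\f$ to $\chase{I_0}{\ids}$, we have
\[
(I_\f, a) \bsim_k (\chase{I_0}{\ids}, \cov(a)).
\]
Second, the hypothesis $\cov(a) \bbsim_k \cov(b)$ unfolds into
\[
(\chase{I_0}{\ids}, \cov(a)) \bsim_k (\chase{I_0}{\ids}, \cov(b)).
\]
Third, Corollary~\ref{cor:covrev} (applied to $b$) gives
\[
(\chase{I_0}{\ids}, \cov(b)) \bsim_k (I_\f, b).
\]

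To conclude, I would invoke transitivity of $\bsim_k$ to splice these three relations into $(I_\f, a) \bsim_k (I_\f, b)$. Transitivity of $\bsim_k$ itself is a routine induction on $k$: if $(X_1, x_1) \bsim_k (X_2, x_2)$ and $(X_2, x_2) \bsim_k (X_3, x_3)$, then for any fact of $X_1$ containing $x_1$ at some position one finds the matching fact in $X_2$ and then in $X_3$, applying the induction hypothesis at rank $k-1$ to each coordinate. This fact is either already standard in the paper's background or can be stated as a one-line lemma. The main obstacle is genuinely just noting this transitivity; everything else is supplied by the ``mirroring'' property encoded in Corollary~\ref{cor:covrev}, which is the real content.
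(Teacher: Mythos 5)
Your proposal is correct and matches the paper's own proof essentially verbatim: the same three $\bsim_k$ relations (from $\cov$ being a $k$-bounded simulation, from the hypothesis, and from Corollary~\ref{cor:covrev}) are chained by transitivity, with the reverse direction by symmetry. The paper also leaves transitivity of $\bsim_k$ implicit, so your remark that it needs a routine induction is a fair (minor) addition rather than a divergence.
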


\begin{proof}
  Fix $a, b \in \dom(I_\f)$.
  We have $(I_\f, a) \bsim_k (\chase{I_0}{\ids}, \cov(a))$
  because $\cov$ is a $k$-bounded simulation;
  we have $(\chase{I_0}{\ids}, \cov(a)) \bsim_k (\chase{I_0}{\ids}, \cov(b))$
  because $\cov(a) \bbsim_k \cov(b)$;
  and we have $(\chase{I_0}{\ids}, \cov(b)) \bsim_k (I_\f, b)$
  by Corollary~\ref{cor:covrev}.
  By transitivity, we have $(I_\f, a) \bsim_k (I_\f, b)$.
  The other direction is symmetric,
  so the desired claim follows.
\end{proof}

We prove Lemma~\ref{lem:ksimhomom} immediately from Lemma~\ref{lem:covcautious} and
Lemma~\ref{lem:covbbsim}.

\mysubsection{Proof of the Mixed Product Preservation Lemma
  (Lemma~\ref{lem:mixedprod})}
\label{apx:prf_mixedprod}

\mixedprod*

Write $I_\m \defeq (I, I_0) \mprod^h G$.

If $\tau$ is a $\uid$, the claim is immediate even without the cautiousness
hypothesis. (In fact, the analogous claim could even be proven for the simple
product.) Indeed, for any $a \in
\dom(I)$ and $R^p \in \positions(\sigma)$, if $a \in \pi_{R^p}(I)$ then
$(a, g) \in \pi_{R^p}(I_\m)$ for all $g \in G$; conversely, if
$a \notin \pi_{R^p}(I)$ then
$(a, g) \notin \pi_{R^p}(I_\m)$ for all $g \in G$. Hence,
letting $\tau : \ui{R^p}{S^q}$ be a $\uid$ of~$\ids$, if there is $(a, g) \in
\dom(I_\m)$ such that $(a, g) \in \pi_{R^p}(I_\m)$ but $(a, g) \notin
\pi_{S^q}(I_\m)$ then $a \in \pi_{R^p}(I)$ but $a \notin \pi_{S^q}(I)$. Hence any
violation of~$\tau$ in $I_\m$ implies the existence of a violation of~$\tau$
in $I$, so we conclude because $I \models \tau$.

Assume now that $\tau$ is a $\fd$ $\phi : R^L \rightarrow R^r$.
Assume by contradiction that there are two facts
$F_1 = R(\mybf{a})$ and $F_2 = R(\mybf{b})$ in~$I_\m$
that violate~$\phi$, i.e., we have $a_l = b_l$ for all $l \in L$,
but $a_r \neq b_r$.
Write $a_i = (v_i, f_i)$ and $b_i = (w_i, g_i)$ for all $R^i \in \pos(R)$.
Consider $F_1' \defeq R(\mybf{v})$ and $F_2' \defeq R(\mybf{w})$ 
the facts of~$I$ that are the images of~$F_1$ and~$F_2$
by the homomorphism from~$I_\m$ to~$I$
that projects on the first component.
As $I \models \tau$, $F_1'$ and $F_2'$ cannot violate~$\phi$,
so as $v_l = w_l$ for all $l \in L$, we must have $v_r = w_r$.
Further, we have $\pi_{R^{l_0}}(F_1') = \pi_{R^{l_0}}(F_2')$
for any $l_0 \in L$;
hence, as $I$ is cautious for~$h$,
either $F_1', F_2' \in I_0$ or $h(F_1') = h(F_2')$.

In the first case,
by definition of the mixed product,
there are $f, g \in G$
such that $f_i = f$ and $g_i = g$ for all $R^i \in \pos(R)$.
Thus, taking any $l_0 \in L$, as we have $a_{l_0} = b_{l_0}$,
we have $f_{l_0} = g_{l_0}$, so $f = g$, which implies that $f_r = g_r$.
Hence, as $v_r = w_r$,
we have $(v_r, f_r) = (w_r, g_r)$,
contradicting the fact that $a_r \neq b_r$.

In the second case,
as $h$ is the identity on $I_0$
and maps $I \backslash I_0$ to $I' \backslash I_0$,
$h(F_1') = h(F_2')$ implies that
either $F_1'$ and $F_2'$ are both facts of~$I_0$
or they are both facts of $I \backslash I_0$;
but we have already excluded the former possibility in the first case,
so we assume the latter.
Let $F$ be $h(F_1')$.
By definition of the mixed product,
there are $f, g \in G$ such that
$f_i = f \cdot \l^{h(F)}_i$ and $g_i = g \cdot \l^{h(F)}_i$
for all $R^i \in \pos(R)$.
Picking $l_0 \in L$, from $a_{l_0} = b_{l_0}$,
we deduce that $f \cdot \l^{h(F)}_{l_0} = g \cdot \l^{h(F)}_{l_0}$,
which simplifies to $f = g$. Hence, $f_r = g_r$
and we conclude like in the first case.

\mysubsection{Proof of the Mixed Product Homomorphism Lemma
(Lemma~\ref{lem:mixedhom})}
\label{apx:prf_mixedhom}

\mixedhom*

We use the homomorphism $h: I \to I_1$
to define the homomorphism $h'$ from $I_\m \defeq (I, I_0) \mprod^h G$ to $I_\p
\defeq (I, I_0) \sprod G$
by $h'((a, g)) \defeq (h(a), g)$
for every $(a, g) \in \dom(I) \times G$.

Consider a fact $F = R(\mybf{a})$ of~$I_\m$, with $a_i
= (v_i, g_i)$ for all $R^i \in \pos(R)$. Consider its image $F' = R(\mybf{v})$
by the homomorphism from $I_\m$ to
$I$ obtained by projecting to the first component,
and the image $h(F')$ of $F'$ by the homomorphism~$h$.
As $\restr{h}{I_0}$ is the identity and $\restr{h}{(I \backslash I_0)}$ maps
to $I_1 \backslash I_0$, $h(F')$ is a fact of~$I_0$ iff $F'$ is.
Now by definition of the simple product it is clear that $I_\p$ contains the
fact~$h'(F)$ (it was created in~$I_\p$ from~$h(F')$
for the same choice of $g \in G$).

The fact that $h$ is the identity on $I_0$ also ensures that $h'$ is the
identity on~$I_0 \times G$.

\end{document}